\documentclass[12pt]{article}

\setlength{\textheight}{8.9in} 
\setlength{\textwidth}{6.5in}

\setlength{\topmargin}{-0.5in} \oddsidemargin=0pt
\evensidemargin=0pt

\usepackage{amsfonts}
\usepackage{amsmath}
\usepackage{amssymb}
\usepackage{amsthm}
\usepackage{algorithm}
\usepackage[noend]{my_algorithmic}
\usepackage[hyphens]{url}
\usepackage{graphicx}
\usepackage{xspace}
\usepackage{units}
\usepackage{caption}

\newtheorem{theorem}{Theorem}[subsection]
\newtheorem{corollary}[theorem]{Corollary}
\newtheorem{proposition}[theorem]{Proposition}
\newtheorem{lemma}[theorem]{Lemma}
\newtheorem{observation}[theorem]{Observation}
\newtheorem{remark}[theorem]{Remark}

\newcommand{\CF}{{\sc Uniform Circle Formation}\xspace}
\newcommand{\UCF}{{\tt UCF}\xspace}

\newcommand{\CO}{{\em Co-radial}\xspace}
\newcommand{\RE}{{\em Regular}\xspace}
\newcommand{\PR}{{\em Pre-regular}\xspace}
\newcommand{\CE}{{\em Central}\xspace}
\newcommand{\EQ}{{\em Equiangular}\xspace}
\newcommand{\BI}{{\em Biangular}\xspace}
\newcommand{\DBI}{{\em Double-biangular}\xspace}
\newcommand{\PER}{{\em Periodic}\xspace}
\newcommand{\UP}{{\em Uni-periodic}\xspace}
\newcommand{\UA}{{\em Uni-aperiodic}\xspace}
\newcommand{\BP}{{\em Bi-periodic}\xspace}
\newcommand{\BA}{{\em Bi-aperiodic}\xspace}
\newcommand{\AP}{{\em Aperiodic}\xspace}
\newcommand{\HD}{{\em Half-disk}\xspace}
\newcommand{\VA}{{\em Valid}\xspace}
\newcommand{\IN}{{\em Invalid}\xspace}
\newcommand{\WA}{{\em Waiting}\xspace}
\newcommand{\RD}{{\em Ready}\xspace}

\newcommand{\ASYNC}{${\cal ASYNC}$\xspace}
\newcommand{\SSYNC}{${\cal SSYNC}$\xspace}
\newcommand{\FSYNC}{${\cal FSYNC}$\xspace}

\begin{document}

\title{Distributed Computing by Mobile Robots:\\
Uniform Circle Formation}

\author{
Paola Flocchini \footnotemark[1]\
\and
Giuseppe Prencipe \footnotemark[2]\
\and
Nicola Santoro \footnotemark[3]\
\and
Giovanni Viglietta \footnotemark[1]\
}

\renewcommand{\thefootnote}{\fnsymbol{footnote}}
\footnotetext[1]{School of Electrical Engineering and Computer Science, University of
Ottawa, {\tt flocchin@site.uottawa.ca}, {\tt viglietta@gmail.com}}
\footnotetext[2]{Dipartimento di Informatica,
Universit\`a di Pisa, {\tt prencipe@di.unipi.it}}
\footnotetext[3]{School of Computer Science, Carleton University,
  {\tt santoro@scs.carleton.ca}}
  
\renewcommand{\thefootnote}{\arabic{footnote}}	

\date{}

\maketitle
\begin{abstract}
\noindent Consider a set of $n$ simple autonomous mobile robots (asynchronous, no common coordinate system, no identities, no central coordination, no direct communication, no memory of the past, non-rigid, deterministic) initially in distinct locations, moving freely in the plane and able to sense the positions of the other robots. We study the primitive task of the robots arranging themselves on the vertices of a regular $n$-gon not fixed in advance ({\sc Uniform Circle Formation}). In the literature, the existing algorithmic contributions are limited to conveniently restricted sets of initial configurations of the robots and to more powerful robots. The question of whether such simple robots could deterministically form a uniform circle has remained open. In this paper, we constructively prove that indeed the {\sc Uniform Circle Formation} problem is solvable for any initial configuration in which the robots are in distinct locations, without any additional assumption (if two robots are in the same location, the problem is easily seen to be unsolvable). In addition to closing a long-standing problem, the result of this paper also implies that, for pattern formation, asynchrony is not a computational handicap, and that additional powers such as chirality and rigidity are computationally irrelevant.
\end{abstract}

%%%%%%%%%%%%%%%%%%%%%
\section{Introduction}\label{sec:intro}
%%%%%%%%%%%%%%%%%%%%%
Consider a set of  punctiform  computational  entities, called {\em robots}, located  in ${\mathbb R}^2$,
where they can freely move.
Each entity is
provided with a local coordinate system and
 operates in   {\em Look-Compute-Move}  cycles.
During a cycle, a robot obtains a snapshot of the positions of   the other robots, expressed in  its own  coordinate system ({\em Look}); using the snapshot as an input, it
executes a deterministic algorithm (the same for all robots)  
to determine a destination
  ({\em Compute});
  and it
moves towards the computed destination along a straight line ({\em Move}).

To understand the nature of  the distributed universe of these mobile robots
and to discover its computational boundaries, the  research efforts
have focused on the minimal capabilities the robots need to have to be able to solve a problem. 
Thus,  the extensive literature on distributed computing by  mobile robots has 
almost exclusively focused on very simple entities operating in strong adversarial conditions.
The robots   we consider are
    {\em anonymous}  (without ids or distinguishable features), {\em  
autonomous} (without central or external control),
     {\em oblivious} (no recollection of computations and observations  
done in previous cycles),
  {\em disoriented} (no agreement among the individual coordinate  
systems, nor on unit  distance and chirality), and {\em non-rigid} (they may be stopped before reaching the destination they compute at each cycle).
  In particular, the choice of individual coordinate systems, the
  activation schedule, the duration of each operation during a
cycle, and the length traveled by a robot during its
  movement are determined by an adversary; the only constraints on the  
adversary are fairness (i.e., the duration of each cycle of each robot is arbitrary but finite), and minimality (i.e., there  
exists $\delta>0$, unknown to the robots, such that,
  if the destination of a robot is at distance at most $\delta$, the robot will  
reach it; else it will move at least $\delta$ towards the
  destination, and then it may be unpredictably stopped by the adversary).
  For this type of robots, depending on the activation schedule and timing assumptions,  three
  main models  
  have been studied in the literature: the   {\em asynchronous} model, {\ASYNC}, where  no
  assumptions are made on synchronization among the robots' cycles nor their duration, and 
 the  {\em semi-synchronous} and {\em fully synchronous} models,
   denoted by    \SSYNC and  \FSYNC, respectively,
    where  the robots,  while oblivious and disoriented, 
    operate in synchronous rounds, and each round is ``atomic":
    all robots active in that round terminate their cycle by the next round;
 the only difference   is whether  all robots  are activated in every round (\FSYNC),
   or,  subject to some fairness condition, a possibly different  subset is activated in each round (\SSYNC). 
All three models have been intensively studied (e.g., see~\cite{ChaMN04,CieFPS12,CohP05,DaFPS14,DefK02,DefS08,DieLP08,DieLPV08,DieP07a,FloPSW08,FujYKY12,KamLOT11,SuzY99,YamS10};
for a detailed overview refer  to the recent monograph~\cite{FloPS12}). 

The research on the  {\em computability} aspects  has
focused     almost exclusively   on
 the  fundamental class of
{\sc Geometric Pattern Formation} problems.
  A {\em geometric pattern} (or simply {\em pattern}) $P$
is a set of points in the plane;
the robots   {\em form}  the pattern $P$  at   time $t$ if  the  
configuration of the robots (i.e., the set of their positions)
  at time $t$ is similar to $P$ (i.e., coincident with $P$ up to
scaling, rotation, translation, and reflection).  
A pattern $P$ is {\em formable}   if there exists an algorithm  
that  allows the robots to   form $P$   within finite time and  no longer move,
regardless of the activation scheduling and delays (which, recall,   
are decided by the adversary) and of the initial placement of
the robots in distinct points.
Given a model, the research questions are:  to determine if 
 a given  pattern $P$ is  
 formable in that model;
if so, to design an  algorithm that will allow its formation;
and, more in general,
to fully characterize the set  of  patterns formable in that model.
The  research effort has focused  on answering these questions 
for {\ASYNC} and less demanding models 
both
in general  (e.g.,~\cite{DaFPS14,FloPSW08,FujYKY12,SugS96,SuzY99,YamS10})
and
for specific classes of patterns (e.g.,~\cite{ChaMN04,DefS08,DieLP08,DieP07a,DieP08,FloPS08,Kat05,MiyIH09}).

Among  specific patterns, a special research place  is occupied by two classes:
   {\tt Point}  and  {\tt Uniform Circle}.
   The class   {\tt Point} is the set consisting of a single point;
  point formation 
corresponds to the important {\sc Gathering} 
problem  requiring all robots to gather at a same location,
not determined in advance (e.g., see~\cite{BraT15,CieFPS12,CohP05,CohP08,IzuSK+12,OasSY97}).
The other important class of patterns  is {\tt Uniform Circle}: the  
points of the pattern form the vertices of a regular
$n$-gon, where $n$ is the number of robots (e.g.,~\cite{ChaMN04,DefK02,DefS08,DieLP08,DieP07a,DieP08,FloPS08,MiyIH09}).

In addition to their relevance as individual problems, the classes {\tt Point} and {\tt Uniform Circle}
play another  important role.  
A crucial observation, by  Suzuki and Yamashita~\cite{SuzY99}, 
is that formability  of  a pattern $P$ from an initial  configuration $\Gamma$  in model ${\cal M}$
depends on the relationship between $\rho_{\cal M}(P)$ and
$\rho_{\cal M}(\Gamma)$,  where $\rho_{\cal M}(V)$ is a special parameter, called  {\em symmetricity}, of a multiset of 
points $V$, interpreted as robots modeled by ${\cal M}$. Based on this observation, it follows that
the only patterns that {\em might} be formable from any initial configuration in \FSYNC (and thus also in
\SSYNC and \ASYNC) are  single points and 
regular polygons (also called uniform circles). 
It is rather easy to see that  both points and 
uniform circles
 can  be formed in \FSYNC, i.e.,  if the robots are fully
synchronous.
After a  long quest by several researchers, it has been  shown that
{\sc Gathering} is solvable (and thus {\tt Point} is formable)  in \ASYNC (and thus also in \SSYNC)~\cite{CieFPS12}, 
leaving open only the question of whether {\tt Uniform Circle} is formable in these models.
In  \SSYNC, it was known  that the robots can {\em converge} towards a
uniform circle without ever forming it~\cite{DefS08}. 
Other results indicate
that the robots can actually form {\tt Uniform Circle} in   \SSYNC. 
In fact, by concatenating  the algorithm of~\cite{Kat05},  for forming a biangular configuration,
 with  the one  of~\cite{DieP07a}, for circle formation from a
 biangular starting configuration, it is 
possible
to form {\tt Uniform Circle} starting from any initial configuration in \SSYNC (the case with four robots has been solved separately in~\cite{DieP08}). Observe, however, that the
two algorithms can be concatenated only because the robots are semi-synchronous.
 Hence, the outstanding  question is whether  it is possible to form {\tt Uniform Circle} in \ASYNC.
 
 In spite of the simplicity of its formulation and the repeated efforts by several researchers,
  the existing algorithmic contributions are  
limited to restricted sets of initial configurations of the robots
and to more powerful robots. 
In particular, it has been proven that, with the additional property of {\em chirality} (i.e., a common notion of ``clockwise"),  the robots can form {\tt Uniform Circle}~\cite{FloPS08}, and  with a very simple algorithm;
the fact that  {\tt Uniform Circle} is formable in  \ASYNC+{\em chirality} follows also
from the recent general result  of~\cite{FujYKY12}. 
The difficulty of the problem stems from the fact that the  inherent difficulties of asynchrony, obliviousness, and
disorientation  are amplified by their simultaneous presence.

A step toward the solution has been made in~\cite{FloPSV14}, where the authors solved the problem assuming that the robots had the ability to move along circular arcs, as well as straight lines.

In this paper we  show that indeed the
{\sc Uniform Circle Formation} problem is solvable for any initial  
configuration of robots (located in distinct positions) without any additional assumption,
thus closing a problem that has been open for over a decade.
This result also implies that, for {\sc Geometric Pattern Formation} problems, \emph{asynchrony} is not a computational
 handicap, 
 and that additional  powers such as
 {\em chirality} and {\em rigidity}
 are computationally irrelevant.
 
The paper is structured as follows. In the next Section, the model and the terminology are introduced. In Section~\ref{sec:informal}, we describe the ideas behind our  solution in an informal way. We  provide the rigorous and formal presentation of the algorithm  in Section~\ref{sec:formal}.
 We then give the formal proof of correctness in Section~\ref{sec:correctness}.
 
%%%%%%%%%%%%%%%%%%%%%%%
\section{Model and Terminology}\label{sec:model}
%%%%%%%%%%%%%%%%%%%%%%%
The system consists of a \emph{swarm} $\mathcal{R} = \{ r_1 ,\cdots,
r_{n}\}$ of \emph{mobile robots}, which are computational entities moving and operating
 in the Euclidean plane $\mathbb R^2$.
Each robot can move freely and continuously in the plane, and
operates  in \emph{Look-Compute-Move} cycles. 

\paragraph{Look, Compute, and Move phases.} The three phases of each cycle are as follows.
\begin{enumerate}
\item In the Look phase, a robot takes an instantaneous snapshot of the positions of all robots in the swarm. This snapshot is expressed as an $n$-uple of points in the robot's coordinate system, which is an orthogonal Cartesian system whose origin is the robot's current location.
\item In the Compute phase, a robot executes a deterministic algorithm, which is the same for all robots, and computes a destination point in its own coordinate system. The only input to such an algorithm is the snapshot taken in the previous Look phase.
\item In the Move phase, a robot moves toward the destination point that it computed in the previous Compute phase. At each instant, the velocity of the robot is either null or it is directed toward the destination point.
\end{enumerate}
After a Move phase is done, the next cycle begins with a new Look phase, and so on.

The robots are \emph{anonymous}, which means that they are indistinguishable and do not have identifiers. This translates into the fact that the snapshot a robot takes during a Look phase is simply a set of points, with no additional data. Since the origin of a robot's local coordinate system is always the robot's current location, each snapshot will always contain a point with coordinates $(0,0)$, representing the observing robot itself.

Robots are also \emph{oblivious}, meaning that they do not retain any memory of previous cycles. This translates into the fact that the only input to the algorithm executed by a robot in a Compute phase is just the last snapshot that the robot took. Similarly, we can say that the robots are \emph{silent}, in that they have no means of direct communication of information to other robots.

Different robots' coordinate systems may have different units of distance, different orientation, and different handedness. A robot's coordinate system may even change from one cycle to the next, as long as its position stays at the origin.

The operations that can be executed by a robot in the Compute phase are limited to algebraic functions of the points in the input snapshot. We assume that computations of algebraic functions can be performed in finite time with infinite precision.

The robots are \emph{asynchronous}, meaning that the duration of each cycle of each robot is completely arbitrary (but finite) and independent of the cycles of the other robots. In particular, a robot may perform a Look phase while another robot is in the middle of a movement. Also, from the time a robot takes a snapshot to the time it actually moves based on that snapshot, an arbitrarily long time may pass. This means that, when the robot actually moves, it may do so based on a very old and ``obsolete'' observation. The entity that decides the duration of each robot's cycles is the \emph{scheduler}. We may think of the scheduler as an ``adversary'' whose goal is to prevent the robots from performing a certain task.

During a Move phase, a robot moves directly toward the destination point that it computed in the previous Compute phase, along a line segment. In particular, it cannot move backwards on such a line. However, there are no assumptions on the robot's speed, and the speed may also vary arbitrarily during the Move phase. A robot can even occasionally stop and then move again (toward the same destination point) within the same Move phase. Again, the speed of the robot at each time is decided by the scheduler. The scheduler may also prevent a robot from reaching its destination point, by stopping it in the middle of the movement and then ending its Move phase. This model is called \emph{non-rigid} in the literature (as opposed to the \emph{rigid} model, in which a robot is always guaranteed to reach its destination by the end of every Move phase). The only constraint that we pose on the scheduler is that it cannot end a robot's Move phase unless the robot has moved by at least a positive constant $\delta$ during the current cycle, or it has reached its destination point. This $\delta$ is measured in a universal coordinate system (i.e., not in a robot's local coordinate system), and it is an absolute constant that is decided by the scheduler once and for all, and cannot be changed for the entire execution. We stress that the value of $\delta$ is not known to the robots, as it is not part of the input to the algorithm executed in the Compute phase.\footnote{The value of $\delta$ is assumed to be the same for all robots. However, since the robots are finitely many, nothing changes if each robot has a different $\delta$: all the executions in this model are compatible with a ``global'' $\delta$ that is the minimum of all the ``local'' $\delta$'s.}

The scheduler also decides the robots' initial positions in the plane (i.e., at time $t=0$), with the only constraint that they must be $n$ distinct locations (i.e., no two robots can occupy the same location, initially). We assume that initially the robots are not moving, and are waiting to be activated by the scheduler. When the scheduler activates a robot for the first time, it starts with a Look phase, and then proceeds normally. Different robots may perform the first Look phase at different times.

Note that, without loss of generality, we may assume that each cycle's Look and Compute phases are executed at the same time, instantaneously. Indeed, we can ``simulate'' a delay between the two phases by making a robot stay still for a while at the beginning of the next Move phase. Note that some authors also distinguish a \emph{Wait} phase, which occurs just before a Look. Again, this phase can be easily incorporated into the previous Move phase. Hence, in this paper, we will refer to only two phases: an instantaneous Look-Compute phase, and a Move phase, in which the moving robot may also stay still for arbitrarily long (but finite) periods of time.

\paragraph{Executions and properties.}
Let a swarm of $n$ robots operate according to an algorithm $\mathcal A$, starting from an initial configuration $I$, and with minimality constant $\delta$ (as defined above). We call \emph{execution} the sequence of configurations formed by the robots as a function of time, which depends on how the adversary activates the robots, and includes each robot's phase at each time. We denote by $\mathcal E^\delta_{I,\mathcal A}$ the set of all possible executions of such a swarm.
Note that, if $0<\delta'\leqslant\delta$, then $\mathcal E^\delta_{I,\mathcal A}\subseteq \mathcal E^{\delta'}_{I,\mathcal A}$. Since $\delta$ is not known to the robots, it makes sense to consider the set $\mathcal E_{I,\mathcal A}=\bigcup_{\delta>0} \mathcal E^\delta_{I,\mathcal A}$ as the class of all possible executions, regardless of how small the constant $\delta$ is. Similarly, we define $\mathcal E_{\mathcal A}=\bigcup_{I} \mathcal E_{I,\mathcal A}$ as the class of all possible distributed executions of algorithm $\mathcal A$, regardless of the initial position of the $n$ robots (as long as they are in distinct locations).

We call  \emph{property} any Boolean predicate on sequences of configurations.  We say that $\mathcal E^\delta_{I,\mathcal A}$ \emph{enjoys} property $\mathcal P$ if 
$\mathcal P$ is true for all executions in $\mathcal E^\delta_{I,\mathcal A}$. 

\paragraph{Trajectories and frozen configurations.}
For a given execution, we denote by $r(t)$ the position of robot $r\in\mathcal R$, expressed in a global coordinate system, at time $t\geqslant 0$. If $r$ is in a Look-Compute phase (respectively, in a {Move} phase) at time $t$, then the \emph{trajectory} of $r$ at time $t$ is the set consisting of the single point $r(t)$ (respectively, the segment with endpoints $r(t)$ and the destination point of $r$ at time $t$).

A robot is said to be \emph{frozen} at time $t$ if its trajectory at time $t$ is $\{r(t)\}$. The swarm $\mathcal R$ is said to be \emph{frozen} at time $t$ if every robot in $\mathcal R$ is frozen at time $t$. If the robots in the swarm reach a frozen configuration at time $t$, they are said to \emph{freeze} at time $t$. Recall that we assume the swarm to be frozen initially, i.e., at time $t=0$.

\paragraph{The \CF problem.} We may equivalently regard a property of executions as a set of ``behaviors'' that the robots may have. Assigning a \emph{task}, or a \emph{problem}, to a swarm of robots is the same as declaring that some behaviors are ``acceptable'', in that they attain a certain goal, and all other behaviors are ``unacceptable''. Hence, we can define a problem in terms of the  property that the  executions must satisfy.
Now, given a problem, expressed as a property $\mathcal P$ of executions, we say that algorithm $\mathcal A$ \emph{solves} the problem if $\mathcal E_\mathcal A$ enjoys $\mathcal P$. 
%Note that, in general, even if $\mathcal P$ is not empty, there may be no algorithm that solves it, because perhaps each algorithm has some executions that are not in $\mathcal P$.

In this paper we will consider the \CF problem, defined as the property $\mathcal U$ which is true only for those executions for which there is a time $t^*$ such that the robots are frozen at the vertices of a regular $n$-gon at every time $t\geqslant t^*$. In the following, we will describe the algorithm \UCF, and we will prove that it solves the \CF problem.

Note that we insisted on having only initial configurations with robots in distinct locations because otherwise the \CF problem would be unsolvable. Indeed, if two robots are initially coincident, the scheduler can force them to remain coincident for the entire execution (by giving them the same coordinate system and activating them synchronously). For the same reason, in our \UCF algorithm we never allow two robots to collide, although this is not explicitly imposed by the problem's definition.

%%%%%%%%%%%%%%%%%%%%%%%
\section{The Algorithm: Informal Description}\label{sec:informal}
%%%%%%%%%%%%%%%%%%%%%%%

The general idea of the algorithm, called \UCF,  is  rather simple. 
Its implementation is however complicated 
by many technical details, which make the overall strategy quite 
involved and the correctness proof very complex.

Consider  the case with $n>5$ robots.  Recall that the goal of the robots is to position themselves on the vertices of a regular $n$-gon, and stop moving. We call this type of configuration \RE.
Our general strategy  is to have the robots  move  
 to the smallest enclosing circle (SEC); once there, determine their   final target points, 
and then   move to their target points.
The only exception to this procedure
is when the robots form, either ``intentionally'' or ``accidentally'', a special type of configuration called \PR, in which case they follow a special procedure.

In the following we describe the ideas behind our  solution in an informal way.

\subsection{Special Cases: \BI and \PR Configurations}\label{sec:bipr}
Consider first a very special class of configurations in which the robots may be found: the \BI configurations, exemplified in Figure~\ref{f01}(a). A \BI configuration can be defined as one consisting of an even number $n$ of robots, and having exactly $n/2$ axes of symmetry. Note that a \BI configuration can be partitioned into two \RE configurations of equal size.
In this situation, the robots may all have exactly the same view of the environment, provided that their axes are oriented symmetrically. Hence the scheduler may force all of them to perform the same   computation and 
then move  at the same time, which will force the configuration to remain \BI at all times (or become \RE). 
In this scenario, the algorithm must ensure 
that a common computation  and   simultaneous movements  would
result in the   formation of a \RE configuration.
 On the other hand, because of asynchrony while moving towards this goal  the robots may also
 form  different and possibly asymmetric intermediate configurations. Therefore, it is clearly desirable that the robots preserve some invariant so that any such intermediate
 configuration  is  treated coherently  to the \BI case.
A solution to the problem of forming a regular polygon starting from a \BI configuration is described in~\cite{DieP07a}, 
where the robots can identify a ``supporting regular polygon'' (see Figure~\ref{f01}(b)),
and each robot  moves towards the closest vertex of such a polygon. Any intermediate configuration possibly formed while the robots move asynchronously and independently
 towards  the vertices of the supporting polygon is called \PR (note that all \BI configurations are also \PR). While executing this procedure from a \PR configuration, the supporting polygon remains invariant (e.g., see Figure~\ref{f01}(c)). So, whenever the configuration is perceived as \PR by \emph{all} the robots, moving towards the appropriate vertex  of the supporting polygon results in the formation of a \RE configuration. In Lemma~\ref{l:uniquepr} we will prove that, if $n>4$ and a supporting polygon exists, then it is unique.

\begin{figure}[h!]
\centering
\begin{tabular}{c@{\qquad}c@{\qquad}c}
\includegraphics[scale=0.45]{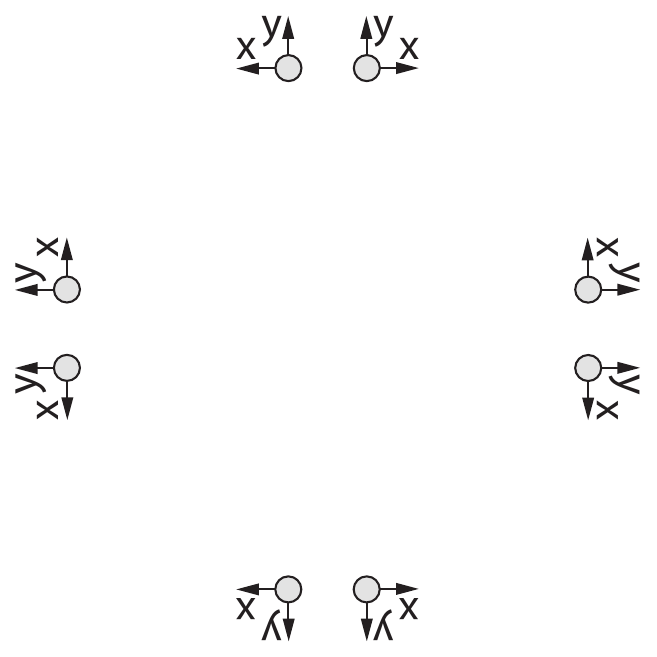} &
\includegraphics[scale=0.45]{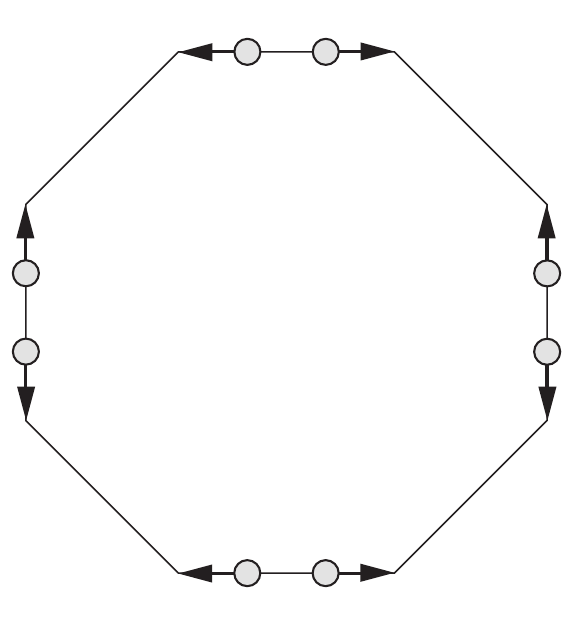} &
\includegraphics[scale=0.45]{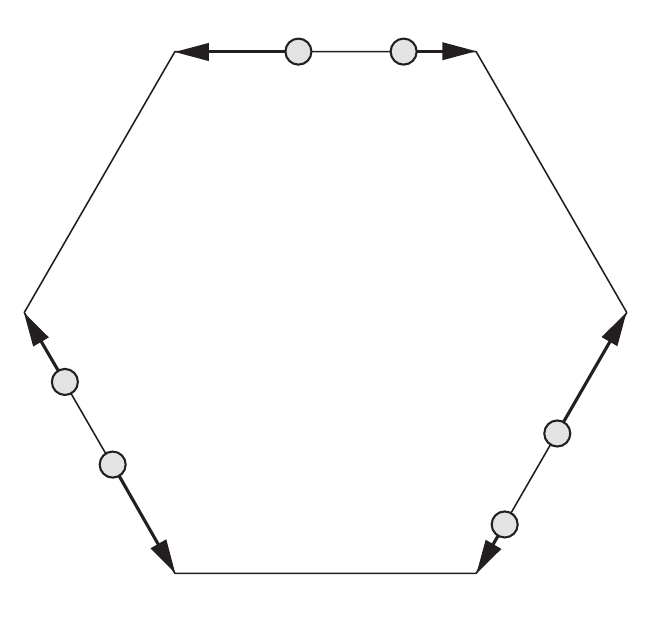} \\
(a) & (b) & (c)
\end{tabular}
\caption{(a) A \BI configuration, with local axes oriented in such a way that all robots have the same view. (b) The correct way to resolve a \BI configuration. (c) A generic \PR configuration with its supporting polygon, which remains invariant as the robots move according to the arrows.}
\label{f01}
\end{figure}

\subsection{General Strategy: SEC and Analogy Classes}

Consider now a starting position of the robots that is not \PR (and hence not \BI). 
Recall that the robots have no common reference frame, and there are no ``environmental'' elements that can be used by the robots to orient themselves. This is a serious difficulty that may prevent the robots from coordinating their movements and act ``consistently'' from one cycle to another. To overcome this difficulty, we identify the smallest enclosing circle (SEC) of the robots' positions (as shown in Figure~\ref{f02}(a)), and we make sure the robots move in such a way as to keep SEC fixed (note that SEC is unique and it is easy to compute). This will hold true as long as the configuration is not \PR. If the configuration happens to become \PR during the execution, then the procedure of Section~\ref{sec:bipr} will be executed, and SEC will no longer be preserved.

The general algorithm will attempt to make all robots reach the perimeter of SEC, as a preliminary step. So, let us consider a configuration that is not \PR and in which all robots lie on the perimeter of SEC. In this situation, we identify pairs of robots that are located in ``symmetric'' positions, i.e., such that there is an isometry of the swarm that maps one of the two robots into the other. We call two such robots \emph{analogous}, and the swarm is thus partitioned into \emph{analogy classes} of analogous robots (see Figure~\ref{f02}(b)). In general, an analogy class has either the shape of a \RE set or of a \BI set (with some degenerate cases, such as a single point or a pair of points).

Similarly to the \BI case (cf.~the discussion in Section~\ref{sec:bipr}), the scheduler may force all the robots in an analogy class to perform the same computation and  move  at the same time, thus occupying symmetric positions again, and potentially forever.  To accommodate this, we may as well incorporate this type of behavior into the algorithm, and make all analogous robots always \emph{deliberately} move together in the same fashion.

\begin{figure}[h!]
\centering
\begin{tabular}{c@{\quad}c@{\quad}c}
\includegraphics[scale=0.5]{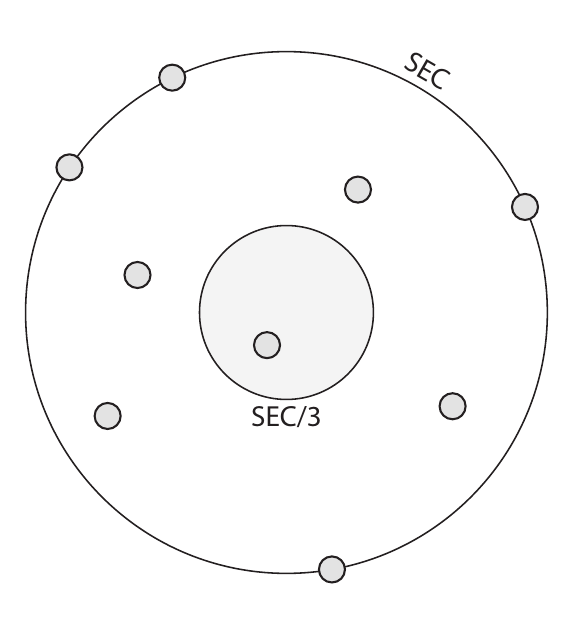} &
\includegraphics[scale=0.5]{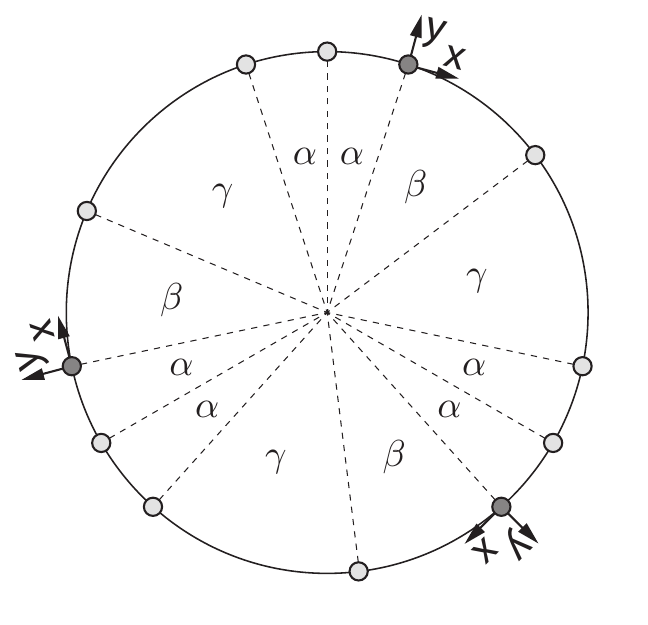} &
\includegraphics[scale=0.5]{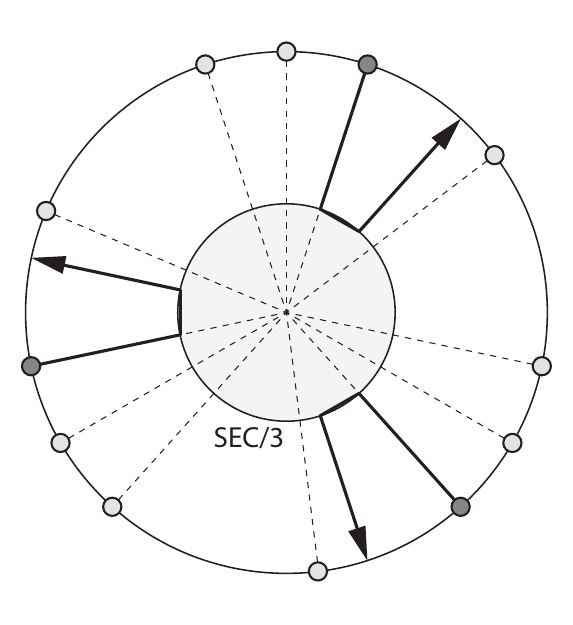} \\
(a) & (b) & (c)
\end{tabular}
\caption{(a) A swarm of robots, with its SEC and SEC/3. (b) The three highlighted robots form an analogy class. If their axes are oriented as indicated, the three robots have the same view. (c) The three dark-shaded robots are selected as walkers, and move according to the arrows. At the end of the move, each walker has an angular distance of $\pi/3$ (which is a multiple of $2\pi/n$) from a non-walker.}
\label{f02}
\end{figure} 

More specifically, we will let only one analogy class move at a time, while all the others wait on SEC (see Figure~\ref{f02}(c)). The robots in the analogy class that is allowed to move are called \emph{walkers}. When the walkers have been chosen, they move radially to SEC/3, which is the circle concentric with SEC and whose radius is $1/3$ of the radius of SEC. Once they are all there, they move to their \emph{finish set}, while staying within SEC/3 (or in its interior). When they are all in their finish set, they move radially to SEC again. Subsequently, a new analogy class of walkers is chosen, and so on. The walkers and the finish set are chosen in such a way that, when the walkers are done moving, some kind of ``progress'' toward a \RE configuration is made. By ``progress'' we mean, for instance, that two analogy classes merge and become one, or that the angular distance between two robots on SEC becomes a multiple of $2\pi/n$ (note that in a \RE configuration all angular distances are multiples of $2\pi/n$).

Of course, as the walkers move to some other location, they all need a strategy to ``wait for each other'', and make sure to reach a configuration in which they are once again analogous. Also, different analogy classes should plan their movements ``coherently'', in such a way that their combined motion eventually results in the formation of a \RE configuration. Note that this is complicated by the fact that, when a class of walkers starts moving, some of the ``reference points'' the robots were using to compute their destinations may be lost. Moreover, it may be impossible to select a class of walkers in such a way that some ``progress'' is made when they reach their destinations, and in such a way that SEC does not change as they move. In this case, the configuration is \emph{locked}, and some special moves have to be made.
Finally, as the robots move according to the general  algorithm we just outlined, they may form a \PR configuration ``by accident''. When this happens, the robots need a mechanism to stop immediately and start executing the procedure of Section~\ref{sec:bipr} (note that some robots may be in the middle of a movement when a \PR configuration is formed accidentally).

All these aspects will be discussed in some detail in this section. Next we will show how the robots can reach SEC from any initial configuration, as a preliminary step.

\subsection{Preliminary Step: Reaching SEC} 

A simple way to make all robots reach SEC without colliding is to make each of them move radially, away from the center, as in Figure~\ref{f03}(a). This works nicely, as long as no two robots are \emph{co-radial}, i.e., collinear with the center of SEC. A special case is the \CE configuration, in which one robot lies at the center of SEC. \CE configurations are easily resolved, by simply making the central robot move to SEC/3, in such a way as not to become co-radial with any other robot.

\begin{figure}[h!]
\centering
\begin{tabular}{c@{\qquad}c@{\qquad}c}
\includegraphics[scale=0.5]{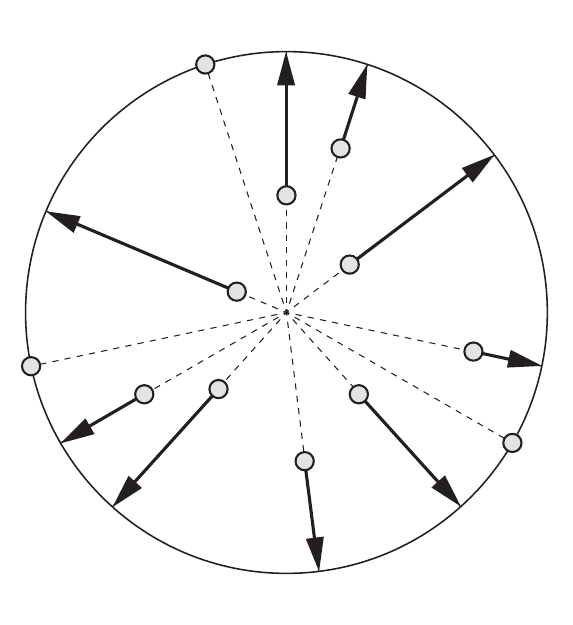} &
\includegraphics[scale=0.5]{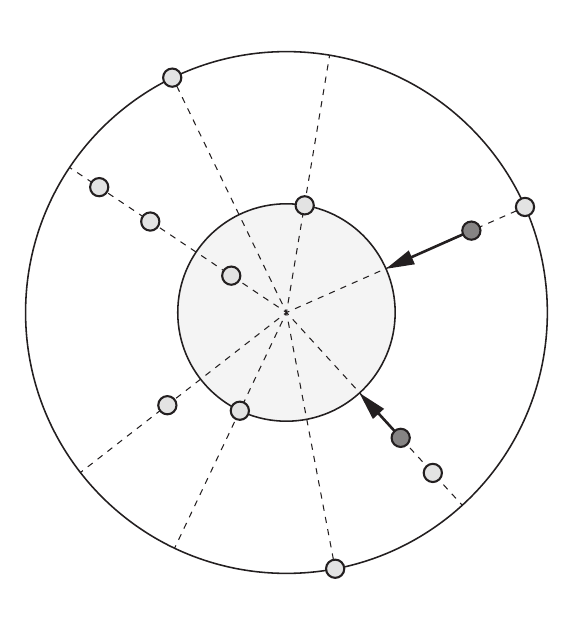} &
\includegraphics[scale=0.5]{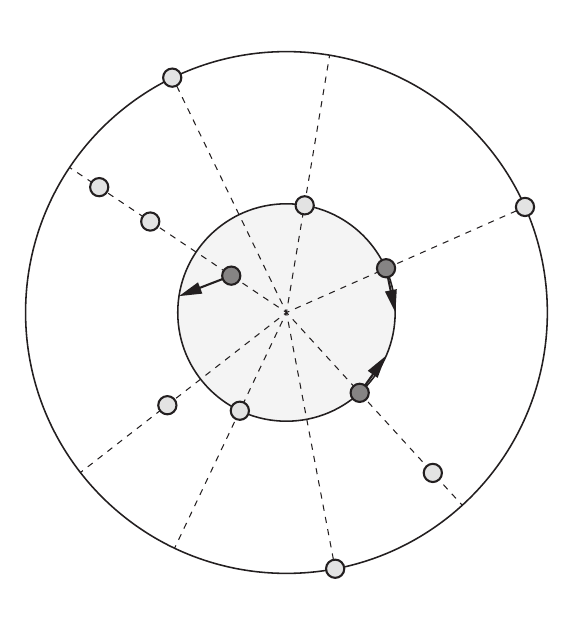} \\
(a) & (b) & (c)
\end{tabular}
\caption{(a) All robots move radially to reach SEC. (b) The most internal co-radial robots move radially to SEC/3. (c) When they are in SEC/3, they make a small lateral move.}
\label{f03}
\end{figure}

The \CO configurations that are not \CE are handled as follows. First of all, if there are non-co-radial robots that are in the interior of SEC/3, they move radially to SEC/3 (note how the evolution of a \CE configuration nicely blends with this). Then, the co-radial robots that are closest to the center of SEC move radially toward the center, until they are in SEC/3 (see Figure~\ref{f03}(b)). Finally, the most internal co-radial robots make a lateral move to become non-co-radial, as in Figure~\ref{f03}(c). The lateral move is within SEC/3 (or its interior) and it is ``sufficiently small'', in order to prevent collisions. A sufficiently small move is, for instance, a move that reduces the angular distance to any other robot by no more than 1/3.

The reason why we make robots reach SEC/3 before performing lateral moves is because we want to prevent the accidental formation of \PR configurations. We will discuss this aspect later, in Section~\ref{introAcc}.

It is easy to see how this strategy makes the robots coordinate their movements and avoid collisions. Indeed, as soon as a robot $r$ makes a lateral move and stops being co-radial, it is seen by the other robots as a non-co-radial robot lying in the interior of SEC/3. Hence, no other robot will take initiatives, and will just wait until $r$ has reached SEC/3 and has stopped there. This guarantees that, when a robot decides to perform a lateral move, no other robot is \emph{in the middle} of a lateral move.
 
Also, no matter how many robots lie on the same line through the center of SEC, the innermost will always move first, and then the others will follow in order, after the first has stabilized on SEC/3. When this procedure is completed, there are no more co-radial robots and no robots in the interior of SEC/3. At this point, the robots can safely move toward SEC, radially.

After this phase of the algorithm has been completed, no two robots will ever become co-radial again. We will achieve this through a careful selection of \emph{walkers} and \emph{target points}, and by making walkers move appropriately.

\subsection{\HD Configurations}

One other special initial case has to be resolved: the \HD case. In this configuration, all the robots lie in one half-disk of SEC, and the diameter of such a half-disk is called \emph{principal line} (see Figure~\ref{f04}(a)). The reason why we want to resolve these configurations immediately and separately from all others will be explained in the following, when discussing locked configurations.

\begin{figure}[h!]
\centering
\begin{tabular}{c@{\qquad}c@{\qquad}c}
\includegraphics[scale=0.5]{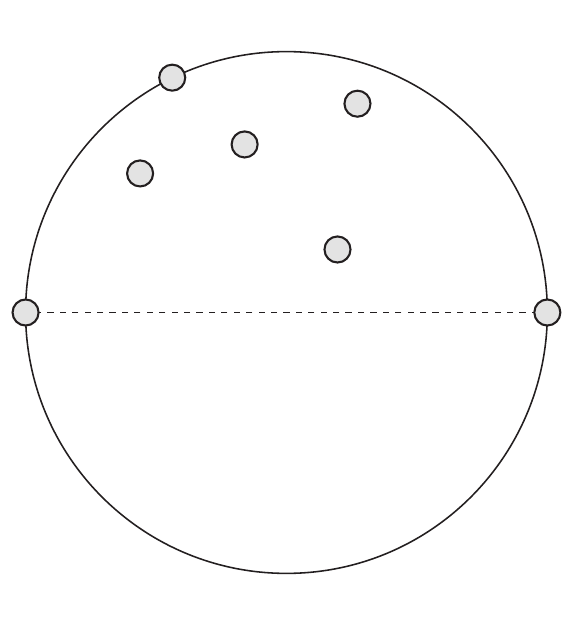} &
\includegraphics[scale=0.5]{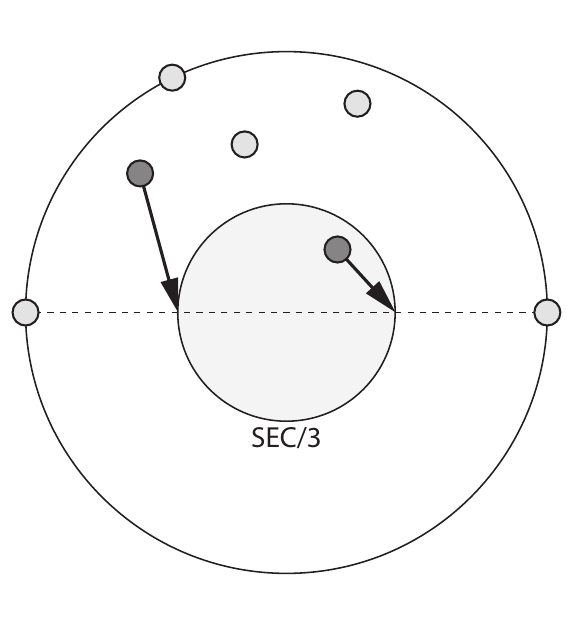} &
\includegraphics[scale=0.5]{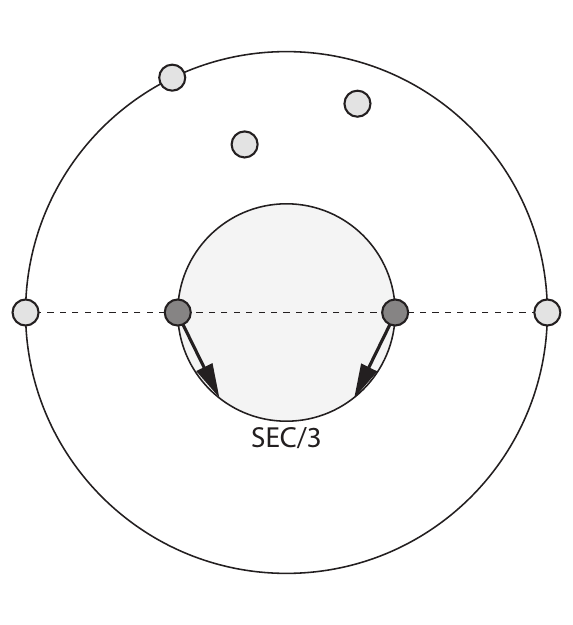} \\
(a) & (b) & (c)
\end{tabular}
\caption{(a) A \HD configuration, with the principal line. (b) Two robots move to the intersection between the principal line and SEC/3. (c) The same two robots move to the non-occupied half-disk.}
\label{f04}
\end{figure}

\HD configurations are resolved by making some robots move from the ``occupied'' half-disk to the ``non-occupied'' one. Note that, while doing so, some robots have to cross the principal line. Also, by definition of SEC, the principal line must contain robots on both endpoints. These two robots, $r_1$ and $r_2$, must stay in place in order to maintain SEC stable. Hence, exactly two other robots, which have smallest angular distances from $r_1$ and $r_2$ respectively, move to the two points in which the principal line intersects SEC/3 (see Figure~\ref{f04}(b)). Once they are both there, they move into the non-occupied half-disk, remaining inside SEC/3, as in Figure~\ref{f04}(c). (More precisely, if the principal line already contains some robots on or inside SEC/3, such robots do not preliminarily move to the perimeter of SEC/3, because it is unnecessary and it may even cause collisions; in this case, they move into the unoccupied half-disk right away.)

A very special \HD case is the one in which all robots lie on the same line. This case is handled like a generic \HD, with two robots first moving on SEC/3 (if they are not already on it or in its interior), and then moving away from the principal line. If they move in opposite directions, the configuration is no longer \HD. If they move in the same direction, they form a generic \HD, which is then resolved normally.

When analyzing the possible evolutions of a \HD configuration, one has to keep in mind that it transitions into a different configuration while one or two robots are still moving. This turns out to be relatively easy, since the moving robots are inside SEC/3 (like the robots that move laterally in the \CO case) and move in a very predictable and controlled way. When the configuration ceases to be \HD, the robots will move on SEC as described before, and they will never form a \HD configuration again.

\subsection{Identifying Targets} 

Suppose now that all robots are on SEC, and the configuration is not \PR and not \HD. In this case we can define a \emph{target set}, which represents the final \RE configuration that the robots are trying to form. Each element of the target set is called a \emph{target}, and corresponds to some robot's intended destination. Hence the target set is a \RE set of $n$ points, arranged on SEC in such a way that it can be computed by all robots, regardless of their local coordinate system (i.e, regardless of the orientation of their local axes, their handedness, and their unit of distance). Next we describe how the target set is defined, depending on the configuration of the robots.

Assume that the configuration has an axis of symmetry $\ell$. Then $\ell$ must also be an axis of symmetry of the target set. If one robot $r$ lies on $\ell$, then the target of $r$ coincides by definition with $r$, and the other targets are defined accordingly (see Figure~\ref{f05}(a)). If no robot lies on $\ell$, then no target lies on $\ell$, either. The correspondences between robots and targets are as in Figure~\ref{f05}(b). Note that the targets are uniquely determined even if the configuration has more than one axis of symmetry, and therefore the same targets are computed by all robots (we will prove this in Proposition~\ref{p:targetcompatible} and Remark~\ref{rem:targets}).

\begin{figure}[h!]
\centering
\begin{tabular}{ccc}
\includegraphics[scale=0.5]{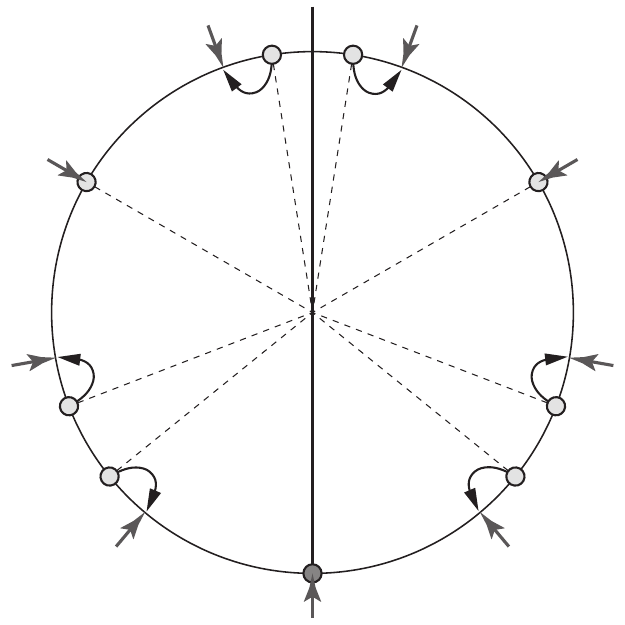} &
\includegraphics[scale=0.5]{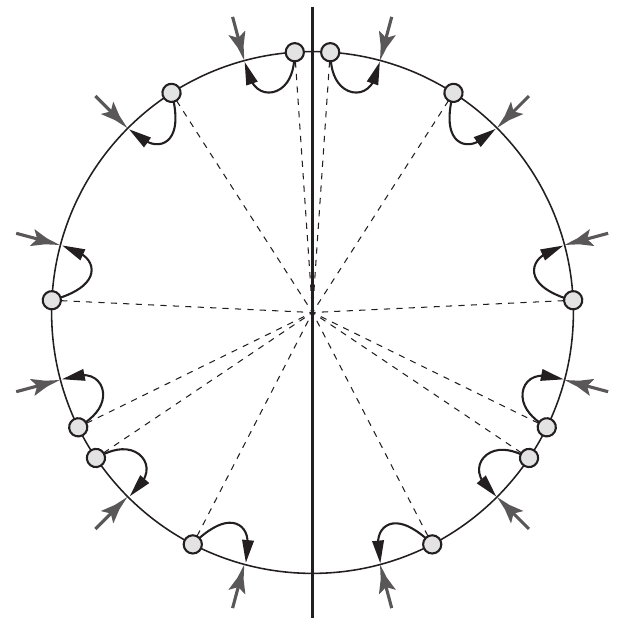} &
\includegraphics[scale=0.5]{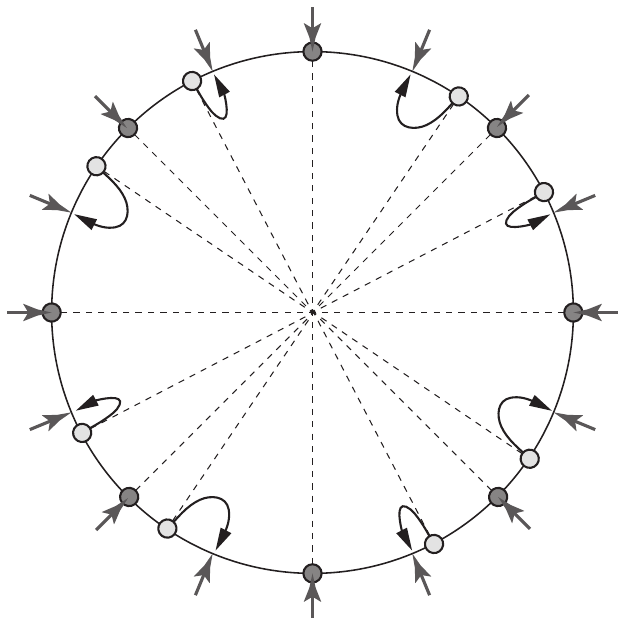} \\
(a) & (b) & (c)
\end{tabular}
\caption{The outer arrows indicate targets, and the inner arrows indicate correspondences between robots and targets. (a) The dark-shaded robot lies on an axis of symmetry. (b) There are some axes of symmetry, none of which contains a robot. (c) There are no axes of symmetry, and the dark-shaded robots form the largest concordance class.}
\label{f05}
\end{figure}

Assume now that the configuration has no axes of symmetry. In this case we say that two robots are \emph{concordant} if their angular distance is of the form $2k\pi/n$, for some integer $k$, and between them there are exactly $k-1$ robots. In other words, two concordant robots have the ``correct'' angular distance, and between them there is the ``correct'' number of robots. This relation partitions the robots into \emph{concordance classes}. The largest concordance class determines the target set: each robot in this class coincides with its own target, by definition.  Even if the largest concordance class is not unique, it turns out that there is always a way to choose one of them unambiguously, in such a way that all robots agree on it. Once some targets have been fixed, the other targets and correspondences are determined accordingly, as Figure~\ref{f05}(c) shows.

\subsection{Identifying Walkers, Locked Configurations} 

When the target set has been identified, then the \emph{walkers} can be defined. The walkers are simply the analogy class of robots that are going to move next.

Typically, the algorithm will attempt to move an analogy class of robots to their corresponding targets. The robots that currently lie on their targets are called \emph{satisfied}, and these robots should not move. Moreover, the walkers should be chosen in such a way that, when they move from their positions into the interior of SEC, they do not cause SEC to change. An analogy class of robots with this property is called \emph{movable}. Finally, no new co-radialities should be formed as the robots move. This means that the walkers should be chosen in such a way that, as they move toward their targets, they do not become co-radial with other robots. The targets of such robots are said to be \emph{reachable}.

Therefore, the walkers are a movable analogy class whose robots are not satisfied and can reach their targets without creating co-radialities. If such a class is not unique, one can always be chosen unambiguously.

There are special cases in which no such an analogy class or robots exists: these configurations are said to be \emph{locked} (see for instance Figure~\ref{f06}(a)). In a locked configuration, the walkers will be an analogy class that is movable and not satisfied, and that is adjacent to some non-movable analogy class. Such an analogy class is called \emph{unlocking}. The goal of these walkers is not to reach their targets (if they could, the configuration would not be locked), but to move in such a way as to ``unlock'' the configuration (as in Figure~\ref{f06}(b)), thus allowing other robots, which were previously non-movable, to reach their targets (as in Figure~\ref{f06}(c)). It can be shown (cf.\ Proposition~\ref{p:locked2}) that, in a locked configuration, the robots that cannot be moved are at most two, and are adjacent on SEC. Also, in a locked configuration, each analogy class consists of at most two robots. Hence there are either one or two walkers in a locked configuration, and they are both adjacent to some non-movable robot.

\begin{figure}[h!]
\centering
\begin{tabular}{ccc}
\includegraphics[scale=0.5]{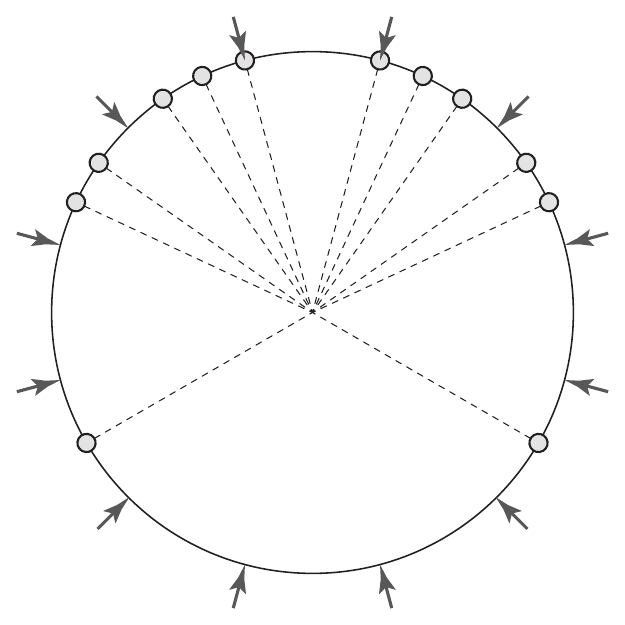} &
\includegraphics[scale=0.5]{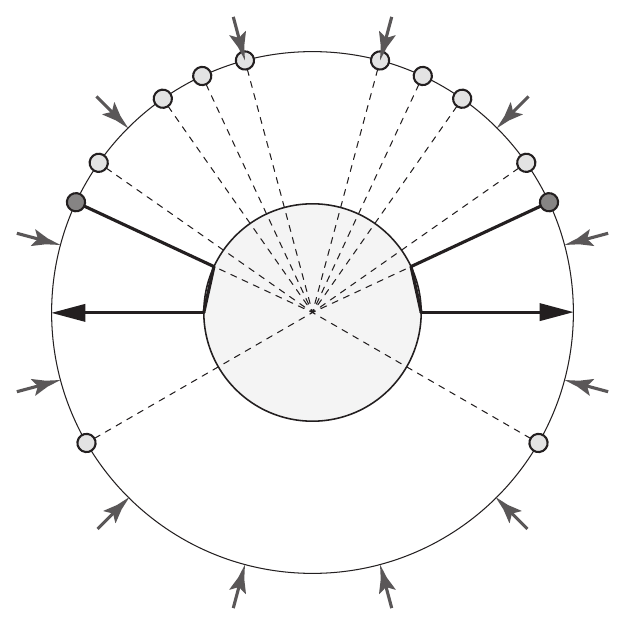} &
\includegraphics[scale=0.5]{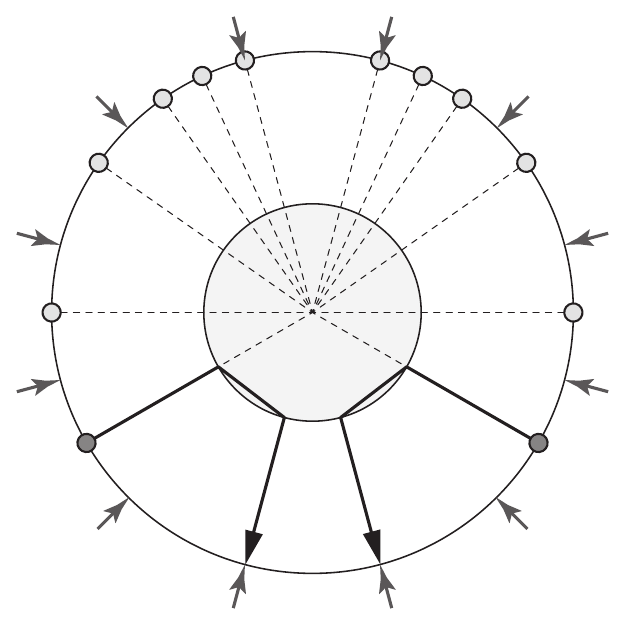} \\
(a) & (b) & (c)
\end{tabular}
\caption{(a) A locked configuration: the topmost robots are satisfied, the bottommost robots are non-movable, and all other robots would create co-radialities in the process of reaching their targets. (b) A preliminary move is made to unlock the configuration. (c) When the configuration is unlocked, the bottommost robots become movable.}
\label{f06}
\end{figure}

\subsection{Identifying \VA Configurations} 

Now we describe the journey that the walkers take to reach their destinations. First they move radially to SEC/3, and they wait for each other there. Once they are all on SEC/3, they start moving laterally, remaining within SEC/3 and its interior, until they reach their \emph{finish set}. Once they are in their finish set, they move back to SEC radially.

The reason why the walkers move to SEC/3 is two-fold. It makes it easier to foresee and prevent the accidental formation of \PR configurations (see Section~\ref{introAcc}), and it clearly separates the robots that should move from the ones that should wait, so that no one gets confused as the configuration changes.

Note that it is easy to recognize a configuration in which the walkers are moving radially to SEC/3 or back to SEC, because analogy classes (and hence the walkers) depend only on angular distances between robots. Hence, if all robots are on SEC, except a few analogous robots that are between SEC and SEC/3, then the configuration is recognized as a ``consistent'', or \VA one, in which the walkers are either moving to SEC/3, or are moving back to SEC (see Figure~\ref{f07}(a)).

\begin{figure}[h!]
\centering
\begin{tabular}{c@{\qquad}c}
\includegraphics[scale=0.5]{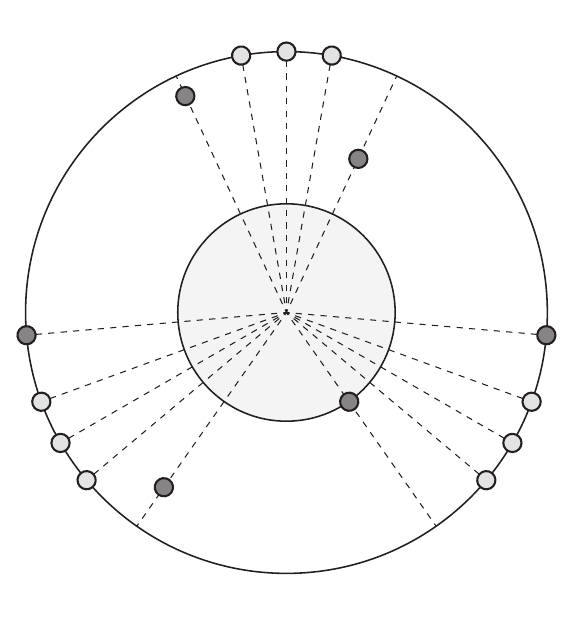} &
\includegraphics[scale=0.5]{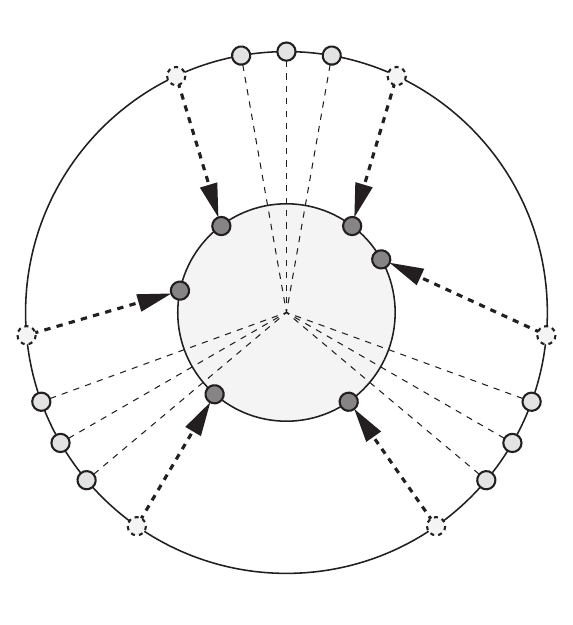} \\
(a) & (b)
\end{tabular}
\caption{Two types of \VA configurations. (a) Some analogous robots lie between SEC and SEC/3, and all other robots are on SEC. (b) All robots are on SEC or on SEC/3, and the distribution of the internal robots is compatible with a possible initial configuration in which they were all on SEC, forming an analogy class.}
\label{f07}
\end{figure}

If the walkers have already started moving laterally in SEC/3, then recognizing the configuration as a \VA one is a little harder. This can be done by ``guessing'' where the internal robots were located when they were still on SEC and they have been selected as walkers. If there is a way to re-position the internal robots within their respective ``sectors'' of SEC in such a way as to make them become a full analogy class, then the configuration is considered \VA, and the internal robots are considered walkers (see Figure~\ref{f07}(b)). Otherwise, it means that the execution is in one of the earlier stages, and the robots still have to make their preliminary move to SEC.

\subsection{Identifying the Finish Set}

Once the configuration has been recognized as \VA and all walkers are on SEC/3, they compute their \emph{finish set}. This is simply the set of their destinations on SEC/3, which they want to reach before moving back to SEC.

In order to understand where they should be going, the walkers have to recompute their targets. Indeed, note that the original targets have been computed when the walkers were on SEC. As they are now on SEC/3 and they will soon be moving laterally inside SEC/3, we need a robust way to define targets. By ``robust'' we mean that different walkers should compute the same target set, and that the target set should not change as the walkers move within SEC/3. Of course it may not be possible to reconstruct the original walkers' positions on SEC and recompute the original targets, and therefore once again the walkers have to ``take a guess''. The guess is that, when they were still on SEC, each walker was equidistant from its two adjacent robots, as in Figure~\ref{f08}(a). This position of the walkers is referred to as the \emph{principal relocation}, and it can be computed unambiguously by all robots.

Now the robots compute the finish set as follows. First of all, if the principal relocation is not a full analogy class, but just a subset of one, then the walkers know that it could not possibly be their initial position on SEC (see Figure~\ref{f08}(b)). In this case, the finish set is the principal relocation itself. The reason is that, by moving to their principal relocation, the walkers all join some bigger analogy class. This is a good thing to do, because it makes progress toward having a unique analogy class.

\begin{figure}[h!]
\centering
\begin{tabular}{c@{\quad}c@{\quad}c}
\includegraphics[scale=0.5]{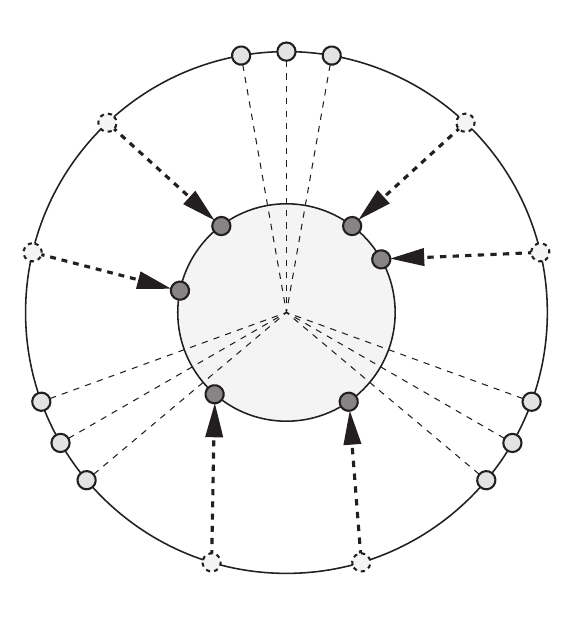} &
\includegraphics[scale=0.5]{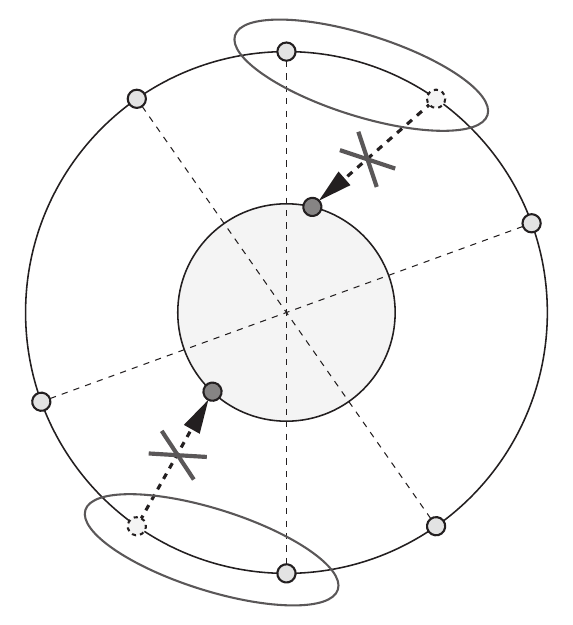} &
\includegraphics[scale=0.5]{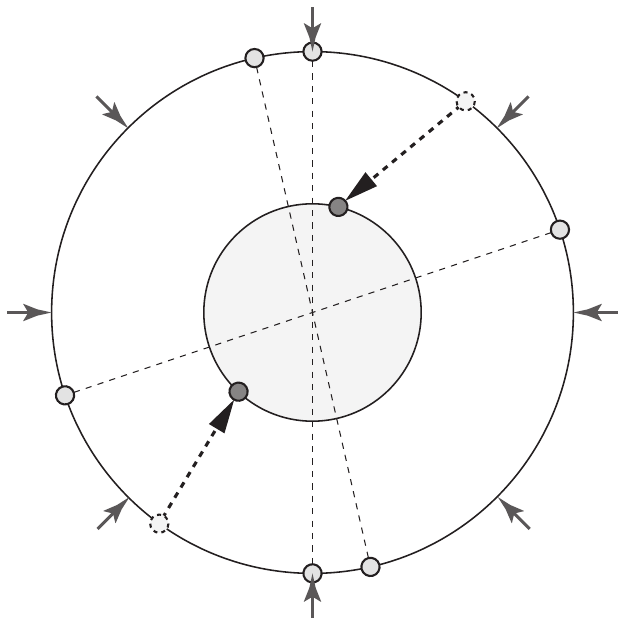} \\
(a) & (b) & (c)
\end{tabular}
\caption{(a) The principal relocation of the internal robots. (b) If the principal relocation is a proper subset of an analogy class, it cannot be the original position of the internal robots, or else a larger set of walkers would have been selected. (c) If the principal relocation forms an analogy class, it is used to determine the target set. Such targets remain fixed as the internal robots move within their respective sectors.}
\label{f08}
\end{figure}

If the principal relocation forms in fact an analogy class, then the walkers assume that to be their original position on SEC. Hence they compute the new targets based on that configuration, with the usual algorithm (see Figure~\ref{f08}(c)). Now, if the walkers can reach their respective targets from inside SEC/3 (that is, without becoming co-radial with other robots), then the finish set is the set of their targets. Otherwise, the walkers are confused, and by default their finish set is the principal relocation.

Now that the finish set has been defined, the robots move there, always remaining within SEC/3, and without becoming co-radial with each other. There is only one exception: suppose that the walkers reach their finish set and move radially to SEC: let $R$ be the set of the final positions of the walkers on SEC. If the new configuration is locked, and the robots in $R$ happen to form an unlocking analogy class, then it was not a good idea for the walkers to go to $R$. Indeed, this would cause them to become walkers again (unless there are two unlocking analogy classes and the other one is chosen), and the execution would enter an infinite loop. In this special case, the walkers have to do something to unlock the configuration, instead of reaching $R$. The strategy is simple: if the walkers are two, they move to two antipodal points (as in Figure~\ref{f06}(b)); if there is a unique walker, it becomes antipodal with some non-movable robot currently located on SEC. In the resulting configuration, all analogy classes will be movable, and the configuration will not be locked (cf.\ Proposition~\ref{p:locked1}). Note that this type of move would not be possible in a \HD configuration: this is precisely why we made sure to resolve \HD configurations early on.

\subsection{Accidental Formation of \PR Configurations}\label{introAcc}

Our algorithm has still one big unresolved issue. Recall that, every time a robot computes a new destination, it first checks if the configuration is \PR. If it is, it executes a special protocol; otherwise it proceeds normally. So, what happens if the swarm is executing the non-\PR protocol, and suddenly a \PR configuration is formed ``by accident''? If a robot happens to perform a Look-Compute phase right at that time, it is going to execute the \PR protocol, while all the other robots are still executing the other one, and maybe they are in the middle of a move (see Figure~\ref{f09}(a)). This leads to an inconsistent behavior that will potentially disrupt the ``flow'' of the entire algorithm.

To resolve this issue, we have to avoid the unintended formation of \PR configurations whenever possible. If in some cases it is not easily avoidable, then we have to make sure that the whole swarm stops moving (or \emph{freezes}, in the terminology of Section~\ref{sec:model}) whenever a \PR configuration is formed. This way, all robots will transition into the new configuration, and all of them will coherently execute the \PR protocol in the next cycle.

In Section~\ref{sec:analysis} we thoroughly discuss this topic, and we show how the robots should behave in every case. Fortunately, certain important configurations are safe: no \CE or \CO or \HD configuration can be \PR. So, in these initial phases, no \PR configuration can be formed accidentally. Also, in a \PR configuration no robot can be in SEC/3: this explains why we make our walkers move radially to SEC/3 first, and we allow them to move laterally only within SEC/3.

Hence, the only ``dangerous'' moves are the radial ones, which are performed by the walkers, or by the robots that are reaching SEC during the preliminary step. We can conveniently simplify the problem if we move only one analogy class of robots at a time. Note that this is already the case when the moving robots are the walkers, and in the other cases there is always a way to totally order the analogy classes unambiguously. If only one analogy class is moving radially (either from SEC to SEC/3 or from SEC/3 to SEC), it is easier to understand what is going to happen, and to keep everything under control.

\begin{figure}[h!]
\centering
\begin{tabular}{ccc}
\includegraphics[scale=0.5]{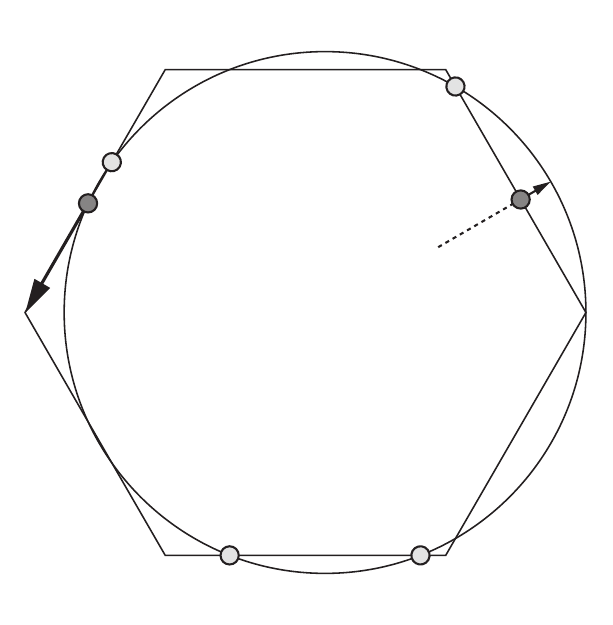} &
\includegraphics[scale=0.5]{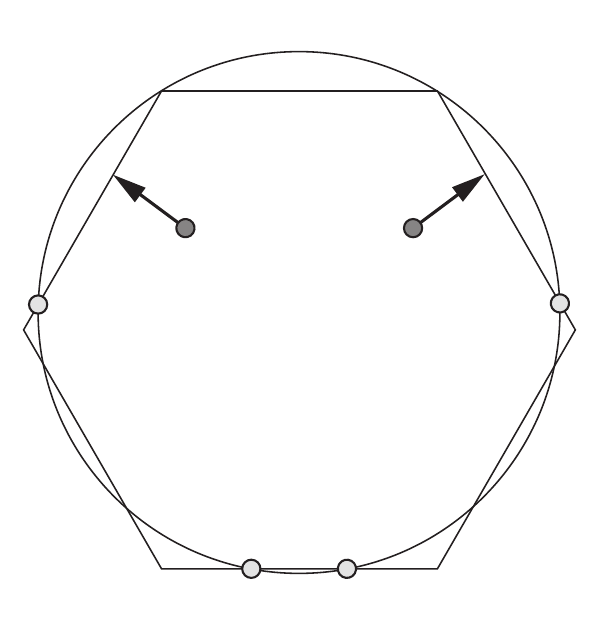} &
\includegraphics[scale=0.5]{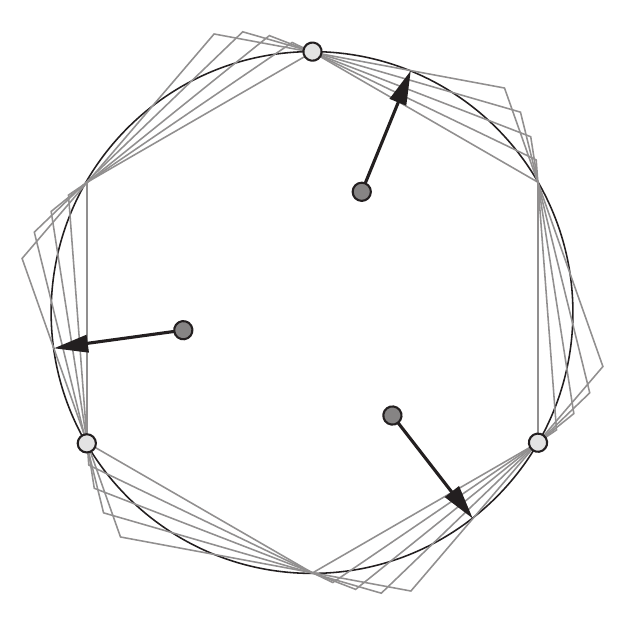} \\
(a) & (b) & (c)
\end{tabular}
\caption{(a) As the robot on the right moves to SEC, a \PR configuration is accidentally formed. The robot on the left recognizes a \PR configuration, and starts executing the corresponding protocol, which is inconsistent with the other robot's move. (b) To prevent this behavior, enough critical points are added. Now the swarm is guaranteed to stop as soon as a \PR configuration is formed. (c) A case in which infinitely many \PR configurations are formable. Still, only the innermost is relevant, because it can be reached before all the others.}
\label{f09}
\end{figure}

The general protocol that we use for radial moves is called \emph{cautious move}. In a cautious move, the robots compute a set of \emph{critical points}, and move in such a way as to freeze whenever they are all located at a critical point (see for instance Figure~\ref{f09}(b)). Intuitively, the robots ``wait for each other'': only the robots that are farthest from their destinations are allowed to move, while the others wait. Then, the robots make only moves that are short enough, and in addition they stop at every critical point that they find on their paths.\footnote{Roughly the same mechanism has been used in~\cite{CieFPS12}, with some technical differences.} Now, if we use the potentially formable \PR configurations to generate the critical points, we can indeed guarantee that the robots will freeze as soon as they form one. This is still not enough, because the formable \PR configurations may be infinitely many (as in Figure~\ref{f09}(c)), while the critical points must be finite, or the cautious move would never end. However, it can be shown that, in all cases, either there is a finite number of \PR configurations that will be formed before all the others, or suitable critical points can be chosen in such a way as to prevent the formation of \PR configurations altogether. Hence, it turns out that it is always possible to choose a finite set of critical points for all cautious moves, and guarantee that the swam is frozen whenever it transitions into a \PR configuration.

\subsection{Proof of Correctness: Outline} 

The proof of correctness of this algorithm is necessarily long and complex. This is partly because the algorithm itself is complicated and full of subtle details, and partly because the analysis must take into account a large number of different possible configurations and behaviors, and show that all of them are resolved correctly.

The correctness of the \PR case of the algorithm, as well as the \CE, \CO, and \HD cases is relatively straightforward, and is proven in the first lemmas of Section~\ref{sec:corr}. The difficulty here is to prove that the execution flows seamlessly from \HD to \CO, etc.

The other parts of the algorithm need a much more careful analysis. The correctness of the cautious move protocol is proven in Section~\ref{sec:cautious}. The discussion on the accidental formation of \PR configurations and on how to choose the critical points of the cautious moves is in Section~\ref{sec:analysis}. Much different strategies and ideas have to be used, depending on several properties of the configurations. In Proposition~\ref{p:locked2} we give a complete characterization of the locked configurations, showing where the non-movable and the unlocking analogy classes are.

With all these tools, we can finally tackle the \VA case, and so analyze the main ``loop'' of the algorithm. In the middle part of Section~\ref{sec:corr} we show that the different phases of the execution ``hinge together'' as intended: all the walkers reach SEC/3 and freeze there (unless a \PR configuration is formed in the process), then they all move to their finish set, freeze again, and finally they move back to SEC. As the execution continues and more iterations of this phase are made, we have to study how exactly the target set changes, and we have to make sure that a \PR configuration is eventually formed.

To this end we prove that, at each iteration, some ``progress'' is made toward a \RE or \BI configuration. The progress may be that the walkers join another analogy class (thus reducing the total number of analogy classes), or that a new axis of symmetry is acquired, or that more robots become satisfied. A precise statement and a complete proof is given in Lemma~\ref{z:next1}. Of course the configuration may also be locked, and this case is analyzed separately, in Lemma~\ref{p:unlocked}: here we prove that, after one iteration, either the configuration is no longer locked, or some analogy classes have merged, or a previously non-movable analogy class has become movable.

Also, by design, the algorithm never allows an analogy class to split (because the walkers constitute an analogy class when they are selected, and are again all analogous when they reach their finish set), and it never causes a symmetric configuration to become asymmetric from one iteration to the next. However, it is true that the targets may change, and thus the number of satisfied robots may actually decrease. But this can happen only when some analogy classes merge, or when the configuration becomes symmetric. And we know that this can happen only finitely many times.

So, either a \PR configuration is formed by accident (and we know that this case leads to a quick resolution), or eventually there will be only one analogy class left, and hence the configuration will be \RE or \BI. This will conclude the proof.

\subsection{Smaller Swarms} 

The algorithm we just outlined works if the robots in the swarm are $n>5$. If $n=3$, we have an ad-hoc algorithm described in Lemma~\ref{z:n3}. If $n=5$, the general algorithm needs some modifications, because it is no longer true that, in a locked configuration, there is a non-satisfied unlocking analogy class. The details of the extended algorithm are given in Lemma~\ref{z:n5}.\footnote{The results in~\cite{Kat05} seem to imply that the \CF problem can be solved for any odd number of robots in \ASYNC. A proof for the \SSYNC model is given, but its generalization to \ASYNC is missing some crucial parts. No extended version of the paper has been published, either. Hence, for completeness, we provide our own solutions for the special cases $n=3$ and $n=5$.} Finally, the case $n=4$ has recently been solved in~\cite{MamV16}.

%%%%%%%%%%%%%%%%%%%%%%%
\section{The Algorithm: Formal Description}\label{sec:formal}
%%%%%%%%%%%%%%%%%%%%%%%

\subsection{Geometric Definitions and Basic Properties}\label{defi:geo}

\paragraph{Smallest enclosing disks and circles.}
Given a finite set $S\subset \mathbb R^2$ of $n\geqslant 2$ points, we define the \emph{smallest enclosing disk} of $S$, or \emph{SED}$(S)$, to be the (closed) disk of smallest radius such that every point of $S$ lies in the disk. For any $S$, SED$(S)$ is easily proven to exist, to be unique, and to be computable by algebraic functions. The \emph{smallest enclosing circle} of $S$, or \emph{SEC}$(S)$, is the boundary of SED$(S)$.

Another disk will play a special role: \emph{SED/3}$(S)$. This is concentric with SED$(S)$, and its radius is $1/3$ of the radius of SED$(S)$. The boundary of SED/3$(S)$ is denoted as \emph{SEC/3}$(S)$.

If $S$ is understood, we may omit it and simply refer to SED, SEC, SED/3, and SEC/3.

\paragraph{Centrality and co-radiality.}
If one point of $S$ lies at the center of SED, then $S$ is said to form a \CE configuration. If two points lie on the same ray emanating from the center of SED, they are said to be \emph{co-radial} with each other, and each of them is a \emph{co-radial} point. If $S$ has co-radial points, it is said to form a \CO configuration. It follows that a \CE set is also \CO.

\paragraph{Antipodal points.}
Two points on SEC$(S)$ that are collinear with the center of SEC$(S)$ are said to be \emph{antipodal} to each other (with respect to SEC$(S)$).

\begin{observation}\label{l:sechull}
The center of SED$(S)$ lies in the convex hull of $S\cap \mbox{SEC}(S)$. Therefore, every half-circle of SEC$(S)$ contains at least one point of $S$. In particular, if just two points of $S$ lie on SEC$(S)$, they are antipodal.
\end{observation}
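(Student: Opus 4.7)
The plan is to establish the three assertions in sequence, with the convex-hull claim doing all the real work and the other two being short corollaries.

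For the first claim I would argue by contradiction. Let $c$ denote the center of $\mathrm{SED}(S)$ and $r$ its radius, and suppose $c\notin \mathrm{conv}(S\cap \mathrm{SEC}(S))$. Since $\mathrm{conv}(S\cap\mathrm{SEC}(S))$ is a compact convex set not containing $c$, the separating hyperplane theorem yields a unit vector $v$ and $\varepsilon_0>0$ such that $\langle v,p-c\rangle \geqslant \varepsilon_0$ for every $p\in S\cap\mathrm{SEC}(S)$. I then shift the center: let $c_\varepsilon = c+\varepsilon v$ for small $\varepsilon>0$. For any $p\in S\cap\mathrm{SEC}(S)$,
\[
\|p-c_\varepsilon\|^2 = \|p-c\|^2 - 2\varepsilon\langle v,p-c\rangle + \varepsilon^2 \leqslant r^2 - 2\varepsilon\varepsilon_0 + \varepsilon^2,
\]
which is strictly less than $r^2$ for all sufficiently small $\varepsilon>0$. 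Meanwhile, for $p\in S\setminus \mathrm{SEC}(S)$ we have $\|p-c\|<r$, so by continuity $\|p-c_\varepsilon\|<r$ for $\varepsilon$ small enough. Since $S$ is finite, a single choice of $\varepsilon$ works for all points at once, giving a closed disk of radius strictly less than $r$ centered at $c_\varepsilon$ that still contains $S$. This contradicts the minimality of $\mathrm{SED}(S)$.

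For the second claim, any (closed) half-circle of $\mathrm{SEC}(S)$ is the intersection of $\mathrm{SEC}(S)$ with a closed half-plane $H$ whose bounding line passes through $c$. If $H$ contained no point of $S$, then in particular $H$ would contain no point of $S\cap\mathrm{SEC}(S)$, placing every such point strictly in the complementary open half-plane. But then $\mathrm{conv}(S\cap\mathrm{SEC}(S))$ would also lie in that open half-plane, contradicting $c\in\mathrm{conv}(S\cap\mathrm{SEC}(S))$ from the first claim. Hence every half-circle of $\mathrm{SEC}(S)$ meets $S$.

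Finally, if $S\cap\mathrm{SEC}(S)=\{p,q\}$, then $\mathrm{conv}(S\cap\mathrm{SEC}(S))$ is the segment $\overline{pq}$, and the first claim forces $c\in\overline{pq}$. Because $\|p-c\|=\|q-c\|=r$ and $c$ lies on the segment joining $p$ and $q$, the three points are collinear with $c$ strictly between $p$ and $q$, so $p$ and $q$ are antipodal with respect to $\mathrm{SEC}(S)$. The only step with any subtlety is the first one, and even there the main point is just to make sure that the shifted disk still contains the points of $S$ strictly inside $\mathrm{SED}(S)$; finiteness of $S$ makes this immediate.
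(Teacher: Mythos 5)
Your proof is correct. Note that the paper states this as an \emph{Observation} and gives no proof at all---it is treated as a standard, well-known property of smallest enclosing disks---so there is no argument of the authors' to compare against; your perturbation argument (strictly separate $c$ from $\mathrm{conv}(S\cap\mathrm{SEC}(S))$, nudge the center along the separating direction, and use finiteness of $S$ to get a uniformly smaller enclosing disk) is the standard way to supply the missing justification, and all three claims follow as you describe. One small remark on the second claim: what must be shown is that the \emph{arc} $H\cap\mathrm{SEC}(S)$ meets $S$, not merely that the half-plane $H$ does, so the contradiction hypothesis you should start from is ``$H$ contains no point of $S\cap\mathrm{SEC}(S)$'' rather than ``$H$ contains no point of $S$''. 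Your derivation in fact only uses the former (weaker) hypothesis, so the argument as written does establish the right statement; it is just the opening phrase that understates what you prove. Everything else, including the antipodality conclusion from $c\in\overline{pq}$ with $\|p-c\|=\|q-c\|=r$, is fine.
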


\paragraph{\PR configurations.}
$S$ is \PR if there exists a regular $n$-gon (called the \emph{supporting polygon}) such that, 
for each pair of adjacent edges, one edge contains exactly two points of $S$ (possibly on its endpoints), and the other edge's relative interior contains no point of $S$~\cite{DieLP08}.  A \PR set is shown in Figure~\ref{f01}(c). There is a natural correspondence between points of $S$ and vertices of the supporting polygon: the {\em matching vertex} $v$ of point $p\in S$ is such that $v$ belongs to the edge containing $p$, and the segment $vp$ contains no other point of $S$. If two points of $S$ lie on a same edge of the supporting polygon, then they are said to be \emph{companions}.

\paragraph{\RE configurations.}
$S$ is \RE if its points are the vertices of a regular $n$-gon. The \CF problem requires $n$ robots to reach a \RE configuration and never move from there.

\paragraph{\HD configurations.}
Suppose that there exists a line $\ell$ through the center of SED, called the \emph{principal line}, such that exactly one of the two open half-planes bounded by $\ell$ contains no points of $S$. Then, such an open half-plane is called \emph{empty half-plane}, and $S$ is said to be a \HD set. A \HD set is shown in Figure~\ref{f04}(a). The center of SED divides $\ell$ into two rays, called \emph{principal rays}. Note that there must be two points of $S$ lying at the intersections between $\ell$ and SEC.

\paragraph{Angular distance and sectors.}
Let $c$ be the center of SED$(S)$. The \emph{angular distance} between two points $a$ and $b$ (distinct from $c$) is the measure of the smallest angle between $\angle a c b$ and $\angle b c a$, and is denoted by $\theta(a,b)$. The \emph{sector} defined by two distinct points $a$ and $b$ is the locus of points $x$ such that $\theta(a,x)+\theta(x,b)=\theta(a,b)$. (In the exceptional case in which $c$ lies on the segment $ab$, the points $a$ and $b$ define two sectors, which are the two half-planes bounded by the line through $a$ and $b$.)

\paragraph{Angle sequences.}
For the rest of this section we assume $S\subset \mathbb R^2$ to be a finite set of $n>2$ points that is not \CO.

Note that the positions of the points of $S$ around the center of SED, taken clockwise, naturally induce a cyclic order on $S$. Let $p\in S$ be any point, and let $p_i\in S$ be the $(i+1)$-th point in the cyclic order, starting from $p=p_0$. Let $\alpha_i^{(p)}=\theta(p_i,p_{i+1})$, where the indices are taken modulo $n$. Then, $\alpha^{(p)}=(\alpha_i^{(p)})_{0\leqslant i< n}$ is called the \emph{clockwise angle sequence} induced by $p$. Of course, depending on the choice of $p\in S$, there may be at most $n$ different clockwise angle sequences.

Letting $\beta_i^{(p)}=\alpha_{n-i}^{(p)}$, for $0\leqslant i< n$, we call $\beta^{(p)}=(\beta_i^{(p)})_{0\leqslant i< n}$ the \emph{counterclockwise angle sequence} induced by $p\in S$. We let $\alpha$ and $\beta$ be, respectively, the lexicographically smallest clockwise angle sequence and the lexicographically smallest counterclockwise angle sequence of $S$.

Finally, we denote by $\mu^{(p)}$ the lexicographically smallest between $\alpha^{(p)}$ and $\beta^{(p)}$, and by $\mu$ the lexicographically smallest between $\alpha$ and $\beta$. We call $\mu^{(p)}$ the \emph{angle sequence} induced by point $p$. (Since $\mu$ is a sequence, we denote its $i$-th element by $\mu_i$, and the same goes for $\mu^{(p)}$.)

\paragraph{Periods.}
The number of distinct clockwise angle sequences of $S$ is called the \emph{period} of $S$. It is easy to verify that the period is always a divisor of $n$. $S$ is said to be \EQ if its period is $1$, \BI if its period is $2$, \PER if its period is greater than $2$ and smaller than $n$, and \AP if its period is $n$. In a \BI set, any two points at angular distance $\mu_0$ are called \emph{neighbors}, and any two points at angular distance $\mu_1$ are called \emph{quasi-neighbors}. A \PER set is \UP if $\alpha\neq\beta$, and \BP if $\alpha=\beta$. Similarly, an \AP set is \UA if $\alpha\neq\beta$, and \BA if $\alpha=\beta$.

\begin{figure}[h!]
\centering
\begin{tabular}{c@{\qquad}c}
\includegraphics[scale=0.5]{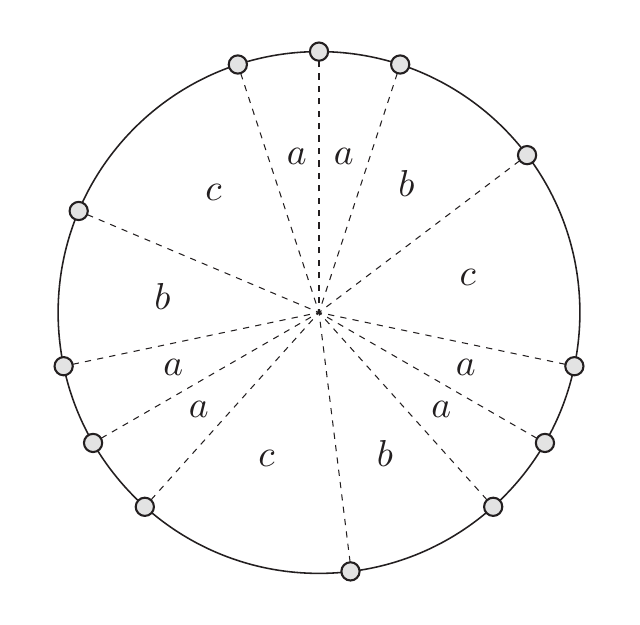} &
\includegraphics[scale=0.5]{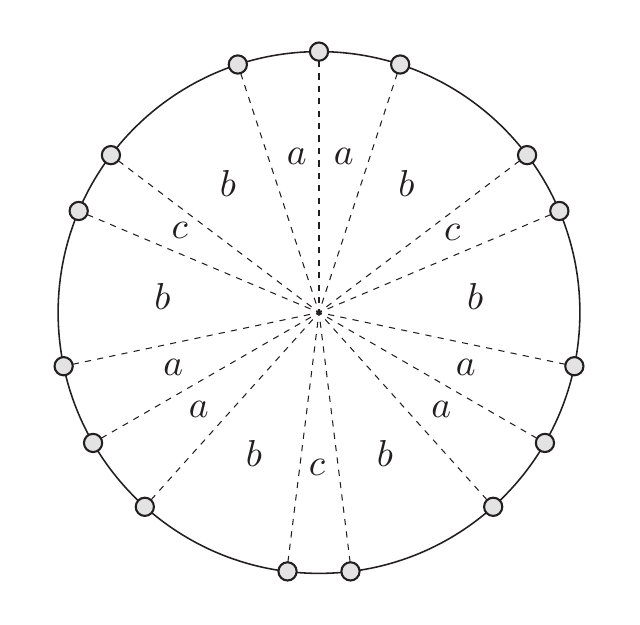} \\
(a) & (b)\\
\\
\includegraphics[scale=0.5]{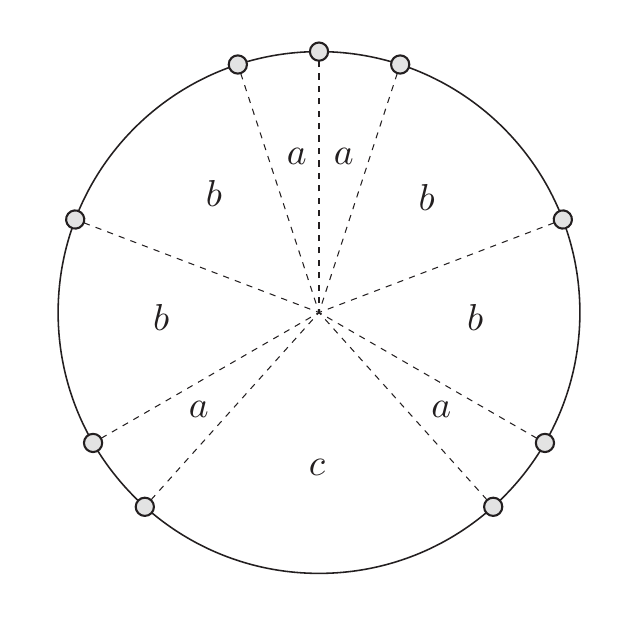} &
\includegraphics[scale=0.5]{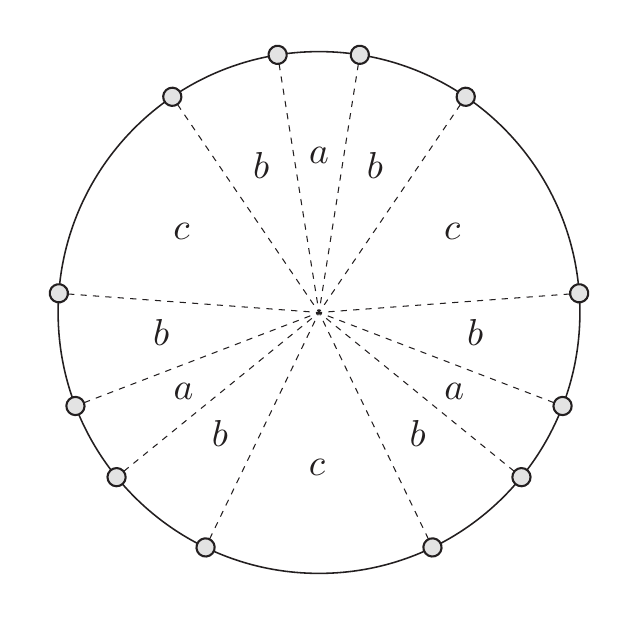} \\
(c) & (d)
\end{tabular}
\caption{(a) A \UP set. (b) A \BP set. (c) A \BA set. (d) A \DBI set.}
\label{g02}
\end{figure}

\paragraph{Analogy and strong analogy.}
We say that $p\in S$ is \emph{analogous} to $q\in S$ if $\mu^{(p)}=\mu^{(q)}$. In particular, if $\alpha^{(p)}=\alpha^{(q)}$, $p$ and $q$ are said to be \emph{strongly analogous}. Analogy and strong analogy are equivalence relations on $S$, and the equivalence classes that they induce on $S$ are called \emph{analogy classes} and \emph{strong analogy classes}, respectively.

\begin{observation}\label{o:analogy2}
Let $S$ be a set whose points all lie on SEC$(S)$.
\begin{itemize}
\item If $S$ is \EQ, all points are strongly analogous.
\item If $S$ is \BI, all points are analogous, and there are exactly two strong analogy classes.
\item If $S$ is \UP with period $k\geqslant 3$, each analogy class is an \EQ subset of size $n/k$.
\item If $S$ is \BP with period $k\geqslant 3$, each analogy class is either a \BI set of size $2n/k$, or an \EQ set of size $n/k$ or $2n/k$.
\item If $S$ is \UA, each analogy class consists of exactly one point.
\item If $S$ is \BA, each analogy class consists of either one or two points.
\end{itemize}
\end{observation}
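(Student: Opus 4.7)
The plan is to first read off the strong analogy classes directly from the period, and then upgrade to analogy classes by analysing when reflection symmetries of $S$ fuse distinct strong classes.

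First I would handle strong analogy. Since the period $k$ is defined as the number of distinct clockwise sequences $\alpha^{(p)}$, and $\alpha^{(p_i)}$ is simply the $i$-fold cyclic shift of $\alpha^{(p_0)}$, the cyclic sequence $(\alpha_0^{(p_0)},\ldots,\alpha_{n-1}^{(p_0)})$ has shift-period exactly $n/k$. Consequently each strong analogy class contains exactly $n/k$ points, occurring at indices differing by multiples of $n/k$; by the same periodicity any $n/k$ consecutive angles sum to the same constant, so those points are equally spaced around SEC and form an \EQ subset of size $n/k$.

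Next I would bridge from strong analogy to analogy via the equivalence: $\alpha=\beta$ iff $S$ admits a reflection symmetry. A reflection bijects clockwise traversals with counterclockwise ones, so it makes the multisets $\{\alpha^{(p)}\}_p$ and $\{\beta^{(p)}\}_p$ coincide; conversely, any equality $\alpha^{(p)}=\beta^{(q)}$ produces an explicit reflection mapping $p$ to $q$. Given this, if $\mu^{(p)}=\mu^{(q)}$, a short case split on which of $\alpha^{(\cdot)}$ or $\beta^{(\cdot)}$ realises $\mu^{(\cdot)}$ on each side yields either $\alpha^{(p)}=\alpha^{(q)}$ (so $p$ and $q$ are strongly analogous) or a cross-equality like $\alpha^{(p)}=\beta^{(q)}$ (hence a reflection, hence $\alpha=\beta$). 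So in the \UP and \UA cases analogy collapses to strong analogy; in the \BI, \BP, \BA cases, reflections induce a pairing on strong classes, and each analogy class is either a self-paired strong class or a union of two paired ones.

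The six cases then follow almost mechanically. For \EQ ($k=1$) there is a single strong class of size $n$. For \UP with $k\geq 3$ each analogy class equals one strong class, an \EQ set of size $n/k$; for \UA the same collapse gives singletons. For \BI ($k=2$) the alternating angle pattern has its reverse among its own cyclic shifts, forcing $\alpha=\beta$ and a reflection that fuses the two strong classes of size $n/2$ into one analogy class containing all of $S$. For \BA, singleton strong classes are paired by the reflection, giving analogy classes of size $1$ or $2$. The delicate case is \BP with $k\geq 3$: a self-paired strong class remains an \EQ analogy class of size $n/k$, while a pair of two distinct strong classes unions to $2n/k$ points on SEC, both sharing the angular spacing $2\pi k/n$ and related by a reflection of $S$. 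The main obstacle is the dichotomy here: a direct angular calculation shows that this union is an \EQ set of size $2n/k$ precisely when the two \EQ subsets are offset by exactly half of a spacing, and otherwise is a \BI set of size $2n/k$ — with no third option possible. Once this is pinned down, the remaining statements are bookkeeping.
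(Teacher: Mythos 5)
The paper states this result as an Observation and supplies no proof, so there is no ``official'' argument to compare against; your two-stage plan (read off the strong analogy classes from the cyclic-shift structure of the angle sequence, then merge strong classes into analogy classes via reflection symmetries) is the natural way to verify it, and the reflection bridge --- cross-equalities $\alpha^{(p)}=\beta^{(q)}$ force an axis of symmetry, hence cannot occur in the \UP or \UA cases, and otherwise pair up at most two strong classes per analogy class --- is sound. The final dichotomy in the \BP case (the union of two paired \EQ classes of size $n/k$ is \EQ of size $2n/k$ exactly when the offset is half a spacing, and \BI of size $2n/k$ otherwise) is also correct, and together with self-paired classes it accounts for all three alternatives in the statement.

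The one concrete defect is in your first step, where $k$ and $n/k$ have been swapped throughout, making the paragraph internally inconsistent. Since the period $k$ is the \emph{number} of distinct cyclic shifts of the angle sequence, the minimal shift-period of that sequence is $k$ itself, not $n/k$: we have $\alpha^{(p_i)}=\alpha^{(p_j)}$ if and only if $i\equiv j\pmod k$, so each strong analogy class consists of the $n/k$ points whose indices are congruent modulo $k$ (not modulo $n/k$ --- as written, ``$n/k$ points at indices differing by multiples of $n/k$'' would give a class of size $k$). Likewise it is every $k$ consecutive angles (one full period of the sequence) that sum to the same constant $2\pi k/n$, which is what shows each strong class is an \EQ set with spacing $2\pi k/n$. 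With this bookkeeping corrected, the rest of your argument goes through unchanged.
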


\begin{observation}\label{o:analogy1}
The following statements are equivalent.
\begin{itemize}
\item $S$ has a unique analogy class.
\item $S$ has period $1$ or $2$.
\item $S$ is \EQ or \BI.
\end{itemize}
\end{observation}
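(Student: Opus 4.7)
The statement asserts a three-way equivalence, and the plan is to exploit the fact that (2) $\Leftrightarrow$ (3) is immediate from the definitions---$S$ is \EQ iff its period is $1$, and \BI iff its period is $2$---so the real content is the equivalence of (1) with either of the other two. The main tool throughout will be Observation \ref{o:analogy2}, which already enumerates the analogy-class structure in each period class.

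For (3) $\Rightarrow$ (1), I would appeal to Observation \ref{o:analogy2} directly: if $S$ is \EQ then all points are strongly analogous and hence analogous, and if $S$ is \BI then the observation itself says all points are analogous, so in either case $S$ has a single analogy class. For the converse (1) $\Rightarrow$ (3), I would prove the contrapositive. Assuming $S$ is neither \EQ nor \BI, its period $k$ is at least $3$, and $S$ falls into one of the classes \UP, \BP, \UA, or \BA. Reading off Observation \ref{o:analogy2} in each case, the analogy classes have size $n/k$ for \UP (giving $k\geqslant 3$ classes), size at most $2n/k$ for \BP (giving at least $k/2\geqslant 2$ classes), size exactly $1$ for \UA (giving $n\geqslant 3$ classes), and size at most $2$ for \BA (giving at least $\lceil n/2\rceil\geqslant 2$ classes). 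In every subcase $S$ has at least two analogy classes, contradicting (1).

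The only subtlety I foresee is that Observation \ref{o:analogy2} is stated under the hypothesis that all points of $S$ lie on $\mathrm{SEC}(S)$, whereas Observation \ref{o:analogy1} makes no such assumption. I would handle this by noting that the period and the (strong) analogy relation depend only on the cyclic sequence of angular positions of the points of $S$ around the center of $\mathrm{SED}(S)$, and not on their radii. Radially projecting $S$ onto $\mathrm{SEC}(S)$ yields a set $S'$ of $n$ distinct points on $\mathrm{SEC}(S)$ with the same angle sequences as $S$; by Observation \ref{l:sechull} the center of $\mathrm{SED}(S)$ lies in the convex hull of $S\cap\mathrm{SEC}(S)\subseteq S'$, which forces $\mathrm{SEC}(S')=\mathrm{SEC}(S)$. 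Hence Observation \ref{o:analogy2} applies to $S'$ and the conclusions transfer back to $S$ unchanged.
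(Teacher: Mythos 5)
The paper states this as an \emph{observation} and offers no proof at all, so there is no ``paper's argument'' to diverge from; your job was simply to supply the missing justification, and what you wrote is correct. The equivalence of the second and third items is indeed definitional, Observation~\ref{o:analogy2} does give exactly one analogy class in the \EQ and \BI cases and at least two classes in each of the \UP, \BP, \UA, \BA cases (your counts $k$, $\lceil k/2\rceil$, $n$, $\lceil n/2\rceil$ are all at least $2$ once the period is at least $3$ and $n>2$), and the contrapositive closes the loop. The one point that genuinely needed care is the transfer from the on-SEC hypothesis of Observation~\ref{o:analogy2} to general $S$, and your handling is sound: angle sequences, period, and (strong) analogy are defined purely in terms of angular distances about the center of SED, the standing assumption that $S$ is not \CO makes the radial projection injective and order-preserving, and the fact that the center of $\mbox{SED}(S)$ lies in the convex hull of $S\cap\mbox{SEC}(S)\subseteq S'$ does force $\mbox{SEC}(S')=\mbox{SEC}(S)$ (any enclosing circle of $S'$ with a different center would leave some point of $S'$ at distance greater than the radius of $\mbox{SEC}(S)$ from that center). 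You could have argued the \EQ and \BI cases even more directly from the definition of $\mu^{(p)}$ without invoking Observation~\ref{o:analogy2}, but routing everything through that observation is perfectly legitimate.
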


\begin{proposition}\label{p:symm}
Let $S$ be a set of at least two points, and let $C$ be an analogy class of $S$. If $\ell$ is an axis of symmetry of $S$, then $\ell$ is an axis of symmetry of $C$. Also, if $S$ has a $k$-fold rotational symmetry around the center of SED$(S)$, then $C$ has a $k$-fold rotational symmetry with the same center.
\end{proposition}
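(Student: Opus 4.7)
The plan is to prove both claims at once by showing that every isometry of $S$ that fixes the center $c$ of SED$(S)$ sends each analogy class onto itself. Any isometry of $S$ (be it the reflection across $\ell$, or the generating rotation of the $k$-fold symmetry) sends SED$(S)$ to itself, since the smallest enclosing disk of $\sigma(S)=S$ is unique and equal to SED$(S)$; hence $\sigma$ fixes $c$. In particular, $\ell$ must pass through $c$, and the rotational symmetry must be centered at $c$. So I can restrict attention to isometries $\sigma$ of $S$ fixing $c$, and it suffices to prove that $\sigma(C)=C$ for every analogy class $C$.

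Next I would establish the key invariance: $\mu^{(\sigma(p))}=\mu^{(p)}$ for every $p\in S$. Since $\sigma$ fixes $c$ and preserves all Euclidean distances, it preserves every angular distance $\theta(\cdot,\cdot)$. If $\sigma$ is a rotation, it also preserves the cyclic order of $S$ around $c$, so it sends the sequence $(p_0,p_1,\dots,p_{n-1})$ starting at $p$ to the sequence $(\sigma(p_0),\sigma(p_1),\dots,\sigma(p_{n-1}))$ starting at $\sigma(p)$; hence $\alpha^{(\sigma(p))}=\alpha^{(p)}$ and $\beta^{(\sigma(p))}=\beta^{(p)}$. If $\sigma$ is a reflection, it reverses the cyclic order, so the clockwise traversal starting at $\sigma(p)$ corresponds to the counterclockwise traversal starting at $p$; unwinding the definitions of $\alpha^{(p)}$ and $\beta^{(p)}$, this gives $\alpha^{(\sigma(p))}=\beta^{(p)}$ and $\beta^{(\sigma(p))}=\alpha^{(p)}$. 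In either case, the unordered pair $\{\alpha^{(\sigma(p))},\beta^{(\sigma(p))}\}$ equals $\{\alpha^{(p)},\beta^{(p)}\}$, and since $\mu^{(p)}$ is by definition the lexicographic minimum of this pair, $\mu^{(\sigma(p))}=\mu^{(p)}$.

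Finally, given any analogy class $C$ and any $p\in C$, the invariance just proved gives $\mu^{(\sigma(p))}=\mu^{(p)}$, so $\sigma(p)\in C$. Thus $\sigma(C)\subseteq C$, and because $\sigma$ is a bijection on the finite set $S$, in fact $\sigma(C)=C$. Applying this to a reflection across $\ell$ yields that $\ell$ is an axis of symmetry of $C$; applying it to the rotation by $2\pi/k$ yields that $C$ is invariant under that rotation, i.e., has $k$-fold rotational symmetry around $c$. The main (mild) obstacle is the bookkeeping in the reflection case: one must check carefully that the algebraic relation $\beta_i^{(p)}=\alpha_{n-i}^{(p)}$ (indices mod $n$) is precisely the geometric effect of reversing the cyclic order around $c$, so that $\alpha^{(\sigma(p))}$ and $\beta^{(p)}$ genuinely match as sequences, not just as multisets or up to cyclic shifts; everything else in the proof is a direct consequence of the definitions.
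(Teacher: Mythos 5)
Your proof is correct and follows essentially the same route as the paper's: in both arguments the key observation is that a rotation about the center of SED preserves the clockwise and counterclockwise angle sequences while a reflection swaps them, so $\mu^{(p)}$ is invariant and each analogy class is mapped onto itself. Your preliminary remark that any symmetry of $S$ must fix the center of SED$(S)$ (by uniqueness of the smallest enclosing disk) is a slightly more explicit justification of a fact the paper leaves implicit, but the substance of the argument is identical.
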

\begin{proof}
Suppose that $\ell$ is an axis of symmetry of $S$. Let $p\in C$, and let $p'$ be the symmetric of $p$ with respect to $\ell$. Since $p\in S$ and $\ell$ is an axis of symmetry of $S$, it follows that $p'\in S$. Also, the clockwise angle sequence induced by $p$ (respectively, $p'$) is the same as the counterclockwise angle sequence induced by $p'$ (respectively, $p$). Hence $\mu^{(p)}=\mu^{(p')}$, which means that $p$ and $p'$ are analogous, and therefore $p'\in C$.

Suppose that $S$ has a $k$-fold rotational symmetry with respect to the center of SED$(S)$. Let $p\in C$, and let $p'$ be any point such that $\theta(p,p')=2\pi/k$, and $p$ and $p'$ are equidistant from the center of SED$(S)$. Since $p\in S$, it follows that $p'\in S$. Also, the clockwise (respectively, counterclockwise) angle sequence induced by $p$ is the same as the clockwise (respectively, counterclockwise) angle sequence induced by $p'$. Hence $\mu^{(p)}=\mu^{(p')}$, which means that $p$ and $p'$ are analogous, and therefore $p'\in C$.
\end{proof}

\paragraph{\DBI configurations.}
$S$ is said to be \DBI if it is \BP with period $4$ and has exactly two analogy classes.

\paragraph{Concordance.}
Two points $p,q\in S$ are \emph{concordant} if there exists an integer $k$ such that the angular distance between $p$ and $q$ is $2k\pi/n$, and there are exactly $k+1$ points of $S$ in the sector defined by $p$ and $q$ (including $p$ and $q$ themselves). Concordance is an equivalence relation on $S$, and its equivalence classes are called \emph{concordance classes}.

\begin{observation}\label{o:concordance2}
In a \UP or \UA set, any two analogous points are also concordant. Hence, in such a set, each analogy class is a subset of some concordance class.
\end{observation}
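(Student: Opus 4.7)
The plan is to treat the \UA and \UP cases separately and, in the \UP case, to reduce the claim to a short sum calculation on the sequence $\alpha$.

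The \UA case is immediate from Observation~\ref{o:analogy2}: every analogy class is a singleton, so two analogous points coincide and are concordant with $k=0$. In the \UP case, the key reduction is that analogy collapses to strong analogy. Because $\alpha\neq\beta$, the set $S$ has no axis of symmetry through the center of SED; indeed, reflecting across such a line would send the clockwise sequence induced by any point to the counterclockwise sequence induced by its mirror image, forcing $\alpha=\beta$. The same argument, applied to the perpendicular bisector of $pq$ through the center, shows $\alpha^{(p)}\neq\beta^{(q)}$ for every $p,q\in S$. Hence, inside an analogy class, one cannot have $\mu^{(p)}=\alpha^{(p)}$ for one member and $\mu^{(q)}=\beta^{(q)}$ for another, so either all members share the same clockwise sequence or all share the same counterclockwise sequence. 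A short index manipulation shows these two situations are equivalent: $\beta^{(p_0)}=\beta^{(p_j)}$ iff $\alpha$ is invariant under cyclic shift by $j$, iff $\alpha^{(p_0)}=\alpha^{(p_j)}$. Either way, the members of an analogy class are strongly analogous.

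It remains to deduce concordance from strong analogy. Suppose $p=p_0$ and $q=p_j$ are strongly analogous; since $p_0$ is then also strongly analogous to $p_{n-j}$, I may assume $j\leqslant n/2$. Then $T^j\alpha=\alpha$, so the cyclic period $d$ of $\alpha$ divides both $j$ and $n$, and $\alpha$ is composed of identical blocks of length $d$. The total $\sum_i\alpha_i=2\pi$ splits across $n/d$ blocks, each summing to $2\pi d/n$; therefore any $j$ consecutive entries sum to $2\pi j/n$. In particular $\theta(p,q)=\alpha_0+\cdots+\alpha_{j-1}=2\pi j/n$, and the sector from $p$ to $q$ contains exactly the $j+1$ points $p_0,\ldots,p_j$, which is concordance with $k=j$. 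The second half of the observation is then automatic, since concordance is an equivalence relation on $S$.

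The main obstacle is the symmetry-exclusion middle step: one must show that $\alpha\neq\beta$ forbids \emph{every} coincidence $\alpha^{(p)}=\beta^{(q)}$, not only the diagonal case $p=q$. Without this strengthening, analogous points in a \UP set could in principle realize their common minimum sequence one clockwise and the other counterclockwise, and the block-sum computation would fail to deliver concordance.
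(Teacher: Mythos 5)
Your proof is correct. The paper states this as an Observation and supplies no proof of its own, so there is no authors' argument to compare against; what you have written is a complete justification along the lines the statement evidently takes for granted: analogy reduces to strong analogy because $\alpha\neq\beta$ rules out any coincidence $\alpha^{(p)}=\beta^{(q)}$, and strong analogy of $p_0$ and $p_j$ forces the cyclic period $d$ of the gap sequence to divide $j$, whence the block-sum computation gives angular distance $2\pi j/n$ with exactly $j+1$ points of $S$ in the sector. One small precision note on the step you rightly single out as the crux: since the points of a \UP set need not all lie on SEC, a coincidence $\alpha^{(p)}=\beta^{(q)}$ yields a reflection symmetry of the angular positions (i.e., of the footprints $\mathcal F(S)$), not necessarily of $S$ itself, so the ``axis of symmetry of $S$'' phrasing is slightly too strong. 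The cleanest formulation is purely combinatorial: the clockwise angle sequences of $S$ are exactly the cyclic shifts of $\alpha^{(p)}$ and the counterclockwise ones exactly the cyclic shifts of $\beta^{(q)}$, so $\alpha^{(p)}=\beta^{(q)}$ makes these two sets of sequences coincide and forces their lexicographic minima $\alpha$ and $\beta$ to be equal, contradicting the definition of \UP and \UA. With that rewording every step of your argument goes through.
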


\begin{proposition}\label{p:axiscenter}
Let $S$ be a set of at least two points, all of which are on SEC$(S)$. Then, each axis of symmetry of $S$ passes through the center of SED$(S)$.
\end{proposition}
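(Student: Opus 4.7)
The plan is to exploit the uniqueness of the smallest enclosing disk. Let $\ell$ be an axis of symmetry of $S$, and let $\sigma$ denote the reflection of the plane across $\ell$; by definition, $\sigma(S) = S$. The proposition will follow once I show that $\sigma$ fixes the center of $\mathrm{SED}(S)$, because then this center must lie on $\ell$, the fixed-point set of $\sigma$.

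First I would show that $\sigma$ preserves $\mathrm{SED}(S)$ setwise. Since $\sigma$ is an isometry, $\sigma(\mathrm{SED}(S))$ is a closed disk of the same radius as $\mathrm{SED}(S)$, and it contains $\sigma(S) = S$. By the uniqueness of $\mathrm{SED}(S)$ as the smallest closed disk containing $S$, I conclude $\sigma(\mathrm{SED}(S)) = \mathrm{SED}(S)$.

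Next I would show that any isometry mapping a disk to itself fixes its center. The cleanest way is to characterize the center $c$ as the unique point in the plane whose distance to every point of $\mathrm{SEC}(S)$ equals the radius. Since $\sigma$ maps the boundary $\mathrm{SEC}(S)$ to itself bijectively and preserves distances, the image $\sigma(c)$ is equidistant from every point of $\mathrm{SEC}(S)$ as well, which forces $\sigma(c) = c$. Because the fixed-point set of the reflection $\sigma$ is exactly $\ell$, we have $c \in \ell$, as required.

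The argument is quite short, and there is no genuine obstacle: the hypotheses that $S$ is \VA and that its points all lie on $\mathrm{SEC}(S)$ are in fact not used, as the proof works verbatim for any finite $S$ with $|S|\geqslant 2$. These assumptions are presumably included for consistency with the context in which the proposition is later invoked.
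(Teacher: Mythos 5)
Your proof is correct, but it takes a genuinely different route from the paper's. The paper argues directly from the hypothesis that all points lie on $\mathrm{SEC}(S)$: it treats $|S|=2$ separately (the two points are antipodal by Observation~\ref{l:sechull}, so both axes pass through the center), and for $|S|\geqslant 3$ it picks a point $p$ not on the axis $\ell$, observes that $\ell$ is the perpendicular bisector of the non-degenerate chord $pp'$ joining $p$ to its mirror image, and concludes because the perpendicular bisector of a chord of a circle passes through the center. Your argument instead pushes the reflection $\sigma$ through the smallest enclosing disk: $\sigma(\mathrm{SED}(S))$ is a disk of minimal radius containing $\sigma(S)=S$, so by the uniqueness of SED (asserted in Section~\ref{defi:geo}) it equals $\mathrm{SED}(S)$; hence $\sigma$ fixes the center, which must therefore lie on the fixed-point set $\ell$ of the reflection. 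Every step is sound, including your characterization of the center as the unique point equidistant from all of the boundary circle. What your approach buys is generality and the elimination of case analysis: as you correctly note, it never uses that $S$ is \VA or that its points lie on $\mathrm{SEC}(S)$, so it establishes the conclusion for every finite $S$ with $|S|\geqslant 2$. The paper's argument is more elementary (a single chord-bisector fact) but is genuinely tied to the on-SEC hypothesis and needs the small-cardinality case handled separately.
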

\begin{proof}
If $S$ consists of exactly two points, then such two points must be antipodal, by Observation~\ref{l:sechull}. In this case, $S$ has exactly two axes of symmetry, both of which pass through the center of SEC. Suppose now that $S$ consists of at least three points, and it has an axis of symmetry $\ell$. In this case, there must be a point $p\in S$ that does not lie on $\ell$, whose symmetric point $p'\in S$ does not lie on $\ell$, either. Both $p$ an $p'$ lie on SEC by assumption, and the axis of the (non-degenerate) chord $pp'$ must be $\ell$. But the axis of a circle's chord passes through the center of the circle, and therefore $\ell$ passes through the center of SED.
\end{proof}

\paragraph{Footprints and anti-footprints.}
We define the \emph{footprint} (respectively, \emph{anti-footprint}) of $p\in S$ as the point on SEC$(S)$ (respectively, SEC/3$(S)$) that is co-radial with $p$, and we denote it by $\mathcal F(p)$ (respectively, $\mathcal F'(p)$). We also define the footprint (respectively, anti-footprint) of a subset $A\subseteq S$, denoted by $\mathcal F(A)$ (respectively, $\mathcal F'(A)$), as the set of the footprints (respectively, anti-footprints) of all the points of $A$.

\paragraph{External and internal points.}
We let $\mathcal E(S)=S\cap\mbox{SEC}(S)$ be the set of \emph{external} points of $S$. Similarly, we let $\mathcal I(S)=S\setminus \mathcal E(S)$ be the set of \emph{internal} points of $S$.

\paragraph{Main sectors, occupied sectors, and consecutive points.}
Each sector defined by pairs of distinct points of $S$ whose interior does not contain any point of $S$ is called \emph{main sector} of $S$. It follows that $S$ has exactly $|S|$ main sectors (recall that we are assuming $S$ not to be \CO). A main sector of $\mathcal E(S)$ is an \emph{occupied} sector of $S$ if it contains some points of $\mathcal I(S)$. If two points of $S$ define a main sector, they are said to be \emph{consecutive} points of $S$.

\paragraph{Midpoints.}
We say that $p\in S$ is a \emph{midpoint} in $S$ if $\alpha_0^{(p)}=\beta_0^{(p)}$.

\paragraph{Relocations and well-occupied configurations.}
If $\mathcal I(S)$ is not empty, a \emph{relocation} of $\mathcal I(S)$ (with respect to $S$) is the image of an injective function $f\colon \mathcal I(S)\to\mbox{SEC}(S)$ that maps every internal point of $S$ to some point in the interior of the same occupied sector of $S$. The \emph{principal relocation} is the (unique) relocation $R\subset\mbox{SEC}(S)$ every point of which is a midpoint in $\mathcal E(S)\cup R$. If there exists a relocation $R$ of $\mathcal I(S)$ that is an analogy class of $\mathcal E(S)\cup R$, then $S$ is said to be \emph{well occupied}.

\paragraph{\VA configurations (\RD or \WA).}
$S$ is a \VA set if it consists of at least five points, it is not \CO, not \HD, and one of the following conditions holds.
\begin{itemize}
\item All the points of $S$ are either on SEC or in SED/3, and $S$ is well occupied (as in Figure~\ref{f07}(b)). In this case, $S$ is said to be \RD.
\item No point of $S$ is in the interior of SED/3, and all the internal points of $S$ are analogous (as in Figure~\ref{f07}(a)). In this case, $S$ is said to be \WA.
\end{itemize}

\begin{remark}
If $S$ has no internal points, it is \VA and \WA. Also, if $S$ is \EQ or \BI and none of its points lies in the interior of SED/3, it is \VA and \WA.
\end{remark}

\begin{remark}
There exist \VA sets that are both \RD and \WA. For instance, if the internal points of a \VA set constitute an analogy class and they all lie on SEC/3, then the set is both \RD and \WA.
\end{remark}

\begin{proposition}\label{p:readyinternal}
In a \VA and \RD set, the occupied sectors either contain exactly one point each, or they contain exactly two points each.
\end{proposition}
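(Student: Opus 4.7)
The plan is to fix a relocation $R\subset\mathrm{SEC}(S)$ of $\mathcal I(S)$ that is an analogy class of $T:=\mathcal E(S)\cup R$; such an $R$ exists by the well-occupation hypothesis, since $S$ is \VA and \RD. Because the relocation is injective and maps each internal point to a point in the interior of its own occupied sector, the number of internal points of $S$ in an occupied sector equals the number of points of $R$ lying in the corresponding main sector of $\mathcal E(S)$. Hence it suffices to prove that every main sector of $\mathcal E(S)$ containing any point of $R$ contains either exactly one, or exactly two such points, with the count being uniform across occupied sectors.

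I would case-split on the ``type'' of $T$ using Observation~\ref{o:analogy2}, writing $t=|T|$ and letting $k$ be the period of $T$. The \EQ and \BI cases cannot occur: in both, $T$'s unique analogy class is all of $T$, so $R=T$ and $\mathcal E(S)=\varnothing$, contradicting Observation~\ref{l:sechull}. The \UA and \BA cases are immediate, since $|R|\leqslant 2$. For the \UP case with $k\geqslant 3$, $R$ is \EQ of size $t/k$, and the $t/k$-fold rotational symmetry of $T$ forces every gap between consecutive points of $R$ (angular distance $2\pi k/t$) to contain the same number $k-1\geqslant 2$ of points of $\mathcal E(S)$; hence every occupied sector contains exactly one $R$ point. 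The same rotational argument settles the \BP subcase with $|R|=t/k$.

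The technical heart is the \BP case with $|R|=2t/k$ (both when $R$ is \BI and when $R$ is \EQ of that size). Here I would exploit the full dihedral symmetry of $T$. Decomposing $R$ into its two rotation-orbits $R_1,R_2$ (each of size $t/k$), the $2t/k$ consecutive-$R$ gaps around $\mathrm{SEC}$ partition into ``even'' gaps (oriented clockwise $R_1\to R_2$) and ``odd'' gaps (oriented clockwise $R_2\to R_1$). Since $R$ is a single dihedral orbit of size $2t/k$, every reflection in the group must swap $R_1$ and $R_2$; combined with the fact that reflections reverse clockwise orientation, this forces every reflection to preserve gap parity, while rotations (which shift $R$-indices by two) obviously do so. Hence the gaps split into exactly two orbits under the dihedral group, and all gaps of the same parity contain the same number of $\mathcal E(S)$ points, say $a$ and $b$, with $a+b=k-2$. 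If $a=0$ (symmetrically if $b=0$), the two endpoints of every even gap share a main sector, producing occupied sectors with exactly two $R$ points; the odd gaps, now containing $b=k-2\geqslant 1$ points of $\mathcal E(S)$, separate these two-point sectors from one another. If $a,b\geqslant 1$, every pair of consecutive $R$ points lies in distinct main sectors, and every occupied sector contains exactly one $R$ point. The case $a=b=0$ would force $k=2$, contrary to $k\geqslant 3$.

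The main obstacle is the structural claim in the \BP subcases with $|R|=2t/k$: identifying that the dihedral orbit of $R$ must be achieved by a reflection swapping $R_1$ and $R_2$, and deriving from this that the consecutive-$R$ gaps partition into exactly two dihedral orbits of equal size. Once this is in place, the $\mathcal E(S)$-counting argument above dispatches the \BP case uniformly and shows that occupied sectors always contain the same number of $R$ points (either one or two), yielding the proposition.
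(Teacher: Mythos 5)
Your proof is correct and follows essentially the same route as the paper's: fix a relocation $R$ that is an analogy class of $\mathcal E(S)\cup R$, case-split on the period type via Observation~\ref{o:analogy2}, and use the symmetric structure of analogy classes to control how many points of $R$ fall into each main sector of $\mathcal E(S)$. The only difference is one of detail: where the paper simply asserts that in the \BP/\BA case no three analogous points can be consecutive and that the count is therefore uniform across occupied sectors, you make that uniformity explicit via the dihedral action on the gaps between consecutive points of $R$ — a legitimate elaboration of a step the paper leaves implicit.
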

\begin{proof}
Let $S$ be a \VA and \RD set. Then $\mathcal I(S)$ has a relocation $R$ that is an analogy class of $S'=\mathcal E(S)\cup R$. If $S'$ has period $1$ or $2$, by Observation~\ref{o:analogy1} it has a unique analogy class, and therefore $R=S'$, meaning that all points of $S$ are internal, which is impossible. Hence $S'$ has period at least $2$, and is therefore \PER or \AP.

Recall that a relocation remaps the internal points within the same occupied sector. If $S'$ is \UP or \UA, then no two analogous points are consecutive in $S'$, and hence each occupied sector of $S$ contains exactly one point. If $S'$ is \BP or \BA, then there can be no three consecutive analogous points in $S'$ (i.e., there cannot be three analogous points $a,b,c\in S$ such that $b$ is consecutive to both $a$ and $c$). Hence, either all occupied sectors of $S$ contain exactly one point, or all contain exactly two points.
\end{proof}

\paragraph{\IN configurations.}
If $S$ consists of at least five points, it is not \CO, not \HD, and not \VA, it is said to be \IN.

\paragraph{Movable analogy classes.}
An analogy class $C$ of a \VA and \WA set $S$ is \emph{movable} if $C\neq S$ and $\mbox{SED}(S)=\mbox{SED}(S\setminus C)$. For instance, in Figure~\ref{f06}(a), every analogy class is movable, except the bottom one.

\begin{observation}\label{o:unmovable}
A set $C\subseteq S$ is a non-movable analogy class of a \VA and \WA set $S$ if and only if there exists a line through the center of SED$(S)$ bounding a (closed) half-plane containing no points of $(S\cap \mbox{SEC}(S))\setminus C$.
\end{observation}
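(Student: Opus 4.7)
My plan is to reduce the observation to the classical characterization of smallest enclosing disks: a finite set $T$ contained in a disk $D$ satisfies $D=\mbox{SED}(T)$ if and only if the center of $D$ lies in the convex hull of $T\cap\partial D$. Applying this with $T=S\setminus C$ inside $D=\mbox{SED}(S)$, and using the identity $(S\setminus C)\cap\mbox{SEC}(S)=(S\cap\mbox{SEC}(S))\setminus C=:A$, I will obtain
\[
\mbox{SED}(S\setminus C)=\mbox{SED}(S)\ \Longleftrightarrow\ c\in\mbox{conv}(A),
\]
where $c$ denotes the center of $\mbox{SED}(S)$. By the definition of movability, the analogy class $C$ fails to be movable exactly when $C=S$ or $\mbox{SED}(S\setminus C)\neq\mbox{SED}(S)$; the case $C=S$ gives $A=\emptyset$ and makes both sides of the observation trivially true, so the task reduces to the planar equivalence
\[
c\notin\mbox{conv}(A)\ \Longleftrightarrow\ \text{some line through }c\text{ bounds a closed half-plane disjoint from }A.
\]

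The reverse direction of this biconditional is immediate: if a closed half-plane $H$ bounded by a line $\ell\ni c$ is disjoint from $A$, then $A$, and hence $\mbox{conv}(A)$, lies in the complementary open half-plane, which excludes $c\in\ell$. For the forward direction I will take $p$ to be the unique point of $\mbox{conv}(A)$ closest to $c$ (distinct from $c$ by assumption) and let $\ell$ be the line through $c$ perpendicular to the segment $cp$. A standard first-order optimality argument then yields $\langle a-c,\,p-c\rangle>0$ for every $a\in A$, so $A$ lies strictly on the $p$-side of $\ell$ and the opposite closed half-plane meets $A$ in no point.

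The mildly delicate step, and arguably the main thing to be careful about, is the strictness of the separation, i.e., ensuring that no point of $A$ lies on $\ell$ itself (so that one obtains a \emph{closed} half-plane disjoint from $A$, as demanded by the statement, not merely an open one). This is handled by the inner-product computation referenced above, which combines the minimization property of $p$ on $\mbox{conv}(A)$ with the fact that each $a\in A$ is itself a point of this convex hull. Beyond this, the argument is simply a transcription of the classical facts about SED and hyperplane separation in $\mathbb R^2$ into the vocabulary of the paper; the \VA and \WA hypotheses on $S$ play no essential role and serve only to make the word ``non-movable'' meaningful.
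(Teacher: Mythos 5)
The paper states this as an \emph{Observation} and offers no proof of its own, so there is nothing to compare against; your argument is a correct and complete justification of the claim. The reduction via the standard characterization of the smallest enclosing disk (the center of $D$ lies in $\mbox{conv}(T\cap\partial D)$ if and only if $D=\mbox{SED}(T)$, which is the two-sided version of the paper's Observation~\ref{l:sechull}), followed by the projection argument giving $\langle a-c,\,p-c\rangle\geqslant |p-c|^2>0$ for all $a\in\mbox{conv}(A)$, correctly delivers the \emph{closed} half-plane required by the statement, which is indeed the one point where care is needed. The degenerate case $C=S$ and the identity $(S\setminus C)\cap\mbox{SEC}(S)=(S\cap\mbox{SEC}(S))\setminus C$ are also handled properly, so nothing is missing.
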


\begin{proposition}\label{p:unmovable}
Let $S$ be a \VA and \WA set. If $S$ has a non-movable analogy class, then $S$ is not \PER.
\end{proposition}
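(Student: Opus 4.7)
The plan is to exploit the rotational symmetry that PER forces on the angular positions of $S$. Since the period $k$ divides $n$ and $2<k<n$, the rotation $\sigma$ by $2\pi k/n$ about the center $c$ of $\mbox{SED}(S)$ has order $m=n/k\geqslant 2$ and permutes the angular positions of $S$. Because $\mu^{(p)}$ depends only on angles, $\sigma$ carries analogous points to analogous points, so every analogy class is preserved at the angular level; combined with the WA hypothesis that all internal points lie in a single analogy class, I would promote this to a set-theoretic $\sigma$-invariance of $\mathcal E(S)$, since external points are exactly those at the SEC radius and $\sigma$ preserves radii.

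To prove that a given analogy class $C$ is movable, I would appeal to Observation~\ref{o:unmovable} and show that $c$ lies in the convex hull of $\mathcal E(S)\setminus C$. Once that set is known to be $\sigma$-invariant and nonempty, its centroid must be a fixed point of $\sigma$, and since $\sigma$ is a nontrivial rotation about $c$ its only fixed point is $c$, so the centroid equals $c$, which always belongs to the convex hull. Via Observation~\ref{o:unmovable} this directly contradicts non-movability, so every analogy class of $S$ would be movable, finishing the proof by contrapositive.

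The delicate step is to certify that $\mathcal E(S)\setminus C$ is nonempty. I would split on whether $C$ equals the analogy class $C^{*}$ containing the internal points: if $C=C^{*}$, then WA makes every other analogy class entirely external, and Observation~\ref{o:analogy1} guarantees at least one other analogy class because $S$ is not EQ or BI; if $C\neq C^{*}$, I would invoke Observation~\ref{o:analogy2} to bound the sizes of analogy classes (blocks of size $n/k$ or $2n/k$) and count how many strong analogy classes are consumed by $C\cup C^{*}$. Since each analogy class comprises at most two strong analogy classes, entirely external strong analogy classes remain outside $C$ once $k$ is sufficiently large.

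The main obstacle I anticipate will be handling the small cases, notably $k=3$ in the BP setting, where the pigeonhole count between the two strong analogy classes inside $C$ and the up to two inside $C^{*}$ exhausts all $k$ classes. In that regime I would supplement the rotational argument with the reflection symmetry implicit in BP: the axis witnessing $\alpha=\beta$ pairs two strong analogy classes and fixes a third, and I would exploit this pairing together with the fact that the external points of the reflection-fixed strong analogy class still cover both half-planes determined by any diameter, ruling out the half-plane condition of Observation~\ref{o:unmovable} and hence any non-movable class.
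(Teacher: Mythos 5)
Your overall strategy --- combining the rotational symmetry forced by periodicity with the half-plane characterization of non-movability in Observation~\ref{o:unmovable} --- is the same as the paper's, but the step on which everything rests is false as stated. You claim that $\mathcal E(S)$ is invariant under the literal plane rotation $\sigma$ by $2\pi k/n$ ``since external points are exactly those at the SEC radius and $\sigma$ preserves radii.'' Periodicity only constrains the \emph{angular} positions of $S$: the map that witnesses it is the index shift $p_i\mapsto p_{i+k}$, which agrees with $\sigma$ in angle but not in radius, so $\sigma(p)$ need not belong to $S$ at all, and the index shift can send an external point to an internal one. The \WA hypothesis does not rescue this: it says the internal points are mutually analogous, not that any analogy class is entirely internal or entirely external, so a class can straddle SEC (e.g.\ $n=6$, period $3$, class $\{p_0,p_3\}$ with $p_0$ on SEC and $p_3$ strictly inside SEC but outside the interior of SED/3). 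In that situation neither $\mathcal E(S)$ nor $\mathcal E(S)\setminus C$ is $\sigma$-invariant, the centroid of $\mathcal E(S)\setminus C$ need not be the center, and the main argument collapses. The downstream case analysis (counting strong analogy classes, the $k=3$ \BP case with reflections) inherits the same problem and is in any case unnecessary.

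The paper avoids this by first replacing $S$ with $\mathcal F(S)$, so that all points lie on SEC, and then applying the rotation argument not to all of $\mathcal E(S)\setminus C$ but to a single \emph{other} analogy class $C'$, which exists by Observation~\ref{o:analogy1} because the period exceeds $2$. Once every point of $C'$ lies on SEC, $C'$ genuinely is a $\sigma$-invariant point set (an \EQ or \BI set of size at least $2$, by Observation~\ref{o:analogy2}), so it meets every closed half-plane bounded by a line through the center, directly contradicting Observation~\ref{o:unmovable}. If you want to keep your framing, you must either perform the footprint reduction up front or exhibit an analogy class disjoint from $C$ that is entirely external; your current argument does neither.
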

\begin{proof}
Without loss of generality, we assume that all points of $S$ lie on SEC. If this is not the case, we may equivalently consider $\mathcal F(S)$ instead of $S$.

Suppose for a contradiction that $S$ is \PER with period $3\leqslant k\leqslant n/2$, and some analogy class $C\subseteq S$ is not movable. Due to Observation~\ref{o:analogy1}, hence $S$ has another analogy class $C'\subseteq S\setminus C$. By Observation~\ref{o:analogy2}, $C'$ is either an \EQ or a \BI set of size either $n/k$ or $2n/k$, hence $|C'|\geqslant 2$. Also, $C'$ is rotationally symmetric with respect to the center of SED$(S)$. Since all points of $C'$ lie on SEC$(S)$, by Observation~\ref{o:unmovable} there exists a closed half-plane bounded by a line through the center of SED$(S)$ that contains no points of $C'$. But this is impossible, due to the rotational symmetry of $C'$.
\end{proof}

\begin{proposition}\label{p:targetcompatible}
Let $S$ be a \VA set whose points all lie on SEC, and suppose that $S$ has at least one axis of symmetry. If $p,q\in S$ are two points that lie on an axis of symmetry of $S$ (not necessarily on the same axis), then $p$ and $q$ are concordant. If no points of $S$ lie on any axis of symmetry of $S$, then the union of the axes of symmetry partitions the plane into sectors, all of which contain the same number of points of $S$.
\end{proposition}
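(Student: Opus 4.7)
The plan is to invoke Proposition~\ref{p:axiscenter} so that all axes of symmetry of $S$ pass through the center $c$ of SED$(S)$, and then to exploit the resulting structure of the symmetry group of $S$.

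Let $m\geqslant 1$ denote the number of axes of symmetry. If $m\geqslant 2$, the composition of any two distinct reflections is a nontrivial rotation about $c$, and the standard classification of symmetry groups of finite planar sets implies that the symmetries of $S$ form the dihedral group $D_m$, with axes evenly spaced at angle $\pi/m$ and an $m$-fold rotational symmetry about $c$; if $m=1$, there is just a single reflection. Either way, the axes cut SEC$(S)$ into $2m$ open arcs of equal angular width $\pi/m$, on which the symmetry group acts transitively. Consequently every arc contains exactly the same number $s$ of points of $S$, and every axis contains exactly the same number $a_0\in\{0,1,2\}$ of points of $S$ (an axis meets SEC$(S)$ in only two points). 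Part~2 of the statement is then immediate: when $a_0=0$, the $2m$ open sectors bounded by the axes each contain $n/(2m)$ points of $S$.

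For Part~1, I would label the points of $S$ clockwise on SEC$(S)$ as $p_0,\dots,p_{n-1}$, taking $p_0$ to be an arbitrary axis-point, and then list the remaining axis-points explicitly. If $a_0=2$, consecutive axis-points are separated by exactly one arc, so they are $A_k=p_{k(s+1)}$ at angular distance $k\pi/m$ from $p_0$, with $n=2m(s+1)$; the concordance equation $\sigma_{k(s+1)}=k(s+1)\cdot 2\pi/n$ then reduces to $k\pi/m=k\pi/m$. If $a_0=1$, the subcase $m$ even is ruled out because the rotation by $\pi$ would then lie in $D_m$ and would map the unique axis-point of each axis to its antipode, forcing $a_0\geqslant 2$; for $m$ odd, consecutive axis-points are separated by two arcs, giving $A_k=p_{k(2s+1)}$ at angular distance $2k\pi/m$ with $n=m(2s+1)$, and again the concordance identity reduces to a tautology. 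The case $m=1$ is subsumed by these formulas: $a_0=1$ gives only the trivial self-concordance of $p_0$, and $a_0=2$ yields the antipodal pair $p_0,p_{n/2}$ at angular distance $\pi=(n/2)\cdot 2\pi/n$.

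The main obstacle is the combinatorial bookkeeping in Part~1: correctly counting how many arcs lie between consecutive axis-points (one when $a_0=2$, two when $a_0=1$), using the rotation by $\pi$ to rule out $a_0=1$ when $m$ is even, and tracking the indices and angular positions carefully so that the concordance equation reduces cleanly to an arithmetic identity.
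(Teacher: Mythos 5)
Your overall strategy (classify the symmetry group as dihedral $D_m$ via Proposition~\ref{p:axiscenter} and the standard classification, then count points per arc) differs from the paper's, which instead takes two occupied axis-points at minimum angular distance $\gamma$ and propagates by reflections to show the occupied axis-points are equally spaced with $2\pi/\gamma$ an integer. Your Part~2 and the transitivity of $D_m$ on the $2m$ open arcs are fine. However, Part~1 rests on a false claim: it is not true that ``every axis contains exactly the same number $a_0\in\{0,1,2\}$ of points of $S$.'' The group $D_m$ acts on the $2m$ axis-endpoints (the positions at angles $i\pi/m$) with \emph{two} orbits, namely the even-indexed and the odd-indexed positions, and when $m$ is even each axis has both of its endpoints in the \emph{same} orbit; hence one orbit can be fully occupied while the other is empty, so that some axes carry two points of $S$ and others carry none. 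Concretely, take $n=6$ points on SEC at angles $0^\circ$, $45^\circ$, $135^\circ$, $180^\circ$, $225^\circ$, $315^\circ$: this is a \VA set with exactly two axes of symmetry (at $0^\circ$ and $90^\circ$), the first containing two points of $S$ and the second containing none. Your case analysis ($a_0=2$ with $n=2m(s+1)$, or $a_0=1$ forcing $m$ odd and $n=m(2s+1)$) does not cover this configuration, and the index formula $A_k=p_{k(s+1)}$ fails for it (here $m=2$, $s=1$, but $n=6\neq 8$).

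The gap is repairable: the set $Y$ of points of $S$ lying on axes is a union of one or both of the two $D_m$-orbits of axis-endpoint positions. If both orbits are occupied, $Y$ is a regular $2m$-gon with $s$ points of $S$ strictly between consecutive elements and $n=2m(s+1)$; if exactly one orbit is occupied, $Y$ is a regular $m$-gon with $2s$ points of $S$ strictly between consecutive elements and $n=m(2s+1)$ --- this single case subsumes both your ``$a_0=1$, $m$ odd'' case and the missing ``$m$ even, one orbit occupied'' case. In either situation consecutive points of $Y$ are concordant and transitivity of concordance finishes the argument, which is in substance how the paper's proof proceeds while avoiding the group classification altogether.
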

\begin{proof}
By Proposition~\ref{p:axiscenter}, all axes of symmetry of $S$ pass through the center of SED.

Suppose first that the set $Y$ of the points of $S$ that lie on an axis of symmetry of $S$ is not empty. If $|Y|=1$ there is nothing to prove, so let us assume that $|Y|\geqslant 2$. Let $p,q\in Y$ be two points at minimum angular distance (with respect to the center of SED$(S)$), and let $\gamma$ be their angular distance. If $p$ and $q$ lie on the same axis of symmetry, then $\gamma=\pi$. In this case, $p$ and $q$ define two sectors, each containing exactly $n/2+1$ points, implying that $p$ and $q$ are concordant. Assume now that $p$ and $q$ do not lie on the same axis of symmetry, and that therefore $\gamma<\pi$. Since $q$ lies on an axis of symmetry of $S$, there is a point $p'\in S\setminus \{p\}$, lying on an axis of symmetry of $S$, at angular distance $\gamma$ from $q$. Proceeding in this fashion, we construct a sequence of points around SEC, each of which has angular distance $\gamma$ from the next, and each of which lies on an axis of symmetry of $S$. The set of points in this sequence has to coincide with $Y$, or else it would contain a point at distance smaller than $\gamma$ from $p$, contradicting the minimality of $\gamma$. It follows that $2\pi/\gamma$ is an integer $k$, and the (closed) sector defined by two consecutive points in the sequence contains exactly $n/k+1$ points. This implies that all the points that are consecutive in $Y$ are concordant. But concordance is an equivalence relation, and therefore all points of $Y$ are concordant.

Suppose now that no points of $S$ lie on any axis of symmetry, and let $\ell$ and $\ell'$ be two axes of symmetry at minimum angular distance (i.e., whose intersections with SEC$(S)$ include two points whose angular distance is minimum among all pairs of axes of $S$). Let such a minimum angular distance be $\gamma$. Reasoning as above, we construct a sequence of axes of symmetry of $S$, each at angular distance $\gamma$ from the next. Again, $2\pi/\gamma$ must be an integer $k$, or else $\gamma$ would not be minimum. The union of the axes in this sequence partitions the plane into $k$ sectors, each of which contains exactly $n/k$ points of $S$ (because each sector is a symmetric copy of the next).
\end{proof}

\paragraph{Target sets and point-target correspondence.}
If $S$ is a \VA set, we can define a \emph{target set} on $S$, which consists of a \RE set of $n$ points lying on SEC$(S)$ (refer to Figure~\ref{f05}). Each of the $n$ points of the target set is a \emph{target}. Furthermore, there is a bijection, called \emph{correspondence}, mapping each element of $S$ into its \emph{corresponding} target in the target set. Such a bijection preserves the cyclic ordering around the center of SED, that is, if $t$ is the target corresponding to point $p\in S$, then the next point $p'\in S$ in the clockwise order around the center or SED is mapped to the target $t'$ that follows $t$ in the clockwise order around the center of SED. Therefore, in order to fully define a correspondence between points of $S$ and targets, it is sufficient to define it on one point.

The targets and the point-target correspondence are identified as follows. We first define a set $S'$: if $S$ is \RD, then $S'=\mathcal E(S)\cup R$, where $R$ is the principal relocation of $\mathcal I(S)$; otherwise, $S'=\mathcal F(S)$.
\begin{itemize}
\item Suppose that $S'$ has no axes of symmetry (i.e., it is \UP or \UA) and $S$ is not \RD. We let $\mathcal T$ be the set of all concordance classes of $S'$ that have the greatest number of points. Let $\widetilde{\mathcal T}$ be the subset of $\mathcal T$ containing the concordance classes $C\in\mathcal T$ for which there exists a movable analogy class $A$ of $S'$, with $A\cap C=\varnothing$, and a relocation $R_{C,A}$ of $\mathcal F'(A)$ (with respect to $(S'\setminus A)\cup \mathcal F'(A)$) such that $C\cup R_{C,A}$ is a concordance class of $(S'\setminus A)\cup R_{C,A}$. If $\widetilde{\mathcal T}$ is empty (respectively, not empty), we let $T$ be the concordance class of $\mathcal T$ (respectively, $\widetilde{\mathcal T}$) containing the points that induce the lexicographically smallest angle sequence with respect to $S'$. By definition, $T$ is a subset of the target set. Furthermore, each point $p\in S$ such that $\mathcal F(p)\in T$ corresponds to $\mathcal F(p)$. The rest of the target set and the other correspondences are determined accordingly.
\item Suppose that $S'$ has no axes of symmetry (i.e., it is \UP or \UA) and $S$ is \RD. We let $\mathcal T$ be the set of all concordance classes of $S'$ that have the greatest number of points in $\mathcal E(S)$. Let $\widetilde{\mathcal T}$ be the subset of $\mathcal T$ containing the concordance classes $C\in\mathcal T$ for which there exists a relocation $R_C$ of $\mathcal I(S)$ (with respect to $S$) such that $(\mathcal E(S)\cap C)\cup R_C$ is a concordance class of $\mathcal E(S)\cup R_C$. If $\widetilde{\mathcal T}$ is empty (respectively, not empty), we let $T$ be the concordance class of $\mathcal T$ (respectively, $\widetilde{\mathcal T}$) containing the points that induce the lexicographically smallest angle sequence with respect to $S'$. By definition, $T$ is a subset of the target set. Furthermore, each point of $\mathcal E(S)$ that coincides with a point of $T$ corresponds to that target. The rest of the target set and the other correspondences are determined accordingly.

\item If $S'$ has some axes of symmetry and a point $p\in S'$ lies on one of them, then $p$ coincides with a target $t$, by definition. Also, if $p\in S$, then $t$ corresponds to $p$. Otherwise, $t$ corresponds to the unique point $p'\in S$ that lies in the occupied sector containing $t$. The other targets and correspondences are determined accordingly (this definition is sound, due to Remark~\ref{rem:targets} below).
\item Finally, suppose that $S'$ has some axes of symmetry, but no point of $S'$ lies on any of them. Then, if $\ell$ is an axis of symmetry of $S'$, the target set is chosen in such a way that it has $\ell$ as an axis of symmetry as well, and no target lies on $\ell$. Also, each point $p\in S'$ at minimum distance from $\ell$ corresponds to the closest to $p$ among the targets that have minimum distance from $\ell$. The other targets and correspondences are determined accordingly (this definition is sound, due to Remark~\ref{rem:targets} below).
\end{itemize}

\begin{remark}\label{rem:targets}
From Proposition~\ref{p:targetcompatible} it follows that, even if $S$ has several axes of symmetry, it has a unique target set, and a unique point-target correspondence. (If $S$ has no axes of symmetry, this is true by construction.) Also, if $S$ is the set of locations of the robots in a swarm, the target set of $S$ is correctly computable by all robots, regardless of their position and handedness, because so are angle sequences, principal relocations, and footprints.
\end{remark}

\begin{proposition}\label{p:ontarget}
Let $S$ be a \VA set such that each point of $S$ lies on SEC$(S)$ and no point of $S$ is on its corresponding target. Then $S$ has an axis of symmetry on which no point of $S$ lies.
\end{proposition}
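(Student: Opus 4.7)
The plan is to argue by a case analysis on the four branches of the target-set definition, after one preliminary simplification that collapses the auxiliary set $S'$ to $S$ itself. The strategy is to rule out the three branches that would force some point of $S$ to coincide with its target, leaving only the branch in which $S$ has an axis of symmetry with no point of $S$ on it.

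First I would observe that, since every point of $S$ lies on $\mathrm{SEC}(S)$, we have $\mathcal{I}(S) = \varnothing$, so the principal relocation of $\mathcal{I}(S)$ is the empty set, and $\mathcal{F}(p) = p$ for every $p \in S$. Consequently, whether the definition uses $S' = \mathcal{E}(S) \cup R$ (in the \RD branch, with $R = \varnothing$) or $S' = \mathcal{F}(S)$ (in the non-\RD case), in both situations $S' = S$. This identification is what ties every condition on $S'$ directly to $S$.

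Next I would rule out the two branches in which $S'$ has no axes of symmetry (i.e., $S'$ is \UP or \UA). In each of these subcases the distinguished subset $T$ of the target set is a concordance class of $S'$, chosen from the non-empty family $\mathcal{T}$, and is therefore itself non-empty (being an equivalence class). By the construction, every $p \in S$ with $\mathcal{F}(p) \in T$ is assigned the target $\mathcal{F}(p) = p$. Since $T \subseteq S' = S$ is non-empty, at least one such $p$ exists, producing a point of $S$ on its corresponding target and contradicting the hypothesis. I would then rule out the third branch similarly: if $S'$ has axes of symmetry and some $p \in S' = S$ lies on one of them, then by definition $p$ coincides with the target $t$, and since $p \in S$ the correspondence maps $t$ to $p$; once again $p$ is on its target, contradiction.

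The only branch that survives is the fourth one, where $S'$ has axes of symmetry but no point of $S'$ lies on any of them. Since $S' = S$, this yields an axis of symmetry of $S$ on which no point of $S$ lies, which is the desired conclusion. The proof is essentially bookkeeping across the definition of the target set; the only mild obstacle is to verify, in the asymmetric branches, that $T$ is genuinely non-empty (which follows because $\mathcal{T}$ is non-empty and each of its members is an equivalence class), and to note that the identifications $\mathcal{F}(p) = p$ and $S' = S$ make the ``correspondence'' collapse to the identity on $T$.
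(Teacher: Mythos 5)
Your proof is correct and follows essentially the same route as the paper's: both arguments reduce to the observation that $S'=S$ (the paper notes $S$ is \WA and not \RD since it has no internal points), then eliminate the asymmetric branches because a non-empty concordance class would sit on its targets, and eliminate the symmetric-with-point-on-axis branch because such a point coincides with its target. Your version merely makes the case analysis over the target-set definition more explicit than the paper's compressed four-line proof.
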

\begin{proof}
Since $S$ has no internal points, it is \WA and not \RD. If $S$ had no axes of symmetry, the points from one concordance class would lie on their corresponding targets. Hence $S$ has at least one axis of symmetry $\ell$. If a point of $S$ lay on $\ell$, it would coincide with its target. Hence no point of $S$ lies on $\ell$.
\end{proof}

\paragraph{Reachable points and sets.}
A point $q\in\mathbb R^2$ is \emph{reachable} by point $p\in S$ if $q$ and $p$ lie in the interior of the same main sector of $S\setminus \{p\}$. Equivalently, $p$ \emph{can reach} $q$.

\paragraph{Satisfied and improvable analogy classes.}
A point $p$ of a \VA and \WA set $S$ is \emph{satisfied} if $\mathcal F(p)$ coincides with the target of $p$. An analogy class of $S$ is \emph{satisfied} if all its points are satisfied. An analogy class of $S$ is \emph{improvable} if it is movable, not satisfied, and each of its points can reach its corresponding target.

\begin{observation}\label{o:concordance}
In a \VA and \WA set, all the points that lie at their respective targets belong to the same concordance class. Hence, any two points that belong to some satisfied analogy class are concordant.
\end{observation}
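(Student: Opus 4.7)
The plan is to prove the stronger statement that any two satisfied points of $S$ are concordant, from which both parts of the observation follow at once: a point lying at its own target is in particular satisfied (its footprint coincides with itself, hence with its target), and every point of a satisfied analogy class is satisfied by definition.

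First, I will exploit the two structural features of the point-target correspondence: it is a bijection between $S$ and a \RE set of $n$ targets on SEC$(S)$, and it preserves the cyclic order around the center of SED$(S)$. Fix two satisfied points $p,q\in S$, enumerate the points of $S$ clockwise around the center of SED$(S)$ as $s_0=p,s_1,\ldots,s_{n-1}$ with $s_j=q$, and list the corresponding targets clockwise as $t_0,t_1,\ldots,t_{n-1}$, so that $s_i$ corresponds to $t_i$ for every $i$. Because the targets form a regular $n$-gon inscribed in SEC$(S)$, the clockwise angular displacement from $t_0$ to $t_i$ is exactly $2i\pi/n$.

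Next, I translate the satisfaction hypothesis into an angular-position statement. The identities $\mathcal F(s_0)=t_0$ and $\mathcal F(s_j)=t_j$ say that $s_0$ and $s_j$ share their angular positions around the center of SED$(S)$ with $t_0$ and $t_j$, respectively. Hence $\theta(p,q)=2k\pi/n$ with $k=\min(j,n-j)$. Since $S$ is \VA it is not \CO, so distinct points of $S$ have distinct angular positions; the clockwise cyclic order on $S$ is therefore a strict circular order of angular coordinates. It follows that exactly $k+1$ points of $S$, namely $s_0,s_1,\ldots,s_k$ if $j\leqslant n/2$ and $s_0,s_{n-1},\ldots,s_{n-k}$ otherwise, lie in the sector defined by $p$ and $q$, and no other point of $S$ lies in it. By the definition of concordance, $p$ and $q$ are then concordant. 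Since concordance is an equivalence relation, all satisfied points of $S$ lie in a single concordance class: in particular, the points coinciding with their respective targets do, and any two points of a satisfied analogy class are concordant.

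The main obstacle will be the sector-counting step: one has to invoke the non-\CO property to rule out coincident angular coordinates among the $s_i$, and the borderline case $\theta(p,q)=\pi$ (where the definition allows two half-plane sectors) must be handled by observing that the cyclic argument in fact yields $n/2+1$ points of $S$ inside each of the two half-planes, so the conclusion is unaffected.
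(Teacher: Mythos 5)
Your proof is correct. The paper states this as an \emph{Observation} and supplies no proof of its own, so there is nothing to compare against; your argument is the natural one, and it actually establishes the slightly stronger fact (implicitly used later, e.g.\ in the proof of Lemma~\ref{z:next1}) that \emph{all} satisfied points form a single concordance class, not merely those coinciding with their targets. The two details that need care --- that the non-\CO hypothesis forces distinct angular coordinates so the sector count is exactly $k+1$, and that the antipodal case $\theta(p,q)=\pi$ is harmless because both half-plane sectors contain $n/2+1$ points --- are both handled correctly.
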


\paragraph{Locked configurations and unlocking analogy classes.}
A \VA and \WA set is said to be \emph{locked} if it has more than one analogy class, and no analogy class is improvable (see Figure~\ref{f06}). If $S$ is locked, then any movable analogy class of $S$ that contains points that are consecutive to some point in a non-movable analogy class of $S$ is said to be an \emph{unlocking} analogy class.

\begin{proposition}\label{p:locked1}
Let $S$ be a locked \VA and \WA set. Then, $S$ has at least one non-movable analogy class.
\end{proposition}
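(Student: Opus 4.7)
The plan is to prove the contrapositive: assuming every analogy class of $S$ is movable, I will show that $S$ is not locked. If $S$ has a unique analogy class, $S$ is not locked by definition, so I assume $S$ has at least two; by Observation~\ref{o:analogy1}, $S$ is neither \EQ nor \BI, hence is either \PER with period at least $3$ or \AP. In particular $S$ is not \RE, so at least one point of $S$ is not satisfied.

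I would first observe that satisfaction is uniform on each analogy class: the target set and the point-target correspondence are uniquely determined by $S$ (Remark~\ref{rem:targets}) and equivariant under the isometries of $S$, which by Proposition~\ref{p:symm} preserve each analogy class and in fact act transitively on the points of an analogy class that belong to the orbit structure described in Observation~\ref{o:analogy2}. Consequently, the signed angular displacement from a point to its corresponding target is constant on each analogy class. Among the non-empty family of non-satisfied analogy classes, I then select the one $C$ whose common displacement $\delta > 0$ is smallest, and my task reduces to showing that $C$ is improvable, contradicting the lockedness of $S$.

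Fixing $p \in C$ with target $t$, I would prove that no $q \in S \setminus \{p\}$ lies strictly on the arc from $p$ to $t$ (the arc of angular length $\delta$). If some such $q$ did exist, it would belong to a non-satisfied analogy class $C'$ with displacement $\delta' \geqslant \delta$; because targets lie on SEC at fixed angular intervals of $2\pi/n$ and the correspondence preserves the cyclic order around the center of SED$(S)$, one obtains a closed-form relation between $\delta'$, $\delta$, $\theta(p,q)$ and the number of $S$-points in the arc from $p$ to $q$. In the \PER case this relation, iterated along the would-be blocking chain on $(p,t)$ and replicated around SEC via the $(n/k)$-fold rotational symmetry of $S$ and of each analogy class (Proposition~\ref{p:symm}), yields an inconsistency in the total angular budget on SEC, ruling out the existence of any such $q$ and establishing that $C$ is improvable.

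The hardest part will be the \AP case, where by Observation~\ref{o:analogy2} each analogy class consists of at most two points and there is no strong rotational symmetry to propagate local constraints around SEC. Here I would lean instead on Observation~\ref{o:unmovable}: the blanket movability assumption is equivalent to saying that for every analogy class $C$, no closed half-plane through the center of SED$(S)$ is devoid of points of $(S \cap \mbox{SEC}(S)) \setminus C$. This distributional constraint on the external points of $S$, together with the extremal choice of $C$ and the uniform $2\pi/n$ spacing of targets, again precludes the blocking configurations that would obstruct the improvability of $C$, closing the argument.
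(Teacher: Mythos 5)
Your reduction to finding an improvable class is the right move, and your observation that satisfaction and reachability are constant on each analogy class (because the canonical target set and correspondence are equivariant under the isometries of $S$, which act transitively on each class) is a point worth making explicit --- the paper's proof uses this tacitly. (Minor slip: under a reflection the \emph{signed} displacement is negated, so only its absolute value is constant on a class.) But the core of your argument has a genuine gap: you never actually prove that the minimal-displacement class $C$ is improvable. The ``closed-form relation'' you invoke does not produce a contradiction. Concretely, if $p\in C$ has target $t$ at clockwise displacement $\delta>0$ and $q$ is the next point clockwise with $0<\theta(p,q)<\delta$, then since the correspondence preserves cyclic order and targets are spaced $2\pi/n$ apart, the clockwise displacement of $q$ is $\delta+2\pi/n-\theta(p,q)$, which exceeds $\delta$ and is therefore perfectly consistent with the minimality of $\delta$; no ``angular budget'' inconsistency falls out locally, and you give no argument for how iterating this around SEC (in the \PER case) or the half-plane condition of Observation~\ref{o:unmovable} (in the \AP case, which you explicitly defer) closes the gap. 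As it stands, the extremal choice is a plausible heuristic, not a proof, and it is not even clear that the minimal-displacement class is always the improvable one.

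For comparison, the paper avoids any metric extremality and argues combinatorially: walk clockwise around SEC and label each point ``on'', ``before'', or ``after'' according to whether its target coincides with it, has not yet been passed, or has already been passed. Anchoring the walk either at a satisfied point or, when no point is satisfied, at an axis of symmetry on which no point lies (Proposition~\ref{p:ontarget} guarantees such an axis exists, and it is also an axis of the target set), one finds a point labeled ``before'' immediately followed by a point labeled ``on'' or ``after''; order preservation of the correspondence then forces that point's target into the open arc to its consecutive successor, so the point can reach its target and is not on it --- contradicting lockedness. If you want to salvage your route, you would need to replace the minimality argument with something like this global parity/ordering argument, or else supply the missing chain analysis in both the \PER and \AP cases.
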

\begin{proof}
Without loss of generality, we assume that all points of $S$ lie on SEC. If this is not the case, we may equivalently consider $\mathcal F(S)$ instead of $S$.

Assume for a contradiction that $S$ is locked and all its analogy classes are movable. By definition of locked, $S$ is neither \EQ nor \BI, and every point of $S$ is either on its own target, or it cannot reach its target. Suppose first that there is a point $p\in S$ located on its own target, and label every point of $S$ that coincides with its own target as ``on''. Then imagine walking around SEC clockwise starting from $p$, and label every unlabeled point $q\in S$ that is encountered as ``before'' (respectively, ``after'') if the target of $q$ has not been encountered yet (respectively, has already been encountered). The walk starts and ends at $p$, hence the sequence of labels starts with an ``on'' and ends with an ``on''. Also, there must be labels other than ``on'', otherwise $S$ would coincide with its target set and it would be \EQ. If the sequence of labels has at least one ``before'', then the last ``before'' in the sequence must be followed by an ``on'' or an ``after''. But this means that the last point labeled ``before'' is not on its target and it can reach it, which is a contradiction. Otherwise, there are just ``on''s and ``after''s in the label sequence. But in this case the first point in the sequence that is labeled ``after'' is not on its target and it can reach it, because it is preceded by a point labeled ``on''. Hence we have a contradiction in both cases.

Suppose now that no point of $S$ is on its target. Then $S$ has an axis of symmetry $\ell$ on which no point of $S$ lies, by Proposition~\ref{p:ontarget}. Moreover, $\ell$ is an axis of symmetry of the target set of $S$, as well. Let us walk around SEC clockwise starting from $\ell$, and label the points of $S$ as described in the previous paragraph. By assumption no point is labeled ``on'', hence all points are labeled either ``before'' or ``after''. Also, a point is labeled ``before'' if and only if its symmetric point with respect to $\ell$ is labeled ``after''. It follows that there must be a point labeled ``before'' followed by a point labeled ``after'' (wich may be the last and the first point in the sequence, respectively). These two points are not on their targets but they can reach their targets, which is once again a contradiction.
\end{proof}

\begin{proposition}\label{p:consunmovable}
If a \VA and \WA and \UA set $S$ has two non-movable analogy classes $\{p\}$ and $\{q\}$, then $p$ and $q$ are consecutive points of $S$.
\end{proposition}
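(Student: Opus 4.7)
My plan is to reduce the statement to an angular argument around the centre $c$ of $\mbox{SED}(S)$.

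First I would observe that both $p$ and $q$ necessarily lie on $\mbox{SEC}(S)$: a strictly interior point can be removed without affecting $\mbox{SED}(S)$, so a non-movable singleton must consist of an external point. I would then apply Observation~\ref{o:unmovable} to $\{p\}$ and to $\{q\}$, obtaining closed half-planes $H_p$ and $H_q$ through $c$ such that $H_p$ contains no point of $\mathcal E(S)$ other than $p$, and $H_q$ contains no point of $\mathcal E(S)$ other than $q$.

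Next I would translate these conditions into angles. Label $\mathcal E(S) = \{e_0, e_1, \ldots, e_{m-1}\}$ cyclically around $c$ by increasing angle, with $p = e_0$ at angle $0$, and write $q = e_j$. The existence of $H_p$ is equivalent to the assertion that all external points other than $p$ lie strictly within an open arc of length less than $\pi$; equivalently, the immediate CCW and CW gaps of $p$ sum to more than $\pi$, which gives $\theta(e_{m-1}) - \theta(e_1) < \pi$. The existence of $H_q$ yields analogously $\theta(e_{j+1}) - \theta(e_{j-1}) > \pi$ (with indices mod $m$ and angles read cyclically through $q$). The core combinatorial step is then: if $1 < j < m-1$, monotonicity gives $\theta(e_1) \leqslant \theta(e_{j-1})$ and $\theta(e_{j+1}) \leqslant \theta(e_{m-1})$, so
\[
\theta(e_{j+1}) - \theta(e_{j-1}) \leqslant \theta(e_{m-1}) - \theta(e_1) < \pi,
\]
contradicting the bound from $H_q$. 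Hence $j \in \{1, m-1\}$, so $q$ is the external angular neighbour of $p$ on $\mbox{SEC}(S)$ and no external point of $S$ lies strictly in the short arc between $p$ and $q$.

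It remains to promote the statement from ``consecutive in $\mathcal E(S)$'' to ``consecutive in $S$'', i.e.\ to rule out internal points in the open sector bounded by the radii through $p$ and $q$. For this I would invoke Observation~\ref{o:analogy2}: in a \UA set each analogy class is a singleton, and in a \WA set all internal points form a single analogy class, so $|\mathcal I(S)| \leqslant 1$. If $\mathcal I(S) = \varnothing$ the external step already concludes the proof. The main obstacle is the case of a unique internal point $r$: showing that $r$ cannot lie in the open short sector does not follow from the half-plane characterisation of non-movability alone, and I expect to need a finer argument combining the aperiodicity of $S$ with the fact that all of $\mathcal E(S) \setminus \{p,q\}$ is confined to one open half-arc, so that placing $r$ between $p$ and $q$ would force an extra coincidence among the angle sequences $\mu^{(\cdot)}$ incompatible with $\alpha \neq \beta$.
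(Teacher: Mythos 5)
Your treatment of the external points is correct and is in substance the paper's own proof, just written in angular coordinates: the paper applies Observation~\ref{o:unmovable} to $\{p\}$ and to $\{q\}$, notes that the two closed half-planes (both bounded by lines through the center of SED) must intersect in a region meeting $\mbox{SEC}(S)$ at some point $v$, and observes that the shortest arcs ${\overset\frown{vp}}$ and ${\overset\frown{vq}}$ are then devoid of other points, so $p$ and $q$ are adjacent. Your ``the two gaps at a non-movable point sum to more than $\pi$'' computation is the same fact read off angularly, and your preliminary remark that a non-movable singleton must be external is right, since deleting an interior point never changes SED.

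Concerning the step you flag as missing: you are correct that Observation~\ref{o:unmovable} constrains only $(S\cap\mbox{SEC}(S))\setminus C$ and hence says nothing about a possible internal point sitting in the sector between $p$ and $q$, and also that \WA plus \UA forces $|\mathcal I(S)|\leqslant 1$. But you should not go hunting for the ``finer argument'' you anticipate, because the paper does not supply one either: its proof simply asserts that each half-plane contains ``no other points of $S$'', silently strengthening Observation~\ref{o:unmovable} from external points to all points, and concludes consecutiveness in $S$ directly. In every place the proposition is actually invoked --- Proposition~\ref{p:locked2} after its reduction to $\mathcal F(S)$, and Lemma~\ref{z:antip} --- all points of $S$ lie on SEC, so $\mathcal I(S)=\varnothing$, $\mathcal E(S)=S$, and your external-point argument already closes the proof. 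In short, you have diagnosed an imprecision in the statement more carefully than the paper itself; there is no missing mathematical idea you are expected to produce, and for the cases in which the proposition is used your proof is complete.
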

\begin{proof}
Since $S$ is \UA, every analogy class of $S$ consists of a single point, due to Observation~\ref{o:analogy2}. By Observation~\ref{o:unmovable}, there exists a closed half-plane bounded by a line through the center of SEC that contains $p$ and no other points of $S$, and there exists a similar half-plane for $q$. These two half-planes must have a non-empty intersection, so suppose that point $v\in \mbox{SEC}$ lies in the intersection. This means that the (shortest) arc ${\overset\frown{vp}}\subset \mbox{SEC}$ and the (shortest) arc ${\overset\frown{vq}}\subset \mbox{SEC}$ are devoid of points of $S\setminus\{p,q\}$. Therefore $p$ and $q$ are consecutive in $S$.
\end{proof}

\begin{proposition}\label{p:locked2}
Let $S$ be a locked \VA and \WA set whose points all lie on SEC. Then, $S$ is \AP. Moreover, if $S$ is \UA, then
\begin{itemize}
\item $S$ has either one or two non-movable analogy classes, each consisting of a single point;
\item if $S$ has two non-movable analogy classes $\{p\}$ and $\{q\}$, then $p$ and $q$ are consecutive points of $S$;
\item $S$ has exactly two unlocking analogy classes, each consisting of a single point.
\end{itemize}
Otherwise $S$ is \BA, and
\begin{itemize}
\item $S$ has a unique non-movable analogy class, which consists of two consecutive points of $S$;
\item $S$ has a unique unlocking analogy class consisting of two points.
\end{itemize}
Also, if $n>5$, at least one unlocking analogy class of $S$ is not satisfied.
\end{proposition}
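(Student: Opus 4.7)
The plan is to combine the preceding propositions to derive aperiodicity, split into the \UA and \BA subcases via Observation~\ref{o:analogy2}, and handle the $n>5$ clause by a uniform angular-gap argument driven by Observation~\ref{o:unmovable}.

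First I would prove that $S$ is \AP. Since $S$ is locked, by definition it has more than one analogy class, so by Observation~\ref{o:analogy1} it is neither \EQ nor \BI; Proposition~\ref{p:locked1} then supplies a non-movable analogy class, and Proposition~\ref{p:unmovable} rules out the \PER case. Hence $S$ is \AP, and Observation~\ref{o:analogy2} tells us all analogy classes are singletons in the \UA subcase, while in the \BA subcase they are either singletons (necessarily lying on an axis of symmetry) or $\ell$-symmetric pairs.

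In the \UA case, each non-movable singleton $\{p\}$ is certified by Observation~\ref{o:unmovable} through a closed half-plane through the center of SEC containing $p$ and no other point of $S$. Proposition~\ref{p:consunmovable} gives consecutivity of any two such singletons; three of them would have to be pairwise consecutive on SEC, which is impossible for $n\geqslant 5$. The unlocking classes are then precisely the SEC-neighbors of the non-movable chain: each is movable (only the non-movable singletons fail the SED test) and is consecutive to a non-movable point, one on each end of the chain, and for $n\geqslant 5$ the two neighbors are distinct singletons, accounting for exactly two unlocking classes.

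The \BA case is more delicate. From $\alpha=\beta$ and Proposition~\ref{p:axiscenter} the set has an axis of symmetry $\ell$ through the center of SEC; a second such axis would compose with $\ell$ to give a nontrivial rotational symmetry, which by Proposition~\ref{p:symm} is inherited by every analogy class and would reduce the period below $n$, contradicting \AP. So $\ell$ is unique. I would then rule out a singleton non-movable class on $\ell$: the $\ell$-symmetry of $S\setminus\{p\}$ lets us take the certifying half-plane to be bounded by $\ell^{\perp}$, the line through the center perpendicular to $\ell$, and then a direct reachability check against the target construction of Section~\ref{defi:geo} shows that the $\ell$-symmetric pair of $p$'s SEC-neighbors is improvable, contradicting lockedness. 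Hence the non-movable class is a pair $\{p,p'\}$; the analogous $\ell^{\perp}$-argument forces $p$ and $p'$ to be consecutive on SEC (any intermediate point would be forbidden from the excluded half-plane) and rules out a second non-movable pair, since two such pairs would require mutually excluding $\ell^{\perp}$-bounded half-planes. The unique unlocking class is then the $\ell$-symmetric pair of outer SEC-neighbors of $\{p,p'\}$.

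For the final clause, assume $n>5$ and, toward a contradiction, that every unlocking class is satisfied. By Observation~\ref{o:concordance} the satisfied unlocking points occupy target positions of the underlying regular $n$-gon consecutive across the non-movable class, so their angular distance across the short arc through the non-movable class is $4\pi/n$ in the \UA case with one non-movable singleton, and $6\pi/n$ in both the \UA case with two non-movable singletons and the \BA case. For the non-movable class to be non-movable, Observation~\ref{o:unmovable} requires its complement to fit in a closed half-circle of SEC missing the class; since $S$ is not \HD, the maximum gap of $S$ is strictly less than $\pi$, so this forces the merged gap created by removing the non-movable class to exceed $\pi$. That merged gap is exactly the angular distance above, yielding $4\pi/n>\pi$ (i.e., $n<4$) or $6\pi/n>\pi$ (i.e., $n<6$); all three subcases are ruled out for $n>5$. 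The hard part is the \BA case, and in particular the interplay of axial symmetry with the target/reachability machinery needed to rule out singleton non-movable classes and uniqueness of the non-movable pair; the $n>5$ step is conceptually a clean pigeonhole on angular gaps.
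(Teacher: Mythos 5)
Your proof is correct and follows essentially the same route as the paper's: Proposition~\ref{p:locked1} plus Proposition~\ref{p:unmovable} for aperiodicity, Observations~\ref{o:analogy2} and~\ref{o:unmovable} with the perpendicular-to-the-axis half-plane and the improvable-neighbor-pair argument for the \UA/\BA structure, and a concordance-versus-non-movability angular bound for the $n>5$ clause. The only cosmetic difference is that you bound the merged gap through the non-movable class ($2\pi(|L|+1)/n>\pi$), whereas the paper bounds the complementary sector containing $S\setminus L$ from below by $2\pi(n-3)/n\geqslant\pi$ --- the same inequality read from the other arc.
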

\begin{proof}
Without loss of generality, we assume that all points of $S$ lie on SEC. If this is not the case, we may equivalently consider $\mathcal F(S)$ instead of $S$.

By Proposition~\ref{p:locked1}, $S$ has at least one non-movable analogy class. Also, by Proposition~\ref{p:unmovable}, $S$ is not \PER. Since, by definition of locked, $S$ is neither \EQ nor \BI, it must be \AP.

Suppose that $S$ is \UA. Then, every analogy class of $S$ consists of a single point, due to Observation~\ref{o:analogy2}. If, by contradiction, $S$ had three non-movable analogy classes, the three points they involve would have to be mutually consecutive, due to Proposition~\ref{p:consunmovable}. Equivalently, $S$ would consist of only three points, contradicting the definition of \VA set, stating that $n>4$. Hence $S$ has either one or two non-movable analogy classes, whose points are consecutive.

Suppose now that $S$ is \BA, and hence it has a (unique) axis of symmetry $\ell$. As already noted, $S$ has at least one non-movable analogy class. Suppose for a contradiction that $S$ has two analogy classes $C$ and $C'$, each of which, by Observation~\ref{o:analogy2}, consists of either one or two points, and is symmetric with respect to $\ell$. By Observation~\ref{o:unmovable}, there exists a line $\ell'$ through the center of SED bounding a closed half-plane that contains no points of $S$ other than those of $C$. Without loss of generality, due to the symmetry of $S$, we may assume that $\ell'$ is perpendicular to $\ell$. By a similar reasoning, the other closed half-plane bounded by $\ell'$ contains no points of $S$ other than the points of $C'$. We conclude that $S=C\cup C'$, and therefore $|S|\leqslant 4$, contradicting the assumption that $n>4$. Hence $S$ has exactly one non-movable analogy class $C$, which may consist of either one or two points. Suppose for a contradiction that $C$ consists of a single point $p$. Then $p$ must lie on the axis of symmetry $\ell$, and the closed half-plane $\Gamma$ bounded by $\ell'$ that contains $p$ contains no other points of $S$. Let $C''$ be the analogy class consisting of the two points that are consecutive to $p$. Since $p$ lies on an axis of symmetry of $S$, by definition $C$ is satisfied. Also, since $n>4$, the targets of the two points of $C''$ lie in $\Gamma$, while the points of $C''$ do not.  It follows that $C''$ is improvable (recall that $C$ is the only non-movable analogy class), which contradicts the fact that $S$ is locked. Therefore $C$ must consist of two points, i.e., $C=\{p,q\}$. The fact that $p$ and $q$ must be consecutive follows from Observation~\ref{o:unmovable} and the fact that $S$ is symmetric with respect to $\ell$.

In all cases, $S$ has either one or two consecutive points that belong to some non-movable analogy class. Let $L$ be the set of such points, with $1\leqslant |L|\leqslant 2$. Hence, because $n>4$,  there are exactly two points of $S\setminus L$ that are consecutive to some point of $L$, and which belong to some unlocking analogy class. Let $U$ be the set of these points, with $|U|=2$. If $S$ is \UA, each analogy class consists of a single point, and therefore there are exactly two unlocking analogy classes. If $S$ is \BA, the two points of $U$ are symmetric with respect to the axis of symmetry of $S$, and hence they belong to the same analogy class. In this case, there is exactly one unlocking analogy class.

Observe that, in all cases, there exists a line through the center of SED that leaves all the points of $L$ in one open half-plane and all the points of $S\setminus L$ in the other open half-plane. Therefore all the points of $S\setminus L$, hence at least $n-2$ points, lie in the sector defined by the two points of $U$. However, if $n>5$, the two points of $U$ cannot be concordant, because otherwise their angular distance would be at least $2\pi(n-3)/n\geqslant \pi$, which is a contradiction. It follows that, if $n>5$, the two points of $U$ do not belong to the same concordance class, and hence at least one of them belongs to a non-satisfied analogy class, due to Observation~\ref{o:concordance}.
\end{proof}

\paragraph{Walkers.}
Suppose that $S$ is \VA and all points of $S$ are on SEC. Then we can identify a set of \emph{walkers}, denoted by $\mathcal W(S)$, as follows.
\begin{itemize}
\item If $S$ has only one analogy class, $\mathcal W(S)=\varnothing$.
\item Otherwise, if $S$ is not locked, $\mathcal W(S)$ is the improvable analogy class whose points induce the lexicographically smallest angle sequence.
\item Otherwise, if $S$ is locked and $n>5$, then $\mathcal W(S)$ is the non-satisfied unlocking analogy class whose points induce the lexicographically smallest angle sequence (by Proposition~\ref{p:locked2}, such an analogy class exists).
\item Otherwise $S$ is locked and $n=5$. In this case, the walkers are the unlocking analogy class whose points induce the lexicographically smallest angle sequence.
\end{itemize}
In general, if $S$ is \VA and \WA, we define the set of walkers of $S$ as $\mathcal W(S)=\{p\in S\mid \exists p'\in \mathcal W(\mathcal F(S)),\ \mathcal F(p)=p'\}$.

\begin{observation}\label{o:walkers}
Let $S$ be a \VA and \WA set with more than one analogy class. Then, $\mathcal W(S)$ is a movable analogy class. If $n>5$, $\mathcal W(S)$ is also a non-satisfied analogy class of $S$.
\end{observation}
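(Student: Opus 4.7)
The plan is to unpack the definition of $\mathcal W(S)$ by a case analysis, first handling the base situation in which every point of $S$ lies on SEC$(S)$, and then lifting the conclusion to the general WA case through the footprint bijection $\mathcal F\colon S\to\mathcal F(S)$.

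When every point of $S$ is on SEC$(S)$, the hypothesis that $S$ has more than one analogy class leaves two sub-cases, dictated by the definition of ``locked''. If $S$ is not locked, then by negating the definition of locked there exists an improvable analogy class, and $\mathcal W(S)$ is picked as the one inducing the lex-smallest angle sequence. Since ``improvable'' is defined to mean ``movable, not satisfied, and each point can reach its target'', both the movability and the non-satisfaction parts of the statement follow immediately. If $S$ is locked, I would invoke Proposition~\ref{p:locked2}. For $n>5$ it guarantees a non-satisfied unlocking analogy class, so $\mathcal W(S)$ is defined as the lex-smallest such class; since unlocking classes are by definition movable (they are required to be movable and consecutive to some non-movable class), $\mathcal W(S)$ is a movable analogy class, and it is non-satisfied by the very rule used to select it. For $n=5$, $\mathcal W(S)$ is still an unlocking class hence movable, and the observation's non-satisfaction clause does not apply.

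For the general case, I would use that $S$ is VA and therefore not CO, so $\mathcal F$ is a bijection from $S$ onto $\mathcal F(S)$ preserving the angular position around the center of SED$(S)$. Consequently $\mathcal F$ preserves clockwise and counterclockwise angle sequences, and hence sends analogy classes of $S$ bijectively to analogy classes of $\mathcal F(S)$; and $\mathcal F(S)$ itself is a VA and WA set with all points on SEC. By the definition $\mathcal W(S)=\mathcal F^{-1}(\mathcal W(\mathcal F(S)))$, the on-SEC analysis applied to $\mathcal F(S)$ identifies $\mathcal W(\mathcal F(S))$ as a movable analogy class of $\mathcal F(S)$, non-satisfied when $n>5$, and these properties should then be transferred back to $S$.

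The main obstacle is precisely this transfer. That $\mathcal W(S)$ is an analogy class of $S$ is immediate from the correspondence induced by $\mathcal F$. Non-satisfaction transfers automatically, because by the definition of ``satisfied'' a point $p\in S$ is satisfied iff $\mathcal F(p)$ coincides with the target of $p$, and the target set of a WA set is defined through $\mathcal F(S)$. For movability, one must argue that SED$(S)=$ SED$(\mathcal F(S))$ and that the same equality persists after removing the walker class on each side: the WA hypothesis places every internal point on SEC/3 or farther out, and Observation~\ref{l:sechull} pins down where the center of SED lies with respect to the external points, which together control the SED of any subset obtained by deleting an analogy class. Once these identifications are in place, the conclusions of the on-SEC case lift to $S$, completing the proof.
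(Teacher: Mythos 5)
The paper supplies no proof for this Observation, treating it as immediate from the definition of $\mathcal W(S)$, and your case analysis (an improvable class exists when $S$ is not locked, Proposition~\ref{p:locked2} supplies a non-satisfied unlocking class when $S$ is locked and $n>5$, unlocking classes are movable by definition, and the non-satisfaction clause is simply not claimed for $n=5$) is exactly the intended justification. The footprint-transfer step you single out as the main obstacle is indeed the only point of substance the paper glosses over, and it is harmless in every situation where the observation is actually invoked (all points on SEC, or all internal points being walkers), since in those cases $S\setminus\mathcal W(S)$ lies entirely on SEC and the movability condition for $S$ coincides verbatim with that for $\mathcal F(S)$.
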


\paragraph{Finish set and point-finish-line correspondence.}
Suppose that $S$ is \VA and \RD. Then we can define the \emph{finish set} of $\mathcal I(S)$, which is the union of $|\mathcal I(S)|$ \emph{finish lines}, each of which is a half-line emanating from the center of SED$(S)$.

We first define the \emph{tentative finish set} $R$ as follows. Let $P$ be the principal relocation of $\mathcal I(S)$.
\begin{itemize}
\item If $P$ is a proper subset of an analogy class of $\mathcal E(S)\cup P$ (as in Figure~\ref{f08}(b)), we let $R=P$.
\item Otherwise, if the set of targets $T$ of the internal points of $S$ is a relocation of $\mathcal I(S)$, we let $R=T$.
\item Otherwise, we let $R=P$.
\end{itemize}
Now we define the finish set as follows.
\begin{itemize}
\item Suppose that the set $S'=\mathcal E(S)\cup R$ is locked and $R$ is an unlocking analogy class of $S'$. Then, by Proposition~\ref{p:locked2}, $S'$ is \AP.
\begin{itemize}
\item If $S'$ is \UA, then $R=\{r\}$. Let $\{r'\}$ be the unique non-movable analogy class of $S'$ such that $r$ and $r'$ are consecutive in $S'$ (cf.~Proposition~\ref{p:locked2}). Let $r''\in S'$ be the other point that is consecutive to $r'$. Then, the point of SEC$(S)$ that is antipodal to $r''$ belongs by definition to the finish set of $\mathcal I(S)$ (note that this implicitly defines the whole finish set).
\item If $S'$ is \BA, then $|R|=2$ (cf.~Proposition~\ref{p:locked2}). Let $R'$ be the relocation of $\mathcal I(S)$ consisting of two antipodal points on SEC$(S)$ such that $R'$ is an analogy class of $\mathcal E(S)\cup R'$, as shown in Figure~\ref{f06}(b) (see Proposition~\ref{p:finishsamesector} below for a proof that this definition is sound). Then, $R'$ is a subset of the finish set of $\mathcal I(S)$ (this implicitly defines the whole finish set).
\end{itemize}
\item Otherwise, $R$ is a subset of the finish set of $\mathcal I(S)$ (again, this implicitly defines the whole finish set).
\end{itemize}

\begin{proposition}\label{p:finishsamesector}
Let $S$ be a \VA and \RD set. Then there is a unique bijective function that maps each point $p\in\mathcal I(S)$ to a finish line $\ell$ lying in the same occupied sector of $S$ as $p$, and that preserves the relative clockwise ordering around the center of SED$(S)$.
\end{proposition}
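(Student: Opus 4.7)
The plan is to reduce existence and uniqueness to a counting statement: in each occupied sector of $S$, the number of finish lines equals the number of internal points. If this holds, then by Proposition~\ref{p:readyinternal} each occupied sector contains exactly one or two internal points, and any clockwise order-preserving bijection is forced to map the $i$-th internal point (in clockwise order) within a sector to the $i$-th finish line within the same sector. Hence the bijection is unique, and its existence follows directly from the count. I would start by stating this reduction and then verify the count case by case, following the definition of the finish set.

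In the generic (non-unlocking) branches, the finish set is obtained from the set $R$, where $R$ is either the principal relocation $P$ of $\mathcal I(S)$, or the set $T$ of targets of the internal points of $S$ (in the case when $T$ is itself a relocation of $\mathcal I(S)$). In both cases, $R$ is a relocation of $\mathcal I(S)$ by definition, which means it maps each internal point injectively into the interior of its own occupied sector. The finish lines being the half-lines through the points of $R$, the count in each occupied sector is automatically correct, and the bijection is the evident one sending $p\in \mathcal I(S)$ to the half-line through the point of $R$ lying in its sector.

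The two genuinely delicate cases are the unlocking ones. In the \UA sub-case, $|\mathcal I(S)|=1$, and I have to show that the point of $\mathsf{SEC}(S)$ antipodal to $r''$ lies on the boundary arc of the unique occupied sector (which is bounded by $r'$ and by some second external point $b$ consecutive to $r$ on the opposite side). By Proposition~\ref{p:locked2}, $r'$ and $r''$ are consecutive in $S' = \mathcal E(S)\cup\{r\}$ and $\{r'\}$ is non-movable; Observation~\ref{o:unmovable} then gives a closed half-plane through the centre of $\mathsf{SED}$ containing $r'$ and no other points of $S'$, which forces $\theta(r'',\,\mbox{antipodal of }r'')=\pi$ to fall inside the arc from $r'$ to $b$ on the side of $r$, and in particular inside the occupied sector. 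In the \BA sub-case, $|\mathcal I(S)|=2$ and $R'$ is required to consist of two antipodal points forming an analogy class of $\mathcal E(S)\cup R'$. Using the unique axis of symmetry $\ell$ of $S$ (\BA) and the structure of the non-movable and unlocking classes given by Proposition~\ref{p:locked2}, one shows that the only candidates for $R'$ are the two intersections of the line through the centre of $\mathsf{SED}$ perpendicular to $\ell$ with $\mathsf{SEC}(S)$; these points are symmetric about $\ell$, and a half-plane argument analogous to the one in the \UA case places them in the two occupied sectors adjacent to $l_1$ and $l_2$, which are exactly the sectors of the two internal points. This proves both existence and uniqueness of $R'$.

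The main obstacle is the analysis of the two unlocking sub-cases, since in each of them the finish lines are defined through an \emph{indirect} antipodal construction, and the only reason they end up lying in the correct occupied sectors is the very rigid structure of locked \AP configurations supplied by Proposition~\ref{p:locked2} (together with Observation~\ref{o:unmovable}). Once the correct count inside each occupied sector is established in all cases, the statement of the proposition follows immediately from the reduction in the first paragraph.
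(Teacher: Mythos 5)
Your proposal follows essentially the same route as the paper's proof: reduce the statement to showing that each occupied sector contains exactly as many finish lines as internal points (equivalently, that some relocation of $\mathcal I(S)$ has one point on each finish line), dispose of the generic branches by noting that the tentative finish set is a relocation of $\mathcal I(S)$ by construction, and handle the two unlocking branches by locating the antipodally-defined points inside the correct occupied sectors via Proposition~\ref{p:locked2} and Observation~\ref{o:unmovable}.

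One step in the \UA sub-case is claimed too strongly. The closed half-plane containing $r'$ and no other point of $S'$ shows only that the arc from $r''$ through $r'$ to $r$ exceeds a half-circle, hence that the antipodal $a$ of $r''$ lies strictly before $r$ in that direction; it does not by itself place $a$ on the correct side of $r'$. A priori $a$ could still lie in the empty arc between $r''$ and $r'$, or coincide with $r'$. The first possibility is excluded by Observation~\ref{l:sechull} (an empty arc longer than a half-circle would contain an empty half-circle), and the second by noting that $a=r'$ would make $S'$, and hence $S$, a \HD set, contradicting the fact that $S$ is \VA. The analogous degenerate case $R'=C$ must likewise be excluded in the \BA sub-case. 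With these additions your argument coincides with the paper's.
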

\begin{proof}
It suffices to show that there is a relocation of $\mathcal I(S)$ with one point on each finish line. Then we can construct our bijective function by simply mapping internal points within each occupied sector in the right order. But if the tentative finish set $R$ is a subset of the finish set, then our claim is obvious, because the tentative finish set is a relocation of $\mathcal I(S)$, by construction. Otherwise, it means that $S'=\mathcal E(S)\cup R$ is locked and $R$ is an unlocking analogy class of $S'$, by definition of finish set.

Suppose that $S'$ is \UA, and let $r$, $r'$, and $r''$ be as in the definition of finish set. By Observation~\ref{l:sechull}, the antipodal point of $r''$ must lie on the arc $\overset\frown{rr'}$, or there would be an empty half-circle between $r'$ and $r''$. Moreover, the antipodal point of $r''$ cannot coincide with $r'$, or $S'$ would be \HD, implying that also $S$ is \HD (because $R=\{r\}$ is a relocation of $\mathcal I(S)$), which contradicts the fact that $S$ is \VA. It follows that $r$ can reach the antipodal point of $r''$ and therefore the unique point of $\mathcal I(S)$ can reach the unique finish line.

Suppose now that $S'$ is \BA, and therefore has an axis of symmetry $\ell$. By Proposition~\ref{p:locked2}, $S'$ has a unique non-movable analogy class $C$, which also has $\ell$ as an axis of symmetry. Moreoever, by Observation~\ref{o:unmovable}, there is a line $\ell'$ through the center of SED$(S')$ bounding a half-plane whose intersection with $S'$ is precisely $C$. Without loss of generality, we may take $\ell'$ to be perpendicular to $\ell$. Let $R'=\ell'\cap \mbox{SEC}(S')$. As $R$ is the unlocking analogy class of $S'$, its two elements are closest to $\ell'$ among all the points of $S'\setminus C$. It follows that $R'$ is a relocation of $\mathcal I(S)$, unless $R'=C$. But $R'=C$ implies that $S'$ is \HD, which makes $S$ \HD as well, contradicting the fact that $S$ is \VA. Hence $R'$ is a relocation of $\mathcal I(S)$, and is also a subset of the finish set of $\mathcal I(S')$, by definition. This concludes the proof, and incidentally also proves that the definition of finish set in this case is sound.
\end{proof}

The function whose existence and uniqueness is established by Proposition~\ref{p:finishsamesector} is called \emph{correspondence}. If correspondence maps point $p\in\mathcal I(S)$ to the finish line $\ell$, then $\ell$ is said to \emph{correspond} to $p$.

\begin{proposition}\label{p:finishreachable}
Let $S$ be a \VA and \RD set. Then, at least one internal point of $S$ can reach any point on its corresponding finish line.
\end{proposition}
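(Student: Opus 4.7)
My plan is to case-split on the number of internal points per occupied sector via Proposition~\ref{p:readyinternal}, and in each case use Proposition~\ref{p:finishsamesector} to locate the corresponding finish lines.

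Proposition~\ref{p:readyinternal} ensures that either every occupied sector of $S$ contains exactly one internal point, or every one contains exactly two, and Proposition~\ref{p:finishsamesector} places the corresponding finish line of each internal point $p$ in the same occupied sector as $p$ while preserving the clockwise cyclic order. Before splitting, I would record one preliminary observation: each finish line lies strictly in the interior of its occupied sector, i.e., strictly between the two bounding radii through the sector's external endpoints. For the principal-relocation and target-set branches in the definition of finish set this is immediate from the definition of a relocation, while for the two locked subcases it is precisely what the proof of Proposition~\ref{p:finishsamesector} establishes (via Observation~\ref{l:sechull} together with $S$ being non-\HD).

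If every occupied sector has a single internal point $p$, then removing $p$ from $S$ does not introduce any new boundary into its occupied sector, so the main sector of $S\setminus\{p\}$ containing $p$ coincides with that occupied sector, and the finish line of $p$ lies strictly inside it. Hence every point on this finish line is reachable by $p$, and we are done.

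For the case of two internal points per sector, I would fix an arbitrary occupied sector, label its internal points $p_1 \prec p_2$ in clockwise order around the center of SED$(S)$, its bounding external points $e_1, e_2$, and its two finish lines $\ell_1 \prec \ell_2$ (in the same order, by Proposition~\ref{p:finishsamesector}). The main sector of $S\setminus\{p_1\}$ that contains $p_1$ is bounded by the rays to $e_1$ and to $p_2$, so $p_1$ reaches every point of $\ell_1$ iff $\ell_1 \prec p_2$ angularly; symmetrically, $p_2$ reaches every point of $\ell_2$ iff $p_1 \prec \ell_2$. If both strict inequalities failed we would obtain the chain $p_1 \prec p_2 \preceq \ell_1 \prec \ell_2 \preceq p_1$, an outright contradiction, so in each two-point sector at least one internal point can reach its own finish line. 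The only mildly delicate step is the preliminary interior-of-sector claim; granted that, the rest reduces to a short cyclic-order angular comparison, which I expect to be the routine part of the argument.
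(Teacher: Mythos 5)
Your proposal is correct and follows essentially the same route as the paper's proof: split on one versus two internal points per occupied sector via Proposition~\ref{p:readyinternal}, use Proposition~\ref{p:finishsamesector} to place each finish line in the corresponding sector, and in the two-point case observe that the order-preserving correspondence forces at least one of the two points to have an unobstructed path. The only difference is presentational: you make explicit the fact that finish lines lie strictly in the interior of their sectors and phrase the two-point case as a cyclic-order contradiction, whereas the paper states the same blocking argument directly in terms of co-radiality.
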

\begin{proof}
By Proposition~\ref{p:finishsamesector}, the finish line corresponding to each point $p\in\mathcal I(S)$ lies in the same occupied sector as $p$. Moreover, Proposition~\ref{p:readyinternal} states that each occupied sector contains either one or two internal points. So, if an occupied sector contains exactly one internal point, it can certainly reach its corresponding finish line. If an occupied sector contains two internal points, and since correspondence preserves the relative clockwise ordering around the center of SED, it is easy to see that at least one of the two internal points can reach its corresponding finish line. Indeed, if a segment joining one of these two internal points to its corresponding finish line contains a point that is co-radial with the other internal point, it means that the other internal point can reach its corresponding finish line.
\end{proof}

\begin{proposition}\label{p:internalclass}
Let $S$ be a \VA and \RD set. Then, all the points of the principal relocation $P$ of $\mathcal I(S)$ are analogous in $S'=\mathcal E(S)\cup P$. Also, if $L$ is the relocation of $\mathcal I(S)$ (with respect to $S$) having one point on each finish line of $S$, then all the points of $L$ are analogous in $S''=\mathcal E(S)\cup L$.
\end{proposition}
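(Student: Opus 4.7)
The plan is to prove both claims by transferring a transitive symmetry-group action from a configuration where transitivity is already guaranteed to the configuration of interest. This transfer is possible because the geometric objects appearing in the new configurations (sector midpoints, sector trisection points, targets, antipodal points) are defined intrinsically from $\mathcal E(S)$, so every isometry that permutes $\mathcal E(S)$ automatically permutes them.

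For the first claim, I would begin with the well-occupied property of $S$ (built into \RD): there exists a relocation $R$ of $\mathcal I(S)$ that is an analogy class of $\widetilde S = \mathcal E(S)\cup R$. On points of a common circle, analogy coincides with being in a common orbit of the isometric symmetry group (each equation $\mu^{(p)}=\mu^{(q)}$ of such points yields a rotation or reflection preserving the set), so the group $G = \mathrm{Sym}(\widetilde S)$ acts transitively on $R$; hence it also preserves $\mathcal E(S) = \widetilde S \setminus R$ and permutes the occupied sectors of $S$. By Proposition~\ref{p:readyinternal}, each occupied sector contains either one or two points of $R$, uniformly. Correspondingly, $P$ consists of the sector midpoints or the sector trisection points, both of which are geometric invariants of the occupied sectors. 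Therefore every $g\in G$ also permutes $P$, so $G\leq\mathrm{Sym}(\mathcal E(S)\cup P)$. Finally, an obvious $G$-equivariant bijection $\phi\colon R\to P$ respecting the cyclic order within each sector transfers transitivity from $R$ to $P$, so all points of $P$ lie in a single $G$-orbit in $S'$ and are mutually analogous there.

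For the second claim, I would split on the way the finish set is defined. If the finish set equals the tentative finish set $R$, then $L=R$, and we distinguish two sub-cases. If $R=P$, the result is immediate from the first claim. If instead $R=T$ is the set of targets of the internal points, I would reuse the subgroup $G\leq\mathrm{Sym}(S')$ from the first claim together with Remark~\ref{rem:targets}: since the target set and the point-target correspondence of $S'$ are uniquely determined by $S'$, any symmetry $g$ of $S'$ satisfies $g(t(p))=t(g(p))$ for every $p\in S'$; combined with transitivity of $G$ on $P$, this gives transitivity of $G$ on $T=L$, and since $G$ still fixes $\mathcal E(S)$ we obtain $G\leq\mathrm{Sym}(\mathcal E(S)\cup L)$, whence $L$ is analogous there. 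In the locked-unlocking branch of the definition of the finish set, the \UA sub-case has $|R|=1$ and hence $|L|=1$, so the claim is trivial, whereas the \BA sub-case yields $L=R'$, which by the very construction of the finish set is a relocation of $\mathcal I(S)$ forming an analogy class of $\mathcal E(S)\cup R'$.

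I expect the main obstacle to be the sub-case $L=T$ of the second claim: one has to carefully extract, from Remark~\ref{rem:targets}, that the target map is equivariant under the specific subgroup $G$ provided by the first claim, and then lift the transitivity on $P$ to transitivity on $T$. The remaining cases are either vacuous singletons, direct consequences of the first claim, or read off immediately from the construction of the finish set.
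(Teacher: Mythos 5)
Your proof is correct, but it is organized around a different engine than the paper's. For the first claim, the paper argues by direct computation on angle sequences: it takes the analogy-class relocation $A$ with consecutive angular gaps $\gamma,\gamma'$ and checks that every point of $P$ acquires gap $(\gamma+\gamma')/2$ (one point per sector) or $(\gamma+2\gamma')/3$ (two points per sector) on both sides, all other gaps being unchanged, so equality of the sequences $\mu^{(\cdot)}$ is inherited from $A$. You instead observe that for points all lying on SEC analogy is exactly the orbit relation of the isometric symmetry group (each equality $\mu^{(p)}=\mu^{(q)}$ produces a rotation or reflection preserving the set), take $G=\mathrm{Sym}(\mathcal E(S)\cup R)$ acting transitively on $R$, and push transitivity through the $G$-equivariant bijection onto the sector midpoints/trisection points. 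For the second claim the paper splits the $L=T$ sub-case by the symmetry type of $S^*$ (symmetric, \UP, \UA) and argues via preservation of axes and concordance, whereas you handle it uniformly by equivariance of the target map under $G$; the locked/unlocking branches are treated identically in both proofs. Your route buys uniformity and makes explicit the transitivity that the paper's ``it easily follows'' steps lean on implicitly; the paper's route is more elementary and avoids having to justify that the characterization ``analogous $\Leftrightarrow$ same orbit'' holds (it does, but only because all relevant points lie on a common circle centered at the center of SED — a hypothesis you correctly invoke but should verify, since for points at different radii analogy is strictly weaker than congruence). The one step you should spell out is the equivariance of the point--target correspondence restricted to internal points: the correspondence is defined on $S$, not on $S'$, so you need that $G$ permutes the occupied sectors together with their internal-point slots in a cyclic-order-preserving (or -reversing) way; this follows from the uniqueness in Remark~\ref{rem:targets} and the naturality of the construction, but it is the place where your argument does real work.
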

\begin{proof}
By definition of \RD, there exists a relocation $A$ of $\mathcal I(S)$ such that $A$ is an analogy class of $S^*=\mathcal E(S)\cup A$. It is clear that $\mbox{SED}(S)=\mbox{SED}(S')=\mbox{SED}(S'')=\mbox{SED}(S^*)$. By definition of analogy class, there exist two constants $\gamma$ and $\gamma'$ such the angular distances (with respect to the center of SED) between any point of $A$ and its two consecutive points in $S^*$ are, respectively, $\gamma$ and $\gamma'$. Recall that, by Proposition~\ref{p:readyinternal}, either all occupied sectors of $S$ contain one point, or they all contain two points. Suppose first that they all contain one point. Then, each point of $P$ has angular distance $(\gamma+\gamma')/2$ from both its consecutive points in $S'$. Since all the other angular distances between consecutive points of $S'$ involve points of $\mathcal E(S)$ only, they are the same as in $S^*$. Therefore all the points of $P$ are analogous in $S'$, as the points of $A$ are analogous in $S^*$. Now suppose that all the occupied sectors of $S$ contain two points. Without loss of generality, let $\gamma$ be the angular distance between any two consecutive points of $A$ (with respect to the center of SED). Then, each point of $P$ has angular distance $(\gamma+2\gamma')/3$ from both its consecutive points in $S'$. Again, this implies that all points of $P$ are analogous in $S'$.

Let $T$ be the set of targets of the internal points of $S$, and let $R$ be the tentative finish set of $S'$. By definition, either $R=P$ or $R=T$. If $R=P$ and $L=R$, the points of $L$ are analogous in $S''$ because they are the principal relocation of $\mathcal I(S)$. Suppose instead that $R=T$ and $L=R$. This is true only if $T$ is a relocation of $\mathcal I(S)$. If $S^*$ has an axis of symmetry $\ell$, then, by Proposition~\ref{p:symm}, $A$ does too. It is easy to see that also $S'$ and $P$ have the same axis of symmetry. But $\ell$ is also an axis of symmetry of the target set of $S^*$, by definition of target set, and also of $T$, because $T$ is a subset of the target set that is also a relocation of $\mathcal F'(A)$. Since this holds for every axis of $S^*$, it easily follows that all the points of $T$ are analogous in $S''$. Suppose now that $S^*$ is \UP with period $k\geqslant 3$. This implies that $S^*$ has an $(n/k)$-fold rotational symmetry with respect to the center of SED. Since $A$ is an analogy class of $S^*$, it also has an $(n/k)$-fold rotational symmetry, by Proposition~\ref{p:symm}. In this case, every occupied sector of $S$ contains exactly one internal point. But also the target set of $S^*$ has an $(n/k)$-fold rotational symmetry, being a \RE set of $n$ points. Moreover, since the points of $A$ are all concordant, the points of $T$ must be all concordant, as well. This implies that all points of $T$ are analogous in $S''$. Finally, suppose that $S^*$ has no axes of symmetry and it is not \UP, and hence it is \UA. In this case, $T$ consists of a single point, and therefore there is nothing to prove.

The only cases left to consider are those in which $L\neq R$. By definition of finish set, this only happens when $\mathcal E(S)\cup R$ is locked and $R$ is an unlocking analogy class. If $\mathcal E(S)\cup R$ is \UA, then $L$ consists of a single point, and there is nothing to prove. If $\mathcal E(S)\cup R$ is \BA, then $L$ consists of two antipodal points that are symmetric with respect to an axis of symmetry of $S''$. This implies that the two points of $L$ are analogous in $S''$.
\end{proof}

\subsection{Algorithm}\label{sec:code}
The \UCF algorithm consists of an ordered set of tests to determine the class of the current configuration. For each class, we have a procedure that recognizes it: procedure {\sc Is Regular?}$(S)$ determines if $S$ is a \RE configuration, and so on. The implementation of all these procedures is straightforward and is therefore omitted, with the exception of procedure {\sc Is Pre-regular?}, which will be described in Section~\ref{sec:ispreregular}, and procedure {\sc Is Valid and Ready?}, which will be described in Remark~\ref{rem:welloccupied}. After the configuration class has been determined, the executing robot takes the appropriate action.

We stress that some configurations belong to more than one class, and so the order in which such classes are tested by the algorithm matters.

\begin{algorithmic}[h]
\begin{algorithm}
\REQUIRE{$S\subset \mathbb R^2$ is a finite set with $|S|>5$, and $(0,0)\in S$. $S$ represents the set of positions of the robots, as observed by the executing robot. The executing robot's position is $(0,0)$.}
\IFi{{\sc Is Regular?}$(S)$}{{\bf Do Nothing}}
\ELSIFi{{\sc Is Pre-regular?}$(S)$}{{\bf Execute} {\sc Pre-regular}$(S)$}
\ELSIFi{{\sc Is Central?}$(S)$}{{\bf Execute} {\sc Central}$(S)$}
\ELSIFi{{\sc Is Half-disk?}$(S)$}{{\bf Execute} {\sc Half-disk}$(S)$}
\ELSIFi{{\sc Is Co-radial?}$(S)$}{{\bf Execute} {\sc Co-radial}$(S)$}
\ELSIFi{{\sc Is Valid and Ready?}$(S)$}{{\bf Execute} {\sc Valid and Ready}$(S)$}
\ELSIFi{{\sc Is Valid and Waiting?}$(S)$}{{\bf Execute} {\sc Valid and Waiting}$(S)$}
\ELSEi{{\bf Execute} {\sc Invalid}$(S)$}
\caption*{{\bf Algorithm} \UCF$(S)$}
\end{algorithm}
\end{algorithmic}

\begin{remark}\label{rem:welloccupied}
Procedure {\sc Is Valid and Ready?}$(S)$ should verify if $S$ is well occupied. To do this, it is not necessary to check every possible relocation of $\mathcal I(S)$; it is sufficient to check only two of them. First construct $S'=\mathcal E(S)\cup P$, where $P$ is the principal relocation of $\mathcal I(S)$. If some points of $P$ are not analogous in $S'$, return ``false''; if $P$ is an analogy class of $S'$, return ``true''. Otherwise, construct a second configuration $S''=\mathcal E(S)\cup P'$, where $P'$ is another relocation of $\mathcal I(S)$, obtained by moving the points of $P$ symmetrically within the same principal sectors of $\mathcal E(S)$ (in such a way as to keep them analogous) by any angle that is incommensurable with all the angular distances between pairs of points of $S'$. Then return ``true'' if $P'$ is an analogy class of $S''$. It is easy to see that, if $P'$ is not, then no other relocation of $\mathcal I(S)$ can be an analogy class, and therefore we can safely return ``false''.
\end{remark}

Before detailing the main procedures, we introduce a few auxiliary ones, and some terminology.

\subsubsection{Auxiliary Procedures}

\paragraph{Radial and lateral moves.}
We distinguish two types of moves that the robots can perform. If the destination point computed by a robot is co-radial with the current robot's position (with respect to the center of the SED of the observed robots' locations), then we say that the robot performs a \emph{radial} move, or moves \emph{radially}. If a move is not radial, it is said to be \emph{lateral}.

\paragraph{Procedure {\sc Cautious Move}.}
This procedure makes a subset of robots $\mathcal M$ execute a cautious move with a given set of \emph{critical points} $C$. All robots of $\mathcal M$ move radially, either all from SEC/3 to SEC, or all from SEC to SEC/3. The line segment connecting the center of SED with a robot in $\mathcal M$'s co-radial point on SEC is called the \emph{path} of the robot. If a robot is directed toward SEC (respectively, SEC/3), the point on SEC (respectively, SEC/3) on the robot's path is called the \emph{endpoint} of the path. The procedure first augments the set of input critical points $C$ with a set of \emph{auxiliary} critical points (which may be \emph{final}, \emph{transposed} or \emph{intermediate} critical points), and then lets a robot move toward the next critical point (auxiliary or not) along its path, provided that some conditions are met. The details are as follows.
\begin{itemize}
\item The endpoint of each robot's path is added to the set of critical points. This auxiliary critical point is called \emph{final}.
\item For every robot $r$ and every critical point $p$, a critical point is added on $r$'s path at the same distance from the center of SED as $p$. If not already present in the critical point set, such an auxiliary critical point is called \emph{transposed}.
\item For each pair of critical points on each robot's path (which may be critical points of $C$, or final, or transposed), the midpoint is added as a critical point. If not already present in the critical point set, such an auxiliary critical point is called \emph{intermediate}.
\item The robots that are not farthest from the endpoints of their respective paths are not allowed to start moving.
\item The robots that are farthest from the endpoints of their respective paths move to the next critical point (auxiliary or not) along their respective paths.
\end{itemize}

\begin{algorithmic}[h]
\begin{algorithm}
\caption*{{\bf Procedure} {\sc Cautious Move} $(S,\mathcal M, C, \mbox{dir})$}
\REQUIRE{$S$ is not \CO. $\mathcal M\subseteq S$ is the set of robots that have to perform the move. $C\subset \mathbb R^2$ is a finite set of critical points. dir is the direction in which the robots of $\mathcal M$ should move: its value is either ``SEC'' or ``SEC/3''. If $\mbox{dir}=\mbox{``SEC/3''}$, then no point of $S$ lies in the interior of SED/3$(S)$.}
\IF[I am one of the robots that should do the cautious move]{$(0,0)\in \mathcal M$}
\STATE $c\longleftarrow$ center of SED$(S)$
\STATE $P\longleftarrow$ set of points collinear with $c$ and $(0,0)$
\STATE $\mbox{proceed}\longleftarrow \mbox{\bf true}$
\IF{$\mbox{dir}=\mbox{``SEC''}$}
\STATE $d\longleftarrow P\cap \mbox{SEC}(S)$
\FORALL{$r\in \mathcal M$}
\IFi{$\|rc\|<\|c\|$}{$\mbox{proceed}\longleftarrow \mbox{\bf false}$}
\ENDFOR
\ELSE
\STATE $d\longleftarrow P\cap \mbox{SEC/3}(S)$
\FORALL{$r\in \mathcal M$}
\IFi{$\|rc\|>\|c\|$}{$\mbox{proceed}\longleftarrow \mbox{\bf false}$}
\ENDFOR
\ENDIF
\IF[I am farthest from the endpoint]{proceed}
\STATE $C'\longleftarrow C\cup\{d\}$
\FORALL{$p\in C$}
\STATE $p'\longleftarrow$ point on $P$ such that $p$ and $p'$ are equidistant to $c$
\STATE $C'\longleftarrow C'\cup\{p'\}$
\ENDFOR
\STATE $C'\longleftarrow C'\cap P$
\STATE $C''\longleftarrow C'$
\FORALL{$p,q\in C'$}
\STATE $C''\longleftarrow C''\cup\{(p+q)/2\}$
\ENDFOR
\STATE $\mbox{dest}\longleftarrow d$
\FORALL{$p\in C''$}
\IFi{$\|pd\|<\|d\|$ {\bf And} $\|p\|<\|\mbox{dest}\|$}{$\mbox{dest}\longleftarrow p$}
\ENDFOR
\STATE {\bf Move To} dest
\ENDIF
\ENDIF
\end{algorithm}
\end{algorithmic} 

\paragraph{Procedure {\sc Move Walkers to SEC/3}.}
This procedure assumes the configuration to be \VA and \WA, and it assumes all internal robots to be walkers. It makes all walkers move radially toward SEC/3, executing the {\sc Cautious Move} procedure with suitable critical points intercepting the possible \PR configurations that may be formed (the exact locations of the critical points will be discussed in Section~\ref{sec:analysis}).

\paragraph{Procedure {\sc Move All to SEC}.}
This procedure assumes the configuration to be not \CO. First all robots that lie in the interior of SED/3 move radially to SEC/3. Then, the procedure selects a subset $C$ of robots and makes them move radially toward SEC, executing procedure {\sc Cautious Move} with suitable critical points intercepting the possible \PR configurations that may be formed (the exact locations of the critical points will be discussed in Section~\ref{sec:analysis}). The set $C$ is either an analogy class or a strong analogy class, and it is selected as follows.
\begin{itemize}
\item If the robots form a \BI configuration, all the robots on SEC belong to the same strong analogy class $C'$, and there are robots of $C'$ that are not on SEC, then $C=C'$.
\item If the robots form a \DBI configuration, all the robots on SEC belong to the same analogy class $C'$, and there are robots of $C'$ that are not on SEC, then $C=C'$.
\item Otherwise, among the least numerous analogy classes that are not entirely on SEC, $C$ is the one whose robots induce the lexicographically smallest angle sequence.
\end{itemize}

\begin{remark}
The reason why strong analogy classes are considered in the \BI case, as opposed to analogy classes, will be clear in the proof of Theorem~\ref{t:bsec4}. Similarly, the reason why the robots move in this fashion in the \DBI case will be apparent in the proof of Theorem~\ref{t:dbi}. The fact that, in all other cases, the least numerous analogy classes move first, will be used in the proof of Theorem~\ref{t:p1}.
\end{remark}

\paragraph{Procedure {\sc Move to Finish Line}.}
This procedure is executed when the configuration is \VA and \RD, and all robots lie either on SEC or on SEC/3.
Each internal robot $r$ makes a lateral move to the intersection $q$ between its corresponding finish line and SEC/3, provided that $q$ is reachable by $r$ (i.e., if no other robot is co-radial with any point on the segment from $r$'s location to $q$).

\subsubsection{Main Procedures}

\paragraph{Procedure {\sc Pre-regular}.}
Each robot moves to its matching vertex of the supporting polygon.

\paragraph{Procedure {\sc Central}.}
The robot at the center of SED moves toward any point on SEC/3 that is not co-radial with any other robot (any deterministic algorithm for choosing this point works).

\paragraph{Procedure {\sc Half-disk}.}
Note that this procedure is executed only if the configuration is not \CE, hence we may assume that no robot lies at the center $c$ of SED.
\begin{itemize}
\item Suppose that all robots are collinear and one of the principal rays contains fewer than three robots. Let $r$ be the robot that lies on the other principal ray and that is closest to $c$.
\begin{itemize}
\item If $r$ does not lie in SED/3, it moves radially toward SEC/3.
\item If $r$ lies in SED/3, it moves to a point on SEC/3 that has angular distance $\pi/3$ from its current position.
\end{itemize}
\item Otherwise, if the intersection between one of the principal rays $\ell$ and SED/3 contains no robots, let $s$ be the robot that lies at the intersection between $\ell$ and SEC. Then, the robot distinct from $s$ with smallest angular distance from $s$ that is closest to $c$ moves to the point of $\ell$ that lies on SEC/3.
\item Otherwise, both principal rays contain at least two robots, one of which is in SED/3. In this case, the robot on each principal ray that is closest to $c$ moves into an empty half-plane, to the point on SEC/3 that has angular distance $\pi/3$ from its current position.
\end{itemize}

\paragraph{Procedure {\sc Co-radial}.}
\begin{itemize}
\item If there are non-co-radial robots that lie in the interior of SED/3, they move radially to SEC/3.
\item Otherwise, if the co-radial robots that are closest to the center of SED do not lie in SED/3, they moves radially toward SEC/3.
\item Otherwise, each robot $r$ that is closest to the center of SED moves to a point on SEC/3 whose angular distance from $r$'s current location is $1/3$ of the smallest positive angular distance between two robots.
\end{itemize}
 
\paragraph{Procedure {\sc Valid and Ready}.}
\begin{itemize}
\item If there are robots in the interior of SED/3, they move radially to SEC/3.
\item Otherwise, if not all the internal robots lie on their corresponding finish lines, procedure {\sc Move to Finish Line} is executed.
\item Otherwise, procedure {\sc Valid and Waiting} is executed (indeed, if all the internal robots lie on their corresponding finish lines and on SEC/3, the configuration is \WA, due to Proposition~\ref{p:internalclass}).
\end{itemize}

\paragraph{Procedure {\sc Valid and Waiting}.}
\begin{itemize}
\item If all the internal robots are walkers, procedure {\sc Move Walkers to SEC/3} is executed.
\item Otherwise, procedure {\sc Move All to SEC} is executed.
\end{itemize}

\paragraph{Procedure {\sc Invalid}.}
Procedure {\sc Move All to SEC} is executed.

\section{Properties and Correctness} \label{sec:correctness}
To prove the correctness of the algorithm, we need to analyze the possible transitions between configurations.

In the following, we will closely examine all the possible flows of the algorithm in the space of robots' configurations, paying special attention to the transitions that may arise as critical points of cautious moves. In Section~\ref{sec:cautious} we prove some fundamental results on cautious moves, which show that robots executing the {\sc Cautious Move} protocol introduced in Section~\ref{sec:code} indeed behave as intended. In Section~\ref{sec:analysis} we thoroughly analyze the \PR configurations that may arise during a cautious move, and we produce critical points to intercept them. Then, in Section~\ref{sec:corr} we conclude the proof by showing that all the possible flows of the algorithm eventually reach a \RE configuration.

The diagram in Figure~\ref{f:diagram} shows the possible transitions between configurations. We will prove the correctness of this diagram in Section~\ref{sec:corr}, culminating with Theorem~\ref{main}.

\begin{figure}[h!]
\centering
\vspace{0.2cm}
\includegraphics[scale=1.0]{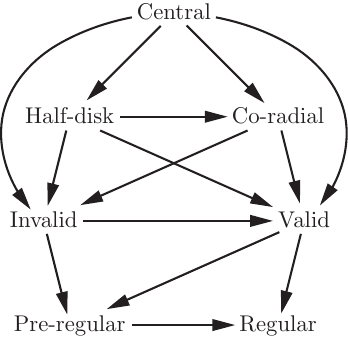}
\vspace{0.3cm}
\caption{Possible transitions between configurations of Algorithm~\UCF.}
\label{f:diagram}
\end{figure}

In this section, unless stated otherwise, $\mathcal R=\{r_1,\cdots, r_n\}$ will denote a swarm of $n>4$ robots. By $r_i(t)$ we denote the location of robot $r_i$ at time $t\geqslant 0$, and we let $\mathcal R(t)=\{r_1(t),\cdots, r_n(t)\}$.

\subsection{Correctness of Cautious Moves}\label{sec:cautious}
Let a set of robots execute the {\sc Cautious Move} protocol of Section~\ref{sec:code}, starting from a given frozen configuration $I$ and using a set of critical points $C$. We denote by $\mathcal E^\delta_{I,C}$ the set of all possible executions of such a robot system (recall the definition of execution from Section~\ref{sec:model}). Similarly to Section~\ref{sec:model}, we define $\mathcal E_{I,C}=\bigcup_{\delta>0} \mathcal E^\delta_{I,C}$, and we say that a cautious move with critical point set $C$ and initial configuration $I$ \emph{enjoys} the property $\mathcal P$ if $\mathcal E_{I,C}$ enjoys $\mathcal P$.

First we show that a cautious move always ``terminates'', that is, if every robot's path (either toward SEC or toward SEC/3) contains finitely many critical points, then after finitely many cycles the robot reaches the endpoint.

\begin{lemma}\label{l:cautious:0}
Suppose that a subset $\mathcal M$ of a swarm $\mathcal R$ of robots keeps executing the {\sc Cautious Move} protocol from a frozen initial configuration (while the robots of $\mathcal R\setminus\mathcal M$ remain still). Then, in a finite amount of time, each robot of $\mathcal M$ will be found at the endpoint of its path, and the swarm will be frozen again.
\end{lemma}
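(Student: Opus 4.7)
The plan is to show that, under the {\sc Cautious Move} protocol, each robot of $\mathcal M$ is forced to advance monotonically along a \emph{finite} ordered sequence of critical points on its own radial path, and that fairness together with $\delta$-minimality prevent this advance from stalling. The starting observation is that, for every $r\in\mathcal M$, the set of critical points lying on $r$'s path---comprising the original points of $C$ on that line, together with the final, transposed and intermediate points added by the procedure---is finite. Moreover, the transposition step guarantees that the radial distances at which these critical points appear are the \emph{same} on every path, so that among the robots of $\mathcal M$ the notion of ``being farthest from the endpoint'' is unambiguous and invariant under which robot's coordinate system one is working in.

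Next I would verify the two structural properties on which the termination argument rests. \emph{Monotonicity}: the destination rule of {\sc Cautious Move} returns either the endpoint $d$ itself or the critical point closest to the current position that lies strictly between it and $d$, hence every $r\in\mathcal M$ only travels along its radial segment toward its endpoint $e_r$, and the quantity $|r(t)-e_r|$ never grows. \emph{Sequentiality}: the {\sc proceed} test only lets a robot of $\mathcal M$ start a move when no other robot of $\mathcal M$ is strictly farther from its endpoint than itself. Because all robots of $\mathcal M$ move along their radii in the same global sense (all outward to SEC, or all inward to SEC/3), the radial ordering of their distances to their respective endpoints can only weakly decrease, so snapshots taken asynchronously by slower robots continue to certify them as non-farthest and keep them idle.

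With these two invariants in hand, progress becomes easy. Let $r^*\in\mathcal M$ be currently farthest from its endpoint. By fairness $r^*$ enters a new Look--Compute--Move cycle in finite time, and by the destination rule its target is the next critical point on its path, at some positive distance $g$; the $\delta$-minimality clause then ensures that during the ensuing Move phase $r^*$ either reaches that critical point or advances by at least $\delta$ towards it. Since obliviousness makes the next cycle recompute the same target, after at most $\lceil g/\delta \rceil$ cycles $r^*$ actually reaches the next critical point. Iterating this argument, and using the finiteness of the set of critical points on each path, in bounded time every $r\in\mathcal M$ ends up at its own $e_r$; at that moment the destination computed by {\sc Cautious Move} coincides with the robot's current position, no further motion is generated, and since $\mathcal R\setminus\mathcal M$ is frozen by hypothesis the entire swarm is frozen, as required. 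The main technical obstacle I anticipate lies in the sequentiality invariant: a careful case analysis is needed to show that ties for ``farthest'' are resolved consistently, namely that each tied robot independently advances by at least $\delta$ per activation, and that once one of them falls behind, it correctly ceases to be considered farthest by the others, who nevertheless never cross back over it.
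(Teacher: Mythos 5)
Your overall strategy (monotone advance along a finite set of critical points on each path, driven by fairness and $\delta$-minimality) is the same as the paper's, but the key progress step is wrong as stated. You claim that the currently farthest robot $r^*$, whose next critical point is at distance $g$, ``actually reaches the next critical point'' after at most $\lceil g/\delta\rceil$ cycles because ``obliviousness makes the next cycle recompute the same target.'' This is not so: after $r^*$ advances by as little as $\delta$, it may cease to be farthest (another robot of $\mathcal M$, previously at distance $d-\varepsilon$ from its endpoint with $\varepsilon<\delta$, now is), and then the \emph{proceed} test blocks $r^*$ at its next Look--Compute. So $r^*$'s progress toward its next critical point is interleaved with, and conditional on, the progress of every other robot of $\mathcal M$; following a single robot yields no termination bound, and ``iterating this argument'' is circular. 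Your sequentiality invariant --- that snapshots taken by slower robots ``continue to certify them as non-farthest and keep them idle'' --- fails for the same reason: a non-farthest robot becomes farthest as soon as the others overtake it, which is precisely the case your per-robot accounting ignores.

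The argument has to be made global. The paper tracks $d(t)$, the common distance of the farthest robots of $\mathcal M$ from their endpoints, observes that it is non-increasing, and rules out a positive infimum $m$ by two cases: if $m$ is attained at some time, then one more \emph{round} (a span in which every robot completes a cycle) pushes all farthest robots strictly below $m$; if $m$ is never attained, pick $t'$ with $d(t')-m<\delta$ and with no critical point strictly between the farthest robots and the $m$-mark (possible since the critical points are finitely many), so that after one round every farthest robot has reached or passed the $m$-mark --- a contradiction in both cases. The second subcase, where the maximum distance decreases forever without attaining its infimum, is exactly the situation your proof never confronts. (An alternative repair: each robot can perform only finitely many displacing moves, since each such move either covers at least $\delta$ or terminates at one of finitely many critical points that is never revisited, while fairness forces further displacing moves as long as some robot of $\mathcal M$ is short of its endpoint.)
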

\begin{proof}
We define a \emph{round} to be a span of time in which every robot executes at least one complete cycle. Any execution can be decomposed (not necessarily in a unique way) into an infinite sequence of rounds. Let $\mathcal L(t)\subseteq\mathcal M$ be the set of robots that are farthest from the endpoints of their respective paths at time $t$, and let $d(t)$ be the distance of any robot in $\mathcal L(t)$ from the endpoint of its path at time $t$. Suppose for a contradiction that $d(t)>0$ for every $t$. Since $d(t)$ can only decrease in time, it converges to an infimum $m$.  Suppose first that the infimum is reached, i.e., $d(t)=m$ for some $t$. Then, after a round, say at time $t'$, all the robots in $\mathcal L(t)$ have moved, and hence $d(t')<m$, which is a contradiction.

Suppose that $d(t)>m$ for every $t$, and therefore the infimum is never reached. Let $t'$ be such that $d(t')-m<\delta$. Let $r\in \mathcal L(t')$ and let $p_r$ be the point on $r$'s path at distance $m$ from the endpoint. Since the critical points are finitely many, we may assume that no critical points (auxiliary or not) lie on the path of $r$ strictly between $r(t')$ and $p_r$. By our choice of $t'$, all the robots that perform a cycle at any time after $t'$ necessarily reach their destination point. Hence, after a round, each robot $r\mathcal L(t')$ has moved onto $p_r$ or past it, and therefore there exists a time $t''>t'$ such that $d(t'')\leqslant m$, which is a contradiction.

It follows that each robot eventually reaches the endpoint of its path. Since this is also a critical point and the robot is not moving in the initial configuration, it stops there. Afterwards, every time the robot performs a Look-Compute phase and some other robot has not reached the endpoint of its path yet, it waits. Eventually, when the last robots have reached the endpoints of their paths and they stop, none of the robots is moving, and therefore the configuration is frozen.
\end{proof}

Next we prove that cautious moves are sound, i.e., that if a configuration of points $C$ is taken as the input set of critical points of a cautious move, then, whenever the robots are found in configuration $C$, they freeze.

\begin{theorem}\label{t:cautious1}
Let a subset $\mathcal M$ of a swarm $\mathcal R$ of robots execute the {\sc Cautious Move} protocol with critical points $C$, with $|C|=|\mathcal R|=n$, from a frozen initial configuration. Then, during the cautious move, as soon as the swarm is found in configuration $C$, it freezes.
\end{theorem}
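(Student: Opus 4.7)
The plan is to show that if at some time $t$ the swarm's configuration coincides with $C$, then every robot's trajectory at $t$ is the single point $\{r(t)\}$, so the swarm is frozen at $t$ by the definitions of Section~\ref{sec:model}. Since the robots of $\mathcal R\setminus\mathcal M$ are stationary by hypothesis, the substance of the proof concerns only the robots of $\mathcal M$.

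First I would observe, via the augmentation mechanism inside {\sc Cautious Move}, that for every $r\in\mathcal M$ the augmented critical-point set $C''$ constructed on $r$'s path contains every point of $C$ that lies on that path, since $C\subseteq C''$ and the dest computation restricts to the line $P$ through the center of SED and $r$'s position. Because $r$ moves only along this radial line, the assumption $r(t)\in C$ forces $r(t)$ to be a critical point on $r$'s path. Next I would use the fact that each Move phase issued by the protocol transports $r$ from one critical point on its path to the next consecutive one along the direction of motion; hence in the strict interior of a Move phase $r(t)$ would lie strictly between two consecutive critical points on its path and so could not itself be a critical point, contradicting the previous observation. Consequently at time $t$ every $r\in\mathcal M$ is at either the start or the end of its current Move segment.

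To conclude that $r$'s trajectory is $\{r(t)\}$, I would rule out the ``start'' case by invoking the farthest-first scheduling rule of the protocol. A robot $r$ that has just initiated a non-trivial Move at time $t$ must have been among the strictly farthest from its endpoint at the preceding (instantaneous) Look--Compute, i.e.\ have satisfied $\mathrm{proceed}=\mathbf{true}$ at a position $q\ne d$. Exploiting the hypothesis $|C|=n$ and $\mathcal R(t)=C$ together with the minimum-step constant $\delta$, I would argue that the simultaneous occupation of $C$ forces all robots of $\mathcal M$ to share the same distance to their respective endpoints at time $t$, and a short analysis of the synchronized radial progress enforced by the transposed and intermediate critical points shows that this common level must place every moving robot at the endpoint of its path; in that case the computed destination equals the robot's current position, the trajectory degenerates to $\{r(t)\}$, and the swarm is frozen at time $t$.

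The main obstacle I anticipate is precisely this last step: rigorously ruling out the scenario in which some robot is at the instant of beginning a non-trivial Move at the very moment $\mathcal R(t)=C$. The essential ingredients for overcoming it are the farthest-first scheduling rule (which couples the activation states of different robots of $\mathcal M$) and the $\delta$-minimality guarantee (which prevents infinitely many configurations in an infinitesimal time window); together they force all moving robots to synchronize at common radial levels by the time $\mathcal R(t)=C$, and the only such level consistent with $|C|=n$ distinct positions is the one at the endpoints of all the paths.
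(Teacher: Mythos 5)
Your opening two steps are sound and match the setup of the paper's argument: since the paths of the robots of $\mathcal M$ are pairwise disjoint radial segments, $C$ can only be realized if each path carries exactly one point $c_r$ of $C$ and the remaining points coincide with the stationary robots; and since the destination computed by {\sc Cautious Move} is always the \emph{next} critical point strictly ahead of the robot's current position, a robot in the strict interior of a move segment is never located at a critical point. You also correctly isolate the real difficulty: a robot can be \emph{located} at $c_r$ while already committed to a destination past $c_r$ (its trajectory is then a non-degenerate segment, so it is not frozen even though it has not yet displaced).

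However, your resolution of that difficulty rests on two claims that are false. First, $\mathcal R(t)=C$ does \emph{not} force the robots of $\mathcal M$ to share a common distance to their endpoints: the distances $d_r$ from $c_r$ to the endpoint of $r$'s path are dictated by the arbitrary configuration $C$ and are in general distinct, and the protocol happily realizes $C$ with robots at different radial levels (e.g.\ a robot whose $c_r$ is its own starting position sits there from the outset while the robots with maximal $d_r$ are still travelling). Second, even if there were a common level, concluding that it must be the endpoint level would make the theorem vacuous: the whole purpose of the cautious move is to freeze at configurations $C$ (typically \PR configurations) lying \emph{strictly inside} the paths, and the endpoint configuration is in general not $C$ at all. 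The correct argument, which is what the paper does, is a case analysis on $\mathcal H$, the set of robots $r\in\mathcal M$ maximizing $d_r$. The farthest-first rule implies that only the robots of $\mathcal H$ can move while any of them is strictly farther than $d_r^{\max}$ from its endpoint; hence $C$ can be realized only at the instant $t^*$ at which the last robot of $\mathcal H$ reaches its $c_r$, and only if every robot of $\mathcal M\setminus\mathcal H$ is already sitting at its own $c_r$ at that instant. At $t^*$ each robot has either just reached its destination $c_r$ or is waiting with $\mathrm{proceed}=\mathbf{false}$, so the swarm is frozen; and once any robot of $\mathcal H$ subsequently moves it overshoots $c_r$, so $C$ is never formed again. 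In the complementary case, where some robot of $\mathcal M\setminus\mathcal H$ starts strictly before its $c_r$, the transposed critical points $f_r$ at level $d_r^{\max}$ and the intermediate critical points are exactly what guarantees that such a robot is still strictly before $c_r$ whenever the robots of $\mathcal H$ first become able to pass their own $c_r$'s, so $C$ is simply never formed and there is nothing to prove. Your proposal contains no substitute for this case analysis, so as written the proof does not go through.
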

\begin{proof}
Because the paths of the robots of $\mathcal M$ are disjoint, $C$ can be formed only if each path contains exactly one point of $C$. Moreover, the other $n-|\mathcal M|$ points of $C$ must coincide with the locations of the robots in $\mathcal R\setminus\mathcal M$ (which remain still throughout the execution). By $c_r$ we denote the element of $C$ that lies on the path of robot $r\in\mathcal M$. Since a robot can only move toward the endpoint of its path, we may assume that each robot $r\in\mathcal M$ is initially located not past $c_r$ along its path, otherwise $C$ would never be formed during the cautious move. Let $d_r$ be the distance between $c_r$ and the endpoint of $r$'s path, and let $\mathcal H$ be the set of robots $r$ such that $d_r$ is maximum.

Suppose first that each robot $r\in \mathcal M\setminus \mathcal H$ initially lies at $c_r$. According to the {\sc Cautious Move} protocol, the only robots that are able to move in this situation are those in $\mathcal H$. By Lemma~\ref{l:cautious:0}, for every $r\in\mathcal H$ there exists a minimum time $t_r$ such that $r(t_r)=c_r$. Since this is a critical point, $r$ stops in $c_r$ at time $t_r$. Moreover, $r$ waits in $c_r$ until time $t^*=\max_{r\in \mathcal H}\{t_r\}$. Therefore, at time $t^*$, the robots form configuration $C$ for the first time, and none of them is moving. After that time, as soon as a robot $r$ moves, it passes $c_r$, and therefore $C$ cannot be formed any more.

Suppose now that some robots in $\mathcal M\setminus \mathcal H$ initially lie strictly before the element of $C$ on their respective path. For every $r\in \mathcal M\setminus \mathcal H$, let $f_r$ be the transposed critical point on the path of $r$ having distance $d_{r'}$ from the endpoint, with $r'\in \mathcal H$. Let $\mathcal H'$ be the set of robots $r\in \mathcal M\setminus \mathcal H$ such that $r$ is initially located in $f_r$ or before $f_r$. Let $\mathcal H''=\mathcal M\setminus(\mathcal H\cup \mathcal H')$. By our assumptions, $\mathcal H'\cup \mathcal H''$ is not empty. For every $r\in \mathcal H$, we define $t_r$ as in the previous paragraph. For $r'\in \mathcal H'$, we define $t_{r'}$ as the minimum time at which $r'$ is found in $f_{r'}$. Finally, we let $t^*=\max_{r\in \mathcal H\cup \mathcal H'}\{t_r\}$. By the {\sc Cautious Move} protocol, for every $r\in \mathcal H$, $r(t^*)=c_r$ and, for every $r'\in \mathcal H'$, $r'(t^*)=f_{r'}$. On the other hand, up to time $t^*$, no robot in $\mathcal H''$ has moved.

If $\mathcal H'$ is empty, then some robot $r\in \mathcal H''$ is not located in $c_r$ at time $t^*$ or before time $t^*$. Hence, the robots cannot form configuration $C$ until time $t^*$. After time $t^*$, the first robots that move are those in $\mathcal H$. When one of these robots moves, it goes past the element of $C$ that lies on its path, and therefore $C$ cannot be formed after time $t^*$, either.

Let $\mathcal H'$ be non-empty. Then, the robots cannot form configuration $C$ until time $t^*$, because each robot $r\in \mathcal H'$ is located strictly before $c_r$ at all times $t\leqslant t^*$. After time $t^*$, the first robots that are allowed to move are those in $\mathcal H\cup \mathcal H'$. For each $r\in \mathcal H\cup \mathcal H'$, let $t'_r\geqslant t^*$ be the first time at which robot $r$ performs a Look-Compute phase. Because each $r\in \mathcal H\cup \mathcal H'$ at time $t'_r$ lies at a critical point (possibly a transposed one), by the {\sc Cautious Move} protocol its destination is its next critical point, which is an intermediate one. In particular, if $r\in \mathcal H'$, its destination point is strictly before $c_r$. After such a robot $r$ has moved, it waits at least until after time $\max_{r\in \mathcal H\cup \mathcal H'}\{t'_r\}$. Indeed, before $r$ is allowed to move again, all the robots in $\mathcal H\cup \mathcal H'$ must ``catch up'' with it. However, as soon as a robot $r'\in \mathcal H$ moves after time $t^*$, it goes past $c_{r'}$, and therefore the configuration $C$ is not formable any more.
\end{proof}

Now we show that the cautious move protocol is ``robust'', in that merging two sets of critical points yields a cautious move that enjoys all the properties that are enjoyed when either set of critical points is taken individually.

\begin{lemma}\label{l:cautious2}
$\mathcal E_{I,C\cup \{p\}}\subseteq \mathcal E_{I,C}$.
\end{lemma}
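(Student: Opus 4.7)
The plan is to show that any execution produced under the cautious-move protocol with critical-point set $C\cup\{p\}$ can be replayed verbatim as an execution of the protocol with critical-point set $C$, after shrinking the minimality constant $\delta$ to an appropriate $\delta'>0$.

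First I would observe a monotonicity property of the augmentation step of procedure {\sc Cautious Move}: the final, transposed, and intermediate auxiliary critical points generated from $C$ are all also generated from $C\cup\{p\}$, so if $\widetilde C$ denotes the augmented set built inside the procedure from $C$, then $\widetilde C\subseteq\widetilde{C\cup\{p\}}$. The direction of each path, its endpoint, and the ``farthest from the endpoint'' check all depend only on the robots' positions and the dir parameter, so they are identical under the two critical-point sets. In particular, at any position $a$ on a robot's path, the next critical point past $a$ in $\widetilde{C\cup\{p\}}$ lies on the segment joining $a$ to the corresponding next critical point in $\widetilde C$.

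Next, fix an execution $e\in\mathcal E^\delta_{I,C\cup\{p\}}$. By Lemma~\ref{l:cautious:0}, in $e$ each robot reaches the endpoint of its path after finitely many cycles and the swarm freezes, so only finitely many Move phases of $e$ involve a strictly positive displacement; let $\delta^\ast$ be the minimum of these displacements, and set $\delta'=\min(\delta,\delta^\ast)>0$. I would then verify that $e$ is also a valid execution in $\mathcal E^{\delta'}_{I,C}$. The Look phases and the not-allowed-to-move decisions coincide by the previous paragraph. For a cycle in which a robot actually moves from $a$ to $b$, its $\widetilde{C\cup\{p\}}$-destination $d_1$ and its $\widetilde C$-destination $d_2$ are collinear with $a$ on the same ray, with $d_1\in[a,d_2]$; since $b\in[a,d_1]\subseteq[a,d_2]$, the trajectory from $a$ to $b$ is also a legal prefix of the trajectory toward $d_2$. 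The scheduler is allowed to end this Move phase at $b$ under the $C$-protocol because either $b=d_2$ or $|a-b|\geqslant\delta^\ast\geqslant\delta'$. This yields $e\in\mathcal E^{\delta'}_{I,C}\subseteq\mathcal E_{I,C}$, as required.

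The main obstacle is the $\delta$ bookkeeping. The minimality constant inherited from the $C\cup\{p\}$ execution does not directly suit the $C$-protocol, because in the latter the adversary must stop robots at the ``extra'' critical points created by adjoining $p$, and such stops can correspond to displacements smaller than $\delta$. Lemma~\ref{l:cautious:0} is what rescues the argument: by guaranteeing that only finitely many nontrivial moves occur in $e$, it makes the infimum of displacements a genuine positive minimum, so that a single $\delta'>0$ can be chosen globally.
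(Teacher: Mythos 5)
Your proof is correct and follows essentially the same route as the paper's: both rest on the observation that the augmented critical-point set built from $C$ is contained in the one built from $C\cup\{p\}$ (so any stopping point legal under the larger set lies on the way to the destination computed under the smaller set), and both invoke Lemma~\ref{l:cautious:0} to get finiteness and hence a positive admissible minimality constant. The only cosmetic difference is that you fix a single global $\delta'$ up front, whereas the paper shrinks $\delta$ adaptively inside an induction over the chronologically ordered Look--Compute phases; the two are equivalent here.
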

\begin{proof}
By the {\sc Cautious Move} protocol, the addition of $p$ to the set of the input critical points causes the appearance on the path of each robot of at most one extra transposed critical point and at most $|C|+1$ extra intermediate critical points. However, by Lemma~\ref{l:cautious:0}, each robot still reaches the end of its path within finitely many turns in every execution. Let $E\in\mathcal E_{I,C\cup \{p\}}$ be an execution. We claim that $E\in\mathcal E^\delta_{I,C}$, for a suitable choice of a small-enough $\delta$. Let us order chronologically the (instantaneous) Look-Compute phases of all the robots in the execution $E$, resolving ties arbitrarily. We will prove by induction that, up to the $k$-th Look-Compute, $E$ coincides with some execution in $\mathcal E_{I,C}$.

Let us assume that our claim holds up to a certain $k$, and let us prove that it holds up to $k+1$. Let $r$ be the robot performing the $k$-th Look-Compute, say at time $t$, and let $q=r(t)$. If this is $r$'s first Look-Compute phase, there is nothing to prove. Otherwise, let $t'<t$ be the last time before $t$ at which $r$ performed a Look-Compute phase, according to $E$. Since $r$ must stop at every critical point on its path, there must be no critical points in the relative interior of the segment $r(t')q$. By the inductive hypothesis, $E$ coincides with some execution in $\mathcal E_{I,C}$, and therefore with some execution $E'\in\mathcal E^{\delta}_{I,C}$, for some $\delta>0$. In particular, $r$ performs a Look-Compute at time $t'$ in $E'$, as well. We may also assume that the $(k+1)$-th Look-Compute phase in $E'$ is performed by $r$ at time $t$, and that $E$ and $E'$ coincide at all times in the interval $[t',t)$. Since the critical point set of the cautious move with input $C$ is a subset of that of the cautious move with input $C\cup\{p\}$, the destination point of $r$ computed in $E'$ at time $t'$ cannot be closer to $r(t')$ than $q$. So, $r$ can be stopped in $q$ by the adversary even if the input critical point set is $C$, provided that $\delta$ is small enough. Specifically, if $d$ is the distance between $r(t')$ and $q$, such an execution can be found in $\mathcal E^{\min\{\delta,d\}}_{I,C}$, and therefore in $\mathcal E_{I,C}$.
\end{proof}

\begin{theorem}\label{t:cautious3}
Let the cautious move from a frozen initial configuration $I$ and critical point set $C_1$ (respectively, $C_2$) enjoy property $\mathcal P_1$ (respectively, $\mathcal P_2$). Then, the cautious move with initial configuration $I$ and critical point set $C_1\cup C_2$ enjoys both $\mathcal P_1$ and $\mathcal P_2$.
\end{theorem}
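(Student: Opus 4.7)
The plan is to reduce the claim to an iterated application of Lemma~\ref{l:cautious2}, together with the elementary observation that universally-quantified properties are inherited by subsets. Recall that a cautious move uses a \emph{finite} input critical point set by hypothesis (the procedure requires $C\subset\mathbb R^2$ finite), so $C_1$, $C_2$, and therefore $C_1\cup C_2$ are all finite.

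First, I would observe that Lemma~\ref{l:cautious2} can be iterated: adding finitely many points $p_1,\dots,p_k$ one at a time yields
\[
\mathcal E_{I,\,C\cup\{p_1,\dots,p_k\}}\;\subseteq\;\mathcal E_{I,\,C\cup\{p_1,\dots,p_{k-1}\}}\;\subseteq\;\cdots\;\subseteq\;\mathcal E_{I,C}.
\]
Applying this with $C=C_1$ and adding the (finitely many) points of $C_2\setminus C_1$ gives $\mathcal E_{I,\,C_1\cup C_2}\subseteq\mathcal E_{I,C_1}$, and symmetrically $\mathcal E_{I,\,C_1\cup C_2}\subseteq\mathcal E_{I,C_2}$.

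Next, I would invoke the definition of ``enjoys'': a set of executions enjoys property $\mathcal P$ if every execution in it satisfies $\mathcal P$. This notion is monotone under subset inclusion. Since $\mathcal E_{I,C_1}$ enjoys $\mathcal P_1$, every execution in $\mathcal E_{I,\,C_1\cup C_2}\subseteq\mathcal E_{I,C_1}$ also satisfies $\mathcal P_1$, so $\mathcal E_{I,\,C_1\cup C_2}$ enjoys $\mathcal P_1$; the same argument with $C_2$ and $\mathcal P_2$ shows it enjoys $\mathcal P_2$ too. This is exactly the conclusion.

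There is no real obstacle beyond checking that the iteration in the first step is legitimate, i.e.\ that $C_2\setminus C_1$ is finite so we only need finitely many applications of Lemma~\ref{l:cautious2}; this is immediate from the finiteness requirement on input critical point sets in the {\sc Cautious Move} procedure. The proof is therefore essentially a two-line argument: monotonicity of $\mathcal E_{I,\cdot}$ under set inclusion of the critical point set (from Lemma~\ref{l:cautious2}) plus monotonicity of ``enjoys $\mathcal P$'' under restriction of the execution set.
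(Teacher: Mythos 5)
Your proof is correct and follows essentially the same route as the paper's: iterate Lemma~\ref{l:cautious2} to add the points of $C_2$ to $C_1$ one at a time, obtaining $\mathcal E_{I,\,C_1\cup C_2}\subseteq\mathcal E_{I,C_1}$, conclude that $\mathcal P_1$ is inherited by the subset, and argue symmetrically for $\mathcal P_2$. Your explicit remark on the finiteness of $C_2\setminus C_1$ is a small but welcome addition that the paper leaves implicit.
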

\begin{proof}
The theorem easily follows from Lemma~\ref{l:cautious2}: we add the critical points of $C_2$ to the set $C_1$, one by one. Each time we add a new point, by Lemma~\ref{l:cautious2} we have a set of executions that is a subset of the previous one, and therefore it still enjoys $\mathcal P_1$. Hence the cautious move with critical points $C_1\cup C_2$ enjoys property $\mathcal P_1$ and, by a symmetric argument, it also enjoys property $\mathcal P_2$.
\end{proof}

\begin{corollary}\label{c:cautious4}
Let a swarm of $n$ robots execute the {\sc Cautious Move} protocol with critical point set $\bigcup_{i=1}^k C_i$, with $|C_i|=n$ for $1\leqslant i\leqslant k$, from a frozen initial configuration. Then, during the cautious move, as soon as the robots are found in a configuration $C_i$, they freeze.
\end{corollary}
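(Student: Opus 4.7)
The plan is to derive this corollary by a straightforward combination of Theorem~\ref{t:cautious1} (the single-critical-set soundness result) with Theorem~\ref{t:cautious3} (the robustness-under-union result), using induction on the number $k$ of critical sets.

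First, I would fix $i\in\{1,\dots,k\}$ and define the property $\mathcal P_i$ on executions that asserts: ``whenever the swarm is found in configuration $C_i$ during the cautious move, it is frozen.'' By the hypothesis $|C_i|=n$, Theorem~\ref{t:cautious1} immediately yields that the cautious move with initial configuration $I$ (the given frozen configuration) and critical point set $C_i$ enjoys $\mathcal P_i$. This establishes the base ingredient for each index $i$ separately.

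Next, I would apply Theorem~\ref{t:cautious3} iteratively. For $k=2$, setting $C_1$ and $C_2$ as the two critical sets and taking $\mathcal P_1,\mathcal P_2$ as above, Theorem~\ref{t:cautious3} gives that the cautious move with critical set $C_1\cup C_2$ enjoys both $\mathcal P_1$ and $\mathcal P_2$. For general $k$, I would proceed by induction: assuming the cautious move with critical set $\bigcup_{i=1}^{k-1} C_i$ enjoys each of $\mathcal P_1,\dots,\mathcal P_{k-1}$, I would apply Theorem~\ref{t:cautious3} with the two sets $\bigcup_{i=1}^{k-1} C_i$ and $C_k$, carrying the conjunction of the first $k-1$ properties as $\mathcal P_1$ in the statement of Theorem~\ref{t:cautious3} (a conjunction of properties is again a property of executions) and $\mathcal P_k$ as $\mathcal P_2$. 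The conclusion is that the cautious move with critical set $\bigcup_{i=1}^{k} C_i$ enjoys each $\mathcal P_i$ for $1\leqslant i\leqslant k$, which is exactly the statement of the corollary.

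I do not foresee any real obstacle here: the corollary is essentially a packaging of two results already proved. The only mild subtlety is to note that Theorem~\ref{t:cautious3} applies with an arbitrary Boolean predicate on sequences of configurations, so the conjunction of finitely many such predicates is itself a legal choice for the input properties, which is what enables the induction to go through.
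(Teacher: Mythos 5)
Your proposal is correct and follows essentially the same route as the paper: apply Theorem~\ref{t:cautious1} to each $C_i$ individually to obtain property $\mathcal P_i$, then repeatedly apply Theorem~\ref{t:cautious3} to conclude that the union $\bigcup_{i=1}^k C_i$ enjoys every $\mathcal P_i$. Your remark that a conjunction of properties is itself a property merely makes explicit the bookkeeping the paper leaves implicit in its ``repeatedly applying'' step.
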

\begin{proof}
By Theorem~\ref{t:cautious1}, the cautious move with critical point set $C_i$ has the property $\mathcal P_i$ that, as soon as the robots are found in configuration $C_i$, they freeze. By repeatedly applying Theorem~\ref{t:cautious3}, we have that the cautious move with critical point set $\bigcup_{i=1}^k C_i$ enjoys all properties $\mathcal P_i$, for every $i$.
\end{proof}

\subsection{Analysis of \PR Configurations}\label{sec:analysis}
In this section, we prove several properties of \PR configurations that will be needed in the correctness proof of Section~\ref{sec:corr}. First we show that a \PR configuration cannot be \HD (Theorem~\ref{t:halfdisk}), it cannot be \CO (Theorem~\ref{t:coradial}), and it has no points in SED/3 (Theorem~\ref{t:sec3}). Then we prove that \PR configurations can effectively be taken as critical points during the execution of the algorithm, by showing that only finitely many \PR configurations are formable whenever a cautious move has to be made, or that the ``relevant'' \PR configurations that are formable are only finitely many.

In the following, we assume that $S\subset \mathbb R^2$ is a finite set of $n>4$ points, none of which lies at the center of SED. In particular, if $S$ is \PR, then $n\geqslant 6$, because in this case $n$ must be even. Since points model robots' locations, with abuse of terminology we will refer to points of $S$ that ``slide'' according to some rules. Formally, what we mean is that we consider $S$ as a function of time, so that $S(t)$ represents a set of robots' locations at time $t$; likewise a ``sliding'' point $a\in S$ will formally be a function $a(t)$ representing the trajectory of a robot.

\subsubsection{\HD Configurations, Co-Radial Points, and Points in SED/3}\label{sec:analysis1}

\begin{lemma}\label{l:prehull}
If $S$ is \PR, then $S$ is in strictly convex position, and in particular no three points of $S$ are collinear. Moreover, the convex hull of $S$ contains the center of the supporting polygon of $S$.
\end{lemma}
\begin{proof}
Let $P$ be the supporting polygon of $S$, which is regular and therefore convex. The fact that $S$ is in strictly convex position follows directly from the definition of \PR. Indeed, $S$ is a subset of the boundary of $P$, and no three points of $S$ lie on the same edge of $P$.

Let $c$ be the center of $P$, and let $a$ and $b$ be any two points of $S$ that lie at adjacent vertices of the convex hull of $S$. Since $S$ is in convex position, it is contained in a half-plane $\mathcal H$ bounded by the line $ab$. To prove that $c$ is contained in the convex hull of $S$, it is sufficient to show that it lies in $\mathcal H$ (because a convex polygon is the intersection of the half-planes determined by its own edges). If $a$ and $b$ are companions, $\mathcal H$ contains all of $P$, and therefore also its center. Otherwise, $\mathcal H$ entirely contains all edges of $P$, except at most three (i.e., the edges on which $a$ and $b$ lie, plus the edge between them). Since $P$ has at least six edges, it easily follows that $\mathcal H$ must contain its center.
\end{proof}

\begin{theorem}\label{t:halfdisk}
If $S$ is \PR, then it is not a \HD set.
\end{theorem}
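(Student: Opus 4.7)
Assume for contradiction that $S$ is both \PR and \HD. Let $\ell$ be the principal line, $H$ the closed half-plane containing $S$, $c$ and $r$ the center and radius of $\mbox{SED}(S)$, and let $P$ be the supporting polygon with center $o$ and circumradius $R$; by the \HD hypothesis there are two antipodal points $p_1,p_2\in S$ on $\ell\cap \mbox{SEC}(S)$, and $c$ is the midpoint of $p_1p_2$. The plan is to derive a contradiction from where $p_1,p_2$ sit on $P$, after one clean preliminary observation: by Observation~\ref{l:prehull}, $S$ is in strictly convex position, so no three points of $S$ are collinear; hence $S\cap \ell=\{p_1,p_2\}$, with every other point of $S$ lying strictly in $H\setminus \ell$. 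The argument now splits into two cases according to whether $p_1,p_2$ lie on the same edge of $P$ or on distinct edges.

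In the \emph{same-edge case}, the line $\ell$ coincides with the line through the edge $e$ containing $p_1$ and $p_2$, and in particular $c\in e$. The length bound $|p_1p_2|\leqslant |e|=2R\sin(\pi/n)$ gives $r\leqslant R\sin(\pi/n)$, while Observation~\ref{l:prehull} additionally places $o\in\mbox{conv}(S)\subseteq\mbox{SED}(S)$, so $|o-c|\leqslant r$; but $c$ lies on the line of $e$, which is at perpendicular distance $R\cos(\pi/n)$ (the apothem) from $o$, forcing $|o-c|\geqslant R\cos(\pi/n)$. Combining these gives $\tan(\pi/n)\geqslant 1$, hence $n\leqslant 4$, contradicting $n\geqslant 6$.

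In the \emph{distinct-edge case}, since $\ell$ is a chord of the convex polygon $P$ it meets $\partial P$ only at $p_1,p_2$ and cuts $\partial P$ into two arcs. No whole occupied edge can lie in the arc on the $H^c$-side (else its two points of $S$ would be in $H^c$), and since occupied and non-occupied edges strictly alternate around $P$, this forces $p_1,p_2$ onto two consecutive occupied edges separated on the $H^c$-side by a single non-occupied edge. After relabelling, $p_1\in e_0$ and $p_2\in e_2$ with $v_1,v_2\in H^c$. The companion $q$ of $p_1$ on $e_0$ must lie in $H$; the $v_1$-side of $p_1$ on $e_0$ leads into $H^c$, so $q$ lies strictly on the $v_0$-side of $p_1$. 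The hard part is then to show that this forces $|q-c|>r$, contradicting $q\in \mbox{SED}(S)$. Using $|q-c|^2 = r^2+2t\,\mathbf{u}\cdot(p_1-c)+t^2$ with $\mathbf{u}$ the unit vector from $p_1$ toward $v_0$ and $t=|q-p_1|>0$, it suffices to prove $(v_0-p_1)\cdot(p_1-c)\geqslant 0$. Writing $p_1=(1-s_1)v_0+s_1v_1$ and $p_2=(1-s_2)v_2+s_2v_3$, and expanding via $v_i\cdot v_j=R^2\cos(2\pi(i-j)/n)$ together with $\cos A-\cos B=-2\sin\tfrac{A+B}{2}\sin\tfrac{A-B}{2}$, this dot product reduces to a non-negative scalar multiple of
\[
f(s_1,s_2)=\sin(\pi/n)(1-2s_1)+(1-s_2)\sin(3\pi/n)+s_2\sin(5\pi/n).
\]
Since $f$ is linear in each variable, its minimum on $[0,1]^2$ is attained at a corner; the four corners are all non-negative using only the elementary inequalities $\sin(3\pi/n)\geqslant\sin(\pi/n)$ and $\sin(5\pi/n)\geqslant\sin(\pi/n)$ valid for $n\geqslant 6$ (arguments in $[0,\pi]$, exploiting $\sin(\pi-x)=\sin x$ when needed). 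The boundary configuration $p_1=v_0$ needs a direct argument since then $v_0-p_1=0$: in this case the companion is forced onto the $v_1$-side of $p_1$, which is in $H^c$, contradicting $q\in H$; the symmetric degenerate positions of $p_1,p_2$ are dispatched identically.
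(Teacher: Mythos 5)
Your proof is correct, but it takes a genuinely different route from the paper's in both cases. In the same-edge case the paper simply notes that an edge carrying two antipodal points of SEC is at least as long as the diameter of SEC, so the regular $n$-gon cannot accommodate the remaining points unless $n=2$; your apothem-versus-chord computation ($R\cos(\pi/n)\leqslant |o-c|\leqslant r\leqslant R\sin(\pi/n)$, hence $n\leqslant 4$) reaches the same contradiction by an explicit metric bound. The real divergence is in the distinct-edge case: after pinning $p_1,p_2$ to two occupied edges separated by a single non-occupied edge (exactly as the paper does), the paper argues that the supporting lines of these two edges must meet in the empty half-plane because the polygon has at least six sides, yet the positions of the two companions force them to meet in the non-empty half-plane --- short, but essentially a picture argument. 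You instead exhibit a concrete point of $S$ (the companion of $p_1$) that is forced outside SED, reducing everything to the sign of $(v_0-p_1)\cdot(p_1-c)$. I verified the reduction: with $v_i\cdot v_j=R^2\cos(2\pi(i-j)/n)$ one gets $(v_0-p_1)\cdot(p_1-c)=s_1R^2\sin(\pi/n)\,f(s_1,s_2)$, and $f\geqslant 0$ at all four corners of $[0,1]^2$ for $n\geqslant 6$, so the argument closes. What your route buys is full algebraic verifiability and independence from any claim about where line extensions meet; what it costs is length and the need for the degenerate patches you mention ($p_1=v_0$, and the implicit relabelling when $p_2$ sits at the vertex shared by the non-occupied edge and its occupied neighbour, in which case $v_2\notin H^c$ --- harmless, since only $v_1\in H^c$ is actually used to locate the companion of $p_1$).
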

\begin{proof}
Suppose by contradiction that $S$ is \PR and \HD, and let $\ell$ be the principal line. Due to Observation~\ref{l:sechull}, $\ell\cap \mbox{SEC}(S)$ consists of two antipodal points $a$ and $b$, both belonging to $S$. 

First assume that $a$ and $b$ belong to the same edge of the supporting polygon. Recall that the supporting polygon is a regular polygon, which implies that it has no other intersections with SED$(S)$ other than $a$ and $b$, as its edges are at least as long as the diameter of SEC$(S)$. This means that $n=2$, contradicting our assumption that $n>4$.

Hence $a$ and $b$ do not belong to the same edge of the supporting polygon. However, since every other edge of the supporting polygon must contain points of $S$, and the empty half-plane does not contain any point of $S$, it follows that $a$ and $b$ belong to two edges $AB$ and $CD$ of the supporting polygon, respectively, such that $BC$ is a third edge of the same polygon. Note that $AB$ does not lie on $\ell$, otherwise the companion of $a$ would be collinear with $a$ and $b$, contradicting Lemma~\ref{l:prehull}. Similarly, $CD$ does not lie on $\ell$. Since $n>4$, the supporting polygon is at least a hexagon, and therefore the extensions of $AB$ and $CD$ meet in the empty half-plane. On the other hand, let $\mathcal H$ be the part of SED$(S)$ that does not lie in the empty half-plane. Observe that the companion of $a$ lies in $AB\cap \mathcal H\setminus\{a\}$, and the companion of $b$ lies in $CD\cap \mathcal H\setminus\{b\}$. This implies that the extensions of $AB$ and $CD$ meet in the non-empty half-plane, which is a contradiction.
\end{proof}

\begin{lemma}\label{l:coradial}
If $S$ is \PR, then any ray from the center of SED intersects the perimeter of the supporting polygon in exactly one point.
\end{lemma}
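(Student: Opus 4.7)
The plan is to show that the center of $\mbox{SED}(S)$ lies in the \emph{interior} of the supporting polygon; the claimed ``exactly one intersection'' property then follows from a standard fact about convex bounded regions.

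First I would denote by $c$ the center of $\mbox{SED}(S)$ and by $P$ the supporting polygon of $S$. By definition of \PR, every point of $S$ lies on some edge of $P$, and since $P$ is a regular $n$-gon, hence convex, this gives $S\subseteq P$. By Observation~\ref{l:sechull}, $c$ lies in the convex hull of $S\cap\mbox{SEC}(S)\subseteq S\subseteq P$, so $c\in P$. Thus $c$ is at least in $P$; what remains is to rule out that $c$ lies on $\partial P$.

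Next, assume for a contradiction that $c\in\partial P$. Because $P$ is convex, it admits a supporting line $\ell$ at $c$ (either the line through an edge containing $c$, or any line through a vertex in the cone of outward normals there). Then $P$, and hence $S$, lies in one of the two closed half-planes bounded by $\ell$, which means that one of the open half-planes bounded by $\ell$ is devoid of points of $S$. Since $\ell$ passes through the center of $\mbox{SED}(S)$, this is precisely the definition of a \HD configuration. But by Theorem~\ref{t:halfdisk} a \PR configuration cannot be \HD, a contradiction. Therefore $c$ lies in the interior of $P$.

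Finally, I would close the argument with the elementary observation that if $c$ is an interior point of a bounded convex set $P$, then for every unit vector $v$ the set $\{t\geqslant 0 : c+tv\in P\}$ is a closed interval of the form $[0,t^*]$ with $t^*>0$; the ray $\{c+tv:t\geqslant 0\}$ meets $\partial P$ exactly at $t=t^*$, i.e.\ in exactly one point. Applied to the perimeter of the supporting polygon, this proves the lemma. The only delicate step is the second one, namely ensuring that the appeal to a supporting line at a boundary point forces the \HD structure; the rest is essentially bookkeeping using the earlier observations and Theorem~\ref{t:halfdisk}.
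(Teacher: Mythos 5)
Your proof is correct and follows essentially the same route as the paper's: both arguments reduce the lemma to showing that the center of SED lies in the interior of the supporting polygon, using Observation~\ref{l:sechull} to place the center inside the polygon and Theorem~\ref{t:halfdisk} to push it off the boundary. One small imprecision: an empty open half-plane bounded by a line through the center is \emph{not} quite ``precisely the definition'' of a \HD set --- that definition requires \emph{exactly one} of the two open half-planes to be empty, so you must also exclude the degenerate case in which all of $S$ lies on the supporting line $\ell$ itself. This is immediate from Observation~\ref{l:prehull} (a \PR set is in strictly convex position, hence not collinear), and the paper's proof dispatches exactly this case explicitly; add that one line and your argument is complete.
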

\begin{proof}
Let a ray from the center of SED intersect the perimeter of the supporting polygon in exactly two points $a$ and $b$, none of which coincides with the center of SED. Then, by Lemma~\ref{l:prehull}, the intersection of the line through $a$ and $b$ with the convex hull of $S$ is exactly the segment $ab$, and therefore the center of SED does not belong to the convex hull of $S$. This contradicts Observation~\ref{l:sechull}.

Suppose now that an edge of the supporting polygon, belonging to a line $\ell$, is collinear with the center of SED$(S)$. Due to Lemma~\ref{l:prehull}, either $S$ lies entirely on $\ell$, in which case it cannot be \PR, or it is \HD with principal line $\ell$, which is impossible due to Theorem~\ref{t:halfdisk}.
\end{proof}

\begin{theorem}\label{t:coradial}
If $S$ is \PR, then it is not \CO.
\end{theorem}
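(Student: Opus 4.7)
The plan is to derive a contradiction directly from Lemma~\ref{l:coradial}. Suppose, for contradiction, that $S$ is both \PR and \CO. By the definition of \CO, there exist two distinct points $p,q\in S$ lying on a common ray $\rho$ emanating from the center of SED$(S)$.

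Next I would verify the simple structural fact that, in a \PR configuration, every point of $S$ lies on the perimeter of the supporting polygon. Unpacking the \PR definition: the $n$ edges of the supporting polygon alternate between ``full'' edges (each containing exactly two points of $S$, possibly at the endpoints) and ``empty'' edges (whose relative interiors contain no point of $S$). Since $n$ is even and there are $n/2$ full edges carrying $2$ points each, this already accounts for all $n$ points of $S$, and every such point lies on an edge, hence on the perimeter of the supporting polygon. In particular, both $p$ and $q$ lie on the perimeter of the supporting polygon.

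Now apply Lemma~\ref{l:coradial}: the ray $\rho$ meets the perimeter of the supporting polygon in exactly one point. Since $p,q\in\rho$ both lie on this perimeter, we must have $p=q$, contradicting their distinctness. Hence no \PR configuration can be \CO.

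I do not anticipate a main obstacle: all the work has been front-loaded into Lemma~\ref{l:coradial} (which in turn relies on Theorem~\ref{t:halfdisk} to rule out a degenerate edge-through-the-center configuration). Once one observes that \PR forces all robots to lie on the perimeter of the supporting polygon, the co-radiality assumption collapses two robots onto a single perimeter point, giving the contradiction in one line.
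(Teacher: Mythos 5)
Your proof is correct and is essentially the paper's own argument: the paper likewise derives the contradiction by noting that two co-radial points would give a ray from the center of SED meeting the supporting polygon's perimeter in at least two points, contradicting Lemma~\ref{l:coradial}. Your extra verification that every point of $S$ lies on the perimeter of the supporting polygon (via the counting of full edges) is a correct explicit justification of a step the paper leaves implicit.
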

\begin{proof}
If two points of $a,b\in S$ were co-radial, then the ray from the center of SED through $a$ and $b$ would intersect the perimeter of the supporting polygon in at least $a$ and $b$, contradicting Lemma~\ref{l:coradial}.
\end{proof}

\begin{theorem}\label{t:sec3}
If $S$ is \PR, then no points of $S$ lie in SED/3.
\end{theorem}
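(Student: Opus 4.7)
I plan to argue by contradiction. Suppose some $p \in S$ has $|cp| \le R/3$, where $c$ and $R$ are the center and radius of SED$(S)$. Let $P$ be the supporting polygon, $O$ its center, $R_p$ its circumradius, and $\rho = R_p\cos(\pi/n)$ its apothem; let $e$ be the (full) edge of $P$ containing $p$, and for each edge $e'$ of $P$ let $d_{e'}$ denote the distance from $c$ to the line $\ell_{e'}$ through $e'$. Since $p \in \ell_e$ we have $d_e \le |cp| \le R/3$, and by Lemma~\ref{l:coradial} $c$ is strictly inside $P$, so $c$ lies on the interior side of every $\ell_{e'}$ and the signed-distance identity $d_{e'} = \rho + O \cdot m_{e'}$ holds, where $m_{e'}$ is the outward unit normal of $e'$.

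The proof rests on two observations. First, any edge $e'$ carrying a point $p' \in S$ must satisfy $d_{e'} \le |cp'| \le R$; since $S$ is \PR, this applies to each of the $n/2$ full edges of $P$. Second, choosing coordinates so that $m_e = (1,0)$, the outward normals of the full edges become $m_{e_{2j}} = (\cos(4\pi j/n), \sin(4\pi j/n))$ for $j = 0, 1, \ldots, n/2 - 1$; these are the $(n/2)$-th roots of unity and sum to $0$ (since $n \ge 6$), so summing the identity gives the restricted Viviani sum $\sum_{j=0}^{n/2-1} d_{e_{2j}} = (n/2)\rho$.

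Combining the two, $(n/2)\rho = d_e + \sum_{j \ge 1} d_{e_{2j}} \le R/3 + (n/2 - 1)R = (n/2 - 2/3)R$, so $\rho \le (1 - 4/(3n))R$. Because $S$ is contained in the closed disk of radius $R_p$ about $O$ we have $R \le R_p$, and substituting $\rho = R_p\cos(\pi/n)$ yields $\cos(\pi/n) \le 1 - 4/(3n)$. This fails for every $n \ge 4$ by the Taylor bound $\cos(\pi/n) > 1 - (\pi/n)^2/2$ together with $\pi^2/(2n^2) < 4/(3n) \Leftrightarrow n > 3\pi^2/8 \approx 3.7$, contradicting the preceding inequality.

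The hardest step is isolating the right invariant: the bound $d_{e'} \le R$ is available only on edges that carry points of $S$, so the classical Viviani identity summed over \emph{all} $n$ edges is useless. The argument goes through because the $n/2$ full edges of any \PR configuration, despite being selected by the combinatorics of $S$, still have outward normals forming a regular $(n/2)$-gon about $O$; this residual symmetry is what permits the restricted identity, after which the deficit $d_e \le R/3$ produces just enough slack to contradict $R \le R_p$ for every $n \ge 6$.
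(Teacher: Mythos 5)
Your proof is correct, but it follows a genuinely different route from the paper's. The paper argues synthetically: since the boundary of the supporting polygon lies in an annulus with inner/outer radius ratio $\cos(\pi/n)\geqslant\sqrt3/2$, a point of $S$ in SED/3 forces the center of that annulus to be at distance at least $(\sqrt3/2-1/3)r$ from the center of SED; a short Pythagorean computation then shows that an entire half-annulus escapes SED, which is impossible because every half-annulus contains two whole adjacent edges of the supporting polygon, one of which must carry points of $S$ (and hence meet SED). You instead set up a restricted Viviani identity: the outward normals of the $n/2$ point-carrying edges are every other normal of the $n$-gon, hence still sum to zero, so the distances from the SED-center to their lines sum to exactly $(n/2)\rho$; bounding one summand by $R/3$ and the rest by $R$ forces $\rho\leqslant(1-4/(3n))R\leqslant(1-4/(3n))R_p$, which contradicts $\cos(\pi/n)>1-\pi^2/(2n^2)\geqslant 1-4/(3n)$ for $n\geqslant 4$. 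Both arguments ultimately rest on the same two facts (every other edge of the supporting polygon carries a point of $S$, and the polygon's boundary hugs its circumcircle); the paper's version needs only the crude ratio $\sqrt3/2$ and one distance estimate, while yours trades the half-annulus picture for a one-line algebraic identity, works uniformly for all even $n\geqslant 4$, and makes explicit how much slack the constant $1/3$ provides. Two small presentational points: the identity should read $d_{e'}=\rho+(O-c)\cdot m_{e'}$ (or you should state that $c$ is taken as the origin) --- the alternating sum still annihilates the linear term either way; and, like the paper's own proof, you are implicitly using the (intended) reading of the \PR definition under which the two-point edges alternate with the empty-interior ones around the polygon.
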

\begin{proof}
If $S$ is \PR, all points of $S$ lie on the perimeter of the same regular $n$-gon, with $n\geqslant 6$. Therefore, they all lie in an annulus $A$ with inner and outer radii $r'$ and $r''$ respectively, such that $r'/r''\geqslant \sqrt{3}/2$. Also, since the outer circle of $A$ encloses $S$, we have $r''\geqslant r$, where $r$ is the radius of SED, implying that $r'\geqslant (\sqrt{3}/2)r$.

Suppose for a contradiction that a point $p\in S$ lies in SED/3. Let $d$ be the distance between the center of SED and the center of $A$. Since $p$ must also lie in $A$, it follows that $d\geqslant r'-r/3\geqslant (\sqrt 3/2-1/3)r>0$. Therefore the set $\mbox{SED}\cup A$ has a unique axis of symmetry $\ell$. Let $\ell'$ be the axis of $A$ that is orthogonal to $\ell$, and let $a$ and $b$ be the two points of intersection between $\ell'$ and the inner circle of $A$. The distance between the center of SED and $a$ (or $b$) is
$$\sqrt{d^2+r'^2}\geqslant \sqrt{\left(\frac{\sqrt 3}2-\frac 13\right)^2+\left(\frac{\sqrt 3}2\right)^2}\cdot r>r,$$
which means that $a$ and $b$ lie outside of SED. Since $a$ and $b$ are antipodal points of the inner circle of $A$, it follows that at least a half-annulus of $A$ lies outside SED: precisely, the part of $A$ that lies on one side of $\ell'$. Since this half of $A$ lies outside SED, it is devoid of points of $S$. But this is a contradiction, because every other edge of the supporting polygon of $S$ must contain points of $S$ and, since $n\geqslant 6$, every half-annulus of $A$ contains at least two whole adjacent edges of the supporting polygon.
\end{proof}

\subsubsection{Cautious Moves for \EQ Configurations}\label{sec:analysis1b}

\begin{observation}\label{l:dist}
If $S$ is \PR and $x,y$ are companions, then $xy\leqslant xz$ for every $z\in S\setminus\{x\}$. In particular, if some $z\in S\setminus\{x\}$ is such that $xy=xz$, then $xy$ and $xz$ are adjacent edges of the supporting polygon. Moreover, if $c$ is the center of the supporting polygon, then $\angle xcz \geqslant \angle xcy$ for every $z\in S\setminus\{x\}$.
\end{observation}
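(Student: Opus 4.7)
The plan is to exploit the alternating structure of occupied and empty edges of the supporting polygon. Since $S$ is \PR and $n>4$, $n$ is at least $6$, the points of $S$ partition naturally into $n/2$ companion pairs, one on each occupied edge, and consecutive occupied edges are separated by an empty edge. Let $e$ be the common edge of the companions $x$ and $y$, with side length $L$; any $z\in S\setminus\{x,y\}$ lies on some other occupied edge $e'\neq e$, and $e'$ is non-adjacent to $e$ in the polygon.

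The first step is the upper bound $xy\leqslant L$, with equality iff $\{x,y\}$ is exactly the vertex pair of $e$. The second step is the lower bound on $xz$, and this is the crux. I would prove the following sub-lemma: in a regular $n$-gon with $n\geqslant 6$, the minimum Euclidean distance between two non-adjacent edges is exactly $L$, and it is attained only when the two edges are separated by exactly one intervening edge and the two points are the two endpoints of that intervening edge. The argument is a standard optimization on the parameter square: the gradient of the squared distance vanishes in the interior only if the difference vector is perpendicular to both edges, which forces them to be parallel, a case reserved for antipodal edges whose distance is clearly larger than $L$; on the boundary the comparison reduces to vertex-to-vertex distances $2R\sin(k\pi/n)$ for $k\geqslant 1$, minimized at $k=1$, giving $L$ exactly.

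Combining the two bounds yields $xy\leqslant L\leqslant xz$, proving the first claim. Tracking equality through the bounds, $xy=xz$ can happen only when $\{x,y\}$ is the vertex pair of $e$ and $z$ is the neighboring polygon vertex of $e'$ across a single empty edge $e_1$. Then $xy$ coincides with the polygon edge $e$, $xz$ coincides with the polygon edge $e_1$, and $e$ and $e_1$ share vertex $x$, so they are adjacent edges of the supporting polygon; this proves the second claim. For the third claim, I would parametrize the points by their angular coordinate around the center $c$ of the supporting polygon, noting that $c$ lies in the open interior of the polygon and is therefore distinct from every point of $S$ on the boundary. Each edge subtends exactly $2\pi/n$ at $c$, and consecutive edges occupy consecutive angular sectors; hence $x$ and $y$, being on the same edge, satisfy $\angle xcy\leqslant 2\pi/n$, while any $z$ on an edge separated from $e$ by at least one empty edge satisfies $\angle xcz\geqslant 2\pi/n$, and the inequality follows.

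The main obstacle will be the clean execution of the inter-edge distance sub-lemma, particularly the careful casework distinguishing edges separated by exactly one empty edge from those farther apart, and the crisp identification of the equality case. Everything else reduces to elementary trigonometry on the regular polygon and direct unwinding of the definitions of \PR and companions.
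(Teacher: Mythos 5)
The paper states this result as an Observation and gives no proof at all, so there is no argument of the authors' to compare yours against; what you have written is, in effect, the missing justification, and its overall structure is sound. The reduction to a sub-lemma on the minimum distance between non-adjacent edges of a regular $n$-gon, the tracking of the equality case to identify $xy$ and $xz$ with two adjacent polygon edges, and the angular-sector argument for $\angle xcz\geqslant \angle xcy$ are all correct, and the conclusions of the Observation do follow from them.

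The one step that needs repair is the boundary analysis inside your sub-lemma. After ruling out interior critical points via the parallelism argument, you assert that ``on the boundary the comparison reduces to vertex-to-vertex distances $2R\sin(k\pi/n)$.'' That is not quite true: on a boundary face of the parameter square one point is pinned to a vertex $v$ of $e$, but the minimizer over the other edge $e'$ may still be the foot of the perpendicular from $v$ onto the line of $e'$, lying in the relative interior of $e'$ --- and that perpendicular distance can genuinely be smaller than $L$. For instance, the distance from $v_1$ to the line through $v_2v_3$ equals $R\bigl(\cos(\pi/n)-\cos(3\pi/n)\bigr)=2R\sin(2\pi/n)\sin(\pi/n)<2R\sin(\pi/n)=L$, so one cannot dismiss this configuration by magnitude alone; one must check where the foot lands. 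The fix is short: by the inscribed angle theorem, the angles $\angle v_i v_j v_{j+1}$ and $\angle v_i v_{j+1} v_j$ (for $v_i$ an endpoint of $e$ and $e'=v_jv_{j+1}$) are both at most $\pi/2$ --- i.e., the foot lies inside $e'$ --- only when $e'$ is antipodal or adjacent-to-antipodal to $e$, and in those cases the perpendicular distance is $2R\cos(\pi/n)>2R\sin(\pi/n)=L$ for $n\geqslant 6$. With that case excluded, the remaining boundary candidates really are vertex-to-vertex distances $2R\sin(k\pi/n)$ with $k\geqslant 1$, minimized exactly at $k=1$, and your equality analysis and the rest of the proof go through unchanged. (An alternative that avoids this casework for the equality-achieving pairs: project $e$ and $e'$ orthogonally onto the line through the intervening edge $v_1v_2$; since the interior angle $\pi(n-2)/n$ exceeds $\pi/2$, the two projections lie on opposite sides of the segment $v_1v_2$, giving $|pq|\geqslant L$ with equality only at $(v_1,v_2)$.)
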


\begin{lemma}\label{l:cyclic}
If $S$ is \PR, the cyclic order of $S$ around the center of SED is the same as the cyclic order of $S$ around the center of the supporting polygon.
\end{lemma}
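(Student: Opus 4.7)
The plan is to exhibit a single reference cyclic order --- the order in which the points of $S$ appear along the boundary $\partial P$ of the supporting polygon $P$ --- and to show that both cyclic orders mentioned in the lemma (around the center $o$ of SED and around the center $c$ of $P$) coincide with it. The underlying geometric fact I will use is the following: for any point $x$ strictly interior to a convex polygon $P$, the map sending $y \in \partial P$ to the direction from $x$ to $y$ is an orientation-preserving homeomorphism onto the unit circle; consequently, any finite set of points on $\partial P$ appears in the same clockwise cyclic order when viewed from $x$ as when traversed along $\partial P$.

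The main step is therefore to check that both $o$ and $c$ lie strictly inside $P$. For $c$ this is trivial, since $c$ is the center of the regular polygon $P$. For $o$, note that $P$ is convex and contains $S$, hence $P \supseteq \mbox{conv}(S)$, and by Observation~\ref{l:sechull} we have $o \in \mbox{conv}(\mathcal E(S)) \subseteq \mbox{conv}(S) \subseteq P$. Suppose, for contradiction, that $o$ lies on $\partial P$. Then $o$ would lie in some edge $E$ of $P$, and a ray emanating from $o$ along $E$ toward either of its endpoints would intersect $\partial P$ in an entire sub-segment of $E$, hence in infinitely many points. This contradicts the ``exactly one point'' conclusion of Lemma~\ref{l:coradial}, so $o$ must be strictly interior to $P$.

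Applying the geometric fact to $x = o$ and $x = c$ then gives that the clockwise cyclic order of $S$ around $o$, the clockwise cyclic order of $S$ around $c$, and the order in which $S$ appears along $\partial P$ all coincide; in particular, the first two coincide, which is what the lemma claims. The only non-routine ingredient is the interiority of $o$, which is precisely what Lemma~\ref{l:coradial} supplies; everything else is a standard property of convex polygons.
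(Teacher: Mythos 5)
Your proof is correct, and it follows the same underlying idea as the paper's: points in strictly convex position are seen in the same cyclic order from any two points interior to a convex body containing them. The difference is in the choice of reference body and the supporting results invoked. The paper works with $\mbox{conv}(S)$ directly: by Observation~\ref{l:prehull} the set $S$ is in strictly convex position and the center of the supporting polygon lies in $\mbox{conv}(S)$, while by Observation~\ref{l:sechull} the center of SED lies in $\mbox{conv}(\mathcal E(S))\subseteq\mbox{conv}(S)$; this makes the proof three lines and requires nothing beyond those two observations. You instead use the supporting polygon $P$ itself as the reference body, which forces you to establish that the center of SED is \emph{strictly} interior to $P$, and for that you call on Lemma~\ref{l:coradial} (which in turn rests on Theorem~\ref{t:halfdisk}). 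Your argument for strict interiority is sound --- if the center lay on an edge, a ray along that edge would meet $\partial P$ in a segment, contradicting the ``exactly one point'' conclusion of Lemma~\ref{l:coradial} --- and since Lemma~\ref{l:coradial} precedes this statement in the paper there is no circularity. So the proposal is valid; it just carries slightly more machinery than necessary, and the paper's version shows the same conclusion can be reached from the two observations alone.
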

\begin{proof}
By Lemma~\ref{l:prehull}, $S$ is in convex position, hence any two points in the convex hull of $S$ induce the same cyclic order on $S$. By Observation~\ref{l:sechull}, the center of SED lies in the convex hull of a subset of $S$, hence it lies in the convex hull of $S$. But due to Lemma~\ref{l:prehull}, the center of the supporting polygon is contained in the convex hull of $S$ as well, and the claim follows.
\end{proof}

\begin{lemma}\label{l:bound}
If $S$ is \PR, then every internal angle of the convex hull of $S$ is greater than $\pi(n-3)/n$.
\end{lemma}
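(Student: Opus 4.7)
The plan is to reduce the bound to a pair of inscribed-angle computations on the circumscribed circle $\Omega$ of the supporting polygon $P$. By Observation~\ref{l:prehull}, every vertex of $\mbox{conv}(S)$ is a point of $S$. Using the definition of \PR together with Lemma~\ref{l:cyclic}, each such vertex $p$ has exactly one neighbor in $\mbox{conv}(S)$ that is its companion on the same $2$-point edge of $P$, while the other neighbor lies on an adjacent $2$-point edge. After possibly reflecting, I label $P$'s vertices $v_0, \ldots, v_{n-1}$ counterclockwise on $\Omega$, take the $2$-point edges to be $v_{2i} v_{2i+1}$, and assume $p \in [v_0, v_1)$, that its counterclockwise successor $q \in (p, v_1]$ is its companion on $v_0 v_1$, and that its counterclockwise predecessor $p^-$ lies in $(v_{n-2}, v_{n-1}]$. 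Since $q - p$ is a positive multiple of $v_1 - v_0$, the interior angle at $p$ equals the angle $\angle p^- p v_1$.

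First I would bound this angle from below in the special case $p^- = v_{n-2}$. The chord $v_{n-2} v_1$ of $\Omega$ subtends an inscribed angle at $v_0$ equal to half of the major arc, which is $\pi(n-3)/n$. By the extended inscribed angle theorem, any point strictly inside $\Omega$ on the same side of the chord as $v_0$ sees the chord at a strictly larger angle. Since each point of $[v_0, v_1)$ lies either at $v_0$ (on $\Omega$) or strictly inside $\Omega$ on that side, this yields $\angle v_{n-2} p v_1 \ge \pi(n-3)/n$, with equality iff $p = v_0$.

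Next I would argue that as $p^-$ slides from $v_{n-2}$ toward $v_{n-1}$, the angle $\angle p^- p v_1$ strictly increases. A second inscribed-angle computation on the chord $v_{n-1} v_1$ gives $\angle v_{n-1} p v_1 \ge \pi(n-2)/n$, which is strictly larger than $\angle v_{n-2} p v_1$. Combined with the fact that the ray from $p$ through $p^-$ rotates monotonically as $p^-$ moves linearly along $v_{n-2} v_{n-1}$ (because $p$ is not on the line extending that edge, since no two distinct edges of a regular $n$-gon with $n \ge 6$ are collinear), this forces strict monotonicity of the unsigned angle on $[v_{n-2}, v_{n-1}]$. Because $p^-$ is strictly later than its companion on its $2$-point edge, $p^- \ne v_{n-2}$, so $\angle p^- p v_1 > \angle v_{n-2} p v_1 \ge \pi(n-3)/n$.

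The main obstacle is justifying strict monotonicity of the unsigned angle as $p^-$ varies: the unsigned angle between a rotating ray and a fixed ray is in general only monotonic as long as the signed difference does not cross $0$ or $\pm\pi$. The inscribed-angle bounds at the two endpoints $p^- \in \{v_{n-2}, v_{n-1}\}$ place the unsigned angle in $[\pi(n-3)/n, \pi(n-1)/n] \subset (0, \pi)$, safely away from these singular values, which is enough to turn the monotonic ray rotation into a monotonic increase of the unsigned angle.
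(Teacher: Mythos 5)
Your proof is correct in its main line and amounts to a rigorous elaboration of the paper's much terser argument: the paper takes four consecutive hull points $x,y,z,w$ with $x,y$ companions and $z,w$ companions and simply \emph{asserts} that the infimum of $\angle xyz$ is attained in the degenerate limit where $y$ sits at the far vertex of its edge and $z$ collapses onto $w$ at the far vertex of the next $2$-point edge, where the angle is the inscribed angle over $n-3$ edges, i.e.\ $\pi(n-3)/n$. Your two inscribed-angle computations plus the sliding argument supply exactly the justification the paper omits, so what your route buys is rigor at the cost of length. One step does need repair, however: you say the endpoint bounds ``place the unsigned angle in $[\pi(n-3)/n,\pi(n-1)/n]$'', but the inscribed-angle theorem only yields the two \emph{lower} bounds $\pi(n-3)/n$ and $\pi(n-2)/n$ — the upper bound $\pi(n-1)/n$ is never established — and in any case controlling the angle at the two endpoints of $p^-$'s range does not by itself prevent the signed angle from leaving $(0,\pi)$ somewhere in between (nor does it prove your intermediate claim that $\angle v_{n-1}pv_1>\angle v_{n-2}pv_1$ for every $p$, which does not follow from two lower bounds). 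The clean fix is to note that $v_2,\dots,v_{n-1}$, and hence the entire segment $v_{n-2}v_{n-1}$, lie strictly on one side of the line through $v_0$ and $v_1$, which is the line through $p$ and $v_1$; therefore the signed angle from the ray $pv_1$ to the ray $pp^-$ stays in $(0,\pi)$ throughout and coincides with the unsigned angle, and since from a point of the edge $v_0v_1$ the remaining vertices of the convex supporting polygon are seen in rotational order, the ray $pv_{n-2}$ lies in the wedge between $pv_1$ and $pp^-$. This gives $\angle p^-pv_1=\angle v_{n-2}pv_1+\angle v_{n-2}p\,p^->\angle v_{n-2}pv_1\geqslant \pi(n-3)/n$, the first inequality being strict because $p^-\neq v_{n-2}$. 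With that substitution your proof is complete.
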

\begin{proof}
Let $x,y,z,w$ be four consecutive vertices of the convex hull of $S$, such that $x$ is the companion of $y$, and $z$ is the companion of $w$. Let $ab$ be the edge of the supporting polygon containing $x$ and $y$, such that $x$ is closer to $a$. Similarly, $cd$ is the edge containing $z$ and $w$, and $z$ is closer to $c$. The infimum of $\angle xyz$ is reached (in the limit) when $y$ coincides with $b$, $w$ coincides with $d$, and $z$ tends to $w$. As the limit angle contains exactly $n-3$ edges of the supporting polygon, its size is $\pi(n-3)/n$.
\end{proof}

\begin{lemma}\label{l:quad}
Let $abcd$ be a convex quadrilateral with $ab\leqslant bc$ and $cd<da$. If $\angle adb\geqslant\angle bdc$, then $\angle abc+\angle cda\leqslant\pi$.
\end{lemma}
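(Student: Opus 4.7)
The plan is to translate the conclusion into a single algebraic inequality in the side lengths and central angles and then verify it by a short sign analysis.

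First I would set up coordinates with $d$ at the origin, $b = (L,0)$ where $L = |bd|$, $a = (p\cos\alpha, p\sin\alpha)$ in the upper half plane, and $c = (q\cos\beta, -q\sin\beta)$ in the lower, with $p = |ad|$, $q = |cd|$, $\alpha = \angle adb$, $\beta = \angle bdc$. Convexity of $abcd$ puts the ray $db$ inside $\angle adc$, so $\alpha + \beta = \angle cda < \pi$. Since the four interior angles of $abcd$ sum to $2\pi$, the target $\angle abc + \angle cda \leq \pi$ is the same as $\angle dab + \angle bcd \geq \pi$, and by the inscribed-angle theorem this is the same as saying that $c$ lies on or inside the circumcircle of $\triangle abd$. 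A direct computation of that circle (its centre is on the line $x = L/2$, at height $(p - L\cos\alpha)/(2\sin\alpha)$) converts the geometric condition into the clean algebraic inequality
\[
p\sin\beta + q\sin\alpha \;\leq\; L\sin(\alpha+\beta), \qquad (\star)
\]
which will be the real goal.

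The key trick is the substitution $X = p - L\cos\alpha$ and $Y = q - L\cos\beta$. Expanding $L\sin(\alpha+\beta) = L\sin\alpha\cos\beta + L\cos\alpha\sin\beta$ shows that $(\star)$ is exactly equivalent to
\[
X\sin\beta + Y\sin\alpha \;\leq\; 0.
\]
Applying the law of cosines to $|ab|^2$ and $|bc|^2$, together with the identity $\sin^2\alpha - \sin^2\beta = \sin(\alpha+\beta)\sin(\alpha-\beta)$, the hypothesis $ab \leq bc$ rewrites as $X^2 - Y^2 \leq L^2 \sin(\alpha+\beta)\sin(\beta - \alpha)$; since $\alpha \geq \beta$ the right-hand side is non-positive, so $|X| \leq |Y|$. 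Similarly $cd < da$ rewrites as $X - Y > L(\cos\beta - \cos\alpha) \geq 0$, whence $X > Y$.

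Combining $|X| \leq |Y|$ with $X > Y$ forces $Y < 0$ and $Y \leq X \leq -Y$, after which a two-line case split finishes the proof: if $X \leq 0$ then both $X\sin\beta$ and $Y\sin\alpha$ are non-positive and the reformulated goal is immediate; if $X > 0$ I would first invoke the brief sub-fact that $\alpha \geq \beta$ together with $\alpha + \beta < \pi$ forces $\sin\alpha \geq \sin\beta$ (because $\beta \leq \pi/2$ automatically, and one compares $\alpha$ with $\pi - \alpha$ when $\alpha > \pi/2$), after which $(-Y)\sin\alpha \geq X\sin\alpha \geq X\sin\beta$ closes the inequality. The main obstacle, and the only creative step, is spotting the right reformulation: recognizing the conclusion as ``$c$ lies inside the circumcircle of $\triangle abd$'', and then choosing the shift $X = p - L\cos\alpha$, $Y = q - L\cos\beta$ so that, via the identity $\sin^2\alpha - \sin^2\beta = \sin(\alpha+\beta)\sin(\alpha-\beta)$, all three hypotheses speak the same language as the goal; once that alignment is found, everything else is routine sign-chasing.
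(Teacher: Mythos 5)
Your proof is correct --- I checked the coordinate computation and the sign analysis, and every step goes through: the circumcircle condition does reduce to $p\sin\beta+q\sin\alpha\leqslant L\sin(\alpha+\beta)$, the identities $|ab|^2=X^2+L^2\sin^2\alpha$ and $|bc|^2=Y^2+L^2\sin^2\beta$ make the hypothesis $ab\leqslant bc$ collapse to $|X|\leqslant|Y|$, and the case split on the sign of $X$ closes the reformulated inequality. After the first step, however, your route is genuinely different from the paper's. Both proofs begin by recasting the conclusion as a concyclicity criterion (the paper shows that $d$ is not interior to the circumcircle of $abc$; you show the equivalent statement that $c$ is not exterior to the circumcircle of $abd$), but the paper then argues synthetically and by contradiction: assuming $d$ lies inside the circumcircle $C$ of $abc$, it introduces the point $b'$ where the perpendicular bisector of $ac$ meets $C$, the circumcircle $A$ of $cb'd$ and its mirror image $B$ across that bisector, and uses $cd<da$ and $ab\leqslant bc$ to force $\angle adb<\angle bdc$, contradicting the hypothesis. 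You instead translate everything into a single trigonometric inequality and verify it directly via the substitution $X=p-L\cos\alpha$, $Y=q-L\cos\beta$. Your version is more mechanical and easier to audit line by line, since each hypothesis becomes an explicit inequality in $X$, $Y$, $\alpha$, $\beta$, at the cost of a coordinate setup; the paper's version is coordinate-free but rests on several positional claims about the auxiliary circles (e.g., that the center of $A$ lies between the center of $C$ and the midpoint of $b'c$, and that $d$ lies on the arc of $A$ external to $B$) that take more work to justify in full rigor. Both arguments are valid.
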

\begin{proof}
Let $C$ be the circumcircle of $abc$. We will prove that $d$ does not lie in the interior of $C$. This will imply that $\angle abc+\angle cda\leqslant \pi$, since $b$ and $d$ lie on opposite sides of $ac$ (because $abcd$ is convex).

Suppose by contradiction that $d$ lies in the interior of $C$. Let $\ell$ be the axis of $ac$, and let $b'$ be the intersection point between $\ell$ and the perimeter of $C$ such that $bb'$ does not intersect $ac$. Let $A$ be the circumcircle of $cb'd$. Since $d$ lies inside $C$, the center of $A$ lies between the center of $C$ and the midpoint of $b'c$. Therefore the center of $A$ lies on the same side of $\ell$ as $c$. If $B$ is the symmetric of $A$ with respect to $\ell$, the center of $B$ lies on the same side of $\ell$ as $a$. Since $cd<da$, $d$ lies on the arc of $A$ that is external to $B$. Because $A$ and $B$ have the same radius, and $ab'=b'c$, it follows that $\angle adb'<\angle b'dc$. But $ab\leqslant bc$, hence $\angle adb\leqslant\angle adb'$ and $\angle b'dc\leqslant \angle bdc$, implying that $\angle adb<\angle bdc$. This contradicts the hypothesis that $\angle adb\geqslant\angle bdc$.
\end{proof}

\begin{lemma}\label{l:equi}
If $S$ is both \PR and \EQ, then it is \RE.
\end{lemma}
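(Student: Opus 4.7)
The plan is to show that the distances $r_j := |p_j - c_S|$ from the points of $S$ to the center $c_S$ of SED$(S)$ are all equal; combined with \EQ this forces $S$ to be inscribed as a regular $n$-gon centered at $c_S$, hence \RE.

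I would begin by labeling the points $p_0,\ldots,p_{n-1}$ in cyclic order around $c_S$. By \EQ, all consecutive angular distances at $c_S$ equal $\omega:=2\pi/n$, and by Lemma~\ref{l:cyclic}, this cyclic order matches the order around the supporting polygon's center $c_P$. The alternating full/empty edge structure of \PR makes the companion pairs $(p_{2i},p_{2i+1})$, and $n\geqslant 6$ is even. The core step is to take four consecutive points $a=p_{2i}$, $b=p_{2i+1}$, $c=p_{2i+2}$, $d=p_{2i+3}$: by Observation~\ref{l:prehull}, $abcd$ is a strictly convex quadrilateral, and by Observation~\ref{l:dist}, $|ab|\leqslant|bc|$ and $|cd|<|da|$, the latter strictly because $a$ and $d$ are three cyclic positions apart and hence, for $n\geqslant 6$, not companions. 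Applying Lemma~\ref{l:quad} to $abcd$ gives, via contrapositive, that $\angle abc+\angle cda>\pi$ implies $\angle adb<\angle bdc$; applying it to the reversed labeling $dcba$ gives symmetrically that $\angle bcd+\angle dab>\pi$ implies $\angle dac<\angle cab$.

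To convert these conditional inequalities into equality of radii I would invoke Lemma~\ref{l:bound}: the convex-hull neighbors of $b$ are $a$ and $c$, and those of $c$ are $b$ and $d$, so the quadrilateral angles $\angle abc$ and $\angle bcd$ coincide with convex-hull internal angles of $S$ and each exceeds $\pi(n-3)/n$, which is at least $\pi/2$ for $n\geqslant 6$. Using the \EQ placement of $a,b,c,d$ at angles $0,\omega,2\omega,3\omega$ from $c_S$ together with the law of cosines in the triangles $c_Sp_jp_{j+1}$, I would express the chord lengths and subtended angles algebraically in terms of $r_a,r_b,r_c,r_d$ and $\omega$, and show that any deviation from $r_a=r_b=r_c=r_d$ would push $\angle abc+\angle cda$ or $\angle bcd+\angle dab$ above $\pi$, triggering Lemma~\ref{l:quad}'s conclusion and thereby contradicting the Lemma~\ref{l:bound} bound on the complementary pair of internal angles. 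This yields $r_a=r_b=r_c=r_d$; iterating over all cyclic shifts of the four-tuple forces every $r_j$ to coincide, so $S$ is inscribed in a single circle centered at $c_S$ and, by \EQ, is regular.

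The principal obstacle I anticipate lies in the penultimate step: translating the Lemma~\ref{l:quad} implications and Observation~\ref{l:dist} distance inequalities into an algebraic system on the four radii whose only solution is $r_a=r_b=r_c=r_d$. I expect this to depend on a careful trigonometric analysis of how $\angle adb-\angle bdc$ and the symmetric quantity at $a$ depend on $r_a,r_b,r_c,r_d$ and $\omega$, coupled with a monotonicity argument in the radii showing that the equiradial configuration is the unique critical point of the relevant angle-sum functional.
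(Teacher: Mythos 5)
Your proposal stalls exactly where you say it does, and that missing step is not a routine computation --- it is the entire content of the lemma. The set-up (labelling via Lemma~\ref{l:cyclic}, the companion structure, the side inequalities from Observation~\ref{l:dist}, and the bound of Lemma~\ref{l:bound}) is fine, but applying Lemma~\ref{l:quad} to the quadrilateral of \emph{four consecutive points of $S$} puts you in a weak position: the lemma's angle hypothesis $\angle adb\geqslant\angle bdc$ becomes a condition at a point of $S$ about which the \EQ hypothesis says nothing, so you only obtain conditional implications (``if this opposite-angle sum exceeds $\pi$ then that diagonal-angle inequality holds''), and you are left having to close the loop with an unproved ``algebraic system on the four radii'' and an unproved monotonicity claim. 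Note also that $\angle cda$ and $\angle dab$ are \emph{not} internal angles of the convex hull of $S$ (those are $\angle cde$ and $\angle eab$ for the adjacent points $e$), so Lemma~\ref{l:bound} does not bound them; only $\angle abc$ and $\angle bcd$ are controlled, and since the four angles of the quadrilateral sum to $2\pi$, knowing two of them exceed $\pi(n-3)/n$ does not by itself force either opposite pair above $\pi$.

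The paper's proof avoids all of this by taking the fourth vertex of the quadrilateral to be the \emph{center $d$ of SED}, not a fourth point of $S$. Then \EQ gives $\angle adb=\angle bdc=2\pi/n$ for free, so the angle hypothesis of Lemma~\ref{l:quad} is automatic; choosing $a\in S$ on SEC (such a point exists), $b$ its companion and $c$ the other neighbour of $b$, one has $ab\leqslant bc$ by Observation~\ref{l:dist} and, if $c$ is not on SEC, $cd<da$ trivially. Lemma~\ref{l:quad} then yields $\angle abc\leqslant\pi-\angle cda=\pi(n-4)/n$, contradicting Lemma~\ref{l:bound} in one line; in the remaining case ($c$ on SEC) one gets $ab=bc$, hence by Observation~\ref{l:dist} two adjacent edges of the supporting polygon meeting at $b$, hence $b$ on SEC and the whole supporting polygon inscribed in SEC, so $S$ is \RE. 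To rescue your four-radii route you would have to actually carry out the trigonometric analysis you defer, and there is no indication it closes; as written, the proposal does not constitute a proof.
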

\begin{proof}
Let $a\in S$ be a point on SEC, and let $b\in S$ be its companion, which, by Lemma~\ref{l:cyclic}, has angular distance $2\pi/n$ from $a$. Let $c\in S\setminus\{a\}$ be the other point of $S$ at angular distance $2\pi/n$ from $b$. If $c$ lies on SEC as well, then $ab=bc$ and, by Observation~\ref{l:dist}, $ab$ and $bc$ are adjacent edges of the supporting polygon. Because the supporting polygon is a regular $n$-gon, $\angle abc=\pi(n-2)/n$, and hence $b$ lies on SEC, too. It follows that the supporting polygon is inscribed in SEC, so all points of $S$ lie on SEC, and the configuration is \RE.

Suppose now that $c$ does not lie on SEC. If $d$ is the center of SED, then $cd<da$, and $\angle adb=\angle bdc=2\pi/n$. Also, by Observation~\ref{l:dist}, since $a$ and $b$ are companions, $ab\leqslant bc$. Therefore Lemma~\ref{l:quad} applies to $abcd$, and we get $\angle abc+\angle cda\leqslant\pi$. But $\angle cda = 4\pi/n$, implying that $\angle abc \leqslant \pi(n-4)/n < \pi(n-3)/n$. This contradicts Lemma~\ref{l:bound}.
\end{proof}

\begin{theorem}\label{t:equi}
Let $\mathcal R$ be frozen at time $t_0$, let $\mathcal R(t_0)$ be an \EQ configuration with no points in the interior of SED/3, and let the robots execute procedure {\sc Move All to SEC}. Then, the robots eventually freeze in a \RE configuration.
\end{theorem}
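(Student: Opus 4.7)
The plan is to unfold what {\sc Move All to SEC} does from an \EQ starting configuration and then patch together Lemma~\ref{l:equi}, Lemma~\ref{l:cautious:0}, and Corollary~\ref{c:cautious4}.

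First I would verify that the procedure reduces to a single cautious move of all $n$ robots radially toward SEC. Since $\mathcal R(t_0)$ is \EQ it has a unique analogy class, namely $\mathcal R(t_0)$ itself. We may assume $\mathcal R(t_0)$ is not \RE, for otherwise the ``Do Nothing'' branch at the top of \UCF halts execution and there is nothing to prove. The preliminary sub-step of {\sc Move All to SEC} that pushes robots out of the interior of SED/3 is vacuous by hypothesis. The first three bullets in the selection of $C$ do not apply (\EQ is neither \BI nor \DBI), so the ``otherwise'' bullet applies and picks $C=\mathcal R$, the unique analogy class not entirely on SEC. Hence every robot performs a radial cautious move toward its footprint on SEC; the robots that already sit on SEC are at their endpoints and therefore are never ``farthest from the endpoint'' of the {\sc Cautious Move} protocol, so they remain still.

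Second I would establish the key invariant that $\mathcal R(t)$ stays \EQ throughout the cautious move. Radial motion preserves each robot's angle at the center of SED, and SEC itself is stable because the robots on SEC do not move; hence the angle sequence is unchanged and the set remains \EQ at every intermediate moment. By Lemma~\ref{l:equi}, any \PR configuration that is ever formed during the move is also \EQ and is therefore \RE. Since the critical points of {\sc Move All to SEC} are chosen (as detailed in Section~\ref{sec:analysis}) to include every formable \PR configuration, and each such configuration has exactly $n$ points, Corollary~\ref{c:cautious4} applies: the swarm freezes the instant any such configuration is formed. The frozen configuration is \RE, and the ``Do Nothing'' test at the top of \UCF keeps it frozen thereafter.

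Third, in the alternative scenario in which no \PR configuration ever forms during the move, Lemma~\ref{l:cautious:0} guarantees that within finite time every robot reaches the endpoint of its path and the swarm freezes. At that moment every robot lies on SEC while the angular spacing is still equidistant, so the frozen end-state is \RE. In either branch the theorem is proved.

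The only delicate point is justifying that the finitely many critical points used by this particular cautious move do intercept every formable \PR configuration, since a priori the family of \RE configurations attainable at varying common radii is infinite. This is precisely the content of the construction in Section~\ref{sec:analysis}, which exploits the distance-from-endpoint ordering of the {\sc Cautious Move} protocol: the swarm can only reach a common radius at finitely many ``synchronisation'' moments, namely those corresponding to the distinct initial radii of the robots, so a finite critical point set suffices, as required by the hypothesis of Corollary~\ref{c:cautious4}.
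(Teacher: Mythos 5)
Your overall route coincides with the paper's: radial motion preserves the \EQ property, Lemma~\ref{l:equi} turns any formable \PR configuration into a \RE one, and Lemma~\ref{l:cautious:0} gives termination; that part is fine. The one place your argument goes astray is the closing paragraph. The worry you raise (a priori infinitely many \RE configurations at varying common radii) is legitimate, but your resolution --- that the swarm can only attain a common radius at finitely many ``synchronisation'' moments corresponding to the distinct initial radii of the robots --- is not a sound argument: nothing in the {\sc Cautious Move} protocol prevents the moving robots from passing through many intermediate common radii as they leapfrog one another. The correct and much simpler observation, which is how the paper disposes of the issue, is that by Observation~\ref{l:sechull} an \EQ configuration always has robots on SEC, and these never move during the procedure; since SED is preserved and a \RE configuration must have all $n$ points on its SEC, the only radius at which a \PR (hence, by Lemma~\ref{l:equi}, \RE) configuration can arise is that of SEC itself. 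Consequently the unique formable \PR configuration is the final one, whose points are exactly the endpoints of the paths --- which are critical points by construction --- so no additional critical points are needed and the swarm freezes upon reaching SEC. You should replace the ``synchronisation moments'' argument with this observation; as written, that step would not survive scrutiny even though the claim it supports is true.
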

\begin{proof}
The procedure makes the robots move radially toward SEC, hence the configuration remains \EQ. The robots execute a cautious move with critical points only on SEC, because no \PR configuration can be formed until all the robots reach SEC, due to Lemma~\ref{l:equi}. By Lemma~\ref{l:cautious:0}, the robots eventually reach SEC, forming a \RE configuration, and they freeze as soon as the reach it.
\end{proof}

\subsubsection{Cautious Moves for \BI Configurations}\label{sec:analysis2}

\begin{lemma}\label{l:3slide1}
If some points of $S$ are allowed to ``slide'' radially in such a way that SED never changes and there are at least three consecutive points $a,b,c\in S$ (in this order) that do not slide, with $ab=bc$, then there is at most one configuration of the points that could be \PR.
\end{lemma}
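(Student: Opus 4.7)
The plan is to assume the sliding points settle into a \PR configuration with supporting polygon $P$, and to show that $P$, and hence the whole configuration, is forced by the three fixed points $a,b,c$ alone.

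Since the SED is preserved throughout the sliding, its centre $o$ is unchanged. In any \PR configuration, Lemma~\ref{l:cyclic} says that the cyclic order of $S$ around the centre of $P$ coincides with the cyclic order around $o$. Hence $a,b,c$ remain consecutive around the centre of $P$. Let $y$ denote the companion of $b$ in $P$. Companions are the two points of $S$ on a common edge of $P$, so they are cyclically adjacent in $S$; therefore $y\in\{a,c\}$.

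Next I would apply Observation~\ref{l:dist} with $x=b$: for every $z\in S\setminus\{b\}$ we have $|by|\leqslant|bz|$, and equality holds only if $by$ and $bz$ are adjacent edges of $P$. Choose $z$ to be whichever of $\{a,c\}$ is not $y$. The hypothesis $|ab|=|bc|$ gives $|by|=|bz|$, so both $ab$ and $bc$ are edges of $P$ sharing the vertex $b$. Thus $a,b,c$ are three consecutive vertices of $P$, which pins down the regular $n$-gon $P$ uniquely (three consecutive vertices determine the circumscribed circle, the centre, and — together with $n$, which is fixed — the positions of all remaining vertices).

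With $P$ fixed, Lemma~\ref{l:coradial} completes the argument: each sliding point is constrained to a fixed ray from $o$, and that ray meets the boundary of $P$ in exactly one point in any \PR configuration; hence each sliding point's location is forced, and the entire configuration is determined. I expect the most delicate step to be the use of Observation~\ref{l:dist} — one must keep careful track of which of $a,c$ is the companion $y$ and verify that in both cases the equality conclusion upgrades \emph{both} $ab$ and $bc$ to genuine edges of $P$, not merely chords of the correct length. The symmetric statement of the observation makes this work out cleanly provided one is explicit about the companion relation.
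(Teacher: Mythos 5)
Your proposal is correct and follows essentially the same route as the paper's proof: use Lemma~\ref{l:cyclic} to restrict $b$'s companion to $\{a,c\}$, invoke the equality case of Observation~\ref{l:dist} with $ab=bc$ to force $ab$ and $bc$ to be adjacent edges of the supporting polygon (hence fixing the polygon), and then apply Lemma~\ref{l:coradial} to pin down each sliding point's position. Your write-up is merely a more explicit version of the paper's three-line argument.
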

\begin{proof}
If some configuration is \PR, then by Lemma~\ref{l:cyclic} either $a$ and $b$ are companions, or $b$ and $c$ are. Since $ab=bc$, by Observation~\ref{l:dist} $ab$ and $bc$ are adjacent edges of the supporting polygon, and therefore the whole supporting polygon is fixed, no matter how the points slide. Then, there is only one possible position in which each sliding point may lie on the supporting polygon, due to Lemma~\ref{l:coradial}. Hence, if a \PR configuration is formable, it is unique.
\end{proof}

\begin{observation}\label{o:3sides}
For every $n\geqslant 3$, if three straight lines are given in the plane, there is at most one regular $n$-gon with three edges lying on the three lines.
\end{observation}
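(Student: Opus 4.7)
\emph{Proof plan.} I would prove the observation by a rigidity argument: a regular $n$-gon is tightly determined by the three lines its edges lie on. Let $P$ and $P'$ be two regular $n$-gons, each with three edges on $\ell_1,\ell_2,\ell_3$, and let $c, c', a, a'$ denote their centers and apothems. The goal is to show $P=P'$, and I would proceed in three steps.

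First, I would argue that the rotational orientation of each polygon is forced by the directions of the three lines. The $n$ edge-supporting lines of a regular $n$-gon have directions that differ pairwise by integer multiples of $\pi/n$, so the directions of $\ell_1,\ell_2,\ell_3$ must appear among these $n$ edge directions, which pins down the orientation of the polygon up to its cyclic symmetry. In particular, both $P$ and $P'$ share the same set of edge directions, and the combinatorial assignment ``which edge of the $n$-gon lies on which line'' --- encoded by a triple of indices $(i_1,i_2,i_3)$ around the polygon --- is tightly restricted.

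Second, I would use the fact that each $\ell_i$ is tangent to the polygon's inscribed circle (the circle concentric with the $n$-gon with radius equal to the apothem). Thus $c$ and $c'$ must each be equidistant from $\ell_1,\ell_2,\ell_3$, and so each lies in the finite set of centers of circles tangent to all three lines: the incenter and, when the lines form a proper triangle, the three excenters, with degenerate concurrent-lines and parallel-lines cases treated separately.

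Third, I would combine the two pieces of information to single out a unique candidate center. The direction from each candidate center to the foot of perpendicular on $\ell_i$ must coincide with one of the $n$ inward-normal directions of the polygon, and the three foot directions must be cyclically separated around the center by multiples of $2\pi/n$ matching the combinatorial pattern $(i_1,i_2,i_3)$ forced in the first step. A case check then verifies that only one of the equidistant centers is consistent with this orientation data, so $c=c'$ and $a=a'$. Because a regular $n$-gon is determined by any one of its edges, the shared edge on $\ell_1$ (the segment of $\ell_1$ of length $2a\tan(\pi/n)$ centered at the foot of perpendicular from the common center) forces $P=P'$.

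I expect the third step to be the main obstacle: it is where the discrete combinatorial data from the first step (which indices of the $n$-gon land on which lines) must interact with the continuous geometric data from the second step (which side of each line the center sits on) to eliminate the excircle alternatives and isolate a unique candidate center. The first two steps are essentially bookkeeping about the cyclic symmetry group of the regular $n$-gon.
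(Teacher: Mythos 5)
The paper offers no proof of this observation---it is simply asserted---so there is nothing to compare your plan against except the statement itself. Evaluated on its own terms, your plan breaks exactly where you predict it will, and the break cannot be repaired: the third step fails because the observation, as literally stated, is false for every even $n\geqslant 4$, which is precisely the parity the paper needs (supporting polygons of \PR configurations have an even number of vertices).

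Concretely, take $n=6$ and let the three lines be the side lines of the equilateral triangle with vertices $A=(0,\sqrt3)$, $B=(-1,0)$, $C=(1,0)$. The regular hexagon inscribed in this triangle (centred at the incentre $(0,\sqrt3/3)$, with three alternate edges on the three lines) qualifies; but so does the regular hexagon with vertices $(\pm 1,0)$, $(\pm 2,-\sqrt3)$, $(\pm 1,-2\sqrt3)$, centred at the excentre $(0,-\sqrt3)$, whose three consecutive edges from $(-2,-\sqrt3)$ to $(-1,0)$ to $(1,0)$ to $(2,-\sqrt3)$ lie on lines $AB$, $BC$, $AC$ respectively. An even simpler failure occurs with two parallel lines: for $\ell_1\colon x=0$, $\ell_2\colon x=2$, $\ell_3\colon y=0$ and $n=4$, both $[0,2]\times[0,2]$ and $[0,2]\times[-2,0]$ are regular $4$-gons with three edges on the three lines.

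The reason your ``case check'' has nothing to work with for even $n$ is structural. The consistency condition you intend to impose is that the three directions from a candidate centre to the feet of the perpendiculars on $\ell_1,\ell_2,\ell_3$ be pairwise congruent modulo $2\pi/n$. Passing from one tritangent circle to another (incircle to an excircle) replaces some of these foot directions by their antipodes, i.e.\ shifts them by $\pi$; and when $n$ is even, $\pi$ is itself a multiple of $2\pi/n$, so the congruence condition is completely insensitive to which of the four tritangent circles you pick. Hence if one of them supports a regular $n$-gon with three edges on the three lines, all four do. (For odd $n$ the shift by $\pi$ changes the residue class modulo $2\pi/n$, so your elimination step is at least not vacuous; for $n=3$ the statement is trivially true, since a triangle with edges on three given lines must have its vertices at their pairwise intersections.) The statement can only be rescued by adding hypotheses that are absent from its wording---for instance, prescribing for each line the side on which the polygon must lie, or prescribing points that must lie on specific edges---which is in effect the extra information available at the places where the paper invokes this observation. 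As written, neither your argument nor any other can establish it.
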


\begin{lemma}\label{l:3slide2}
If some points of $S$ are allowed to ``slide'' radially in such a way that SED never changes, and there are at least three consecutive points $a,b,c\in S$ (in this order) that do not slide, plus at least another non-sliding point $d$, not adjacent to $a$ nor $c$, then there is at most one configuration of the points that could be \PR.
\end{lemma}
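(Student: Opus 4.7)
The plan is to reduce the statement to Observation~\ref{o:3sides}: I will exhibit three lines that must each contain a distinct edge of the supporting polygon, forcing the polygon (and hence, by Lemma~\ref{l:coradial}, the positions of all sliding points along their rays from the center of SED) to be unique.

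First, if $ab=bc$ the statement reduces at once to Lemma~\ref{l:3slide1}, so I may assume without loss of generality that $ab<bc$. Consider any \PR configuration with supporting polygon $P$. By Lemma~\ref{l:cyclic}, the cyclic orders of $S$ around the centers of SED and of $P$ coincide, and the structure of a \PR configuration forces exactly one of $\{a,b\}$, $\{b,c\}$ to be a pair of companions on a full edge of $P$. If $\{b,c\}$ were companions, Observation~\ref{l:dist} would yield $bc\leqslant bz$ for every $z\in S\setminus\{b\}$, in particular $bc\leqslant ba=ab$, contradicting $ab<bc$. Hence $\{a,b\}$ are companions, and the line $\ell_0$ through $a$ and $b$ contains a full edge $e_0$ of $P$.

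Next, the full edge $e_2$ immediately following $e_0$ in the cyclic order of $P$ inherited from $S$ must contain $c$, and its direction is that of $e_0$ rotated by $4\pi/n$ in a sense fixed by the polygon's orientation. So its line $\ell_2$ is the unique line through $c$ with this prescribed direction. Since $d$ is non-sliding and not adjacent to $a$ or $c$ in $S$, its cyclic position in $S$ is some $j\in\{4,\dots,n-2\}$, so $d$ lies on the full edge $e_{2k}$ with $k=\lfloor j/2\rfloor\in\{2,\dots,n/2-1\}$, whose line $\ell_{2k}$ is the unique line through $d$ with direction rotated by $4k\pi/n$ from that of $\ell_0$. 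Now the three lines $\ell_0,\ell_2,\ell_{2k}$ are pairwise distinct: their pairwise direction differences modulo $\pi$ are $4\pi/n$, $4k\pi/n$, and $4(k-1)\pi/n$, and for $n\geqslant 6$ even and $k$ in the stated range none of these vanishes mod $\pi$ except when $n\equiv 0\pmod 4$ and $k\in\{n/4,n/4+1\}$, in which case a single pair of lines becomes parallel. Two such parallel lines would coincide only if $d$ lay on a specific line through $c$ or through $a,b$; but then the two corresponding full edges of $P$ would be forced to be collinear, which is impossible in a regular $n$-gon. So either the three lines remain pairwise distinct, in which case Observation~\ref{o:3sides} pins $P$ down uniquely, or no \PR configuration exists at all; together with Lemma~\ref{l:coradial} this concludes the proof.

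The main obstacle is the careful handling of the degenerate sub-cases above, and in particular the borderline situations in which a point of $\{a,b,c,d\}$ lands at a vertex of $P$ so that the companion structure near $b$ is technically ambiguous; in every such case the argument should either still isolate a unique candidate polygon via Observation~\ref{o:3sides} or show outright that no \PR configuration is formable.
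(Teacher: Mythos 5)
Your proposal is correct and follows essentially the same route as the paper's proof: reduce the case $ab=bc$ to Lemma~\ref{l:3slide1}, use Observation~\ref{l:dist} to force $a$ and $b$ to be companions when $ab<bc$, deduce that the lines of the edges through $a,b$, through $c$, and through $d$ are all fixed and support distinct edges (since $d$ is not adjacent to $a$ or $c$), and conclude via Observation~\ref{o:3sides} and Lemma~\ref{l:coradial}. Your extra case analysis of parallel or coincident lines is harmless but unnecessary, since Observation~\ref{o:3sides} already asserts ``at most one'' polygon for any three given lines, and two distinct edges of a regular $n$-gon can never be collinear.
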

\begin{proof}
If some configuration is \PR, then by Lemma~\ref{l:cyclic} either $a$ and $b$ are companions, or $b$ and $c$ are. If $ab=bc$, Lemma~\ref{l:3slide1} applies. Otherwise, without loss of generality, assume that $ab<bc$, and therefore $a$ and $b$ are companions, due to Observation~\ref{l:dist}. Then all the companionships are fixed, again by Lemma~\ref{l:cyclic}. The slope of the edge of the supporting polygon through $a$ and $b$ is fixed, hence all the slopes of the other edges are fixed, because the supporting polygon is regular. In particular, the slopes of the edges through $c$ and $d$ are fixed, and these are two distinct edges because $c$ and $d$ are not adjacent. Therefore, by Observation~\ref{o:3sides}, the whole supporting polygon is fixed. It follows that there is at most one position of the sliding points that could be \PR, due to Lemma~\ref{l:coradial}.
\end{proof}

For the rest of this section, we will assume that $S$ is not a \CO set. Recall that, in a \BI configuration, two points at angular distance $\mu_0$ are called neighbors, and two points at angular distance $\mu_1$ are called quasi-neighbors.

\begin{lemma}\label{l:companion}
If $S$ is both \BI and \PR, then two points are neighbors if and only if they are companions.
\end{lemma}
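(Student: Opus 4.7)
The plan is to establish, as a pivotal intermediate fact, that the center $c$ of the supporting polygon $P$ coincides with the center $o$ of SED$(S)$, and then read off both directions of the equivalence from Observation~\ref{l:dist}. For the intermediate fact, I would exploit that a \BI set is invariant under the rotation $\rho$ of angle $\mu_0+\mu_1=4\pi/n$ about $o$: the clockwise angle sequence has period $2$, so $S$ admits an $(n/2)$-fold rotational symmetry around $o$. Then $\rho(P)$ is a regular $n$-gon, and because $\rho(S)=S$, the ``exactly two points on one edge, none in the relative interior of the adjacent one'' condition is transferred from $P$ to $\rho(P)$; hence $\rho(P)$ is itself a supporting polygon for $S$. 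Invoking Lemma~\ref{l:uniquepr}, which applies since $n\geqslant 6>4$, forces $\rho(P)=P$, so $c$ is fixed by $\rho$ and therefore $c=o$.

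With the centers pinned, the ``companions $\Rightarrow$ neighbors'' direction is essentially immediate. Given $x\in S$ with companion $y$, Observation~\ref{l:dist} yields $\angle xcy\leqslant\angle xcz$ for every $z\in S\setminus\{x\}$, and since $c=o$ this angle coincides with the angular distance from $x$ used in the definition of \BI. As $S$ is \BI, the minimum of such angular distances over $z\neq x$ equals $\mu_0$, so $x$ and $y$ are neighbors.

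For the reverse direction I would argue by a matching/counting comparison. By the \PR definition, each filled edge of $P$ carries exactly two points of $S$, so companionship is a perfect matching of $S$ into $n/2$ pairs. On the \BI side, because $\mu_0<\mu_1$, any arc spanning two or more consecutive steps of the alternating sequence $(\mu_0,\mu_1,\mu_0,\mu_1,\ldots)$ has length at least $\mu_0+\mu_1>\mu_0$, so two points at angular distance $\mu_0$ must be cyclically adjacent; and from the strict alternation each point has exactly one cyclically adjacent point at distance $\mu_0$. Hence the neighbor relation is also a perfect matching into $n/2$ pairs. Since every companion pair has already been shown to be a neighbor pair, and both matchings have size $n/2$, they must coincide.

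The main obstacle is Step~1, namely the identification $c=o$. Its plausibility rests entirely on the uniqueness of the supporting polygon (Lemma~\ref{l:uniquepr}); if that result is not yet available at this point, one would have to prove uniqueness inline for our situation, which can be done by feeding three consecutive points of $S$, together with a fourth non-adjacent rotationally related point (available because $n/2\geqslant 3$), into Lemma~\ref{l:3slide2} so that the polygon is pinned down directly. Aside from this, the remaining work is bookkeeping in Observation~\ref{l:dist} and elementary comparisons of sums of consecutive entries of the angle sequence.
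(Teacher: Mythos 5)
Your Step 1 contains a genuine gap that the rest of the argument depends on. You claim that a \BI set is invariant under the rotation $\rho$ by $\mu_0+\mu_1=4\pi/n$ about the center $o$ of SED. But the paper's formal definition of \BI (Section~\ref{defi:geo}, under ``Periods'') only requires the clockwise angle sequence to have period $2$; it constrains the \emph{angular} positions of the points around $o$ and says nothing about their radial distances. So only the footprint $\mathcal F(S)$ has the $(n/2)$-fold rotational symmetry, not $S$ itself, and $\rho(S)=S$ fails in general. This is not a corner case: the very configurations for which Lemma~\ref{l:companion} is needed (see Lemma~\ref{l:bsec1}, Theorem~\ref{t:bsec3}, and Theorem~\ref{t:bsec4}) are \BI sets in which one strong analogy class sits on SEC while the other is partway through a radial move, and these are typically not rotationally symmetric as planar point sets. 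Consequently $\rho(P)$ need not be a supporting polygon, the appeal to uniqueness (Lemma~\ref{l:uniquepr}) does not apply, and the identification $c=o$ is unsupported. Since your ``companions $\Rightarrow$ neighbors'' direction reads Observation~\ref{l:dist} at the center of the supporting polygon and then reinterprets the angle at the center of SED, and your reverse direction is a counting argument bootstrapped from the forward one, the whole proof collapses without $c=o$.

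For contrast, the paper's proof avoids any claim about the two centers coinciding: it takes a point $a$ on SEC, its quasi-neighbor $b$, and $b$'s neighbor $c$ on the far side, assumes for contradiction that $a$ and $b$ are companions, uses Observation~\ref{l:dist} to get $ab\leqslant bc$ and hence that $c$ is strictly inside SEC, and then applies the quadrilateral inequality of Lemma~\ref{l:quad} to $abcd$ (with $d$ the center of SED) to force $\angle abc\leqslant \pi(n-4)/n$, contradicting the convex-hull angle bound of Lemma~\ref{l:bound}. If you want to salvage your outline, you would need an independent proof that the supporting polygon of a \BI \PR set is concentric with SED (something in the spirit of Lemma~\ref{l:samecenter}, which itself requires a \RE concordance class of even size on SEC), and it is not clear this can be obtained without essentially proving the present lemma first. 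Your closing observation that the neighbor and companion relations are both perfect matchings, so one inclusion implies equality, is correct and is a nice way to package the second half once the first half is sound.
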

\begin{proof}
Let $a\in S$ be a point on SEC, let $b\in S$ be the point at angular distance $\mu_1$ from $a$, and let $c\in S$ be the point at angular distance $\mu_0$ from $b$. If $d$ is the center of SED, it follows that $\angle adb>\angle bdc$. By Lemma~\ref{l:cyclic}, the companion of $b$ is either $a$ or $c$. Assuming by contradiction that $b$'s companion is $a$, Observation~\ref{l:dist} implies that $ab\leqslant bc$. Hence $c$ does not lie on SEC, otherwise $ab>bc$ (recall that $a$ lies on SEC, as well). It follows that $cd<da$, and Lemma~\ref{l:quad} applies to $abcd$, yielding $\angle abc+\angle cda\leqslant\pi$. But, since $S$ is \BI, $\angle cda = \mu_0+\mu_1 = 4\pi/n$, implying that $\angle abc \leqslant \pi(n-4)/n < \pi(n-3)/n$, which contradicts Lemma~\ref{l:bound}.
\end{proof}

\begin{lemma}\label{l:bsec1}
If $S$ is both \BI and \PR, and two companions lie on SEC, then every point of $S$ lies on SEC.
\end{lemma}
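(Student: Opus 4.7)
The plan is to prove the equivalent claim that the center $O$ of the supporting polygon coincides with the center $c$ of SED; once that is established, the $n/2$-fold rotational structure of the occupied edges around $O=c$, combined with the \BI angular periodicity of $S$ at $c$, immediately forces $|p-c|=r$ for every $p\in S$, where $r$ is the radius of SEC. By Lemma~\ref{l:companion} the two companions $a,b\in\mbox{SEC}$ are neighbors at angular distance $\mu_0$ at $c$, and since $\mu_0<\sigma=4\pi/n<\pi$ they are not antipodal; hence Observation~\ref{l:sechull} forces at least one further point of $S$ onto SEC.

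I would place $c$ at the origin and choose coordinates so that $a=(r\cos(\mu_0/2),-r\sin(\mu_0/2))$ and $b=(r\cos(\mu_0/2),r\sin(\mu_0/2))$; then the edge $V_0V_1$ of the supporting polygon containing $a,b$ lies on the vertical line $x=r\cos(\mu_0/2)$, and parameterizing $O=(x_O,y_O)$ with circumradius $R$ gives $x_O=r\cos(\mu_0/2)-R\cos(\pi/n)$. For each $k\in\{0,\ldots,n/2-1\}$ the points $p_{2k}$ and $p_{2k+1}$ lie on the edge obtained from $V_0V_1$ by rotating by $k\sigma$ about $O$, at angular positions $\pm\mu_0/2+k\sigma$ at $c$. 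Intersecting these rays with the rotated edge line and applying the identity $\cos A\cos B+\sin A\sin B=\cos(A-B)$ then yields the clean formula
\[
|p_{2k}|=|p_{2k+1}|=r+\frac{y_O\sin(k\sigma)-x_O\bigl(1-\cos(k\sigma)\bigr)}{\cos(\mu_0/2)}.
\]

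To force $O=c$ I exploit that $c$ is the center of SED, so $|p_i|\le r$ for every $i$ and, by Observation~\ref{l:sechull}, the convex hull of $S\cap\mbox{SEC}$ must contain $c$. Summing $|p_{2k}|\le r$ over $k=0,\ldots,n/2-1$ annihilates the trigonometric sums $\sum\cos(k\sigma)$ and $\sum\sin(k\sigma)$ and leaves $x_O\ge 0$. Suppose $x_O>0$ for contradiction. Factoring the numerator in the formula as $2\sin(k\sigma/2)\bigl[y_O\cos(k\sigma/2)-x_O\sin(k\sigma/2)\bigr]$ shows that $|p_{2k}|=r$ can hold for at most one index $k_0\in[1,n/2-1]$, so at most four points of $S$ lie on SEC. A gap analysis of the four angular positions $\pm\mu_0/2$ and $\pm\mu_0/2+k_0\sigma$, using the \BI bound $\mu_0<2\pi/n$, shows that their convex hull contains $c$ only when $k_0\sigma=\pi$, that is, $4\mid n$ and $k_0=n/4$; but then $y_O/x_O=\tan(\pi/2)$ forces $x_O=0$, a contradiction. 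Hence $x_O=0$, and the inequality $|p_{2k}|\le r$ reduces to $y_O\sin(k\sigma)\le 0$ for all $k$; since $\sin(k\sigma)$ is positive for $0<k<n/4$ and negative for $n/4<k<n/2$, we conclude $y_O=0$. Thus $O=c$, the formula gives $|p_i|=r$ for every $i$, and every point of $S$ lies on SEC.

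The main obstacle will be the gap analysis in the third paragraph: verifying that the \BI bound $\mu_0<2\pi/n$ rules out every integer $k_0\in[1,n/2-1]$ other than $k_0=n/4$, and then handling that exceptional case via the $\tan(\pi/2)$ degeneracy to force $x_O=0$.
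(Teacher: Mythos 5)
Your proof is correct, but it takes a genuinely different route from the paper's. The paper argues synthetically: starting from the companion pair on SEC, it uses Observation~\ref{l:sechull} to produce a further point $b$ of $S$ on SEC, then a slope/rotation argument (the edge through $b$ and its companion is the edge through the first pair rotated about the center of SED) to conclude that $b$'s companion also lies on SEC; this pins down three edge lines of the supporting polygon, Observation~\ref{o:3sides} fixes the polygon, and the case where the two occupied edges found are opposite is treated separately. You instead prove the stronger intermediate claim that the supporting polygon is concentric with SED, by writing the distance from the SED-center to the $k$-th companion pair explicitly in terms of the offset $(x_O,y_O)$ and averaging over the $n/2$ pairs: the roots-of-unity cancellation turns the containment $|p_i|\leqslant r$ into $x_O\geqslant 0$, and a single application of Observation~\ref{l:sechull} plus your gap analysis (which correctly exploits $\mu_0<2\pi/n$ and the parity of $n$ to force $k_0\sigma=\pi$) removes the remaining freedom. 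I checked the formula for $|p_{2k}|$, the vanishing of $\sum_k e^{ik\sigma}$ for $n\geqslant 6$, the ``at most one $k_0$'' claim, and the final $y_O=0$ step; all are sound. Your approach buys a cleaner logical structure with no case split and a reusable concentricity statement (compare Lemma~\ref{l:samecenter}), at the price of more computation and of leaving implicit the appeal to Lemma~\ref{l:cyclic} needed to match the $k$-th companion pair in cyclic order around the SED-center with the $k$-th occupied edge in cyclic order around $O$.
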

\begin{proof}
Let $a,a'\in S$ be two companion points that lie on SEC, which are also neighbors by Lemma~\ref{l:companion}. Then $a$ and $a'$ are not antipodal, and therefore by Observation~\ref{l:sechull} there must be another point $b\in S$ on SEC which, without loss of generality, we may assume to be strongly analogous to $a$. Let $p$ be the center of SED, and let $b'$ be the neighbor of $b$, which is also its companion. Because the configuration is \BI and the supporting polygon must be regular, it follows that the slope of the line $bb'$ is equal to the slope of $aa'$ increased or decreased by $\angle apb$. Hence also $b'$ lies on SEC.

If the edges of the supporting polygon on which $a$ and $b$ lie are not opposite, then it is easy to see that no two points among $a$, $a'$, $b$, $b'$ are antipodal (otherwise $S$ would be \EQ), and they belong to the same half of SEC. By Observation~\ref{l:sechull}, there must be another point $c\in S$ on SEC. By the same reasoning, the companion of $c$ also belongs to SEC. Hence three lines containing edges of the supporting polygon are given, which means that the whole polygon is fixed (by Observation~\ref{o:3sides}), and therefore all the points of $S$ lie on SEC.

Otherwise, if the edges of the supporting polygon on which $a$ and $b$ lie are opposite, the slopes of all other edges are fixed, and the size of the supporting polygon is also fixed. If the center of the polygon is not $p$, then some points of $S$ must lie outside SED. Hence the center of the supporting polygon is $p$, and all the points of $S$ lie on SEC.
\end{proof}

\begin{lemma}\label{t:bsec2}
If $S$ is both \BI and \PR, and there are two points on SEC that are not strongly analogous, then every point of $S$ lies on SEC.
\end{lemma}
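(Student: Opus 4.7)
My plan is to follow the strategy of Lemma~\ref{l:bsec1}: I will show that under the stated hypotheses some pair of companions must already lie on SEC, and then simply invoke Lemma~\ref{l:bsec1} to conclude that every point of $S$ lies on SEC. The real work therefore lies in producing a companion pair on SEC out of the two non--strongly-analogous points $a,b$.

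The first step is to establish that $|S\cap\mathrm{SEC}(S)|\geqslant 3$. Suppose instead that $a$ and $b$ were the only points of $S$ on SEC. By Observation~\ref{l:sechull} the center of SED would then have to lie on the segment $ab$, so $a$ and $b$ would be antipodal. Now, since $a$ and $b$ belong to different strong analogy classes, the clockwise arc from $a$ to $b$ is a sum of an odd number of biangular intervals alternating $\mu_0,\mu_1,\mu_0,\ldots$ starting with $\mu_0$, while the counter-clockwise arc is a sum of an odd number of intervals alternating and starting with $\mu_1$. Setting both arcs equal to $\pi$ and combining with $m(\mu_0+\mu_1)=2\pi$ (where $n=2m$) forces $\mu_0=\mu_1$, contradicting the biangularity of $S$. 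Hence there is at least a third point $d\in S\cap\mathrm{SEC}(S)$.

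Next, adapting the slope argument used in the proof of Lemma~\ref{l:bsec1}, I would produce a pair of companions on SEC. Since there are only two strong analogy classes, the third point $d$ must be strongly analogous to $a$ or to $b$; without loss of generality, $d$ is strongly analogous to $a$. Let $a'$ be the companion of $a$ and $d'$ be the companion of $d$, both of which lie in the other strong analogy class by Lemma~\ref{l:companion}. Because the supporting polygon is a regular $2m$-gon, the 2-point edge through $a,a'$ and the 2-point edge through $d,d'$ are related by the rotation around the polygon's center $O$ by the angle $\angle aOd$; and because $a$ and $d$ are strongly analogous, the angular distance $\angle apd$ at the center $p$ of SED is a multiple of $\mu_0+\mu_1=2\pi/m$, so the direction of $dd'$ is obtained from that of $aa'$ by a rotation through $\angle apd$ about $p$. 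Exactly as in the proof of Lemma~\ref{l:bsec1}, this, combined with the fact that $a,d\in\mathrm{SEC}(S)$ and the regular polygon has edges of equal length, forces $d'$ also to lie on SEC. Thus $(d,d')$ is a pair of companions on SEC, and Lemma~\ref{l:bsec1} finishes the proof.

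The main obstacle is the slope/rotation step: although Lemma~\ref{l:bsec1} already develops this argument for a companion pair on SEC, here we must apply it starting from data of a different shape (two non--companion SEC points) and carefully check that the BI structure and the regularity of the supporting polygon really do pin the direction of $dd'$ to be the rotation of the direction of $aa'$ by $\angle apd$, so that $d'$ is forced onto SEC. Bookkeeping which 2-point edges contain $a$ and $d$, and keeping straight the two centers $p$ (of SED) and $O$ (of the supporting polygon), is the delicate part; once this is cleanly done, the remainder of the proof is a direct appeal to Lemma~\ref{l:bsec1}.
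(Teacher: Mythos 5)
Your overall plan (produce a companion pair on SEC, then invoke Lemma~\ref{l:bsec1}) matches the paper's, but the step you use to get there has a genuine gap: after correctly locating two non--strongly-analogous points $a,b$ on SEC, you discard $b$ and run the slope/rotation argument on the pair $(a,d)$ with $d$ \emph{strongly analogous} to $a$. For strongly analogous points the rotation relation between their edges is perfectly consistent with \emph{both} companions lying strictly inside SED, so nothing forces $d'$ onto SEC. Indeed, the paper's Lemma~\ref{t:bsec3} and the proof of Theorem~\ref{t:bsec4} exhibit exactly such \BI and \PR configurations: one entire strong analogy class lies on SEC (forming a \RE set of $n/2$ points) while all of their companions lie strictly inside SED. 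In those configurations every pair of SEC points is strongly analogous and your claimed conclusion fails, which is precisely why the hypothesis of the present lemma demands two points on SEC that are \emph{not} strongly analogous.

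The paper's proof uses that hypothesis directly where you abandon it. It first disposes of the easy case where two SEC points are neighbors (then they are companions by Lemma~\ref{l:companion} and Lemma~\ref{l:bsec1} applies), and otherwise applies the slope relation to the non-neighboring, non--strongly-analogous pair $a,b$ themselves: since $a$ and $b$ lie in different strong analogy classes, the edge through $b$ is the edge through $a$ rotated by $\pm\angle apb$ with the sign arranged so that if $a$'s companion lay in the interior of SED, $b$'s companion would be forced \emph{outside} SED --- a contradiction. That asymmetry between the two classes is the whole content of the argument, and it is exactly what is lost when you replace $b$ by a point $d$ strongly analogous to $a$. Your preliminary step showing $|S\cap\mbox{SEC}(S)|\geqslant 3$ is correct but unnecessary under the paper's approach.
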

\begin{proof}
If two points on SEC are neighbors, by Lemma~\ref{l:companion} they are also companions, and then Lemma~\ref{l:bsec1} applies. Otherwise, if no two neighbors lie on SEC, by assumption there exist two non-neighboring points $a,b\in S$ that are not strongly analogous and lie on SEC (and belong to different edges of the supporting polygon, by Lemma~\ref{l:companion}). Let $p$ be the center of SED. Then, since the supporting polygon is regular, the slope of the edge through $b$ equals the slope of the edge through $a$ plus or minus $\angle apb$. As a consequence, if the companion of $a$ lay in the interior of SED, then the companion of $b$ would lie outside, which would be a contradiction. Therefore, the companion of $a$ lies on SEC as well, and Lemma~\ref{l:bsec1} applies.
\end{proof}

\begin{lemma}\label{t:bsec3}
Let $S$ be \BI, and suppose that all the points of $S$ that lie on SEC are strongly analogous. If the points of $S$ that are strongly analogous to those on SEC are allowed to ``slide'' radially toward SEC (while the other points of $S$ do not move), then there is at most one configuration of the points that could be \PR.
\end{lemma}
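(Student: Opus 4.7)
My plan is to reduce the statement to Lemma~\ref{l:3slide2}. By Observation~\ref{o:analogy2}, the \BI set $S$ has exactly two strong analogy classes; call them $\mathcal A$ (the one containing the on-SEC points of $S$, which is well-defined by the hypothesis) and $\mathcal B$. Hence the sliding points are exactly those of $\mathcal A$, while $\mathcal B$ stays fixed throughout. Moreover, the on-SEC subset $\mathcal A_0 := \mathcal A \cap \mbox{SEC}(S)$ also stays fixed, since sliding radially toward SEC leaves points already on SEC in place; equivalently, the non-sliding points of $S$ are $\mathcal B \cup \mathcal A_0$. By Observation~\ref{l:sechull}, $\mathcal A_0$ is non-empty and its convex hull contains the center of SED$(S)$.

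First, I will show that SED is preserved throughout the sliding. Since $\mathcal A_0$ lies on SEC$(S)$ and its convex hull contains the center $c$ of SED$(S)$, a standard property of smallest enclosing disks (using any three points of $\mathcal A_0$ whose convex hull contains $c$, or, if $|\mathcal A_0|=2$, the antipodality forced by Observation~\ref{l:sechull}) will give that the smallest enclosing disk of $\mathcal A_0$ alone is already SED$(S)$. Because the sliding never moves a point outside the original SED, the SED of every attained configuration must then coincide with SED$(S)$.

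Next, I will locate the consecutive non-sliding points required by Lemma~\ref{l:3slide2}. Since $S$ is \BI, its cyclic order around the center of SED$(S)$ alternates as $a_0, b_0, a_1, b_1, \ldots, a_{n/2-1}, b_{n/2-1}$ with $a_i \in \mathcal A$ and $b_i \in \mathcal B$. Picking any $a_j \in \mathcal A_0$, the triple $b_{j-1}, a_j, b_j$ consists of three consecutive non-sliding points of $S$; the point $b_{j+1}$ also lies in $\mathcal B$ and is therefore non-sliding, is distinct from $b_{j-1}$ and $b_j$ because $n/2 \geqslant 3$, and has only $a_j$ and $a_{j+1}$ as cyclic neighbors, so it is not adjacent to $b_{j-1}$ nor to $b_j$. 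Applying Lemma~\ref{l:3slide2} with $a = b_{j-1}$, $b = a_j$, $c = b_j$, $d = b_{j+1}$ will then yield at most one \PR configuration attainable via the sliding, as desired.

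The step I expect to require the most care is the SED-preservation argument: although it reduces to a fairly standard fact about smallest enclosing disks of subsets of a circle, I must verify that $\mathcal A_0$, even when it is a proper subset of $\mathcal A$, already pins down the original SED, so that the sliding cannot shrink or shift it and the ``SED never changes'' hypothesis of Lemma~\ref{l:3slide2} is actually met.
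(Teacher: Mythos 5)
Your proof is correct and takes essentially the same route as the paper's: the paper likewise notes that a point of the sliding class already on SEC, together with its two neighbours from the fixed class, yields three consecutive non-moving points, and then invokes Lemma~\ref{l:3slide2}. The only difference is that you spell out the SED-preservation check and the explicit choice of the fourth non-adjacent point $d$, both of which the paper leaves implicit.
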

\begin{proof}
By assumption, at least $n/2$ strongly analogous points do not slide, hence no two adjacent points are allowed to slide. Moreover, there is a point $a\in S$ already on SEC that does not slide and, by assumption, neither of its adjacent points is allowed to slide, because they are not strongly analogous to $a$. Hence Lemma~\ref{l:3slide2} applies.
\end{proof}

\begin{theorem}\label{t:bsec4}
Let $\mathcal R$ be frozen at time $t_0$, let $\mathcal R(t_0)$ be a \BI (and not \CO) configuration with no points in the interior of SED/3, let $n>4$, and let the robots execute procedure {\sc Move All to SEC} with suitable critical points. Then, the robots eventually freeze in a \PR configuration.
\end{theorem}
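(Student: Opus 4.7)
The plan is to run {\sc Move All to SEC} with the set of critical points chosen to be the unique \PR configuration in which every walker has reached SEC, and then invoke the cautious-move machinery of Section~\ref{sec:cautious} to conclude that the swarm freezes at this configuration. Since $\mathcal R(t_0)$ has no points in the interior of SED/3, the preliminary substep of {\sc Move All to SEC} (which would relocate such robots to SEC/3) is vacuous, and the procedure immediately selects a set of walkers that slide radially toward SEC while all other robots remain frozen.

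First I would establish the auxiliary claim that every \BI set whose points all lie on SEC is \PR. The supporting $n$-gon is built explicitly: take the line $\ell$ through an arbitrary pair of neighbors, and use the $(n/2)$-fold rotational symmetry of a \BI set to obtain $n/2$ lines through the remaining neighbor pairs, all at a common distance $d$ from the center of SEC. A further rotation of $\ell$ by $2\pi/n$ yields a line (also at distance $d$ from the center) passing through the midpoint of a quasi-neighbor arc, and successive rotations by $4\pi/n$ give the remaining $n/2$ ``empty'' lines. The resulting $n$ equally spaced lines bound a regular $n$-gon whose $n/2$ ``occupied'' edges carry the neighbor pairs of $S$ and whose $n/2$ ``empty'' edges are disjoint from $S$; hence this polygon is a supporting polygon, and $S$ is \PR.

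Next I would split on the walker-selection branches of {\sc Move All to SEC}. If every robot already lies on SEC, no walker is selected and by the auxiliary claim $\mathcal R(t_0)$ is itself \PR, so we are done. Otherwise, either (a) all robots on SEC belong to a single strong analogy class $C'$ and the walkers are the elements of $C'$ not on SEC, or (b) two non-strongly-analogous robots lie on SEC and the walkers are the entire analogy class of robots off SEC. In case (a), Lemma~\ref{t:bsec3} applies directly and yields at most one \PR configuration formable during the sliding. In case (b), the two non-strongly-analogous robots on SEC remain stationary throughout the motion, so the contrapositive of Lemma~\ref{t:bsec2} shows that no \PR configuration can appear until every walker has reached SEC. Since radial motion preserves angular distances, the \BI structure is preserved; combined with the auxiliary claim, the final configuration (all points on SEC, still \BI) is \PR. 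Call this configuration $\tilde P$; by the two lemmas above, it is the unique \PR configuration formable during the sliding.

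Finally I would take the critical-point set for {\sc Cautious Move} to be $\tilde P$. By Lemma~\ref{l:cautious:0} every walker reaches the endpoint of its path in finite time and the swarm eventually freezes, and since these endpoints are exactly the walker positions in $\tilde P$, Corollary~\ref{c:cautious4} guarantees that the swarm is found in the configuration $\tilde P$ at the moment it freezes, which is \PR. The main technical obstacle is the auxiliary claim about \BI sets on SEC, together with the bookkeeping needed to match the walker-selection cases of {\sc Move All to SEC} to the symmetry hypotheses of Lemmas~\ref{t:bsec2} and~\ref{t:bsec3}; once these are settled, the conclusion follows from the cautious-move framework.
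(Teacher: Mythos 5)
Your case (b) and your auxiliary claim (a \BI set lying entirely on SEC is \PR) are both sound and essentially match the paper's treatment: when two non-strongly-analogous robots sit on SEC, Lemma~\ref{t:bsec2} rules out any \PR configuration before all robots reach SEC, so the final configuration is the only one that needs intercepting. The genuine gap is in case (a), where your analysis stops one phase too early. Lemma~\ref{t:bsec3} only covers the motion of the robots that are strongly analogous to those already on SEC. Once that strong analogy class has completely reached SEC (forming a \RE set of $n/2$ points), procedure {\sc Move All to SEC} then sends the \emph{other} strong analogy class outward, and during this second phase the formable \PR configurations are no longer unique: letting $P$ be the regular $n$-gon inscribed in SEC having the frozen $n/2$ points among its vertices, \emph{every} position of the sliding class in which its robots are equidistant from the center and not strictly inside $P$ is \PR, so there is a continuum of them. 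Your claim that $\tilde P$ is ``the unique \PR configuration formable during the sliding'' is therefore false, and taking $\tilde P$ alone as the critical-point set does not make the swarm freeze at the first \PR configuration it forms --- the robots would pass through \PR configurations while still in motion. (Note also that even in the first phase of case (a), the single \PR configuration produced by Lemma~\ref{t:bsec3} need not coincide with $\tilde P$; it can occur at an intermediate radius.)

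This matters beyond the letter of the statement: the theorem is invoked (e.g., in Lemma~\ref{z:valid1b}) to guarantee that the swarm freezes \emph{as soon as} a \PR configuration is reached, which is what prevents a robot that happens to perform a Look at that instant from branching into procedure {\sc Pre-regular} while others are mid-move. The paper closes the gap by choosing, for the second phase, the critical configuration obtained by sliding the internal robots radially either to the boundary of $P$ (if they all start inside or on $P$) or to the maximum distance currently attained by an internal robot (otherwise); in both cases this is the \emph{first} \PR configuration formable, and Corollary~\ref{c:cautious4} then applies. Your proof needs this additional construction (or an equivalent one) for case (a).
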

\begin{proof}
If $\mathcal R(t_0)$ is already a \PR configuration, there is nothing to prove, because the swarm is already frozen at time $t_0$. If two points that are not strongly analogous lie on SEC at time $t_0$, then no \PR configuration can be formed, unless all robots lie on SEC, due to Lemma~\ref{t:bsec2}. Hence, in this case, no critical points are needed. On the other hand, if all the robots that lie on SEC at time $t_0$ belong to the same strong analogy class, procedure {\sc Move All to SEC} makes the robots of the same strong analogy class move first toward SEC. By Lemma~\ref{t:bsec3}, during this phase at most one configuration $C$ could be \PR. Therefore, we may take $C$ as a set of critical points for the cautious move. Note that this set does not change as the robots perform the cautious move. By Corollary~\ref{c:cautious4}, the robots freeze in configuration $C$, provided that they reach it. If they do not reach it, then by Lemma~\ref{l:cautious:0} they eventually reach SEC and freeze.

Assume now that all the robots of one strong analogy class are on SEC, forming a \RE set of $n/2$ points. Let $P$ be the regular $n$-gon inscribed in SEC that has these $n/2$ points among its vertices. Procedure {\sc Move All to SEC} makes the robots of the other strong analogy class move toward SEC, and the possible \PR configurations in which the robots can be found are precisely those in which none of the robots lies strictly in the interior of the area enclosed by $P$, and every two strongly analogous robots are equidistant from the center of SED.

If all the robots at time $t_0$ lie in the interior or on the boundary of $P$, then we let $C$ be the configuration obtained from $\mathcal R(t_0)$ by sliding all the robots radially away from the center of SED, until they reach the boundary of $P$. In this case, $C$ will be the input critical point set of the cautious move. Otherwise, let $d$ be the maximum distance of an internal point of $\mathcal R(t_0)$ from the center of SED. Let $C'$ be the configuration obtained from $\mathcal R(t_0)$ by sliding the internal robots radially away from the center of SED, until they reach distance $d$ from it. In this case, $C'$ will be the input critical point set of the cautious move. In both cases, the cautious move will make the swarm freeze in configuration $C$, which is the first \PR configuration formable.
\end{proof}

\subsubsection{Cautious Moves for \DBI Configurations}\label{sec:analysis2b}

\begin{lemma}\label{l:p2}
If $S$ is \DBI and not \CO, and the points of one analogy class stay still on SEC, while the other points are allowed to ``slide'' radially within SED, then at most one configuration of the points can be a \PR in which sliding points are not companions.
\end{lemma}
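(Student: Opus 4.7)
My plan is to adapt the argument used in Lemma~\ref{l:3slide2}, since the setup is analogous but the \DBI assumption replaces the hypothesis of having three consecutive non-sliding points. I would begin with a counting argument on companion pairs. Since $S$ is \DBI, the two analogy classes each have size $n/2$, and since a \PR set has exactly $n/2$ companion pairs, the hypothesis that no two sliding points are companions forces every companion pair to consist of exactly one sliding and one non-sliding point. Hence the $n/2\geqslant 4$ non-sliding points lie on $n/2$ distinct full edges of the supporting polygon, one per edge, and the ``alternation pattern'' of classes along the supporting polygon is determined.

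Next I would argue that the supporting polygon is uniquely determined. By Lemma~\ref{l:cyclic}, the cyclic order of $S$ around the center $c$ of SED coincides with the cyclic order around the center $c'$ of the supporting polygon, so the incidence pattern (which non-sliding point lies on which full edge) is rigidly determined by the combinatorial type of the \DBI configuration. Using Proposition~\ref{p:symm}, the non-sliding class $A$ inherits both the axis of symmetry (from $\alpha=\beta$) and the $n/4$-fold rotational symmetry of $S$ around $c$. Combining these facts with the requirement that consecutive full edges of a regular $n$-gon differ in orientation by $4\pi/n$, I would show that at least three of the lines carrying full edges of the supporting polygon are determined up to a single discrete ambiguity that is in turn fixed by the cyclic ordering, and then invoke Observation~\ref{o:3sides} to conclude that the polygon itself is unique.

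Once the supporting polygon is fixed, each sliding point must simultaneously lie on the polygon and on its fixed radial ray from $c$; by Lemma~\ref{l:coradial} this intersection consists of at most a single point, so the positions of all sliding points are forced. Therefore the entire \PR configuration is unique, which is the claim.

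The main obstacle I anticipate is the step identifying three determined edge lines from the non-sliding points alone. Unlike in Lemmas~\ref{l:3slide1} and~\ref{l:3slide2}, no three consecutive points in a \DBI arrangement are non-sliding, since the arrangement patterns compatible with period~$4$ and exactly two analogy classes admit at most two consecutive points of the same class. The argument must therefore compensate by exploiting the analogy-class symmetry together with the rigid companion structure established in the first step. I expect to split into cases according to whether $A$ is \EQ of size $n/2$ or \BI of size $n/2$ (the only options permitted by Observation~\ref{o:analogy2} for an analogy class of a \BP period-$4$ set): in the \EQ case the full rotational symmetry of $A$ around $c$ and the edge-incidence equations already overdetermine the polygon, while in the \BI case the axis of symmetry together with the \BI pair structure plays the analogous role, so that in either case three edge-line directions can be pinned down and Observation~\ref{o:3sides} finishes the proof.
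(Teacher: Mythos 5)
Your first and last steps coincide with the paper's: the pairing argument forcing every companion pair to contain exactly one sliding and one non-sliding point is exactly how the paper's proof begins, and once the supporting polygon is fixed, Lemma~\ref{l:coradial} does finish the argument. The gap is precisely the step you flag as the ``main obstacle,'' and it cannot be filled by the symmetry-plus-incidence counting you propose. In the subcase where the non-sliding class $A$ is \EQ (a regular $(n/2)$-gon on SEC), the incidence conditions do not overdetermine the supporting polygon: for a regular $n$-gon centered at the center of SED, requiring its alternate edges to pass through the points of $A$ is, by rotational symmetry, a single equation in the two parameters (rotation and size), so there is a one-parameter family of candidate polygons. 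The correct resolution of that subcase is not uniqueness of the polygon but non-existence of any \PR configuration, and the paper proves it by a containment argument: from $\angle cp_1q_1=\pi-\angle q_1p_4c$ it follows that the two edge segments that would have to host the two sliding companions cannot both meet the interior of SED, so the companions cannot both be placed. Your sketch never produces anything of this kind.

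More broadly, the paper does not derive the polygon's uniqueness from symmetry at all. It extends the full edges to the outer regular $(n/2)$-gon $q_0q_1q_2\cdots$, combines $q_0q_1=q_1q_2$ with similar- and congruent-triangle relations to obtain the identity $(p_0q_0-p_1q_1)(p_0q_0-q_0p_1)=0$, and splits on the two factors: the first branch is the impossible case just described, and only the second branch ($p_0q_0=q_0p_1$) fixes the slopes of the edges through $p_0$ and $p_1$, after which Observation~\ref{o:3sides} and Lemma~\ref{l:coradial} apply. Without a substitute for this algebraic identity your argument does not go through. Note also that the symmetries of $A$ obtained from Proposition~\ref{p:symm} are about the center of SED, whereas the supporting polygon may be centered elsewhere, so even transferring symmetry information of $A$ into constraints on edge directions is not immediate.
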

\begin{proof}
Let $p_0\in S$ be a point belonging to the analogy class that stays still on SEC, and let $p_i\in S$ be the $(i+1)$-th point in the cyclic order around the center of SED, $c$. We may assume that $p_1$ is analogous to $p_0$, and therefore that the clockwise angle sequence induced by $p_0$ is of the form $(\alpha, \beta, \gamma, \beta, \alpha, \beta, \gamma, \beta, \alpha, \beta, \gamma, \beta, \cdots)$. It follows that the points analogous to $p_0$ are those of the form $p_{4i}$ and $p_{4i+1}$.

Suppose that $S$ reaches a \PR configuration in which no two sliding points are companions. Hence every other edge of the supporting polygon contains a point of $S$ of the analogy class that stays still on SEC (cf.~Lemma~\ref{l:cyclic}). Let $q_{2i}$ (respectively, $q_{2i+1}$) be the point at which the extensions of the edges containing $p_{4i}$ and $p_{4i+1}$ (respectively, $p_{4i+1}$ and $p_{4i+4}$) meet, where indices are taken modulo $n$. Since the supporting polygon is regular, then clearly the $q_i$'s form a \RE configuration with $n/2$ elements, and in particular $q_0q_1=q_1q_2$ and $\angle p_0q_0p_1=\angle p_1q_1p_4=\angle p_4q_2p_5=\pi(n-4)/n$. On the other hand, the analogy class of $p_0$ is a \BI or \EQ set of size $n/2$ lying on SEC, hence it forms a polygon with equal internal angles, and in particular $\angle p_0p_1p_4=\angle p_1p_4p_5=\pi(n-4)/n$.

Let $\theta=\angle p_1p_0q_0$ and $\theta'=\angle q_0p_1p_0$. Then
$$\pi-\theta-\theta'=\angle p_0q_0p_1=\pi(n-4)/n=\angle p_0p_1p_4=\pi-\theta'-\angle p_4p_1q_1,$$
implying that $\angle p_4p_1q_1=\theta$. Similarly $\angle p_5p_4q_2=\theta$, and therefore $p_0p_1q_0$ and $p_1p_4q_1$ are similar triangles, and $p_0p_1q_0$ and $p_4p_5q_2$ are congruent (because $p_0p_1=p_4p_5$).

We have $q_0p_1+p_1q_1=q_0q_1=q_1q_2=q_1p_4+p_4q_2$. Also, $p_0q_0/p_1q_1=q_0p_1/q_1p_4$ and $p_0q_0=p_4q_2$. Hence we may substitute $q_1p_4$ with $q_0p_1\cdot p_1q_1/p_0q_0$ and $p_4q_2$ with $p_0q_0$, obtaining $$q_0p_1+p_1q_1 = \frac{q_0p_1\cdot p_1q_1}{p_0q_0}+p_0q_0.$$ After rearranging terms and factoring, we get $$(p_0q_0-p_1q_1)(p_0q_0-q_0p_1)=0,$$ which implies that either $p_0q_0=p_1q_1$ or $p_0q_0=q_0p_1$.

Assume first that $p_0q_0=p_1q_1$ and $p_0q_0\neq q_0p_1$. This implies that $\alpha=2\beta+\gamma=4\pi/n$ and therefore, by observing that the sum of the internal angles of the quadrilateral $cp_1q_1p_4$ is $2\pi$, we have $\angle cp_1q_1=\pi-\angle q_1p_4c$. This means that the segment $p_1q_1$ has some points in the interior of SED if and only if $q_1p_4$ has none. However, $p_2$ is the companion of $p_1$ and hence it lies on $p_1q_1$, and $p_3$ is the companion of $p_4$ and hence it lies on $q_1p_4$, which yields a contradiction. It follows that in this case no \PR configuration is formable.

Assume now that $p_0q_0=q_0p_1$, hence $\angle cp_0q_0=\angle q_0p_1c=\pi(n+4)/2n-\alpha/2$. Therefore the slopes of the two edges of the supporting polygon to which $p_0$ and $p_1$ belong are fixed. This also fixes the slope of the edge of the supporting polygon through $p_4$, and hence the whole supporting polygon is fixed, by Observation~\ref{o:3sides}. Due to Lemma~\ref{l:coradial}, the trajectory of each sliding point intersects the supporting polygon in at most one point, and therefore in this case at most one \PR configuration can be formed.
\end{proof}

\begin{lemma}\label{l:p3}
Let $\mathcal R$ be frozen at time $t_0$, let $\mathcal R(t_0)$ be a \DBI (and not \CO) configuration with no points in the interior of SED/3, and let the robots in $\mathcal A\subset\mathcal R$, forming one analogy class of $\mathcal R(t_0)$, stay still on SEC, while the robots in $\mathcal A'=\mathcal R\setminus\mathcal A$ execute procedure {\sc Move All to SEC} or procedure {\sc Move Walkers to SEC/3} with suitable critical points. Then, if a \PR configuration in which analogous robots are companions is ever formed, the robots freeze as soon as they form one.
\end{lemma}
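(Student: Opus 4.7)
The plan is to show that only finitely many \PR configurations in which two analogous robots are companions can ever form during the cautious move, and then to include all of them in the critical-point set, concluding via Corollary~\ref{c:cautious4}.

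Since \DBI requires $n$ to be a multiple of $4$ and $n>4$, we have $n\geqslant 8$, so the fixed analogy class $\mathcal A$ consists of at least $n/2\geqslant 4$ points that lie on SEC throughout the execution. Suppose, during the cautious move, the configuration reaches a \PR set $C$ with analogous companions, and let $P$ be its supporting polygon. Every point of $\mathcal A$ must lie on the perimeter of $P$, so $P$ passes through at least four prescribed points of SEC.

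The next step is to argue that only finitely many such supporting polygons are possible. A regular $n$-gon in the plane has four continuous degrees of freedom (two for the center, one for the rotation, one for the size), so four incidence constraints with the prescribed fixed points generically cut out a finite set. To make this rigorous, I would proceed along the lines of the proof of Lemma~\ref{l:p2}: for each combinatorial assignment of four distinct points of $\mathcal A$ to four edges of a regular $n$-gon (a choice from a finite set), one first notes that the index differences modulo $n$ fix the relative slopes of the four induced edge-lines; then, since Observation~\ref{o:3sides} yields at most one regular $n$-gon with three edges on three prescribed lines, and the fourth incidence further discretizes the remaining global-rotation parameter, only finitely many polygons $P$ survive. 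For each such $P$, Lemma~\ref{l:coradial} forces each sliding robot's position in $C$ to be the unique intersection of its radial line with $P$, so $C$ itself is uniquely determined by $P$. Hence the collection of admissible \PR configurations with analogous companions is a finite list $C_1,\ldots,C_k$.

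I would then include $\bigcup_{i=1}^k C_i$ in the set of input critical points of the cautious move used by {\sc Move All to SEC} and {\sc Move Walkers to SEC/3}. Each $C_i$ is a configuration of exactly $n$ points, so by Corollary~\ref{c:cautious4} the swarm freezes as soon as it is found in any $C_i$; together with Lemma~\ref{l:cautious:0}, which guarantees termination of the cautious move, this establishes the claim.

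The main obstacle is in making the finiteness argument airtight: for each combinatorial edge-assignment, one must rule out the possibility that a one-parameter family of supporting polygons is compatible with the prescribed fixed points on SEC. This requires a careful analysis of degenerate sub-cases, exploiting the structural constraint that $\mathcal A$ is an analogy class of a \DBI configuration --- hence by Observation~\ref{o:analogy2} either an \EQ set of size $n/2$ or a \BI set of size $n/2$ --- in order to preclude pathological symmetric alignments that would admit a continuous deformation of $P$.
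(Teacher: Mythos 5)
Your strategy (show only finitely many \PR configurations with analogous companions are formable, take their union as critical points, invoke Corollary~\ref{c:cautious4}) is exactly the paper's argument for $n\geqslant 12$: there $\mathcal A$ contains at least three companion pairs, these pin down three edge-lines of the supporting polygon, Observation~\ref{o:3sides} fixes the polygon, and Lemma~\ref{l:coradial} then yields at most one formable configuration. So that half of your proposal is sound.

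The gap is the case $n=8$, and it is fatal to the finiteness claim rather than a technicality to be ``made airtight.'' For $n=8$ the class $\mathcal A$ has exactly four points forming only \emph{two} companion pairs, and in the cyclic order of a \DBI octagon these two pairs lie on \emph{opposite} (hence parallel) edges of the supporting polygon. Two parallel edge-lines determine the polygon's orientation and, via their mutual distance, its size, but leave its translation along the common edge direction completely free: this is a genuine one-parameter family of candidate supporting polygons, each of which (by Lemma~\ref{l:coradial}) induces a distinct \PR configuration with the sliding robots $a,b,e,f$ at the corresponding intersections of their radial paths with the polygon. So infinitely many \PR configurations with analogous companions can be formable, the degenerate alignment you hoped to ``preclude'' actually occurs, and no finite critical-point set obtained by listing these configurations exists. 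The paper therefore abandons finiteness for $n=8$ and argues dynamically: it identifies the closed segment of positions of each sliding robot that can participate in a \PR configuration, inserts critical points at carefully chosen endpoints and midpoints of these segments, and then tracks the {\sc Cautious Move} protocol (including the transposed and intermediate critical points it generates) to show that either the \emph{first} \PR configuration reached is one at which all four sliding robots are simultaneously stopped, or the robots end up strictly inside open subintervals from which no \PR configuration can ever be completed. Your proposal contains no substitute for this analysis, so it does not prove the lemma.
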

\begin{proof}
Suppose first that $n\geqslant 12$. If $\mathcal R(t)$ is \PR at some time $t\geqslant t_0$, there are at least three pairs of companions that stay still on SEC (cf.\ Lemma~\ref{l:cyclic}). These three pairs determine the slopes of three edges of the supporting polygon, which, due to Observation~\ref{o:3sides}, is fixed. By Lemma~\ref{l:coradial}, the trajectory of each robot intersects the supporting polygon in at most one point, and hence there is at most one formable \PR configuration, which can be chosen as a set of critical points for the cautious move, due to Theorem~\ref{t:cautious1}.

Let $n<12$, and hence $n=8$. Let $\mathcal R=\{a,b,c,d,e,f,g,h\}$, where $\mathcal A=\{c,d,g,h\}$ is the set of robots that stay still on SEC. We seek to characterize the formable \PR configurations in which $c$ and $d$ are companions. Let $\ell$ be the line through $c$ and $d$, let $\ell'$ be the line through $g$ and $h$, and let $\lambda$ be the distance between $\ell$ and $\ell'$. Then, the two edges of the supporting polygon to which $a$ and $b$ belong must be orthogonal to both $\ell$ and $\ell'$, and similarly for the edge to which $e$ and $f$ belong. Moreover, the distance between these two edges must be $\lambda$. Let $x$ be the center of SED$(S)$, and let $a'$ (respectively, $b'$, $e'$, $f'$) be the point on SEC$(S)$ that is co-radial with $a$ (respectively, $b$, $e$, $f$). It is easy to see that the positions of $a$ that could give rise to a \PR configuration belong to a (possibly empty) closed segment $A$, which is a subset of the segment $a'x$. Similarly, the positions of $b$, $e$, and $f$ that could give rise to \PR configurations belong to closed segments $B$, $E$, and $F$, which, together with $A$, form a set that is mirror symmetric and centrally symmetric with respect to $x$. If $A$ is empty, then no \PR configuration in which moving robots are companions can be formed. Therefore, let us assume that $A$ is not empty.

Assume now that $a$, $b$, $e$, and $f$ move toward SEC executing procedure {\sc Move All to SEC}. The case in which they execute procedure {\sc Move Walkers to SEC/3} is symmetric, and therefore it is omitted. Let $a''$ and $a'''$ be the endpoints of $A$, with $a''$ closest to $a'$, and let $a^*$ be the midpoint of $A$. Similar names are given to the endpoints and midpoints of $B$, $E$, and $F$. Note that, by construction, $\{a'',b'',c(t),d(t),e''',f''',g(t),h(t)\}$, $\{a''',b''',c(t),d(t),e'',f'',g(t),h(t)\}$, and $\{a^*,b^*,c(t),d(t),e^*,f^*,g(t),h(t)\}$ are \PR sets at any time $t\geqslant t_0$.

Without loss of generality, let $a(t_0)$ be such that the segment $a(t_0)a'$ is not longer than $b(t_0)b'$, $e(t_0)e'$, and $f(t_0)f'$. If $a(t_0)$ belongs to the segment $a''a'$, open at $a''$ and closed at $a'$, then no \PR configuration can be formed, regardless of how the robots move toward SEC. Hence in this case no critical points are needed. If $a(t_0)$ belongs to the (closed) segment $xa^*$, then we take $\{a^*,b^*,c(t),d(t),e^*,f^*,g(t),h(t)\}$ as a set of critical points at any time $t\geqslant t_0$. Since $b(t_0)\in xb^*$, $e(t_0)\in xe^*$, and $f(t_0)\in xf^*$, procedure {\sc Cautious Move} will make $a$, $b$, $e$, and $f$ stop at $a^*$, $b^*$, $e^*$, and $f^*$, respectively, and wait for each other. When all of them have reached such critical points, a \PR configuration is reached, and the swarm is frozen. Also, this is the first \PR configuration that is reached by the robots.

Finally, let $a(t_0)$ belong to the segment $a^*a''$, open at $a^*$ and closed at $a''$. Let $b_1$ and $b_2$ be the two points on $xb'$ whose distance from $b^*$ is the same as the distance between $a(t_0)$ and $a^*$, with $b_1$ closest to $x$. Similarly, we define $e_1$ and $e_2$ on $xe'$, and $f_1$ and $f_2$ on $xf'$. Then, the set $\{a(t),b_2,c(t),d(t),e_1,f_1,g(t),h(t)\}$ is \PR at any time $t\geqslant t_0$, and we may take it as a set of critical points. If $e(t_0)$ is past $e_1$, or $f(t_0)$ is past $f_1$, then no \PR set can be formed, regardless of how the robots move. Otherwise, procedure {\sc Cautious Move} will make $e$ and $f$ reach $e_1$ and $f_1$, stop there, and wait for each other (note that the position of $a$ does not change while this happens, hence $a(t)=a(t_0)$).

If $b(t_0)=b_2$, then a \PR configuration is reached for the first time, and none of the robots is moving. Otherwise, suppose that $b(t_0)$ is in the (closed) segment $xb_1$. Then, eventually, $b$ will stop in $b_1$ while $e$ and $f$ are in $e_1$ and $f_1$. Note that $e$ and $f$ acquire $e_2$ and $f_2$ as transposed critical points (because $b_2$ is a critical point of $b$), and also $e^*$ and $f^*$ as intermediate critical points (because they are the midpoints of $e_1e_2$ and $f_1f_2$). Similarly, $b$ acquires $b^*$ as a new critical point. When all three of them have moved once, they will be found somewhere in the \emph{open} segments $b_1b_2$, $e_1e_2$, and $f_1f_2$. While they reach this configuration, no \PR configuration is ever formed. Moreover, no \PR configuration can be formed afterwards. Finally, let $b(t_0)$ be in the open segment $b_1b_2$. Then, $b$ will stay still and wait for $e$ and $f$, which will eventually move and stop somewhere in the \emph{open} segments $e_1e_2$ and $f_1f_2$. As in the previous case, no \PR configuration can ever be reached.
\end{proof}

\begin{theorem}\label{t:dbi}
Let $\mathcal R$ be frozen at time $t_0$, let $\mathcal R(t_0)$ be a \DBI (and not \CO) configuration with no points in the interior of SED/3, and let the robots execute procedure {\sc Move All to SEC} or procedure {\sc Move Walkers to SEC/3} with suitable critical points. Then, if a \PR configuration is ever formed, the robots freeze as soon as they form one.
\end{theorem}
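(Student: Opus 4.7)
The plan is to combine Lemmas~\ref{l:p2} and~\ref{l:p3} via the robustness of cautious moves (Theorem~\ref{t:cautious3}). I would first observe that both {\sc Move All to SEC} and {\sc Move Walkers to SEC/3}, when applied to a \DBI configuration (which has exactly two analogy classes), slide the robots of one class radially while the other class stays still. In the algorithmic contexts where these procedures are invoked on a \DBI set, the stationary class lies entirely on SEC, so the hypotheses of Lemmas~\ref{l:p2} and~\ref{l:p3} are met, with $\mathcal A$ equal to the stationary class and $\mathcal A'=\mathcal R\setminus\mathcal A$ equal to the sliding class.

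Next I would partition the \PR configurations potentially reachable during the cautious move into two exhaustive types, according to whether any two sliding robots are companions on the supporting polygon. Because each of the $n/2$ edges of the supporting polygon contains exactly one companion pair of $\mathcal R$, a simple counting argument shows that the two types are complementary: if $a$, $b$, $c$ denote the numbers of companion pairs consisting of two points of $\mathcal A$, two points of $\mathcal A'$, and one of each, respectively, then $2a+c = 2b+c = n/2$ forces $a=b$. Hence in \emph{Type~A} the companion pairs strictly alternate between $\mathcal A$ and $\mathcal A'$ ($a=b=0$), while in \emph{Type~B} at least one pair consists of two sliding robots and correspondingly at least one pair consists of two stationary robots ($a=b\geqslant 1$). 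Lemma~\ref{l:p2} shows that at most one Type~A configuration $K_A$ is formable, and then Theorem~\ref{t:cautious1} applied to the critical set $K_A$ (which has $n$ points) yields a cautious move that freezes on reaching $K_A$. Lemma~\ref{l:p3} supplies a critical set $K_B$ whose cautious move freezes on every Type~B \PR configuration. Taking $K_A\cup K_B$ as the critical set for the move prescribed by the theorem, Theorem~\ref{t:cautious3} guarantees a cautious move that enjoys both properties simultaneously, and since every \PR configuration is Type~A or Type~B, the theorem follows.

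The main obstacle I anticipate is verifying that Lemmas~\ref{l:p2} and~\ref{l:p3} genuinely apply throughout both procedures: one must check that the stationary analogy class can be assumed to lie on SEC at the start of the cautious move, possibly after the preliminary ``robots in the interior of SED/3 move radially to SEC/3'' step of {\sc Move All to SEC}. Fortunately Theorem~\ref{t:sec3} guarantees that no \PR configuration has points in SED/3, so no \PR configuration can form during this preparatory stage, and the cautious move proper begins in a state in which Lemmas~\ref{l:p2} and~\ref{l:p3} apply as intended.
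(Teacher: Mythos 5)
Your treatment of the case in which the stationary analogy class lies entirely on SEC is essentially the paper's: split the reachable \PR configurations according to whether some companion pair lies within a single analogy class, invoke Lemma~\ref{l:p2} (via Theorem~\ref{t:cautious1}) for the alternating case and Lemma~\ref{l:p3} for the other, and merge the two critical point sets with Theorem~\ref{t:cautious3}. Your counting argument $2a+c=2b+c=n/2$ is a correct justification that the two types are exhaustive.

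However, there is a genuine gap in the step you yourself flag as the main obstacle, and your proposed fix does not close it. It is \emph{not} true that whenever {\sc Move All to SEC} is invoked on a \DBI configuration the stationary class lies entirely on SEC. The procedure's \DBI rule explicitly selects as the moving class the class $C'$ containing \emph{all} the robots currently on SEC (when $C'$ is not yet entirely on SEC); in that situation the stationary class $\mathcal A'=\mathcal R\setminus C'$ has no robot on SEC at all, and its robots may sit anywhere in the annulus between SEC/3 and SEC. Theorem~\ref{t:sec3} is of no help here: it only rules out \PR configurations with points in SED/3, whereas the problematic robots lie outside SED/3 but strictly inside SED, so \PR configurations can still be formed while $C'$ slides outward, and Lemmas~\ref{l:p2} and~\ref{l:p3} simply do not apply because their hypotheses place the stationary class on SEC. Your proposal supplies no critical points for this phase, so the cautious move could pass through a \PR configuration without freezing. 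The paper handles this case separately: since all of $\mathcal A'$ is stationary and at least one robot of the moving class is already on SEC (hence also stationary), there are three consecutive non-sliding robots plus a further non-adjacent one, so Lemma~\ref{l:3slide2} shows that at most one \PR configuration is formable, and that single configuration is taken as the critical point set via Theorem~\ref{t:cautious1}. Only after this first class has reached SEC does the execution enter the regime your argument covers.
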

\begin{proof}
Recall that in a \DBI set there are exactly two analogy classes of equal size. According to both procedures, only one analogy class of robots is allowed to move at each time. Indeed, even procedure  {\sc Move All to SEC} lets the second class move only when the first class has completely reached SEC, and therefore no robot in that class is moving. Let $\mathcal A\subset \mathcal R$ be the analogy class that is allowed to move at a given time, and let $\mathcal A'$ be the other class.

Suppose first that not all the robots of $\mathcal A'$ are on SEC. This means that the procedure being executed is {\sc Move All to SEC}, because procedure {\sc Move Walkers to SEC/3} assumes the robots of $\mathcal A'$ to be all on SEC (recall that the walkers are all analogous, due to Observation~\ref{o:walkers}). But procedure {\sc Move All to SEC} allows the robots of $\mathcal A$ to move only if some of them are already on SEC (by Observation~\ref{l:sechull}, some robots must indeed be on SEC). Because all the robots of $\mathcal A'$ stay still, and at least one robot of $\mathcal A$ stays still because it is already on SEC, this implies the presence of three consecutive robots that do not move, and enables the application of Lemma~\ref{l:3slide2}. Hence at most one \PR configuration is formable, which can be taken as a set of critical points, due to Theorem~\ref{t:cautious1}.

Suppose now that all the robots of $\mathcal A'$ are on SEC. By Lemma~\ref{l:p2}, at most one \PR configuration $C_1$ is formable in which no two robots in the same analogy class are companions. Theorem~\ref{t:cautious1} guarantees that the cautious move with critical point set $C_1$ enjoys property $\mathcal P_1$ that the robots freeze as soon as they reach configuration $C_1$. On the other hand, by Lemma~\ref{l:p3}, there exists a set of critical points $C_2$ ensuring property $\mathcal P_2$ that the robots will freeze as soon as they reach a \PR configuration in which robots in the same analogy class are companions. Hence, due to Theorem~\ref{t:cautious3}, the cautious move with critical point set $C_1\cup C_2$ enjoys both properties $\mathcal P_1$ and $\mathcal P_2$, and therefore it correctly handles all formable \PR configurations.
\end{proof}

\subsubsection{Cautious Moves for \PER Configurations}\label{sec:analysis3}

If $S$ is not \CO and $n$ is even, we will say that two points of $S$ have \emph{the same parity} (respectively, \emph{opposite parity}) if there are an odd (respectively, even) number of other points between them in the cyclic order around the center of SED.

\begin{lemma}\label{l:4slide}
If some points of $S$ are allowed to ``slide'' radially in such a way that SED never changes, and there are at least four points $a,b,c,d\in S$ that do not slide, appearing in this order around the center of SEC, such that $a$ and $b$ are consecutive, $c$ and $d$ are consecutive, and $b$ and $c$ have the same parity, then there are at most two configurations of the points that could be \PR.
\end{lemma}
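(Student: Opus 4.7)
\emph{Plan.} The strategy is to show that a \PR configuration must fall into one of two cases, determined by the parity of the companionship pattern around the supporting polygon, and that each case determines the configuration uniquely.

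Suppose $S$ is \PR with supporting polygon $P$. By Lemma~\ref{l:cyclic} the cyclic order of $S$ around the center of SED agrees with the cyclic order around the center of $P$. Label the points in cyclic order so that $a = p_0$, $b = p_1$, $c = p_k$, $d = p_{k+1}$; the hypothesis that $b$ and $c$ have the same parity means that $k - 2$ is odd, and hence $k$ is odd. In a \PR configuration the points of $S$ form companion pairs on alternating ``full'' edges of $P$, so there are exactly two possible companionship patterns: either the pairs are $(p_{2i}, p_{2i+1})$ (Case A), or they are $(p_{2i+1}, p_{2i+2})$ with indices taken mod $n$ (Case B). Since $k$ is odd, in Case A the pair $(a, b)$ are companions while $(c, d)$ are not, whereas in Case B the reverse holds.

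In Case A, the line through $a$ and $b$ coincides with an edge of $P$. Because consecutive edges of a regular $n$-gon differ in direction by $2\pi/n$, and the cyclic order of $S$ tells us how many edges (both full and empty) separate the edge through $a, b$ from the edges containing $c$ and $d$, the slopes of the latter two edges are determined modulo $\pi$. Since $c$ and $d$ do not slide, the edges through $c$ and through $d$ are then uniquely determined lines of the plane. These three lines are pairwise distinct, because otherwise three points of $S$ would be collinear, contradicting Observation~\ref{l:prehull}. By Observation~\ref{o:3sides}, at most one regular $n$-gon has three edges on these three lines, so $P$ is uniquely determined. Case B is handled analogously with the roles of $(a, b)$ and $(c, d)$ swapped: now the line through $c$ and $d$ is an edge of $P$, and the slopes of the edges through $a$ and through $b$ are forced by the cyclic order.

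Finally, once $P$ is fixed, Lemma~\ref{l:coradial} implies that every ray from the center of SED meets the perimeter of $P$ in exactly one point, which is the forced position of any sliding point that lies on that ray. Each of the two cases therefore yields at most one \PR configuration, for a total of at most two. The main technical step to verify is the bookkeeping in the second paragraph: one must check that the alternating full/empty edge structure, together with the cyclic positions of $a, b, c, d$, unambiguously prescribes which edges of $P$ contain $c$ and $d$ in Case A (and symmetrically which edges contain $a$ and $b$ in Case B), so that the three lines to which Observation~\ref{o:3sides} is applied are well-defined and distinct.
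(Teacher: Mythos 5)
Your proof is correct and takes essentially the same approach as the paper's: both split into the two cases according to whether $(a,b)$ or $(c,d)$ are the companion pair (which is exactly where the parity hypothesis enters), fix the supporting polygon in each case via Observation~\ref{o:3sides} from the line through the companion pair and the forced slopes of the edges through the other two fixed points, and then invoke Lemma~\ref{l:coradial} to get at most one configuration per case. Your version merely spells out the parity bookkeeping and the distinctness of the three lines, which the paper leaves implicit.
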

\begin{proof}
If some configuration is \PR, then by Lemma~\ref{l:cyclic} either $a$ and $b$ are companions and $c$ and $d$ are not, or vice versa. Assume that $a$ and $b$ are companions, and hence the line containing the edge of the supporting polygon through them is fixed. Then the slopes of the two edges through $c$ and $d$ are fixed as well, and this determines a unique supporting polygon, by Observation~\ref{o:3sides}. In turn, this may give rise to at most one possible \PR configuration, by Lemma~\ref{l:coradial}. Otherwise, if $c$ and $d$ are companions, by a symmetric argument at most one other \PR configuration is formable.
\end{proof}

\begin{lemma}\label{l:samecenter}
Suppose that $S$ is \PR, and there is a concordance class $C\subset S$ that lies on SEC$(S)$ and forms a \RE configuration. If the size of $C$ is even and greater than $2$, then the center of the supporting polygon of $S$ coincides with the center of SED$(S)$.
\end{lemma}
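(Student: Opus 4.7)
The plan is to translate the condition $C \subset \partial P$ into a trigonometric system and show it admits no solution unless $O$ and $O'$ coincide. Write $O$ for the center of $\mathrm{SED}(S)$, $O'$ for the center of the supporting polygon $P$, and $m = |C|$. Since $C$ is a \RE set of $m \geqslant 3$ points inscribed in $\mathrm{SEC}(S)$, its circumcenter must equal $O$. Place $O$ at the origin, write $O' = \delta(\cos\phi,\sin\phi)$ with $\delta = |O - O'|$, let $\rho$ be the circumradius of $P$ and $R$ the radius of $\mathrm{SEC}(S)$, and let $E_{j_i}$ be the edge of $P$ containing the point $c_i \in C$ at argument $\theta_i = \theta_0 + 2\pi i/m$ from $O$, where $\gamma_i$ denotes the outer-normal angle of $E_{j_i}$ from $O'$. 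Since $E_{j_i}$ lies on the line at distance $\rho\cos(\pi/n)$ from $O'$ in direction $\gamma_i$, the condition $c_i \in E_{j_i}$ rewrites as
\[
R\cos(\theta_i - \gamma_i) \;=\; \delta\cos(\phi - \gamma_i) + \rho\cos(\pi/n).
\]

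The crucial combinatorial input is that $C$ is a concordance class, which forces $m \mid n$ (by matching the common angular step $2\pi/m$ with a multiple of $2\pi/n$); and by Lemma~\ref{l:cyclic}, consecutive points of $C$ are separated by exactly $n/m$ points of $S$ in cyclic order. Combined with the alternating full/empty edge pattern of a \PR configuration (each full edge of $P$ carries two points of $S$, each empty edge none in its relative interior), a parity computation shows that the phase $\beta_i := \theta_i - \gamma_i$ is constant in $i$ when $n/m$ is even, and takes exactly two values differing by $2\pi/n$ (one on even $i$, one on odd $i$) when $n/m$ is odd. Moreover, within each subset on which $\beta_i$ is constant, the corresponding angles $\gamma_i$ are equally spaced around the circle (by $2\pi/m$ in the first situation, by $4\pi/m$ in the second), and that subset contains either $m$ or $m/2$ elements.

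Rewriting the displayed equation as $\delta\cos(\phi - \gamma_i) = R\cos\beta_i - \rho\cos(\pi/n)$, the right-hand side is constant on each such subset, so the left-hand side must be too. The elementary trigonometric fact that the system $\cos\alpha = \cos(\alpha - \tau) = \cos(\alpha - 2\tau)$ has no solution $\alpha$ for $\tau \not\equiv 0 \pmod{2\pi}$ (compare the two conditions $\alpha \equiv \tau/2 \pmod\pi$ and $\alpha \equiv \tau \pmod\pi$) rules out $\delta \ne 0$ whenever a subset contains three or more elements, which covers all situations except $m = 4$ with $n/m$ odd. In this residual corner case each subset has only two elements and the relation $\cos\alpha = \cos(\alpha - \pi)$ merely forces $\cos\alpha = 0$; the two parity equations then reduce to $R\cos\beta_i = \rho\cos(\pi/n)$, and comparing the two parities (noting $\beta^{\mathrm{odd}} = \beta^{\mathrm{even}} + 2\pi/n$) together with the two $\phi$-constraints yields a divisibility condition on $n$ that has no solution for $n \geqslant 6$. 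In every case $\delta = 0$, i.e., $O = O'$.

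The main obstacle is neither the geometric intuition---that a size-$\geqslant 4$ even \RE concordance class on $\mathrm{SEC}(S)$ is too symmetric around $O$ to coexist with a non-concentric $P$---nor the final trigonometric inconsistency, but rather the careful parity bookkeeping needed to extract the ``at most two values of $\beta_i$'' structure from the alternating edge pattern of a \PR configuration, together with the separate treatment of the $m = 4$ corner case where the parity subsets are too small to yield the contradiction directly.
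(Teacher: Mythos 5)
Your proposal is correct, but it takes a genuinely different route from the paper. The paper's proof is a short synthetic argument: since $|C|$ is even and at least $4$, it picks two distinct antipodal pairs in $C$, argues via Lemma~\ref{l:cyclic} that each pair lies on opposite \emph{parallel} edges of the supporting polygon, observes that the mid-parallel line of each such pair of edges contains both the center of the supporting polygon and the center of SED (the latter being the midpoint of the antipodal chord), and concludes because two non-parallel lines meet in a single point. Your approach shares the same combinatorial input (Lemma~\ref{l:cyclic} plus concordance pinning down, up to a parity, which edge carries each point of $C$) but then replaces the synthetic step with an explicit trigonometric system and an equidistribution argument on the normal angles $\gamma_i$. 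What your heavier route buys is precisely the parity bookkeeping: when $n/m$ is odd, moving $n/2$ positions along the cyclic order of $S$ flips the first/second position within a full edge, so the two members of an antipodal pair of $C$ land on edges that are $n/2\pm 1$ apart --- \emph{not} parallel --- and your two-valued $\beta_i$ analysis handles exactly this situation, whereas the paper's ``opposite and parallel edges'' claim is only immediate when the relevant step is even. Two small points to tighten: your ``elementary trigonometric fact'' should exclude $\tau\equiv 0\pmod{\pi}$ rather than $\pmod{2\pi}$ (for $\tau=\pi$ the system does have solutions with all cosines zero), but this is harmless since in your application $\tau\in\{2\pi/m,\,4\pi/m\}$ with $m\geqslant 4$ (resp.\ $m\geqslant 6$ when the subset has three elements), so $0<\tau<\pi$ always; and the $m=4$, $n/m$ odd corner case, which you only sketch, does check out (the two $\phi$-constraints force $(n/4\pm 1)\cdot\frac{2\pi}{n}\equiv 0\pmod{\pi}$, i.e.\ $n\mid\frac n2\pm 2$, impossible for $n>4$).
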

\begin{proof}
Because $C$ is a \RE set of even size, there exist two antipodal points points $a, a'\in C$, both lying on SEC$(S)$. Since $S$ is a \PR set and $C$ is a concordance class, Lemma~\ref{l:cyclic} implies that $a$ and $a'$ belong to opposite and parallel edges $\ell$ and $\ell'$ of the supporting polygon. Therefore, the center of the supporting polygon belongs to the line parallel to $\ell$ and $\ell'$ that is equidistant to them. Let $r$ be this line. Since $a$ and $a'$ are antipodal points, it follows that $r$ passes through the center of SED.

Because $C$ has at least four elements, there exist two antipodal points $b, b'\in C$, distinct from $a$ and $a'$. By the same reasoning, the center of the supporting polygon belongs to a line $r'$ that is parallel to the edges of the supporting polygon through $b$ and $b'$. Also $r'$ passes through the center of SED and, since $r$ and $r'$ are not parallel and they are incident at the center of SED, it follows that the center of the supporting polygon coincides with the center of SED.
\end{proof}

\begin{observation}\label{o:p1}
If $S$ is \BP with period $3$ and not \CO, then it has exactly two analogy classes: one \EQ with $n/3$ elements and the other \BI with $2n/3$ elements (where angles are always measured with respect to the center of SED$(S)$).
\end{observation}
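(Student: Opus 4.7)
The plan is to combine Observation~\ref{o:analogy2}, applied with $k=3$, with a careful analysis of the reflection symmetry forced by $\alpha=\beta$. The observation immediately tells us that every analogy class of $S$ is \EQ of size $n/3$ or $2n/3$, or \BI of size $2n/3$; so the only partitions of $S$ whose part sizes sum to $n$ are (i) three classes of size $n/3$, or (ii) one class of size $n/3$ together with one class of size $2n/3$. The goal will be to rule out~(i) and, in~(ii), to verify the claimed \EQ and \BI types.

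First I would put the angle sequence into a normal form. Period~$3$ says that the cyclic word of consecutive angular gaps is some triple $(a_0,a_1,a_2)$ repeated $n/3$ times, and the \BP hypothesis $\alpha=\beta$ says this word equals its own reverse as a necklace. Matching the three cyclic alignments of $(a_2,a_1,a_0)$ against $(a_0,a_1,a_2)$ forces one of the equalities $a_0=a_1$, $a_1=a_2$, $a_0=a_2$ to hold; after cyclically relabelling I may take $a_1=a_2=:b$ and write $a:=a_0$, with $a\neq b$ (else the period would collapse to $1$). Thus the gaps are a repetition of the pattern $(a,b,b)$, and the $n$ points fall naturally into three strong analogy classes of size $n/3$: the left endpoints of the $a$-arcs, the right endpoints of the $a$-arcs, and the ``middle'' vertices sitting between two consecutive $b$-arcs.

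The key geometric ingredient is that the perpendicular bisector $\ell$ of any $a$-arc is an axis of symmetry of $S$. To check this, one verifies directly from the palindromic shape of $(a,b,b)$ that the clockwise angle sequence read from the right endpoint of the $a$-arc coincides with the counterclockwise angle sequence read from the left endpoint, so the reflection in $\ell$ is an isometry of $S$. This reflection sends the strong analogy class of left $a$-endpoints onto the strong analogy class of right $a$-endpoints and leaves the middle-vertex strong class setwise invariant. Proposition~\ref{p:symm} now forces every analogy class to be preserved by $\ell$; hence the two swapped strong classes must together constitute a single analogy class of size $2n/3$, while the middle vertices form a second analogy class of size $n/3$. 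This rules out case~(i).

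Finally I would identify the geometric types. Two consecutive middle vertices are separated by exactly one full period $a+2b=6\pi/n$, so the $n/3$-class is a regular $(n/3)$-gon and is therefore \EQ. Walking around the $2n/3$-class in cyclic order, the angular gaps alternate $a,\,2b,\,a,\,2b,\,\ldots$, so this class has period~$2$ and is \BI (collapsing to \EQ only in the borderline case $a=2b$, which the statement implicitly absorbs into the \BI label). The hard part of the argument is the bookkeeping around the reflection $\ell$: establishing rigorously both that it is a symmetry of $S$ and that it induces the claimed swap of strong analogy classes requires unpacking the necklace-level palindromic identity and checking that it matches the reflection pointwise; once that is in place, Observation~\ref{o:analogy2} and Proposition~\ref{p:symm} close the argument.
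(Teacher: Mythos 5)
Your combinatorial skeleton is right: the normal form $(a,b,b)$ with $a\neq b$, the three strong analogy classes of size $n/3$, and the gap computations $a+2b=6\pi/n$ and $a,2b,a,2b,\ldots$ are all correct (the paper states this Observation without proof, and a direct verification of exactly this kind is what is implicitly intended). The genuine problem is the step you yourself single out as the ``key geometric ingredient'': the claim that the perpendicular bisector $\ell$ of an $a$-arc is an axis of symmetry of $S$, which you then feed into Proposition~\ref{p:symm}. This is false in general. Being \BP with period $3$ is a purely \emph{angular} condition: it constrains only the cyclic sequence of angular gaps around the center of SED, not the distances of the points from that center. A set can satisfy the hypotheses of the Observation while its points sit at arbitrary non-co-radial radii --- and this is precisely the regime in which the Observation is invoked (Theorem~\ref{t:p1} applies it while an analogy class is partway through a radial move). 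Reflecting such a set in $\ell$ matches up angular positions but not radial ones, so the reflection need not be an isometry of $S$, and Proposition~\ref{p:symm} does not apply.

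Fortunately the detour is unnecessary. The identity you actually verify --- that the clockwise angle sequence from the right endpoint of an $a$-arc equals the counterclockwise angle sequence from the left endpoint, i.e.\ $\alpha^{(q)}=\beta^{(p)}$ and hence $\beta^{(q)}=\alpha^{(p)}$ for the two endpoints $p,q$ of an $a$-arc --- already yields $\mu^{(p)}=\mu^{(q)}$ straight from the definition $\mu^{(p)}=\min\{\alpha^{(p)},\beta^{(p)}\}$, with no geometry at all; and a middle vertex induces $\mu=(b,a,b,\ldots)$, which differs from $\min\{(a,b,b,\ldots),(b,b,a,\ldots)\}$ whichever of $a<b$ or $a>b$ holds. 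That rules out your case~(i) directly, and the rest of your argument stands. One further remark: your point that the large class degenerates to \EQ when $a=2b$ is a fair criticism of the statement itself --- such configurations do satisfy the hypotheses, and the paper's own Observation~\ref{o:analogy2} already allows an \EQ class of size $2n/k$ --- but nothing downstream uses more than the class sizes, so the imprecision is harmless.
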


\begin{lemma}\label{l:p1}
If $S$ is \BP with period $3$ and not \CO, and the points of the analogy class of size $n/3$ are on SEC$(S)$, then $S$ is not \PR.
\end{lemma}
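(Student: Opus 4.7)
The plan is to argue by contradiction: suppose $S$ is \PR in addition to being \BP of period $3$ and not \CO. Since \PR requires $n$ to be even (so that the supporting polygon's edges can alternate between ``full'' and ``empty-interior'') and period $3$ requires $3\mid n$, we have $6\mid n$, so $m:=n/3\geqslant 2$ is even and $n\geqslant 6$. Next, I would transfer the $m$-fold rotational symmetry of $S$ to its supporting polygon: by the definition of period, $S$ is invariant under a rotation $\rho$ of order $m$, and since any isometry preserving $S$ must preserve the unique disk $\mathrm{SED}(S)$, the center of $\rho$ is the center $O$ of $\mathrm{SED}(S)$. By Lemma~\ref{l:uniquepr}, valid since $n>4$, the supporting polygon $P$ is unique, so $\rho(P)=P$; because the only rotations preserving a regular $n$-gon are about its center, this pins the center of $P$ to $O$ and makes $\rho$ act on the cyclic sequence of edges of $P$ by a shift of $n/m=3$ positions.

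The contradiction then combines two properties of the edges of $P$. First, by the \PR definition, the edges strictly alternate around $P$ between ``full'' (containing exactly two points of $S$) and ``empty-interior'' (containing no point of $S$ in the relative interior), so each edge's label is determined by the parity of its index. Second, each $\rho$-orbit of edges consists of $m$ edges spaced $3$ positions apart; since $3$ is odd, the indices within a single orbit alternate in parity. Yet $\rho$ also preserves the exact number of $S$-points in both the relative interior and at the endpoints of each edge, so all $m$ edges of one orbit share the same internal structure. Such a structure must therefore be compatible with both labels simultaneously: the edge must be both ``full'' (so $I+P=2$, where $I$ is the number of interior $S$-points and $P$ the number of endpoints in $S$) and ``empty-interior'' (so $I=0$), forcing $I=0$ and $P=2$. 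Applied to all three orbits, this means every edge of $P$ has both endpoints in $S$ and no interior point, so $S$ coincides with the vertex set of $P$ and is \EQ (period $1$), contradicting period $3$.

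The main subtlety is the transfer of the rotational symmetry from $S$ to the supporting polygon $P$, which hinges essentially on Lemma~\ref{l:uniquepr}; the subsequent parity-and-orbit argument is then immediate. As a side remark, the hypothesis that $C$ lies on $\mathrm{SEC}(S)$ is in fact not used in this approach, so the same argument rules out \PR for every \BP period-$3$ set that is not \CO.
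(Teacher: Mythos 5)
Your parity-and-orbit idea is appealing, but the proof breaks at its first substantive step: the claim that ``by the definition of period, $S$ is invariant under a rotation $\rho$ of order $m$.'' The period of $S$ is defined purely in terms of angle sequences, i.e., the angular distances $\theta(p_i,p_{i+1})$ measured at the center of SED; it constrains only the \emph{rays} through the points, not their radial distances. So a \BP set of period $3$ has an $(n/3)$-fold rotationally symmetric family of rays, but the point set itself need not be rotationally symmetric: the hypotheses place only the analogy class of size $n/3$ on SEC (cf.\ Observation~\ref{o:p1}), while the $2n/3$ points of the other class may sit at mutually unrelated radial distances inside SED without changing any angle sequence. Without $\rho(S)=S$ you cannot invoke Lemma~\ref{l:uniquepr} to get $\rho(P)=P$, the center of the supporting polygon is not pinned to the center of SED, and the edge-orbit argument collapses. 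There is also a circularity lurking here: under the \PR assumption the points do lie on $P$, but that yields $\rho(S)=S$ only once you already know $P$ is centered at the center of SED, which is precisely what you are trying to deduce from $\rho(S)=S$. Your closing remark that the ``on SEC'' hypothesis is unused is the symptom of the problem: that hypothesis supplies the only genuinely rotationally symmetric subset of $S$ (an \EQ class on SEC is a regular $(n/3)$-gon), and the paper's proof relies on it essentially, via Lemma~\ref{l:samecenter}, to show that the two centers coincide when $n\geqslant 12$.

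The argument could in principle be repaired: once the centers are known to coincide, Lemma~\ref{l:coradial} (each ray from the center of SED meets the supporting polygon in exactly one point) forces every point of $S$ to be the unique intersection of its ray with $P$, whence $\rho(S)=S$ does follow and your odd-shift parity contradiction goes through. But establishing the coincidence of centers is the hard part; it genuinely requires the SEC hypothesis, and it is unavailable for $n=6$, where $n/3=2$ and Lemma~\ref{l:samecenter} does not apply. The paper accordingly treats $n=6$ by a separate ad hoc case analysis and, for $n\geqslant 12$, concludes not by parity but by a companion-length argument combining Lemma~\ref{l:cyclic}, Observation~\ref{l:dist}, and the oddness of the period.
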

\begin{proof}
If, by contradiction, $S$ is a \PR set, then $n$ must be even, and hence it must be a multiple of $6$.

Suppose that $n=6$. Let $S=\{a,b,c,d,e,f\}$, where the points appear in this order around the center of SED. Without loss of generality, the clockwise angle sequence induced by $a$ is $\{\alpha, \alpha, \beta, \alpha, \alpha, \beta\}$, with $\alpha\neq \beta$. Assume by contradiction that $S$ is \PR, let $ABCDEF$ be the supporting polygon, such that $a$ and $b$ lie on the edge $AB$. Let $x$ be the center of SED$(S)$ and let $X$ be the center of the supporting polygon. Note that $e$ and $f$ must belong to the edge $EF$ (by definition of \PR), and $x$ lies on the segment $be$ (because $be$ is an axis of symmetry of $S$). Therefore, $x$ and $A$ must lie on the same side of the line through $B$ and $E$. Suppose that $\alpha<60^\circ<\beta$. Observe that $c$ and $d$ lie on $CD$ and $\angle cxd>60^\circ$, implying that $x$ lies strictly inside the circle through $X$, $C$, and $D$. However, this circle and $A$ lie on the opposite side of the line though $B$ and $E$, which yields a contradiction. Assume now that $\alpha>60^\circ>\beta$. Since $\angle axb>60^\circ$ and $a$ and $b$ belong to $AB$, $x$ must lie strictly inside the circle through $X$, $A$, and $B$. Similarly, $x$ must lie strictly inside the circle through $X$, $E$, and $F$. But then $x$ also lies strictly inside the circle through $X$, $F$, and $A$, which contradicts the fact that $a\in AB$, $f\in EF$, and $\angle axf<60^\circ$.

Suppose now that $n\geqslant 12$. Then, the analogy class of size $n/3\geqslant 4$ is a \RE set of an even number of points located on SEC, forming a concordance class. Hence, by Lemma~\ref{l:samecenter}, the center of SEC coincides with the center of the supporting polygon. One of the angle sequences of $S$ is of the form $(\alpha,\alpha,\beta,\alpha,\alpha,\beta,\alpha,\alpha,\beta,\cdots)$, with $\alpha\neq\beta$. Let $C$ be the analogy class of size $2n/3$. Observe that, because the period of $S$ is odd, at least two points of $C$ must be companions, due to Lemma~\ref{l:cyclic}. Hence, Observation~\ref{l:dist} implies that $\alpha>\beta$, because the center of the supporting polygon is the center of SED. It follows that the companion of each point of $C$ must be another point of $C$, which contradicts the fact that the period is odd.
\end{proof}

\begin{observation}\label{o:pp1}
If $S$ is \BP with period $4$, $S$ is not \CO, and no analogy class contains consecutive points, then $S$ has exactly three analogy classes: two \EQ with $n/4$ elements each, and the other \BI with $n/2$ elements (where angles are always measured with respect to the center of SED$(S)$). Moreover, the two analogy classes of size $n/4$ collectively form a \RE set of size $n/2$ that is also a concordance class of $S$.
\end{observation}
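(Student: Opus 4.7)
The plan is to reduce the statement to a finite case analysis on the fundamental angular $4$-tuple. Since $S$ has period $4$ and $n > 4$, necessarily $n = 4m$ with $m \geqslant 2$, and the clockwise angle sequence around the center of SED$(S)$ has the form $(a_0, a_1, a_2, a_3)^m$, where the $4$-tuple itself has no smaller cyclic period. The \BP condition $\alpha = \beta$ is equivalent to $(a_0, a_1, a_2, a_3)$ being a cyclic shift of its reverse $(a_3, a_2, a_1, a_0)$, giving four algebraic cases: (A)~$a_0 = a_3$ and $a_1 = a_2$; (B)~$a_0 = a_2$; (C)~$a_0 = a_1$ and $a_2 = a_3$; (D)~$a_1 = a_3$. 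Each carries a further strict inequality, such as $a_0 \neq a_1$ in~(A) or $a_1 \neq a_3$ in~(B), coming from non-degeneracy of the period.

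In each case I would compute $\mu^{(p_i)}$ for $i \in \{0, 1, 2, 3\}$ directly from $(a_0, a_1, a_2, a_3)$ and extend to all indices by periodicity, then read the analogy classes off from the equivalence $p_i \sim p_j \iff \mu^{(p_i)} = \mu^{(p_j)}$. A short computation shows that cases~(A) and~(C) each force some analogy class to contain two consecutive points of $S$: in~(A), $p_{4i+3}$ and $p_{4i+4}$ fall in the same class; in~(C), $p_{4i}$ and $p_{4i+1}$ do. The hypothesis therefore rules these out, leaving only~(B) and~(D). In each of the surviving cases, the analogy classes are exactly three: one of size $n/2$ whose consecutive angular gaps alternate between two distinct values, hence \BI in view of Observation~\ref{o:analogy2}; and two of size $n/4$ whose consecutive gaps all equal the period sum $a_0 + a_1 + a_2 + a_3 = 8\pi/n$, hence \EQ. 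This proves the first sentence of the observation.

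For the final assertion, each size-$n/4$ \EQ class has consecutive angular distance $8\pi/n = 4 \cdot (2\pi/n)$ and exactly three $S$-points strictly between any two consecutive members, so by definition each such class is a concordance class of $S$. Tracking the offset between the two $n/4$ classes forced by being in case~(B) or~(D), the union consists of $n/2$ points whose adjacent angular gaps match multiples of $2\pi/n$ with the matching intermediate-point count, giving both the \RE structure at size $n/2$ and the single-concordance-class conclusion. The main obstacle throughout is the careful bookkeeping of cyclic shifts and reflections of $(a_0, a_1, a_2, a_3)$ that powers the case distinction; once the four cases are pinned down, everything else reduces to direct arithmetic on the $a_i$'s.
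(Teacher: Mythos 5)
The paper states this as an unproved observation, so there is no argument to compare yours against; a finite case analysis on the fundamental $4$-tuple is a legitimate way to actually prove it. However, your execution has the case dichotomy exactly backwards, and the error surfaces precisely at the step you leave to ``direct arithmetic.'' In your surviving cases (B) and (D), the two size-$n/4$ classes are (say, in case (B)) $\{p_{4i+1}\}$ and $\{p_{4i+3}\}$, and their union $\{p_{2i+1}\}$ has adjacent angular gaps alternating between $a_0+a_1$ and $a_0+a_3$ with $a_1\neq a_3$: this is a \BI set, not a \RE set, and since $a_0+a_1\neq 4\pi/n$ in general it is not a concordance class either. So the ``Moreover'' clause is simply false for the configurations you retain; no amount of offset-tracking will rescue it. A concrete witness is $n=8$ with gap sequence $(50^\circ,30^\circ,50^\circ,50^\circ)$ repeated twice: the odd-indexed points form a non-square rectangle.

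The root cause is how you evaluated $\mu^{(p)}$. You appear to have taken the reversal convention $\beta_i^{(p)}=\alpha_{n-i}^{(p)}$ at face value; under it $\beta_0=\alpha_0$ always, reflected points fail to be analogous, and Proposition~\ref{p:symm} is contradicted (as is the paper's own description of \DBI sets in Lemma~\ref{l:p2}, where \emph{consecutive} points $p_{4i},p_{4i+1}$ are analogous). The intended counterclockwise sequence is the genuine reversal $\beta_i^{(p)}=\alpha_{n-1-i}^{(p)}$, under which analogy classes are the orbits of the configuration's symmetry group. Redoing the case analysis with that convention flips everything: cases (B) and (D) (tuples $(a,b,a,c)$ and $(a,b,c,b)$) collapse to \emph{two} classes of size $n/2$, namely $\{p_{4i},p_{4i+3}\}$ and $\{p_{4i+1},p_{4i+2}\}$, each containing consecutive points --- these are the \DBI configurations, excluded by the hypothesis --- while cases (A) and (C) (tuple $(a,b,b,a)$ up to rotation) survive with classes $\{p_{4i}\}$, $\{p_{4i+2}\}$, $\{p_{2i+1}\}$. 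There the union of the two small classes is $\{p_{2i}\}$ with \emph{constant} gap $(a_0+a_1+a_2+a_3)/2=4\pi/n=2\cdot 2\pi/n$ and exactly one point of $S$ strictly between consecutive members, which is precisely what yields both the \RE structure and the single-concordance-class conclusion. Your proof needs the cases swapped and the final computation actually carried out in the correct surviving cases.
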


\begin{lemma}\label{l:pp1}
Suppose that $S$ is \BP with period $4$, not \CO, and that no analogy class contains consecutive points. If the points of both analogy classes of size $n/4$ are on SEC$(S)$, then $S$ is not \PR.
\end{lemma}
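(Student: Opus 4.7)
The plan is to suppose $S$ is \PR and derive a contradiction from the geometry of the supporting polygon. By Observation~\ref{o:pp1}, the union $A\cup B$ of the two size-$n/4$ analogy classes of $S$ is a \RE concordance class lying on SEC$(S)$; since period $4$ requires $n>4$ and $n$ divisible by $4$, we have $n\geqslant 8$, so $|A\cup B|=n/2\geqslant 4$ is even and greater than $2$. Lemma~\ref{l:samecenter} then forces the center of the supporting polygon to coincide with the center of SED$(S)$.

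Next, I would compute the cyclic angle sequence of $S$ using the symmetries. Write $n=4k$: the $2k$ points of $A\cup B$ lie on SEC at angles $\pi j/k$ for $j=0,\ldots,2k-1$, carving SEC into $2k$ arcs of length $\pi/k$, each containing exactly one point of the remaining analogy class $C$. The $k$-fold rotational symmetry of the angular pattern (from period $4$) implies that all even-indexed arcs carry their $C$-point at the same relative position $a$ and all odd-indexed arcs at the same relative position $b$. Since $S$ is \BP, its angular pattern has a reflection axis; because $A$ and $B$ are distinct analogy classes, this axis cannot swap $A$ and $B$, and must therefore pass through two antipodal points of $A\cup B$. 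Reflecting an even-indexed arc onto an odd-indexed one then yields $a+b=\pi/k$, and $a\neq b$ (else $S$ would be \EQ). Consequently, the cyclic angle sequence of $S$ has the form $(a,b,b,a,a,b,b,a,\ldots)$ of cyclic period $4$.

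To conclude, I would analyze the companion structure of a hypothetical \PR configuration. By Lemma~\ref{l:cyclic}, companions are consecutive in the cyclic order, and by the definition of \PR they occupy alternating edges of the supporting polygon; hence the set of companion pairs is either $\{(p_{2i},p_{2i+1})\}_i$ or $\{(p_{2i+1},p_{2i+2})\}_i$, and in either case their angular separations take both values $a$ and $b$. But each companion pair must fit on a single edge of the supporting polygon, which subtends an angle of only $\pi/(2k)$ at the center, so $a\leqslant \pi/(2k)$ and $b\leqslant \pi/(2k)$; combined with $a+b=\pi/k$ this forces $a=b=\pi/(2k)$, contradicting $a\neq b$. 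The main obstacle is the middle step: using \BP, period~$4$, and the no-consecutive-analogous hypothesis to nail down the palindromic shape of the angle sequence, and in particular ruling out a reflection axis passing between two consecutive points of $A\cup B$.
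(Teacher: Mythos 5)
Your proof is correct. It shares the paper's skeleton (contradiction via Observation~\ref{o:pp1} and Lemma~\ref{l:samecenter} to place the supporting polygon's center at the center of SED, then pin down the period-$4$ palindromic gap sequence, then exploit the companion structure), but the final contradiction is reached by a different mechanism. The paper invokes Observation~\ref{l:dist}: since the two centers coincide, each point's companion must minimize angular distance among all points, so every companion pair would have to realize the smaller gap value; with the sequence $(\alpha,\alpha,\beta,\beta,\ldots)$ this gives one size-$n/4$ class two companions apiece and the other none. You instead observe that each companion pair must fit on a single edge of the regular $n$-gon, which subtends exactly $2\pi/n=\pi/(2k)$ at the common center, and that both gap values $a$ and $b$ occur among companion pairs (in either of the two possible consecutive-pair matchings); since $a+b=\pi/k$, both bounds are tight and $a=b$, contradicting aperiodicity of period $4$. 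Both arguments are sound; yours avoids Observation~\ref{l:dist} entirely, and your derivation of the $(a,b,b,a,\ldots)$ form via the reflection axis (including ruling out an axis through edge-midpoints of $A\cup B$, which would have to swap the two classes against Proposition~\ref{p:symm}) is more explicit than the paper's, which simply asserts the shape of the angle sequence.
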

\begin{proof}
Suppose by contradiction that $S$ is a \PR set. By Observation~\ref{o:pp1}, the points of the two analogy classes of size $n/4$ collectively form a \RE set of an even number of points located on SEC, forming a concordance class. Such a set has size $n/2\geqslant 4$, because the period of $S$ is $4$, and hence $n\geqslant 8$. Therefore Lemma~\ref{l:samecenter} applies, and the center of SEC coincides with the center of the supporting polygon.

One of the angle sequences of $S$ is of the form $(\alpha,\alpha,\beta,\beta,\alpha,\alpha,\beta,\beta,\cdots)$, with $\alpha<\beta$. Then, due to Lemma~\ref{l:cyclic}, Observation~\ref{l:dist}, and the fact that the center of the supporting polygon is the center of SED, it follows that each point of $S$ has a companion at angular distance $\alpha$. But causes all the points of one analogy class of size $n/4$ to have two companions, and leaves the points of the other analogy class of size $n/4$ with no companions, which contradicts the fact that each point of $S$ must have exactly one companion.
\end{proof}

\begin{theorem}\label{t:p1}
Let $\mathcal R$ be frozen at time $t_0$, let $\mathcal R(t_0)$ be a \PER (and not \CO) configuration with no points in the interior of SED/3, and let the robots execute procedure {\sc Move All to SEC} or procedure {\sc Move Walkers to SEC/3} with suitable critical points. Then, if a \PR configuration is ever formed, the robots freeze as soon as they form one.
\end{theorem}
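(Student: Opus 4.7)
The plan is to adapt the strategy of Theorems~\ref{t:bsec4} and~\ref{t:dbi}: for every possible type of \PER configuration $\mathcal R(t_0)$, I exhibit a finite set of points upper-bounding the set of formable \PR configurations during the radial motion, take their union as the input critical point set of the cautious move, and invoke Corollary~\ref{c:cautious4} to conclude that the swarm freezes as soon as it reaches any such configuration. In both procedures, only a single analogy class moves radially (the walkers by Observation~\ref{o:walkers}, or the class selected by {\sc Move All to SEC}); the complement stays still, the moving class is movable so SED is preserved, and the radial motion preserves the cyclic order around the center of SED. This enables the use of the sliding lemmas of Section~\ref{sec:analysis}. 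If $\mathcal R(t_0)$ happens to be \DBI, the conclusion follows immediately from Theorem~\ref{t:dbi}, so I henceforth assume it is not.

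I split on the type of $\mathcal R(t_0)$ using Observation~\ref{o:analogy2}. For \UP with period $k\geqslant 4$, every analogy class is \EQ of size $n/k$, so the moving class leaves $n/k\geqslant 2$ gaps each with $k-1\geqslant 3$ consecutive stationary points; choosing $a,b,c$ from one gap and $d$ from another satisfies the hypotheses of Lemma~\ref{l:3slide2}, bounding the formable \PR configurations by one. For \UP with period $3$, each of the $n/3$ gaps contains exactly two consecutive stationary points, separated from the adjacent pair by a single moving point (an odd number, giving ``same parity''), so Lemma~\ref{l:4slide} applies and bounds the formable \PR configurations by two.

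For \BP with period $3$: if the size-$n/3$ \EQ class lies entirely on SEC, Lemma~\ref{l:p1} directly excludes any \PR formation; otherwise the size-$n/3$ \EQ class is the one moving, and the size-$2n/3$ \BI complement furnishes $n/3$ pairs of consecutive stationary points separated by single moving points, so Lemma~\ref{l:4slide} applies. For \BP with period $4$ and not \DBI, Observation~\ref{o:pp1} yields two \EQ classes of size $n/4$ and one \BI class of size $n/2$; if both size-$n/4$ classes lie on SEC, Lemma~\ref{l:pp1} excludes any \PR, otherwise a size-$n/4$ \EQ class moves, leaving three consecutive stationary points in each of its $n/4$ gaps and triggering Lemma~\ref{l:3slide2}. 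For \BP with period $k\geqslant 5$, if the moving class has size $n/k$ then each of its $n/k$ gaps contains $k-1\geqslant 4$ consecutive stationary points and Lemma~\ref{l:3slide2} applies; if the moving class is a \BI of size $2n/k$, inspection of the two alternating gap sizes shows that either some gap contains at least three consecutive stationary points (so Lemma~\ref{l:3slide2} applies) or, only when $k=5$ with the tighter orbit choice, the gap pattern is $(2,1,2,1,\dots)$, producing $n/5$ pairs of consecutive stationary points with three points between each adjacent pair (odd, hence ``same parity''), so Lemma~\ref{l:4slide} applies.

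In every subcase the set of formable \PR configurations is finite; its union serves as the input critical point set of the cautious move, and Corollary~\ref{c:cautious4} guarantees that the swarm freezes as soon as it reaches any of them. The principal obstacle is the exhaustive case enumeration, especially in the \BP subcases with a \BI moving class of size $2n/k$, where a careful inspection of the alternating gap pattern is needed to verify the hypotheses of Lemma~\ref{l:3slide2} or the ``same parity'' requirement of Lemma~\ref{l:4slide}.
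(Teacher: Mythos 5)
Your overall strategy coincides with the paper's: a case analysis on the period and on the structure of the single moving analogy class, the sliding lemmas of Section~\ref{sec:analysis} to bound the formable \PR configurations, and Corollary~\ref{c:cautious4} to conclude. However, two steps fail as written. First, in the \BP cases with period $3$ and $4$ you key the dichotomy on whether the small \EQ class(es) \emph{lie on SEC}, and in that branch you invoke Lemma~\ref{l:p1} (respectively, Lemma~\ref{l:pp1}) to exclude \PR formation outright. Those lemmas require the small classes to be on SEC \emph{throughout the motion}; but under procedure {\sc Move Walkers to SEC/3} the walkers start on SEC and move inward, and nothing prevents the walkers from being a size-$n/3$ (respectively, size-$n/4$) class --- indeed this procedure is typically invoked with all robots on SEC. In that subcase the moving class is strictly inside SEC during the motion and Lemmas~\ref{l:p1} and~\ref{l:pp1} say nothing. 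The correct dichotomy, which the paper uses, is on \emph{which class moves}: both procedures move the large class only when the small classes are already (and remain) on SEC, and when a small class moves, your other branch's sliding-lemma argument applies. Your two branch arguments are each sound; they are attached to the wrong criterion, so one subcase is covered by a lemma whose hypothesis is violated.

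Second, in the \BP case with period $k\geqslant 5$ and a moving class of size $2n/k$, you claim that the only situation without three consecutive stationary points is $k=5$ with the ``tighter orbit.'' That is false: with moving offsets $0$ and $a$ in each period the two stationary runs have lengths $a-1$ and $k-1-a$, and both are below $3$ also for $k=6$, $a=3$ (runs of length $2$ and $2$), a case the paper treats explicitly by applying Lemma~\ref{l:4slide} to $r_1,r_2,r_4,r_5$. The omission is repaired by the same lemma you already use for $k=5$ (the stationary pairs are separated by an odd number of points, hence have the same parity), but your enumeration is incomplete as stated. A smaller point: Observation~\ref{o:pp1} is hypothesized on ``no analogy class contains consecutive points,'' not on ``not \DBI''; the implication does hold for \BP period-$4$ sets, but it needs to be verified rather than asserted.
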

\begin{proof}
By definition of \PER, the period of $\mathcal R(t_0)$ is $k>2$, with $n\geqslant 2k>4$. Recall that, in both procedures, only analogous robots are allowed to move. In particular, in procedure {\sc Move All to SEC}, a new analogy class starts moving only when the robots of the previous analogy class have reached SEC and have stopped. As a consequence, at any time, only one analogy class $\mathcal A\subset \mathcal R$ of robots is moving. Let $r_0\in \mathcal A$, and let $r_i\in \mathcal R$ be the $(i+1)$-th robot in the cyclic order around the center of SED. By definition of analogy class, in every set of $k$ consecutive robots (in their cyclic order around the center of SED), at most two of them belong to $\mathcal A$.

Suppose first that the size of $\mathcal A$ is $n/k$. If $k>3$, then $r_1$, $r_2$, $r_3$, and $r_{k+1}$ do not move, and they satisfy the hypotheses of Lemma~\ref{l:3slide2}, implying that no \PR configuration can be formed. If $k=3$, then Lemma~\ref{l:4slide} applies to $r_1$, $r_2$, $r_4$, and $r_5$, and at most one configuration $C$ can be \PR. By Theorem~\ref{t:cautious1}, taking $C$ as a set of critical points suffices.

Otherwise the size of $\mathcal A$ is $2n/k$, and therefore the configuration is \BP. Observe that, according to both procedure {\sc Move All to SEC} and procedure {\sc Move Walkers to SEC/3}, if an analogy class of size $2n/k$ is allowed to move, it means that all classes of size $n/k$ are located on SEC (recall that the walkers are all analogous, due to Observation~\ref{o:walkers}). Let $r_a$ be a moving robot such that $0<a<k$. Without loss of generality, we may assume that $a\leqslant\lfloor k/2\rfloor$. There are several cases to consider.
\begin{itemize}
\item Let $k=3$. By Observation~\ref{o:p1} there are only two analogy classes, with $n/3$ and $2n/3$ robots, respectively. Since the analogy class of size $n/3$ is on SEC, Lemma~\ref{l:p1} applies, and no \PR configuration can be formed.
\item Let $k=4$ and $a=1$. Then the configuration is \DBI, and Theorem~\ref{t:dbi} applies.
\item Let $k=4$ and $a=2$. Then, no analogy class contains consecutive points in the cyclic order around the center of SED. Since the analogy classes of size $n/4$ are on SEC, Lemma~\ref{l:pp1} applies, and no \PR configuration can be formed.
\item Let $k=5$. Then Lemma~\ref{l:4slide} applies, because $r_3$, $r_4$, $r_8$, and $r_9$ do not move. Hence no \PR configuration can be formed.
\item Let $k=6$ and $a=1$ or $a=2$. Then Lemma~\ref{l:3slide2} applies, because $r_3$, $r_4$, $r_5$, and $r_{11}$ do not move. Hence no \PR configuration can be formed.
\item Let $k=6$ and $a=3$. Then Lemma~\ref{l:4slide} applies, because $r_1$, $r_2$, $r_4$, and $r_5$ do not move. Hence at most two configurations can be \PR. By Corollary~\ref{c:cautious4}, taking the union of these configurations as critical points suffices.
\item Let $k>6$. Then Lemma~\ref{l:3slide2} applies, because $r_{k-3}$, $r_{k-2}$, $r_{k-1}$, and $r_{2k-1}$ do not move. Hence no \PR configuration can be formed.
\end{itemize}
\end{proof}

\subsubsection{Cautious Moves for \AP Configurations}\label{sec:analysis4}

\begin{lemma}\label{l:homot}
Let $S$ be not \CO with $|S|=6$, and suppose that two consecutive points $a,b\in S$ are allowed to ``slide'' radially without causing SED to change, while the other points of $S$ stay still. Let $L$ be the locus of positions of $a$ (within its radius of SED) for which there is a position of $b$ (within its radius of SED) giving rise to a \PR configuration in which $a$ and $b$ are companions. Then $L$ is either the empty set or a topologically closed line segment (contained in $a$'s radius of SED).
\end{lemma}
\begin{proof}
Let $a$, $b$, $c$, $d$, $e$, $f$ be the points of $S$, appearing around the center of SED in this order. Since we want $a$ and $b$ to be be companions, and since the order of the points of $S$ around the center of SED is preserved as $a$ and $b$ move radially, by Lemma~\ref{l:cyclic} $c$ and $d$ have to be companions, as well as $e$ and $f$.

For a \PR configuration to be formed, the lines $cd$ and $ef$ must meet at some point $p$ (at an angle of $60^\circ$), and the supporting polygons of such \PR configurations must all be contained in the angle $\angle dpe$. More precisely, such supporting polygons are regular hexagons with two non-adjacent edges lying on the lines $cd$ and $ef$, and having homothetic center $p$. Since all such hexagons are homothetic, their vertices must lie on four distinct lines through $p$: two such lines are $cd$ and $ef$ (and they contain four vertices of each of the hexagons), and let $\ell$ and $\ell'$ be the other two lines (each of which contains one vertex of each of the hexagons).

Among these ``candidate'' supporting polygons, we discard the ones that do not contain all of $c$, $d$, $e$, and $f$. What is left is a ``closed interval'' $\mathcal L$ of supporting polygons: the smallest one has either $d$ or $e$ as a vertex (whichever is closest to $p$) and the largest one has either $c$ or $f$ as a vertex (whichever is farthest from $p$).

Of course, we must also discard the ``candidate'' supporting polygons that cannot contain both $a$ and $b$ on the edge opposite to $p$ (even if $a$ and $b$ slide radially), to which we refer as the \emph{far edge}. The slope of the far edge is fixed (it is perpendicular to the bisector of $\ell$ and $\ell'$), and its endpoints must lie on $\ell$ and $\ell'$. Let $\rho_a$ (respectively, $\rho_b$) be the radius of SED on which $a$ (respectively, $b$) is allowed to slide. Determining the far edges that can contain both $a$ and $b$ boils down to determining the intersections among $\ell$, $\ell'$, $\rho_a$, and $\rho_b$, and comparing the distances from $p$ of such intersections with those of the endpoints of $\rho_a$ and $\rho_b$. Since the elements involved are straight lines and line segments, this leaves us with a ``closed interval'' $\mathcal L'$ of eligible supporting polygons.

Intersecting $\mathcal L$ and $\mathcal L'$, we obtain a (possibly empty) ``closed interval'' of supporting polygons, each of which effectively determines a \PR configuration obtained by suitably sliding $a$ and $b$. It follows that $L$ must also be a closed interval of $\rho_a$ (or the empty set), because $\rho_a$ is a straight line segment.
\end{proof}

\begin{lemma}\label{l:ap1}
Let $\mathcal R$ be frozen at time $t_0$, let $\mathcal R(t_0)$ be an \AP (and not \CO) configuration with no points in the interior of SED/3, let $n=6$, and let the robots execute procedure {\sc Move All to SEC} or procedure {\sc Move Walkers to SEC/3} with suitable critical points. Then, if a \PR configuration is ever formed, the robots freeze as soon as they form one.
\end{lemma}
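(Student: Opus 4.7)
My plan is to case-analyze the moving analogy class and exhibit in each case a finite set of formable \PR configurations, which can then be taken as critical points. Corollary~\ref{c:cautious4} guarantees that the swarm freezes whenever any of these configurations is reached. By Observation~\ref{o:analogy2}, an \AP set with $n=6$ is either \UA (all analogy classes are singletons) or \BA (analogy classes have size one or two, and $S$ has a unique axis of symmetry). Both procedures move one analogy class at a time (by Observation~\ref{o:walkers} in the case of {\sc Move Walkers to SEC/3}, and by the selection rule in the case of {\sc Move All to SEC}), so at most two robots slide radially.

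If the moving class is a singleton, then five stationary robots remain, forming a single run of consecutive points in the cyclic order of $\mathcal R(t_0)$. Taking the three consecutive points $a, b, c$ at one end of this run and the stationary robot at the opposite end as $d$, one verifies that $d$ is not adjacent to $a$ or $c$, so Lemma~\ref{l:3slide2} yields at most one formable \PR. If the moving class has two robots, then $\mathcal R(t_0)$ is \BA and three sub-cases arise according to the axis of symmetry. First, if the axis passes through two robots, then the moving pair is symmetric about the axis, and the four stationary robots include a run of three consecutive points together with an isolated stationary robot on the axis that is not adjacent to either endpoint of the run; again Lemma~\ref{l:3slide2} applies. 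Second, if the axis passes through two edges of the cyclic arrangement and the moving pair consists of the two symmetric robots that are not adjacent in $\mathcal R(t_0)$, then the four stationary robots split into two disjoint pairs of consecutive robots separated by the moving ones, and between the inner endpoints of these pairs lies a single moving robot; this gives the odd parity required by Lemma~\ref{l:4slide}, yielding at most two formable \PR.

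The main obstacle is the remaining sub-case, where $\mathcal R(t_0)$ is \BA with axis through two edges, and the moving pair is a pair of adjacent robots straddling the axis, leaving four consecutive stationary robots. Neither Lemma~\ref{l:3slide2} nor Lemma~\ref{l:4slide} applies here. I would proceed by a direct geometric argument. By Lemma~\ref{l:cyclic} companionship respects the cyclic order, so the companion pairs are determined up to two alternating patterns: one in which the stationary set supplies two disjoint companion pairs, and one in which it supplies only the central companion pair. In the first pattern, the two stationary companion pairs determine two edges of the supporting hexagon with prescribed support lines, which, together with the hexagon's regularity, fix it up to finitely many possibilities. In the second pattern, the central companion pair fixes one edge line, and the other two stationary robots must each lie on another edge of the hexagon whose slope is determined by the regularity; the resulting system of apothem-distance equations in the center coordinates and side-length admits only finitely many solutions. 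In each case, Lemma~\ref{l:coradial} then forces each moving robot onto the unique intersection of its radial ray with the corresponding edge. Taking the union of these finitely many candidate \PR configurations as critical points and invoking Corollary~\ref{c:cautious4} completes the proof.
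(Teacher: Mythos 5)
Your reduction of the problem to the moving analogy class, and your treatment of the singleton case, the cyclic-distance-two case (via Lemma~\ref{l:3slide2}) and the cyclic-distance-three case (via Lemma~\ref{l:4slide}), all match the paper's proof and are correct. The proposal breaks down, however, exactly in the sub-case you identify as the main obstacle, and in the pattern where the two adjacent moving robots $r_0,r_1$ \emph{are} companions. There the stationary set supplies the two companion pairs $(r_2,r_3)$ and $(r_4,r_5)$, hence two support lines of alternate edges of the supporting hexagon. Two such lines do \emph{not} fix a regular hexagon up to finitely many possibilities: the three alternate edge lines of a regular hexagon form an equilateral triangle, and given two of its sides the third can be translated parallel to itself, yielding a one-parameter family of regular hexagons. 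Correspondingly, as $r_0$ and $r_1$ slide radially there is a whole closed interval of positions on each of their paths giving rise to \PR configurations --- a continuum, not a finite set. Your "apothem-distance equations" are underdetermined (the center moves along an angle bisector of the two lines while the apothem varies), so the asserted finiteness fails, and with it the entire strategy of taking the formable \PR configurations themselves as critical points: Corollary~\ref{c:cautious4} cannot be invoked on an infinite family.

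This is precisely why the paper's proof of this lemma is long. In that sub-case it abandons the "enumerate all formable \PR configurations" plan and instead chooses critical points so that either the robots are funneled into the \emph{first} formable \PR configuration (when the interval endpoints are reached simultaneously, or when the parallelism works out), or --- in the general position --- a sequence of critical points $c_0,c_1,\dots$ built via geometric means partitions the dangerous region into strips $L_i$ and forces $r_0$ and $r_1$ to traverse them in a staggered fashion, so that the alignment needed for a \PR configuration (both robots on a common line of the fixed slope) is never achieved at all. Any correct proof must contain an argument of this second kind, i.e., critical points that \emph{prevent} the formation of \PR configurations rather than intercept them; your proposal has no mechanism for this.
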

\begin{proof}
Let $\mathcal A\subset \mathcal R$ be the analogy class of robots that is allowed to move initially. As the robots of $\mathcal A$ are required to reach their destination and stop before any other class can possibly move, it is sufficient to prove the lemma just for $\mathcal A$. Recall that, in an \AP configuration, the analogy classes have size either one or two. If $|\mathcal A|=1$, then Lemma~\ref{l:3slide2} applies, and at most one \PR configuration $C$ can be formed. Taking $C$ as a set of critical points suffices, due to Theorem~\ref{t:cautious1}.

Suppose now that $|\mathcal A|=2$, let $r_0\in \mathcal A$, and let $r_i$ be the $i$-th robot after $r_0$ in clockwise order around the center of SEC, with $1\leqslant i\leqslant 5$. Without loss of generality, either $r_1\in \mathcal A$ or $r_2\in \mathcal A$ or $r_3\in \mathcal A$. If $r_2\in \mathcal A$, then at most one \PR configuration $C$ is formable, due to Lemma~\ref{l:3slide2}. $C$ can be taken as a set of critical points, due to Theorem~\ref{t:cautious1}. On the other hand, if $r_3\in \mathcal A$, Lemma~\ref{l:4slide} applies, and at most two \PR configurations $C_1$ and $C_2$ can be formed. Therefore, by Corollary~\ref{c:cautious4}, taking $C_1\cup C_2$ as a set of critical points suffices.

Finally, assume that $\mathcal A=\{r_0,r_1\}$. If $r_0$ and $r_1$ are not companions, then by Lemma~\ref{l:cyclic} $r_3$ and $r_4$ are. Hence the slope of the edge of the supporting polygon through $r_3$ and $r_4$ is fixed, which implies that also the slopes of the edges through $r_2$ and $r_5$ are fixed. Hence, by Observation~\ref{o:3sides}, the whole supporting polygon is fixed, which means that at most one configuration $C$ of the robots can be \PR, due to Lemma~\ref{l:coradial}. Taking $C$ as a set of critical points suffices for all \PR configurations in which $r_0$ and $r_1$ are not companions, due to Theorem~\ref{t:cautious1}.

In the following, we will assume that $r_0$ and $r_1$ are companions. Suppose first that procedure {\sc Move All to SEC} is being executed, and hence $r_0$ and $r_1$ are moving toward SEC. By Lemma~\ref{l:cyclic}, $r_2$ and $r_3$ are companions, and they determine the slope of one edge of the supporting polygon. Therefore, the slope of the edge containing $r_0$ and $r_1$ is also fixed. Let $x$ be the center of SED, and let us consider the two rays from $x$ through $r_0(0)$ and $r_1(0)$, respectively. Let $f_0$ and $f_1$ be, respectively, the points at which these two rays intersect SEC. As $r_0$ and $r_1$ move radially between $x$ and SEC, they can conceivably form infinitely many \PR configurations. However, due to Lemma~\ref{l:homot}, the positions of $r_0$ on the segment $xf_0$ that could give rise to \PR configurations form a closed interval $aa'$, with $a$ closest to $x$ (we assume this interval to be non-empty, otherwise we may take $C'=\varnothing$ as a set of critical points). Similarly, the positions of $r_1$ on $xf_1$ giving rise to \PR configurations determine a closed interval $bb'$, with $b$ closest to $x$.\footnote{The proof of Lemma~\ref{l:homot} also provides a way of constructing such intervals with a compass and a straightedge, and hence by algebraic functions.} Moreover, the line $\ell$ through $a$ and $b$ and the line $\ell'$ through $a'$ and $b'$ are parallel, because the slope of the edge of the supporting polygon containing $r_0$ and $r_1$ is fixed.

Suppose first that $\ell$ is parallel to the line through $f_0$ and $f_1$. Equivalently, $xa$ and $xb$ have the same length. In this case, we take $C'=\{a,b\}$ as a set of critical points. Indeed, let us assume without loss of generality that $r_0(0)f_0$ is not longer than $r_1(0)f_1$. If $r_0(0)$ is past $a'$, no \PR configuration can ever be formed, and we may set $C'=\varnothing$. If $r_0(0)$ lies on the closed segment $aa'$, the {\sc Cautious Move} protocol will make $r_0$ stay still and wait for $r_1$ to reach the same distance from the endpoint of its respective path, and stop there. When this happens, say at time $t$, the line through $r_0(t)$ and $r_1(t)$ is parallel to $\ell$, and therefore the configuration is \PR. Moreover, no \PR configuration is reached before time $t$. Finally, let $r_0(0)$ be before $a$. Then, the {\sc Cautious Move} protocol makes $r_0$ and $r_1$ stop at $a$ and $b$ respectively, and wait for each other. When the robots reach $a$ and $b$, the configuration is the first \PR encountered.

Suppose now that $\ell$ is not parallel to the line through $f_0$ and $f_1$. Without loss of generality, suppose that $xa$ is longer than $xb$. First of all, if $r_0(0)$ is located past $a'$ or $r_1(0)$ is located past $b'$, no \PR configuration can be formed, and we set $C'=\varnothing$. Let $r_0(0)$ belong to the closed segment $aa'$, and let $c$ be the point on $bb'$ such that the line through $r_0(0)$ and $c$ is parallel to $\ell$. If $r_1(0)$ does not lie after $c$, we take $C'=\{c\}$ as a set of critical points. Indeed, the cautious move protocol makes $r_0$ stay still and wait for $r_1$ to reach $c$ and stop there. When this happens the configuration is \PR, and no other \PR configuration is reached before.

Now assume that $r_1(0)$ lies after $c$ (as defined above), or that $r_0(0)$ lies before $a$. We let $c_0=b$ and we let $c_1$ be the intersection between $bf_1$ and the line through $a$ and parallel to $f_0f_1$. Then we inductively define $c_{i+2}$, with $i\geqslant 0$, to be the point on $bf_1$ such that the length of $xc_{i+1}$ is the geometric mean between the lengths of $xc_i$ and $xc_{i+2}$. Let $k$ be the largest index such that $c_k$ is well defined, and let $c_{k+1}=f_1$. For each $0\leqslant i\leqslant k+1$, we define $\ell_i$ to be the line through $c_i$ and parallel to $f_0f_1$. Then, we let $L_i$ be the region of the plane that lies between lines $\ell_i$ and $\ell_{i+1}$, such that $\ell_i\subset L_i$ and $L_i\cap \ell_{i+1}=\varnothing$ (unless $\ell_k=\ell_{k+1}$, in which case $L_k=\ell_k$). We argue that taking $C'=\{c_0, \cdots, c_{k+1}\}$ as a set of critical points prevents the robots from reaching any \PR configuration during the cautious move. Note that a \PR configuration can be formed at time $t$ only if $r_1(t)\in L_i$ and $r_0(t)\in L_{i+1}$, for some $0\leqslant i\leqslant k-1$. This can be true at time $t=0$ but, due to our assumptions, it implies that $r_1(0)$ lies after $c$, and hence $r_1$ will reach $L_{i+1}$ while $r_0$ waits, without forming a \PR configuration. Similarly, if both robots lie initially before $L_0$, the {\sc Cautious Move} protocol will make them reach $L_0$ and wait for each other before proceeding. Moreover, if $r_0(t)\in L_i$ and $r_1(t)\in L_j$ with $j>i$, then $r_1$ waits until $r_0$ reaches $L_j$, and during this process no \PR configuration is formed.

Therefore we can assume that, at some time $t$, both $r_0(t)$ and $r_1(t)$ belong to $L_i$, for some $0\leqslant i\leqslant k$, and none of them is moving. We claim that, if $i<k$, there is a time $t'>t$ at which the two robots are in $L_{i+1}$ and none of them is moving. Moreover, between $t$ and $t'$ no \PR configuration is reached. Indeed, according to the {\sc Cautious Move} protocol, the robots stop at $\ell_{i+1}$ and wait for each other before proceeding, and hence at some point they will clearly be found both in $L_{i+1}$ and not moving. The only way they could form a \PR configuration would be if $r_0$ reached $\ell_{i+1}$ when $r_1$ was still at $\ell_i$. But this cannot happen because, according to the {\sc Cautious Move} protocol, $r_0$ stops at least once (at an intermediate critical point) after $\ell_i$ and before $\ell_{i+1}$. When this happens, $r_0$ cannot proceed any further, and hence it cannot reach $\ell_{i+1}$ if $r_1$ is still at $\ell_i$. By induction on $i$, it follows that $r_0$ and $r_1$ eventually reach $f_0$ and $f_1$, respectively, without ever forming a \PR configuration.

Finally, let us consider the case in which procedure {\sc Move Walkers to SEC/3} is being executed, and $r_0$ and $r_1$ move toward SEC/3. If one of the two robots is initially in SED/3, by Theorem~\ref{t:sec3} no \PR configuration can ever be formed, and $C'=\varnothing$. Hence we may assume that both robots move radially toward SEC/3, as this is taken as a critical point in any case. This case is symmetric to the previous one, and can be treated with a similar reasoning.

To conclude, taking $C\cup C'$ as a set of critical points yields a cautious move that makes the robots freeze at every \PR configuration that is encountered (i.e., whether $r_0$ and $r_1$ are companions or not), due to Theorem~\ref{t:cautious3}.
\end{proof}

\begin{theorem}\label{t:ap1}
Let $\mathcal R$ be frozen at time $t_0$, let $\mathcal R(t_0)$ be an \AP (and not \CO) configuration with no points in the interior of SED/3, let $n>4$, and let the robots execute procedure {\sc Move All to SEC} or procedure {\sc Move Walkers to SEC/3} with suitable critical points. Then, if a \PR configuration is ever formed, the robots freeze as soon as they form one.
\end{theorem}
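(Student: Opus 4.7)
The proof reduces the general case to Lemma~\ref{l:ap1} (for $n=6$) and Lemma~\ref{l:3slide2} (for $n\geqslant 8$). First observe that every \PR configuration requires $n$ to be even, since in a \PR set the companions partition the points into pairs; so when $n$ is odd no \PR configuration is formable during the cautious move and the theorem holds vacuously with an empty input set of critical points. For $n=6$ the claim is exactly Lemma~\ref{l:ap1}. It remains to handle $n\geqslant 8$ (even), which is the focus of the plan.

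Assume $n\geqslant 8$ and $\mathcal R(t_0)$ is \AP. By Observation~\ref{o:analogy2} each analogy class has size one or two, and both {\sc Move All to SEC} and {\sc Move Walkers to SEC/3} let only one analogy class $\mathcal A$ move at a time, so I fix one such $\mathcal A$ and select an input critical-point set handling its sub-phase. Label the robots $r_0,\ldots,r_{n-1}$ in cyclic order around the center of SED with $r_0\in\mathcal A$. The plan is to exhibit, in every placement of $\mathcal A$, four non-sliding points meeting the hypotheses of Lemma~\ref{l:3slide2}, which yields at most one formable \PR configuration $C$. Taking $C$ as the input set of critical points and invoking Theorem~\ref{t:cautious1} then forces the swarm to freeze as soon as it reaches $C$, as required.

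If $|\mathcal A|=1$, the triple $r_2,r_3,r_4$ is consecutive and non-sliding, and $r_6$ is non-sliding and not adjacent to $r_2$ or $r_4$, so Lemma~\ref{l:3slide2} applies. If $|\mathcal A|=2$, write $\mathcal A=\{r_0,r_j\}$ with $1\leqslant j\leqslant n/2$. When $j=1$ (adjacent movers), the same points $r_2,r_3,r_4,r_6$ work, using $n\geqslant 8$ to ensure $r_6\notin\{r_0,r_1\}$ and $r_6\neq r_5$. When $j\geqslant 2$, the arc $r_{j+1},\ldots,r_{n-1}$ contains $n-1-j\geqslant n/2-1\geqslant 3$ consecutive non-sliding points, from which I extract a triple $a,b,c$ of indices at least $3$; the stationary point $r_1$ is adjacent only to $r_0$ and $r_2$, hence not adjacent to $a$ or $c$, and supplies the fourth point. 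In every sub-case Lemma~\ref{l:3slide2} produces at most one formable \PR configuration.

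The main obstacle is the adjacent sub-case $\mathcal A=\{r_0,r_1\}$: here the stationary robots form a single cyclic arc, so Lemma~\ref{l:3slide2} needs a non-sliding point at cyclic distance at least two from both endpoints of the chosen triple, and it is precisely this requirement that fails when $n=6$. This is the reason Lemma~\ref{l:ap1} needs the long ad hoc analysis of possible supporting polygons sharing pairs of antipodal edges through the stationary robots. Once $n\geqslant 7$ the point $r_6$ is available and far enough from $r_2,r_4$, so the general case collapses into a single uniform application of Lemma~\ref{l:3slide2} followed by Theorem~\ref{t:cautious1}, with no further geometric case analysis required.
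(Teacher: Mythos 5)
Your proposal is correct and follows essentially the same route as the paper's proof: dispose of odd $n$ by parity of \PR sets, invoke Lemma~\ref{l:ap1} for $n=6$, and for $n\geqslant 8$ use the fact that \AP analogy classes have at most two points to satisfy the hypotheses of Lemma~\ref{l:3slide2}, then conclude via Theorem~\ref{t:cautious1}. The only difference is that you spell out the explicit choice of the four non-sliding points in each sub-case, which the paper leaves implicit; your index bookkeeping checks out.
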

\begin{proof}
Let $\mathcal A\subset \mathcal R$ be the analogy class of robots that is allowed to move at time $t_0$. As in Lemma~\ref{l:ap1}, it is sufficient to prove the theorem assuming that only $\mathcal A$ moves. Recall that $n$ must be even for a \PR configuration to be formed. If $n=6$, Lemma~\ref{l:ap1} applies. Hence, let us assume that $n\geqslant 8$. Since the analogy classes of an \AP configuration contain either one or two points, it follows that, no matter how $\mathcal A$ is chosen, the hypotheses of Lemma~\ref{l:3slide2} are satisfied, and therefore at most one \PR configuration can be formed. If such a configuration is taken as a set of critical points, our claim follows from Theorem~\ref{t:cautious1}.
\end{proof}

\subsubsection{Final Remarks}\label{sec:ispreregular}

It is straightforward to verify the following.

\begin{observation}\label{o:comp}
In all the theorems of this section, the critical points of the cautious moves are computable by performing finite sequences of algebraic operations on the positions of the robots.
\end{observation}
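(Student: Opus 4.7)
The plan is to catalogue all the critical-point constructions that appear in the proofs of Theorems~\ref{t:equi}, \ref{t:bsec4}, \ref{t:dbi}, \ref{t:p1}, \ref{t:ap1} and in the supporting Lemmas~\ref{l:p2}, \ref{l:p3}, \ref{l:ap1}, and to verify case by case that each is obtained from the input positions by finitely many arithmetic operations and extractions of roots. Throughout, I would rely on the fact, already recorded in Section~\ref{defi:geo}, that SED$(S)$, SEC$(S)$, SED/3$(S)$, SEC/3$(S)$, together with angular distances, midpoints, and line/circle intersections, are all algebraic functions of $S$.

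My first observation would be that, in every case in which one of the above theorems shows that at most one (or at most two) \PR configurations can be formed, the proof pins down the supporting polygon by one of two mechanisms: either it invokes Lemma~\ref{l:3slide2} or Lemma~\ref{l:4slide}, which fix the slopes of three edges and thus, by Observation~\ref{o:3sides}, determine the polygon entirely; or it uses Lemma~\ref{l:cyclic} together with the identification of a companion pair, which fixes one edge and, by regularity, the whole polygon. In both situations the candidate polygon is reconstructed by intersecting explicitly given lines and rotating by multiples of $2\pi/n$, which are algebraic operations. The associated \PR configuration is then obtained, via Lemma~\ref{l:coradial}, by intersecting each relevant radial ray with the perimeter of the polygon. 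This handles Theorems~\ref{t:equi}, \ref{t:bsec4}, \ref{t:p1}, and most subcases of Theorems~\ref{t:dbi} and \ref{t:ap1}, as well as the final, transposed, and intermediate critical points added inside Procedure~{\sc Cautious Move} itself, which are just midpoints and orthogonal projections onto the robots' paths.

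The only delicate case is the final part of Lemma~\ref{l:ap1}, where the critical points form a finite sequence $c_0,c_1,\dots,c_{k+1}$ along the segment $bf_1$, with $c_{i+2}$ defined by $|xc_{i+1}|^2=|xc_i|\cdot|xc_{i+2}|$. Here I would argue that $c_0=b$ and $c_1$ (an intersection of two explicitly given lines) are algebraic in the input, that $c_{i+2}$ is obtained from $c_i$ and $c_{i+1}$ by a single division $|xc_{i+2}|=|xc_{i+1}|^2/|xc_i|$, and that $|xc_i|=|xc_0|\cdot r^{i}$ grows geometrically with ratio $r=|xc_1|/|xc_0|>1$, bounded above by the algebraic quantity $|xf_1|$. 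Hence $k$ is determined by an algebraic comparison, and the entire finite set $\{c_0,\dots,c_{k+1}\}$ is generated by iterating a fixed algebraic update a bounded number of times.

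The main obstacle I anticipate is purely bookkeeping: ensuring that every construction scattered across the subsections of Section~\ref{sec:analysis} has in fact been enumerated, including the case analyses inside Lemmas~\ref{l:p2}, \ref{l:p3}, and \ref{l:ap1} that branch on several algebraic conditions (for example, whether a given initial position lies in a specific subsegment, whether two lines are parallel, or whether $|xa|=|xb|$). Each branching condition is itself decidable by an algebraic test, so once the catalogue is exhaustive the verification reduces to a routine inspection, and Observation~\ref{o:comp} follows.
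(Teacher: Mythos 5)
Your proposal is correct and matches the paper's (implicit) approach exactly: the paper offers no proof beyond the remark that the observation ``is straightforward to verify,'' and your case-by-case catalogue---reducing every construction to SED/SEC computations, line and circle intersections, midpoints, and the reconstruction of a supporting polygon from fixed edge slopes via Observation~\ref{o:3sides} and Lemma~\ref{l:coradial}---is precisely that verification carried out. Your treatment of the geometric-mean sequence $c_0,\dots,c_{k+1}$ in Lemma~\ref{l:ap1} correctly identifies the one place where finiteness of the construction is not immediate and resolves it.
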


Also, from our initial observations it follows that a \PR configuration can easily be recognized by the robots, and the supporting polygon is always unique.

\begin{lemma}\label{l:uniquepr}
By a finite sequence of algebraic operations it is possible to decide if a given set of $n>4$ points is \PR and, if it is, to compute the vertices of the supporting polygon, which is unique.
\end{lemma}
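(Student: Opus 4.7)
The plan is to decide pre-regularity by checking at most two candidate matchings of consecutive points of $S$, each determining at most one candidate regular $n$-gon via Observation~\ref{o:3sides}.

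First, $n$ must be even, since the $n$ points of $S$ partition into $n/2$ companion pairs, one per full edge of the supporting polygon. I would verify evenness and that $S$ is in strictly convex position with no three collinear points (Observation~\ref{l:prehull}), then compute the center of SED$(S)$ and sort $S$ cyclically around this center as $p_0,\ldots,p_{n-1}$. All of this uses only algebraic operations. By Lemma~\ref{l:cyclic}, in any supporting polygon the companion of $p_i$ is $p_{i-1}$ or $p_{i+1}$, so the companion matching on $S$ must be one of exactly two possibilities: $\pi_0=\{(p_{2i},p_{2i+1})\}$ or $\pi_1=\{(p_{2i-1},p_{2i})\}$.

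For each candidate matching $\pi_k$, I would compute the $n/2 \geqslant 3$ lines through the paired pairs. By Observation~\ref{o:3sides}, at most one regular $n$-gon has three edges lying on three prescribed lines, so the first three pair-lines of $\pi_k$ determine at most one candidate polygon $P_k$, whose vertices are computable by intersecting consecutive edge-lines (using the known angular spacing $4\pi/n$ between consecutive full-edge directions). I would then verify algebraically that every paired pair of $\pi_k$ lies on its corresponding edge of $P_k$ (not merely on its extension) and that the relative interior of each empty edge of $P_k$ is disjoint from $S$. Then $S$ is pre-regular iff at least one of $\pi_0,\pi_1$ passes verification, and the polygon is $P_k$.

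For uniqueness, at most two candidate polygons arise. Suppose both $P_0$ and $P_1$ pass verification. By Observation~\ref{l:dist}, the companion of $p_i$ realises the minimum distance from $p_i$ to $S\setminus\{p_i\}$; since the companion of $p_i$ in $P_0$ is $p_{i+1}$ and in $P_1$ is $p_{i-1}$ (for $i$ even, say), this forces $\|p_ip_{i-1}\|=\|p_ip_{i+1}\|$ for every $i$. The equality clause of Observation~\ref{l:dist}, applied in $P_0$, says that the segments $p_ip_{i+1}$ and $p_ip_{i-1}$ are then adjacent edges of $P_0$; iterating, all $p_i$ are vertices of $P_0$, so $S$ is the vertex set of $P_0$, and by the same argument of $P_1$. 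Since a regular $n$-gon is determined by its vertex set, $P_0=P_1$. The main obstacle is precisely this degenerate symmetric case, where the pre-regular definition admits points of $S$ at polygon vertices; by contrast, the algorithmic construction and verification steps reduce to routine polynomial equalities and inequalities.
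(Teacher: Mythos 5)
Your proof is correct and follows essentially the same route as the paper's: reject odd $n$ and non-strictly-convex inputs via Observation~\ref{l:prehull}, reduce the companionship structure to a constant number of candidates, recover the unique candidate polygon from three edge-lines via Observation~\ref{o:3sides}, and then verify membership and distribution of the points on its edges. The only divergence is in how the companion matching is pinned down: the paper identifies it directly through the nearest-neighbour property of Observation~\ref{l:dist}, whereas you enumerate the two consecutive-pair matchings supplied by Lemma~\ref{l:cyclic} and resolve the possible ambiguity explicitly in your uniqueness argument -- which is, if anything, a more careful treatment of the degenerate equal-distance case than the paper's one-line appeal to ``unique identification.''
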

\begin{proof}
If $n$ is odd or the points are not in a strictly convex position, then they do not form a \PR configuration, by Lemma~\ref{l:prehull}. Otherwise, the pairs of ``candidate companions'' can be uniquely identified thanks to Observation~\ref{l:dist}. Since $n>4$, the set of candidate companions determines the slopes of at least three edges of the ``candidate supporting polygon''. It is now straightforward to check if these slopes match those of a regular polygon's edges. If they do not, the set is not \PR; otherwise, by Observation~\ref{o:3sides} the candidate supporting polygon is uniquely determined and easy to compute. Now it is sufficient to verify if all the points in the set lie on the so-computed regular polygon, and if they are properly distributed among its edges.
\end{proof}

As a consequence of the previous lemma, the procedure {\sc Is Pre-regular?} used by the \UCF algorithm is well defined.

\subsection{Correctness of the Algorithm}\label{sec:corr}

\begin{lemma}\label{z:regular}
Let $\mathcal R$ be frozen at time $t_0$, let $\mathcal R(t_0)$ be a \RE configuration, and let the robots execute the \UCF algorithm. Then, the robots will never move.
\end{lemma}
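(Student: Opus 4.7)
The plan is a direct induction on the sequence of Look--Compute events performed by any robot in the swarm. Since the swarm is frozen at time $t_0$ and, by the model, a configuration can change only through a robot actively producing a nontrivial destination in its Compute phase, it suffices to show that no robot ever computes a move after $t_0$.

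First I would set up the induction. The base case is immediate: at time $t_0$ the swarm is frozen, and $\mathcal R(t_0)$ is the vertex set of a regular $n$-gon. For the inductive step, assume that in the time interval $[t_0,t]$ no robot has moved, so $\mathcal R(t) = \mathcal R(t_0)$, and suppose that some robot $r$ begins a Look--Compute phase at time $t$. The snapshot $S$ that $r$ obtains is the image of $\mathcal R(t)$ under the coordinate change from the global system to $r$'s local system; this change of coordinates is a composition of a translation (placing $r$ at the origin), a rotation, possibly a reflection, and a positive scaling. Being similar to a regular $n$-gon is preserved under every such transformation, so $S$ is a \RE set.

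Now trace the execution of \UCF$(S)$: the very first test is {\sc Is Regular?}$(S)$, which returns true, and the algorithm executes ``Do Nothing''. Thus $r$'s Compute phase produces no destination point, $r$'s trajectory is the single point $\{r(t)\}$, and $r$ is still frozen at time $t$ and thereafter (until its next Look--Compute phase, if any). Since this holds for every robot activated at any time $t \geqslant t_0$, and the set of activation times is well ordered by time, the induction carries through and $\mathcal R(t)=\mathcal R(t_0)$ for every $t\geqslant t_0$.

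There is no real obstacle here; the only points worth checking are (i) that the property of a finite point set being similar to a regular $n$-gon is indeed invariant under the allowed coordinate changes, so that every robot's snapshot is recognized by {\sc Is Regular?}, and (ii) that ``Do Nothing'' in the algorithm is interpreted as leaving the robot frozen in its current position, which is consistent with the convention of Section~\ref{sec:model} that a trivial trajectory consisting of a single point corresponds to a frozen robot.
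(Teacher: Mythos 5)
Your proposal is correct and is essentially the paper's own argument: the paper's proof is just the two-sentence observation that, since the swarm is frozen at $t_0$, every subsequent Look--Compute observes a \RE configuration and the algorithm does nothing, preserving the configuration. Your version merely makes the induction and the similarity-invariance of the snapshot explicit.
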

\begin{proof}
By assumption, no robot is moving at time $t_0$. Then, whenever a robot performs a Look-Compute, it observes a \RE configuration and remains still, thus keeping the same configuration.
\end{proof}

\begin{lemma}\label{z:prereg}
Let $\mathcal R$ be frozen at time $t_0$, let $\mathcal R(t_0)$ be a \PR configuration with $n>4$, and let the robots execute the \UCF algorithm. Then, the robots will freeze in a \RE configuration without ever colliding.
\end{lemma}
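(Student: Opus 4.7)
My plan is to show that every robot, upon observing a \PR configuration, can unambiguously identify the supporting polygon $P$ and the vertex of $P$ it should move to, and that these concerted motions preserve the \PR property until every vertex of $P$ is occupied, at which point the configuration is \RE and the swarm freezes by Lemma~\ref{z:regular}.

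First I would invoke Lemma~\ref{l:uniquepr}: since $n>4$, the supporting polygon $P$ is uniquely determined and algebraically computable from the current configuration, so every robot agrees on $P$ and on its own matching vertex, regardless of its local coordinate system. Since \textsc{Is Regular?} is tested before \textsc{Is Pre-regular?} in \UCF, every Look--Compute performed while the configuration is \PR but not yet \RE triggers procedure \textsc{Pre-regular} and sets the executing robot's destination to its matching vertex on $P$.

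Next, and this is the crux, I would establish the invariant that throughout the execution the configuration remains \PR with the same supporting polygon $P$, and that no robot's matching vertex ever changes. From the \PR definition one easily sees that full and empty edges of $P$ alternate; on each full edge $AB$ the two robots $p,q$ have matching vertices $A,B$ respectively and appear in the order $A,p,q,B$ along the edge. Each such robot moves along $AB$ toward its own matching vertex, so it never leaves $AB$, and the order $A,p',q',B$ of the two robots at any intermediate positions $p',q'$ is preserved. Hence every full edge keeps exactly two points and the relative interior of every empty edge stays empty, so $P$ remains a supporting polygon. Moreover, for any intermediate position $p'$ of a robot on $AB$ between its start and $A$, the segment $Ap'$ is contained in the original segment $Ap$ and thus contains no other robot, so the matching vertex of that robot is still $A$.

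Finally I would verify collision-freedom, termination, and stabilization. The two robots sharing a full edge move toward opposite endpoints, so their trajectories are disjoint; robots on distinct full edges may only meet at shared vertices of $P$, but each vertex of $P$ is the matching vertex of a unique robot (the one on the adjacent full edge closest to that vertex, since empty edges carry no interior robots), so no two robots share a destination. The minimality constant $\delta$ together with scheduler fairness ensures each robot reaches its matching vertex in finitely many cycles, after which the swarm occupies exactly the vertex set of $P$, a regular $n$-gon, and Lemma~\ref{z:regular} freezes it forever. The main obstacle I foresee lies in making the invariance argument watertight under asynchrony: a robot may act on a snapshot taken long before its move, while other robots have partially advanced, so it is essential that the uniqueness in Lemma~\ref{l:uniquepr} and the edge-preserving nature of the motions together guarantee that a stale snapshot yields exactly the same supporting polygon and matching vertex as any fresher one.
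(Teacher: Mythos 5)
Your proposal is correct and follows essentially the same route as the paper: invoke Lemma~\ref{l:uniquepr} for uniqueness of the supporting polygon, argue by induction that the configuration stays \PR with the same polygon and matching vertices as robots move along its edges (the paper dismisses this as ``straightforward'' where you fill in the alternating full/empty-edge details), and conclude collision-freedom from disjoint trajectories and termination from the minimality constant $\delta$. No gaps.
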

\begin{proof}
By Lemma~\ref{l:uniquepr}, the supporting polygon $P$ of $\mathcal R(t_0)$ is unique and computable. It is straightforward to prove by induction that, every time a robot performs a Look-Compute phase, it observes a \PR configuration with the same supporting polygon $P$. This is certainly true the first time a Look-Compute phase is performed, because $\mathcal R$ is frozen at time $t_0$. Then, whenever a robot observes a \PR configuration with supporting polygon $P$, it executes procedure {\sc Pre-regular}, which makes it move toward its matching vertex of $P$. As robots asynchronously approach their respective matching vertices, the configuration remains \PR, the supporting polygon remains $P$, and no two robots collide, because their trajectories are disjoint. Moreover, each robot approaches its matching vertex by at least $\delta>0$ at each cycle, and therefore it reaches it within finitely many cycles. When a robot reaches its matching vertex, it stops moving, and hence the swarm eventually freezes in a \RE configuration that coincides with the vertex set of $P$.
\end{proof}

\begin{lemma}\label{z:invalid}
Let $\mathcal R$ be frozen at time $t_0$, let $\mathcal R(t_0)$ be an \IN configuration, and let the robots execute the \UCF algorithm. Then, the robots will freeze in a \PR or in a \VA configuration without ever colliding.
\end{lemma}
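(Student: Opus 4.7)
The plan is to follow the flow of Procedure {\sc Invalid}, which simply invokes Procedure {\sc Move All to SEC}, and to show that its execution drives the swarm through a finite sequence of radial moves ending in a frozen \PR or \VA configuration, with no collisions.

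First I would dispose of Phase~1 of {\sc Move All to SEC}: whenever some robots lie in the interior of SED/3, they move radially to SEC/3 while every other robot waits. Because $\mathcal R(t_0)$ is not \CO, the radial paths are pairwise disjoint and no collisions occur; and while any such robot is in motion the swarm has a point in SED/3, so by Theorem~\ref{t:sec3} no \PR configuration can form. Radial motion preserves angular positions, so the non-\CO and non-\HD properties are maintained, and SEC is invariant because its extremal points never move. In finite time every moving robot reaches SEC/3 and the swarm becomes momentarily frozen; either the new configuration is already \VA or \PR (and the lemma is proved), or it is still \IN and we enter Phase~2.

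Next I would analyse Phase~2. An analogy class (or strong analogy class, in the \BI and \DBI cases) $C$ not entirely on SEC is selected, and its robots perform a cautious move toward SEC with critical points chosen as prescribed by Theorems~\ref{t:equi}, \ref{t:bsec4}, \ref{t:dbi}, \ref{t:p1}, and \ref{t:ap1}, which together cover every possible type (\EQ, \BI, \DBI, \PER, \AP) of the ambient configuration. By Corollary~\ref{c:cautious4}, if any \PR configuration arises during this cautious move the robots of $C$ freeze at it, and since the remaining robots are waiting the whole swarm is frozen in \PR. Otherwise, by Lemma~\ref{l:cautious:0} the cautious move terminates in finite time with $C$ entirely on SEC, SEC itself being invariant because the robots of $C$ not already on SEC are non-extremal and move only outward as far as SEC. We then iterate: each Phase~2 invocation either produces a \PR configuration or strictly reduces the number of analogy classes having robots outside SEC by one. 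Since this number is finite, after finitely many iterations every robot lies on SEC; the resulting configuration has no internal points, is still not \CO and not \HD, and is therefore \VA (indeed \WA) by the remark following the definition of \VA sets, with the swarm frozen because the last cautious move has just terminated. Collisions remain impossible throughout, because every move is radial on a non-\CO configuration.

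The main obstacle I expect is ensuring that the ``one class at a time'' picture of Phase~2 is not disturbed by a non-$C$ robot which, during a cautious move, performs a Look--Compute and observes a snapshot that is no longer \IN, thereby triggering a different procedure. The resolution is to argue that the snapshot remains \IN throughout the cautious move: radial motion preserves the angular analogy relation on $C$, but the least-numerous choice of $C$ forces every other internal analogy class to be distinct from $C$, so the internals are not all analogous (ruling out \WA); the presence of moving $C$-robots strictly between SEC/3 and SEC rules out \RD; and any \PR that could transiently form is intercepted by the critical points, freezing the swarm there. Hence non-$C$ robots keep waiting and the inductive progress through classes is preserved.
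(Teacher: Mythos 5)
Your proposal follows essentially the same route as the paper's proof: Phase~1 sends the robots in the interior of SED/3 radially to SEC/3 (no \PR by Theorem~\ref{t:sec3}), Phase~2 performs class-by-class cautious moves to SEC with critical points supplied by Theorems~\ref{t:equi}--\ref{t:ap1}, termination follows from Lemma~\ref{l:cautious:0}, and an induction on the number of classes not yet entirely on SEC finishes the argument. You also correctly identify the central obstacle --- a bystander robot observing a snapshot that is no longer \IN --- and resolve it for Phase~2 in the same way the paper does.

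Two steps are under-justified. First, in Phase~1 you rule out \CO, \HD and \PR during the motion but not \VA: if at time $t_0$ every robot already lies on SEC or in SED/3, the configuration could in principle become \VA and \RD while some robots are still moving inside SED/3, which would contradict your claim that it ``is still \IN'' and break the induction on which procedure is being executed. The paper closes this by observing that radial moves keep each internal point inside the same occupied sector, so the configuration is well occupied at time $t$ if and only if it is well occupied at time $t_0$; since $\mathcal R(t_0)$ is \IN, no \VA and \RD configuration can arise before the phase ends, and \WA is excluded as long as some point lies in the interior of SED/3. Second, your reason for excluding \WA in Phase~2 (``the least-numerous choice of $C$ forces every other internal analogy class to be distinct from $C$'') is not the operative one: what is needed is that, because $\mathcal R(t_1)$ is not \VA and its period is at least $3$, the internal points must span at least two analogy classes, so the configuration cannot be \WA while some robot of the moving class is still internal --- and this can only cease to hold at the very end of the cautious move, when the swarm is already frozen. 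Both are local repairs; the architecture of your argument matches the paper's.
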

\begin{proof}
If the robots form a \PR configuration at time $t_0$, there is nothing to prove. Otherwise, we can prove by induction that the robots will always be in an \IN and not \PR configuration and therefore they will always execute procedure {\sc Move All to SEC}, until they freeze in a \PR or in a \VA configuration. This is true at time $t_0$, and moreover the swarm is frozen at that time. Subsequently, robots keep moving radially toward SEC, thus never colliding, never forming \HD or \CO configurations, never altering SEC, and never altering angle sequences and (strong) analogy classes. Hence, as long as the configuration is not \PR or \VA, the procedure that is executed is always {\sc Move All to SEC}. Moreover, if the configuration is initially \EQ (respectively, \BI, \DBI, \PER, \AP), it remains such throughout the execution.

If there are robots in the interior of SED/3 at time $t_0$, they first move onto SEC/3. No \PR configuration can be formed in this phase, due to Theorem~\ref{t:sec3}. A \VA configuration could be formed, though. However, since $\mathcal R(t_0)$ is not \VA by assumption, it follows that in this phase no \VA and \RD configuration can be formed, because at any time the configuration is well occupied if and only if it is well occupied at time $t_0$. On the other hand, a \VA and \WA configuration can be formed only when no robots lie in the interior of SED/3. But this happens only at the very end of the phase, when the configuration is frozen.

Now assume that at time $t_1\geqslant t_0$ the robots are frozen in an \IN configuration with no points in the interior of SED/3. Procedure {\sc Move All to SEC} is executed again, and the robots move toward SEC, one (possibly strong) analogy class at a time, performing a cautious move with suitable critical points. Let the first class $\mathcal C_1\subseteq \mathcal R$ cautiously move toward SEC. It is easy to see that no \VA configuration can be formed during this motion, except perhaps at the very end of the cautious move, say at time $t_2>t_1$, when the robots of $\mathcal C_1$ finally reach SEC (by Lemma~\ref{l:cautious:0}), and the swarm freezes. Indeed, if the period of the configuration is less than 3, then by Observation~\ref{o:analogy1} all robots occupy analogous positions, and therefore the configuration at time $t_1$ is \VA and \WA, which is a contradiction. Hence we may assume $\mathcal C_1$ to be an analogy class, as opposed to a strong analogy class, because the configuration is not \BI (cf.~the definition of procedure {\sc Move All to SEC}). Since the configuration is not \VA at time $t_1$ and its period is at least 3, it means that the internal points belong to at least two different analogy classes (otherwise the configuration would be \VA and \WA), one of which is $\mathcal C_1$. Therefore, as long as some points of $\mathcal C_1$ are still internal, the configuration cannot be \VA and \WA. Moreover, the configuration cannot be \VA and \RD either, because, according to procedure {\sc Cautious Move}, only the robots that are farther from SEC can move. Hence, because each robot located on SEC/3 has at least one (auxiliary) critical point on its path, after the time the first robots of $\mathcal C_1$ start moving and before time $t_2$ there will always be robots lying neither on SEC/3 nor on SEC. It follows that a cautious move with the aforementioned critical points satisfies property $\mathcal P_1$ that the swarm freezes as soon as a \VA configuration is formed. On the other hand, by Theorems~\ref{t:p1} and~\ref{t:ap1}, a cautious move with suitable critical points satisfies property $\mathcal P_2$ that the swarm freezes as soon as a \PR configuration is formed. By Theorem~\ref{t:cautious3}, there exists a cautious move satisfying both $\mathcal P_1$ and $\mathcal P_2$, whose critical points are computable by algebraic operations, by Observation~\ref{o:comp}.

Suppose now that the robots of $\mathcal C_1$ complete the cautious move (cf.~Lemma~\ref{l:cautious:0}), reaching SEC at time $t_2$ without ever forming a \PR or a \VA configuration, and freezing. Then the next class $\mathcal C_2$ moves to SEC from a frozen state, and the previous paragraph's reasoning applies again. By induction, either the robots freeze in a \PR or a \VA configuration during a cautious move, or they all finally reach SEC and freeze. Note that the configuration at this point is not \HD, because it was not at time $t_0$, and robots have only performed radial moves toward SEC. Hence the configuration must be \VA and \WA. Also, no two robots have collided, because the configuration was not \CO at time $t_0$, and radial moves toward SEC cannot create new co-radialities.
\end{proof}

\begin{lemma}\label{z:valinv}
Let $\mathcal R(t_0)$ be a \VA or \IN configuration, let all robots' trajectories at time $t_0$ be disjoint, and suppose that, if a robot $r\in \mathcal R$ is not frozen at time $t_0$, then the following conditions hold:
\begin{itemize}
\item $r(t_0)$ lies in the interior of SED/3;
\item the destination point of $r$ at time $t_0$ is on SEC/3;
\item $r(t_0)$ and $r$'s destination point at time $t_0$ lie in the interior of the same main sector of $\mathcal R(t_0)\setminus \{r(t_0)\}$.
\end{itemize}
If the robots execute the \UCF algorithm, then they will freeze in a \VA or in an \IN configuration without ever colliding.
\end{lemma}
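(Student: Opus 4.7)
The plan is to establish three invariants that hold throughout the transient phase, and then use a finite-cycle termination argument. The invariants are: (i) no two robots collide; (ii) the observed configuration remains \VA or \IN; and (iii) every robot that does not lie in the interior of SED/3 stays frozen while at least one robot still lies in that interior.

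For invariants~(i) and~(ii), I would first observe that each moving robot stays inside SED throughout its motion, so SEC, SED/3, and SEC/3 do not change. The main-sector hypothesis blocks the initial trajectories from producing co-radialities with the static robots, and the disjointness hypothesis prevents pairwise collision among in-progress trajectories. While at least one robot lies in the interior of SED/3, the configuration cannot be \WA by definition, so the algorithm dispatches every Look--Compute either to {\sc Valid and Ready} or to {\sc Invalid}; both procedures open with a clause that sends the robots in the interior of SED/3 radially to SEC/3 (directly in the \RD case, or as the first phase of {\sc Move All to SEC} inside {\sc Invalid}). Hence every post-interruption Move of a moving (or newly-activated) robot is strictly radial, preserving angular position and staying inside its main sector, which gives invariant~(i). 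The configuration is never \CE (no robot visits the center), never \CO (by the preceding sentence), and never \PR while some robot lies in the interior of SED/3 (Theorem~\ref{t:sec3}). Verifying that no \HD configuration can form during the transient is the main technical obstacle; I expect to leverage the fact that the initial non-radial trajectories produced by the algorithm (for instance by the third clause of {\sc Co-radial}) have angular shift at most $1/3$ of the smallest pairwise angular distance, which is small enough to keep every moving robot on the same side of every diameter whose open half-plane the robot solely occupied, while all static external robots on SEC continue to witness the non-\HD property.

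For invariant~(iii), the non-interior actions of {\sc Valid and Ready} and {\sc Move All to SEC} (the finish-line clause in the former, the class-cautious-move phase in the latter) are gated on the interior of SED/3 being empty, so any robot outside that interior that observes a configuration with some robot still in the interior computes a null destination and remains frozen.

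Termination then follows from the minimality bound $\delta$: every interrupted Move of a moving robot advances it by at least $\delta$, and the next Look--Compute yields a new radial trajectory to SEC/3, so by an argument analogous to Lemma~\ref{l:cautious:0} every moving robot reaches SEC/3 after finitely many cycles. Let $t^*$ be the instant at which the last robot in the interior of SED/3 finishes its Move and enters its Look--Compute phase: it is itself frozen at $t^*$, the other formerly-moving robots are by then in trivial Moves with null destinations (hence frozen), and every remaining robot is frozen by invariant~(iii). By invariant~(ii), the configuration observed at $t^*$ is \VA or \IN, which completes the proof.
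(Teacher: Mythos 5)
Your skeleton matches the paper's: both proofs observe that the configuration cannot be \WA while some robot lies in the interior of SED/3, so every Look--Compute dispatches to {\sc Valid and Ready} or {\sc Invalid}, whose opening clause moves exactly the interior robots radially to SEC/3 while all others compute null destinations; the $\delta$-minimality bound then gives termination, and the swarm freezes in a \VA or \IN configuration. The genuine gap is the step you yourself flag as ``the main technical obstacle'': ruling out the accidental formation of a \HD configuration. The route you propose --- invoking the $1/3$-of-smallest-angular-distance bound on the lateral moves produced by procedure {\sc Co-radial} --- fails for two reasons. First, the lemma must be proved from its stated hypotheses alone: it is invoked both from Lemma~\ref{z:coradial} and from Lemma~\ref{z:hdce}, and in the latter case the in-flight lateral trajectories come from procedure {\sc Half-disk}, whose moves have angular amplitude $\pi/3$, so the $1/3$ bound is not part of the data you are given. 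Second, even where that bound holds it does not yield your conclusion: a diameter through a robot $u$ consists of the ray through $u$ \emph{and} the antipodal ray, and a mover nearly antipodal to $u$ crosses the latter after an arbitrarily small angular shift, so bounding the shift by a fraction of the smallest pairwise angular distance does not keep a mover on one side of every such diameter.

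What actually closes the gap, using only the stated hypotheses, is the main-sector condition. A principal line $\ell$ of a \HD configuration must contain two antipodal points of the configuration on SEC (Observation~\ref{l:sechull}); since every moving robot stays in SED/3 (its destinations are always on SEC/3), those two points must be static robots $u$ and $u'$, hence both belong to $\mathcal R(t_0)\setminus\{r(t_0)\}$ for every mover $r$. The interior of a main sector of $\mathcal R(t_0)\setminus\{r(t_0)\}$ contains no point co-radial with $u$ or with $u'$, i.e., it is disjoint from $\ell$ except possibly at the center, and the center does not lie on $r$'s trajectory; so $r$'s connected trajectory never meets $\ell$ and remains in the open half-plane of $\ell$ containing $r(t_0)$. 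Since neither open half-plane of any such $\ell$ is empty at time $t_0$ and no witness, static or moving, ever crosses $\ell$, no half-plane can empty out and no \HD configuration is ever formed. (The same observation uniformly yields non-\CE and the absence of new co-radialities with static robots, which you argued separately.) With this substitution your argument goes through and coincides with the paper's.
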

\begin{proof}
Recall that a \VA or \IN configuration is not \CE, not \HD, and not \CO. Also, since a robot $r$ and its destination at time $t_0$ lie in the same main sector of $\mathcal R(t_0)\setminus \{r(t_0)\}$, it follows that the center of SED does not lie on the trajectory of $r$ at time $t_0$ (by definition of SED).

If there is no robot in the interior of SED/3 at time $t_0$, then by assumption all robots are frozen, and there is nothing to prove. Otherwise, the configuration is not \PR at time $t_0$, due to Theorem~\ref{t:sec3}. Moreover, it is straightforward to see that, as the non-frozen robots move toward their destination points, the configuration remains \VA or \IN and does not become \CE, \HD, \CO, or \PR (recall that the non-frozen robots' trajectories at time $t_0$ are within SED/3). Also, no collisions occur because the trajectories are all disjoint, and SED remains unaltered, because no robot on SEC moves. Therefore, the procedure that is executed by the first robots performing a Look-Compute phase (say, at time $t_1\geqslant t_0$) is either {\sc Valid and Ready} (indeed, there are robots in the interior of SED/3, hence the configuration cannot be \WA) or {\sc Invalid}. Both procedures make the robots that lie in the interior of SED/3 (which exist, by assumption) move radially toward SEC/3. Hence, at time $t_1$ the hypotheses of the lemma are still satisfied, and the same argument can be repeated.

Each moving robot either reaches SEC/3 or moves by at least $\delta$ at each turn; hence, in finite time, there are no robots left in the interior of SED/3. As soon as this happens, the swarm is frozen in a \VA or an \IN configuration, and no collisions have occurred.
\end{proof}

\begin{lemma}\label{z:coradial}
Let $\mathcal R(t_0)$ be a \CO, not \CE and not \HD configuration with $n>4$, and suppose that, if a robot $r\in \mathcal R$ is not frozen at time $t_0$ and $r(t_0)$ does not lie in SED/3, then the following conditions hold:
\begin{itemize}
\item $r(t_0)$ is co-radial in $\mathcal R(t_0)$;
\item $r(t_0)$ is the closest to the center of SED among its co-radial robots;
\item The destination point of $r$ at time $t_0$ is on SEC/3 and co-radial with $r(t_0)$.
\end{itemize}
Also suppose that, if a non-co-radial robot $r\in \mathcal R$ is not frozen at time $t_0$ and $r(t_0)$ lies in SED/3, then the following conditions hold:
\begin{itemize}
\item $r(t_0)$ lies in the interior of SED/3;
\item the destination point of $r$ at time $t_0$ is on SEC/3;
\item $r(t_0)$ and $r$'s destination point at time $t_0$ lie in the interior of the same main sector of $\mathcal R(t_0)\setminus \{r(t_0)\}$.
\end{itemize}
Further suppose that all other robots are frozen at time $t_0$, except perhaps for one robot $s\in\mathcal R$, for which the following conditions hold:
\begin{itemize}
\item $s(t_0)$ lies in SED/3;
\item $s(t_0)$ is co-radial in $\mathcal R(t_0)$;
\item $s(t_0)$ is the closest to the center of SED among its co-radial robots;
\item the line through $s(t_0)$ and the center of SED bounds two open half-planes, one of which, $\Gamma$, contains exactly one robot $s'\in\mathcal R$ at time $t_0$;
\item $s'(t_0)$ is not co-radial in $\mathcal R(t_0)$;
\item the destination point of $s$ and $s'$ at time $t_0$ is on SEC/3 and in $\Gamma$.
\end{itemize}
Finally, suppose that no two robots' trajectories at time $t_0$ intersect. If the robots execute the \UCF algorithm, then they will freeze in a \VA or in an \IN configuration without ever colliding.
\end{lemma}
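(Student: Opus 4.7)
The plan is to show that the hypothesized motion evolves exactly as prescribed by procedure {\sc Co-radial} (together with the residual tail of procedure {\sc Half-disk} that governs the special pair $s,s'$), and that throughout this evolution the configuration remains \CO, not \CE, not \HD, and free of any \PR configuration. Given those invariants, each completed lateral move in SEC/3 strictly decreases the number of co-radial robot pairs, so after finitely many cycles the configuration is no longer \CO, and being still neither \CE nor \HD it is by definition \VA or \IN. First I would handle the no-collision invariant: the trajectories at time $t_0$ are pairwise disjoint by hypothesis; every mover has its trajectory contained in SED/3; no mover's trajectory crosses the center of SED (radial movers start strictly between the center and SEC/3 and move outward; lateral and $s,s'$ movers stay inside a main sector or an open half-plane that excludes the center); and after one round of completed moves the same hypotheses hold again for the new frozen-or-moving configuration, so the no-collision conclusion propagates.

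Next I would establish the structural invariants for the whole evolution. Since no robot ever occupies the center, the configuration is never \CE. Since no robot on SEC moves, SEC is preserved. Since every moving robot lies in SED/3 throughout its motion, Theorem~\ref{t:sec3} rules out \PR at every intermediate time. The only nontrivial point is the non-\HD invariant: along the radial moves this is immediate, because the external points (which stay fixed) already witnessed the non-\HD property at $t_0$ via Observation~\ref{l:sechull}, and the added movement of internal robots toward SEC/3 cannot create a principal line; for the lateral moves of amplitude at most $1/3$ of the smallest positive angular distance between two robots, this bound is small enough to prevent both new co-radialities with any existing robot and collapse into a half-disk (no empty open half-plane can be created by such a tiny angular perturbation of robots that remain in SED/3, given that SEC and its external points are unchanged).

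Now I would follow the algorithm through successive Look--Compute phases. As long as some non-co-radial robot remains in the interior of SED/3, the first bullet of {\sc Co-radial} sends it radially outward to SEC/3 inside its main sector; the hypotheses of the lemma are re-established for the next cycle. Once such robots are on SEC/3, the second bullet moves the innermost co-radial robots radially to SEC/3; again the hypotheses propagate. Once all co-radial robots are on SEC/3, the third bullet triggers the small lateral moves, each of which removes at least one co-radial pair without introducing new ones. Together with the termination argument of Lemma~\ref{l:cautious:0}-style (each mover either reaches its destination or advances by at least $\delta$ per cycle), this yields that after finitely many cycles all movement ceases and the configuration is frozen and non-\CO.

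The main obstacle I anticipate is carefully checking that the residual pair $s,s'$ fits into this framework: one has to verify that their simultaneous move into the half-plane $\Gamma$ does not interfere with the ongoing radial moves of the innermost co-radial robots on other rays (their trajectories are disjoint from $s,s'$'s by hypothesis, and SEC is preserved), and that after $s$ and $s'$ have frozen in $\Gamma\cap\mbox{SEC/3}$ the resulting configuration still either satisfies the hypotheses of the lemma or is already \VA or \IN. Once this bookkeeping is done, combining all the invariants above gives that the final frozen configuration is not \CO, not \CE, and not \HD, hence \VA or \IN, with no collision having occurred.
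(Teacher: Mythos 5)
Your outline follows the same phase-by-phase structure as the paper's proof, but three of your justifications have gaps, two of them at the technically delicate points. First, your non-\PR invariant rests on the premise that ``every moving robot lies in SED/3 throughout its motion,'' which is false: the co-radial robots closest to the center that start \emph{outside} SED/3, as well as the robots of the hypothesis that are outside SED/3 and heading radially to SEC/3, spend most of their trajectories outside SED/3, so Theorem~\ref{t:sec3} does not apply to those instants. The correct tool during that phase is Theorem~\ref{t:coradial}: the configuration remains \CO until the lateral moves complete, and a \CO configuration is never \PR; Theorem~\ref{t:sec3} only takes over once a laterally-moved robot sits strictly inside SED/3.

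Second, and more importantly, you do not address the asynchrony coordination of the lateral moves. The danger is that two co-radial robots on different rays take their snapshots at different times and compute inconsistent values of the ``smallest positive angular distance'' $\alpha$, or that a second generation of lateral movers launches while the first is still in flight. The paper's key observation is that the first guard of procedure {\sc Co-radial} (``if there are non-co-radial robots in the interior of SED/3\dots'') acts as a lock: the instant any robot begins a lateral move it becomes a non-co-radial robot in the interior of SED/3, which prevents every robot that Looks afterwards from initiating a lateral move. Hence all robots that actually move laterally in one phase observed the same frozen configuration and agree on $\alpha$, which is what makes the $\alpha/3$ bound and the disjointness of their trajectories provable rather than assumed. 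Your ``the bound is small enough'' claim presupposes this agreement. Third, your termination step conflates ``the configuration ceases to be \CO'' with ``the swarm freezes'': when co-radiality disappears there are still robots in flight inside SED/3, and on their next Look they execute a \emph{different} procedure ({\sc Valid and Ready} or {\sc Invalid}). The paper closes this hand-off by verifying the hypotheses of Lemma~\ref{z:valinv}, whose statement is engineered precisely for robots caught mid-move toward SEC/3 inside their main sectors; without invoking it (or reproving it) your argument does not reach the stated conclusion that the robots \emph{freeze} in a \VA or \IN configuration.
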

\begin{proof}
Let $\mathcal M_0\subset\mathcal R$ be the set of robots outside SED/3 that are not frozen at time $t_0$. The first robots to execute a {Look-Compute} phase execute procedure {\sc Co-radial}, because the configuration cannot be \PR, due to Theorem~\ref{t:coradial}. As long as there are non-co-radial robots in the interior of SED/3, they move radially toward SEC/3, either radially if they perform a {Look-Compute} phase if they were frozen at time $t_0$, or laterally if they were not frozen at time $t_0$. Meanwhile, some robots of $\mathcal M_0$ perhaps move radially toward SEC/3, and $s$ perhaps moves and becomes non-co-radial. In this phase no two robots that were not co-radial with each other at time $t_0$ become co-radial, and in particular no collisions can occur. Indeed, recall that all trajectories are disjoint at time $t_0$, and the destination points of non-frozen robots at time $t_0$ are always on SEC/3. Hence, even if a robot that is moving laterally at $t_0$ stops and starts moving radially, it still does not collide with other robots. Also, SED is preserved, because no robot on SEC moves. Therefore, the hypotheses of the lemma are satisfied throughout this phase, and at some point only co-radial robots lie in the interior of SED/3, and all robots are frozen except perhaps some robots of $\mathcal M_0$ and $s$.

At this point, if the co-radial robots closest to the center of SED lie outside SED/3, some of them move radially toward SEC/3. Eventually, say at time $t_1\geqslant t_0$, some co-radial robots are found in SED/3. When this happens, all the robots are frozen, except perhaps $s$ and some robots in $\mathcal M_1\subset \mathcal R$, which lie outside SED/3 and are still moving radially toward SEC/3. Now the co-radial robots that are closest to the center of SED are allowed to move laterally, and let $\mathcal C\subset \mathcal R$ be the non-empty set of robots that actually move laterally in this phase ($s$ may or may not be in $\mathcal C$). As soon as a robot in $\mathcal C$ starts moving, it becomes a non-co-radial robot lying in the interior of SED/3, and therefore it prevents other robots from making lateral moves. It follows that all the robots in $\mathcal C\setminus\{s\}$, during the {Look-Compute} phase before moving laterally, observe the same smallest positive angular distance $\alpha$ between robots, and they all move in such a way that their destination point has angular distance $\alpha/3$ from their location at time $t_1$. In particular, $\alpha$ is not greater than the angular distance of two robots of $\mathcal C$ at time $t_1$, and hence the trajectories of all these robots are disjoint. Moreover, if $s\in\mathcal C$ and therefore $s$ moves laterally, it enters the half-plane $\Gamma$, approaching $s'$, which is now frozen on SEC/3 because it is not co-radial. In addition, $s$ does not collide with another robot, because it moves into the interior of the sector determined by $s(t_1)$ and $s'(t_1)$, which contains the trajectory of no robot other than $s$.

Since $\mathcal C$ is not empty, when the robots of $\mathcal C$ start moving, the number of co-radial robots strictly decreases. Let $t_2>t_1$ be the first time at which a robot performs a {Look-Compute} phase after all the robots of $\mathcal C$ have started moving. It is easy to see that the configuration is not \CE or \HD at time $t_2$, or two non-co-radial robots would have become co-radial at some point. Suppose first that the configuration is still \CO at time $t_2$. Then, the hypotheses of the lemma are satisfied again, but there are fewer co-radial robots. Hence we can repeat the previous argument until no co-radial robots are left. Without loss of generality, suppose that, after the robots in the set $\mathcal C$ defined above have started moving, the configuration is not \CO any more, and let $t_3$ be the first time at which a robot performs a {Look-Compute} phase and does not see a \CO configuration. Then, the configuration cannot be \CE or \HD either, and hence it is \VA or \IN. Also note that all the robots outside SED/3 are frozen, because the only such robots that could be moving must be co-radial. It follows that the hypotheses of Lemma~\ref{z:valinv} are satisfied, and therefore the robots will finally freeze in a \VA or \IN configuration without colliding.
\end{proof}

\begin{lemma}\label{z:hdce}
Let $\mathcal R$ be frozen at time $t_0$, let $\mathcal R(t_0)$ be a \CE or \HD configuration with $n>4$, and let the robots execute the \UCF algorithm. Then, the robots will freeze in a \VA or in an \IN configuration without ever colliding.
\end{lemma}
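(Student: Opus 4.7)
The plan is to reduce the statement to Lemmas~\ref{z:valinv} and~\ref{z:coradial} by separately analyzing the \CE and \HD initial cases. In each case I would follow the swarm through the initial branch of the algorithm and argue that, at the moment it first enters a different branch, the preconditions of one of these two lemmas are met.

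First, suppose $\mathcal R(t_0)$ is \CE. Only the central robot is selected to move by procedure {\sc Central}, and its destination is a point $q$ on \mbox{SEC/3} that is not co-radial with any other robot; such a $q$ exists since only $n-1$ directions are forbidden. Because the move is radial within the open disk bounded by \mbox{SEC/3}, SED is preserved, no collision can occur, and by Theorem~\ref{t:sec3} no \PR configuration is encountered in transit. As soon as the central robot has moved by any positive amount, the configuration is no longer \CE, and every subsequent Look--Compute classifies it as \CO, \HD, \VA, or \IN. In the \CO case the hypotheses of Lemma~\ref{z:coradial} are satisfied with the formerly central robot playing the role of the unique non-co-radial robot in the interior of \mbox{SED/3} and no distinguished robot~$s$; in the \VA or \IN case the hypotheses of Lemma~\ref{z:valinv} are satisfied directly; and the \HD case reduces to the analysis below.

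Next, suppose $\mathcal R(t_0)$ is \HD and not \CE. Procedure {\sc Half-disk} branches into three sub-cases and selects at most two moving robots, routed first radially to \mbox{SEC/3} (when necessary) and then, via a lateral step of angular distance $\pi/3$, into the empty half-plane. The plan here is a case-by-case geometric verification of three properties: (i)~the selected trajectories are pairwise disjoint and stay inside SED, so SED is preserved and no collisions occur; (ii)~the sequencing through \mbox{SEC/3} ensures that at any time at most one robot moves laterally, so robots performing Look--Compute see a consistent configuration and continue to select the same movers; and (iii)~as soon as a positive lateral displacement has been produced, the configuration ceases to be \HD. After~(iii), any Look--Compute observes a configuration that is \CO, \VA, or \IN (it cannot be \CE since no robot occupies the center, nor \PR by Theorem~\ref{t:sec3}), and the partially-moved swarm satisfies the hypotheses of Lemma~\ref{z:coradial} (with the moving robots in \mbox{SED/3} and a possible distinguished robot~$s$ on the principal line) or of Lemma~\ref{z:valinv}. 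Invoking the appropriate lemma concludes the proof.

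The main obstacle will be the case analysis in the \HD branch, especially the sub-case where all robots are initially collinear: here the lateral step of a single robot must break both collinearity and \HD-ness simultaneously, and one must carefully check that the resulting configuration still satisfies the hypotheses on the distinguished robot~$s$ of Lemma~\ref{z:coradial}. The choice of angular step $\pi/3$ in the pseudocode, together with the routing through \mbox{SEC/3}, is crucial both for avoiding collisions between the two simultaneously selected movers in the third sub-case and for ensuring that the transient configurations never become \PR.
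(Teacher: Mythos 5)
Your overall strategy --- shepherd the swarm out of the \CE and \HD branches and then hand off to Lemmas~\ref{z:valinv} and~\ref{z:coradial} --- is the same as the paper's, and the \CE half of your argument is essentially right. The gap is in your property~(iii) for the \HD branch: it is not true that ``as soon as a positive lateral displacement has been produced, the configuration ceases to be \HD,'' and in particular your claim that in the collinear sub-case ``the lateral step of a single robot must break both collinearity and \HD-ness simultaneously'' is false. If all robots lie on the principal line and one robot steps off it into one of the two empty open half-planes, every other robot is still on that line, so the other open half-plane remains empty and the configuration is still \HD (now a generic one, with a single empty half-plane). The same happens when the two selected movers both end up on the same side. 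Consequently the \HD branch does not terminate after one round of lateral steps: procedure {\sc Half-disk} must be re-entered, a new pair of robots must be routed into \mbox{SED/3} and then into the remaining empty half-plane, and only when robots appear strictly on both sides of the (unique candidate) principal line does \HD-ness actually die. The paper's proof is structured around exactly this iteration, with intermediate milestones $t_1,\dots,t_4$ and separate preliminary phases for the collinear configurations having only one or two robots on a principal ray; your plan, as written, declares victory one step too early and so never establishes the hypotheses of Lemma~\ref{z:coradial} or~\ref{z:valinv} at the moment the configuration genuinely stops being \HD.

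A secondary inaccuracy: your property~(ii) (``at any time at most one robot moves laterally'') contradicts the third sub-case of {\sc Half-disk}, in which the closest robot on \emph{each} principal ray moves laterally into an empty half-plane, so two lateral movers can be in flight simultaneously (you implicitly concede this when you mention collisions between ``the two simultaneously selected movers''). Also, the second sub-case of the procedure moves a robot laterally \emph{onto} the intersection of a principal ray with \mbox{SEC/3}, not into the empty half-plane, so your summary of the routing does not cover all branches. These are fixable, but together with the main gap they mean that the case analysis you defer is precisely where the content of the proof lives.
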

\begin{proof}
The configuration at time $t_0$ is not \VA, by definition. Also, by Theorems~\ref{t:halfdisk} and~\ref{t:coradial}, a \HD or \CE (hence \CO) set cannot be \PR. Suppose that $\mathcal R(t_0)$ is \CE. Then, according to the algorithm, procedure {\sc Central} is executed, and no robot moves until the robot $r$ lying at the center of SED performs a {Look-Compute} phase and moves toward a point on SEC/3 that is not co-radial with any robot other than $r$. Let $t_1>t_0$ be the first time at which a robot performs a {Look-Compute} phase and it does not see $r$ at the center of SED. On the other hand, if $\mathcal R(t_0)$ is not \CE, we just take $t_1=t_0$. In both cases, at time $t_1$ the swarm is in a configuration that is not \CE and may be \HD, or \CO, or \VA, or \IN, and no robot is moving, except perhaps one non-co-radial robot $r$ in SED/3 that is moving radially toward SEC/3.

Suppose that $\mathcal R(t_1)$ is \HD and not \CE. Then, procedure {\sc Half-disk} is executed by the first robots that perform a {Look-Compute} phase, because the configuration cannot be \PR, due to Theorem~\ref{t:halfdisk}. Assume first that the robots are all collinear, and one principal ray contains exactly two robots. Because $n>4$, the other principal ray $\ell$ contains at least three robots. Note that the configuration is frozen at time $t_1$, because we are assuming that the only moving robot must be non-co-radial, and here all robots are co-radial. The closest to the center of SED among the robots lying on $\ell$ moves radially to SED/3, and then it moves laterally within SED/3. At this point, there is at most one robot moving (within SED/3), and the configuration is \HD with only one empty half-plane. Let $t_2\geqslant t_1$ be the first time at which a robot performs a {Look-Compute} phase and observes such a configuration.

Suppose now that the robots at time $t_1$ are all collinear, and one principal ray contains only one robot $s$ (which lies on SEC). Then the swarm is frozen, because $r$ would have to be non-co-radial and in SED/3, but the only non-co-radial robot is $s$, which is not in SED/3. From this configuration, the robot that is closest to the center of SED, $s'$, first reaches SED/3 by moving radially, and then it moves laterally within SED/3. At this point, $s''$, the robot lying on the principal line that is now closest to the center of SED, moves radially to reach SEC/3, while $s'$ moves again within SED/3 to become co-radial with $s$. If $s'$ and $s''$ stop before reaching their destinations, they move again toward them. Upon reaching their destinations, they wait for each other. Hence, eventually, the swarm freezes in a configuration in which all robots are collinear and one principal ray contains exactly two robots. From this configuration, as detailed in the previous paragraph, the swarm reaches at time $t_2$ a \HD configuration with only one empty half-plane, in which only one robot is moving (within SED/3).

Now, let the configuration at time $t_2\geqslant t_1$ be \HD and assume that, if all robots are collinear, then both principal rays contain at least three robots. Also, there may be a unique  non-frozen robot $r$, which is not co-radial and located in SED/3 at time $t_2$. The destination point of $r$ is on SEC/3, and the trajectory of $r$ at time $t_2$ lies in the interior of one main sector of $\mathcal R(t_2)\setminus\{r(t_2)\}$. Note that this could even be the situation at time $t_1=t_2$. Once again, procedure {\sc Half-disk} is executed. If a principal ray $\ell$ has no robots in SED/3, a unique robot $s$ moves to reach this area. This robot is chosen in such a way that its angular distance from $\ell$ is minimum, and it is the closest to the center of SED of such robots. In particular, $s$ could lie on $\ell$, and move radially. Note that, if $s$ is not on $\ell$ at time $t_2$, and even if $s=r$, it does not become co-radial until it actually reaches $\ell$, and even if another robot $s'$ is moving to the other principal ray, $s$ and $s'$ never collide. In particular, if at time $t_2$ the principal line contains only two robots (on SEC), and all other robots are co-radial with each other, then the robot closest to the center of SED, $s$, first moves toward one of the principal rays. When it stops being co-radial, the second closest robot $s'$ moves to the other principal ray. In all cases, while this happens, the configuration remains \HD, hence it never becomes \PR by Theorem~\ref{t:halfdisk}, and procedure {\sc Half-disk} keeps being executed. Eventually $s$ reaches $\ell$ and, if there is an $s'$ moving toward $\ell'$, $s$ waits for it (and vice versa).

At some point, say at time $t_3\geqslant t_2$, the configuration is either frozen with all robots collinear and at least three robots on each principal ray, or the robots are not all collinear and the only robot that may be not frozen is $r$ (as defined in the previous paragraph). In both cases, each principal ray has at least one robot in SED/3. Let $s$ and $s'$ be the robots on the two principal rays that are closest to the center of SED. According to procedure {\sc Half-disk}, at least one between $s$ and $s'$ moves to an empty half-plane, within SED/3, and without causing collisions. We claim that, at some point after time $t_3$, the configuration stops begin \HD. In particular, if at time $t_3$ the robots are collinear and both $s$ and $s'$ move to the same empty half-plane, another pair of robots on the principal line will move into SED/3 and then at least one one them will move into the other empty half-plane.

Let $t_4\geqslant t_3$ be the first time at which a robot performs a {Look-Compute} phase and does not observe a \HD configuration. Note that this could also happen at time $t_1=t_4$. If $\mathcal R(t_4)$ is \CO, then the hypotheses of Lemma~\ref{z:coradial} are satisfied at time $t_4$, and hence the swarm eventually freezes in a \VA or \IN configuration, and no collisions occur. If $\mathcal R(t_4)$ is \VA or \IN, then the hypotheses of Lemma~\ref{z:valinv} are satisfied at $t_4$. In particular, if the configuration has evolved from a \HD, the only robots left on the former principal line are the two lying on SEC, because otherwise the configuration would be \CO. Therefore Lemma~\ref{z:valinv} applies, the swarm freezes in a \VA or \IN configuration, and no collisions occur.
\end{proof}

\begin{lemma}\label{z:transition2}
Let $S$ be a \VA and \WA set whose points all lie on SEC, and let $W$ be the set of its walkers. Let $S'=(S\setminus W)\cup W'$, where $W'=\mathcal F'(W)$. Then, $S'$ is \VA and \RD.
\end{lemma}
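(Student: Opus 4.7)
The plan is to verify each clause in the definitions of \VA and \RD directly for $S'$, with the main work being to exhibit a witness relocation that certifies ``well occupied.'' Write $W = \mathcal{W}(S)$ for the walker set; by Observation~\ref{o:walkers}, $W$ is an analogy class of $S$. Since $\mathcal F'\colon W\to W'$ is a bijection and $W'\subset \mbox{SEC/3}$ is disjoint from $S\setminus W\subset\mbox{SEC}$, I get $|S'|=|S|\geqslant 5$; also every point of $S'$ lies on SEC or in SED/3, so the location requirement of \RD holds. Note too that $\mathcal E(S')=S\setminus W$ and $\mathcal I(S')=W'$.

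Next I would dispatch the easy ``not \CO'' and ``not \HD'' conditions. Each walker $w\in W$ moves to its anti-footprint on the same ray from the center of SED, so if any two points of $S'$ were co-radial, the corresponding points of $S$ (via $\mathcal F'^{-1}$ on walkers, identity on non-walkers) would already be co-radial in $S$, contradicting that $S$ is \VA. Similarly, if $S'$ were \HD, the unique open half-plane bounded by the principal line through the origin containing no point of $S'$ would also contain no point of $S$: any $w\in W$ lies in the same open half-plane as $\mathcal F'(w)\in W'$ (both on the same ray through the origin), so absence of $W'$ in that half-plane forces absence of $W$ as well, and $\mathcal E(S')=S\setminus W$ provides the rest. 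This would contradict $S$ being \VA.

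The heart of the argument is to show $S'$ is well occupied by taking the relocation $R:=W$ itself. The map $W'\to W$ given by $w'\mapsto \mathcal F'^{-1}(w')$ is an injection; I need to check that each $w=\mathcal F'^{-1}(w')$ lies in the interior of the same occupied sector of $S'$ as $w'$. That sector is the main sector of $\mathcal E(S')=S\setminus W$ bounded by two consecutive points $p,q\in S\setminus W$, and $w'$ lies in it by hypothesis; since $w$ is co-radial with $w'$, it lies in the same angular wedge, and since $S$ is not \CO we have $w$ not co-radial with $p$ or $q$, so $w$ is in the relative interior of the sector. Finally, $\mathcal E(S')\cup R=(S\setminus W)\cup W=S$, and $R=W$ is an analogy class of this set by the very definition of walkers; hence $R$ witnesses that $S'$ is well occupied, completing the proof that $S'$ is \VA and \RD.

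The main obstacle, such as it is, is the bookkeeping for the relocation: one must carefully verify both that $R=W$ is a genuine relocation (the interior-of-occupied-sector check above) and that $W$ remains an analogy class of $(S\setminus W)\cup W = S$, which is the tautological but crucial observation that makes the choice $R=W$ work. Everything else is a routine check against the definitions in Section~\ref{defi:geo}.
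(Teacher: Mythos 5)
Your proof follows the same route as the paper's own (which is essentially a three-line version of your argument: $W$ is a relocation of $W'=\mathcal I(S')$ witnessing that $S'$ is well occupied, and non-\CO/non-\HD transfer from $S$). However, there is one step you never justify and on which everything else silently relies: that $\mbox{SED}(S')=\mbox{SED}(S)$. Your claims that $\mathcal E(S')=S\setminus W$, that every point of $S'$ lies on SEC or in SED/3, that $W\subset\mbox{SEC}(S')$ is a legal codomain for the relocation, and your co-radiality and half-plane transfer arguments all measure SEC, SED/3, rays, and sectors with respect to the center of a smallest enclosing disk --- but of which set? If moving $W$ down to SEC/3 changed the SED, the points of $S\setminus W$ would no longer lie on $\mbox{SEC}(S')$ and the identification $\mathcal E(S')=S\setminus W$, together with everything built on it, would collapse. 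The missing fact is exactly the part of Observation~\ref{o:walkers} that you cite but do not use: $W=\mathcal W(S)$ is a \emph{movable} analogy class, i.e.\ $\mbox{SED}(S)=\mbox{SED}(S\setminus W)$, and since $W'$ lies inside that disk, $\mbox{SED}(S')=\mbox{SED}(S)$. The paper opens its proof with precisely this observation. With that one line added, your argument is correct and coincides with the paper's.
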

\begin{proof}
Note that $\mbox{SED}(S)=\mbox{SED}(S')$, because $W$ is a movable analogy class, by Observation~\ref{o:walkers}. Also observe that $S'$ is not \CO and not \HD, because $S$ is not. Moreover, $W$ is a relocation of $W'=\mathcal I(S')$, hence $S'$ is well occupied, and therefore it is \VA and \RD.
\end{proof}

\begin{lemma}\label{z:valid1a}
Let $\mathcal R$ be frozen at time $t_0$, and suppose that the following conditions hold:
\begin{itemize}
\item $\mathcal R(t_0)$ is \VA and \WA;
\item If $\mathcal R(t_0)$ is \VA and \RD, all the internal robots lie on their respective finish lines;
\item At time $t_0$, all the internal robots are walkers.
\end{itemize}
Then, if the robots execute the \UCF algorithm, they will freeze either in a \PR configuration, or in the \VA and \RD configuration $(\mathcal R(t_0)\setminus\mathcal W(\mathcal R(t_0)))\cup \mathcal F'(\mathcal W(\mathcal R(t_0)))$. During the process, no two robots collide.
\end{lemma}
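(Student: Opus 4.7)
The plan is to trace the swarm's evolution from $t_0$ and show that the cautious move embodied in procedure {\sc Move Walkers to SEC/3} carries every walker radially to its anti-footprint on SEC/3, unless the swarm accidentally freezes at a \PR configuration along the way. If $\mathcal W(\mathcal R(t_0))=\varnothing$ (which, by Observation~\ref{o:walkers}, happens exactly when the footprint configuration has a single analogy class, i.e.\ is \EQ or \BI), the claimed destination coincides with $\mathcal R(t_0)$ and no robot ever moves; henceforth I assume $\mathcal W(\mathcal R(t_0))$ to be non-empty, and hence a movable analogy class of $\mathcal R(t_0)$.

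First I would show that every Look--Compute phase executed during this stage selects procedure {\sc Move Walkers to SEC/3}. When the observed configuration is \VA and \WA but not \RD, this is immediate from {\sc Valid and Waiting} together with condition~3 (all internal robots are walkers). When it is also \VA and \RD, condition~2 and the fact that no robot ever enters the interior of SED/3 during radial motion cause {\sc Valid and Ready} to defer immediately to {\sc Valid and Waiting}. Second, I would establish the motion invariants: because the walkers move radially and form a movable analogy class, the footprint set $\mathcal F(\mathcal R(t))$, the smallest enclosing disk, and the walker set $\mathcal W(\mathcal R(t))$ are all constant in $t$. No new co-radialities or collisions arise (the walkers occupy distinct radii that were already devoid of other robots, since $\mathcal R(t_0)$ is not \CO), and the configuration stays not \CO, not \HD, and \VA throughout the motion.

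Third I would apply the \PR-intercepting theorems to the cautious move itself. The footprint configuration $\mathcal F(\mathcal R(t_0))$ is not \CO, has no points in the interior of SED/3, and its period determines whether Theorem~\ref{t:dbi}, Theorem~\ref{t:p1}, or Theorem~\ref{t:ap1} governs the choice of suitable critical points (the \EQ and \BI cases are already subsumed by the empty-walker case above). In each of these theorems the chosen critical points guarantee that the swarm freezes as soon as any \PR configuration is formed. Combining this with Lemma~\ref{l:cautious:0} (termination of the cautious move) yields the desired dichotomy: either the swarm freezes at a \PR configuration, or every walker reaches its anti-footprint on SEC/3 while the non-walkers remain fixed. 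In the latter case the terminal configuration is precisely $(\mathcal R(t_0)\setminus\mathcal W(\mathcal R(t_0)))\cup \mathcal F'(\mathcal W(\mathcal R(t_0)))$, which is \VA and \RD by Lemma~\ref{z:transition2}.

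The main obstacle is the procedure-selection argument of the first step: during the asynchronous cautious move the configuration oscillates between the \WA-but-not-\RD and the \WA-and-\RD sub-regimes as walkers land on SEC/3 at different times, and one must verify that every branch inside {\sc Valid and Ready} (no robot in the interior of SED/3; all internal robots lying on their finish lines; the final hand-off to {\sc Valid and Waiting}) continues to hold, and that {\sc Valid and Waiting} consistently recognises every current internal robot as a walker. This rests on the time-invariance of $\mathcal F(\mathcal R(t))$ recorded above, together with a careful check that the anti-footprint of each walker already on SEC/3 coincides with its finish line in every intermediate \RD configuration, so that the {\sc Move to Finish Line} branch is never triggered by mistake.
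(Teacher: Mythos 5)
Your overall architecture matches the paper's: procedure selection, radial-motion invariants (footprint, SED, walker set), interception of \PR configurations via Theorems~\ref{t:p1} and~\ref{t:ap1}, termination via Lemma~\ref{l:cautious:0}, and the final classification via Lemma~\ref{z:transition2}. However, the step you defer as ``the main obstacle'' is precisely the one idea the proof cannot do without, and the workaround you sketch for it is not viable. The paper does \emph{not} argue that intermediate \VA and \RD configurations are handled gracefully; it argues that they \emph{never occur}: by the {\sc Cautious Move} discipline only the walkers farthest from SEC/3 may move, and a walker starting on SEC must stop at at least one auxiliary critical point before reaching SEC/3, so as long as not all walkers are on SEC/3 (and the configuration is not $\mathcal R(t_0)$ itself) some walker lies strictly in the annulus between SEC and SEC/3, which by definition rules out \RD. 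Hence {\sc Valid and Waiting} is invoked directly at every intermediate Look--Compute. Your alternative --- permitting intermediate \RD configurations and checking that ``the anti-footprint of each walker already on SEC/3 coincides with its finish line'' --- has no support: the finish set of such an intermediate configuration is computed from \emph{its own} principal relocation and targets, and nothing forces it to pass through the current radii of the walkers already on SEC/3; if it did not, {\sc Valid and Ready} would trigger {\sc Move to Finish Line}, the walkers would move laterally, and every invariant you rely on would break. This is a genuine gap, not a routine verification.

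A secondary error concerns the degenerate case $\mathcal W(\mathcal R(t_0))=\varnothing$, i.e., when $\mathcal F(\mathcal R(t_0))$ is \EQ or \BI. Then there are no internal robots, so $\mathcal R(t_0)$ is \WA but not \RD, and reading the conclusion through the second disjunct (freezing in ``the \VA and \RD configuration'') does not work; moreover, in the \BI subcase the robots form a \PR configuration and will subsequently move under procedure {\sc Pre-regular}, so ``no robot ever moves'' is false. The correct disposal, as in the paper, is that all robots already lie on SEC and the swarm is frozen in a \PR configuration at time $t_0$, so the first disjunct of the conclusion holds and there is nothing further to prove.
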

\begin{proof}
If $\mathcal R(t_0)$ is \EQ or \BI, then it has no walkers. It follows that all the robots are on SEC, and hence they form a frozen \PR configuration. In this case, there is nothing to prove.

Now assume that $\mathcal R(t_0)$ is \PER or \AP. Since the robots are frozen at time $t_0$, the first robots that perform a Look-Compute phase agree on a target set, a point-target correspondence, and a set of walkers, which is an analogy class to which all the internal robots belong. According to procedures {\sc Valid and Ready} and {\sc Valid and Waiting}, in all cases procedure {\sc Move Walkers to SEC/3} is executed. Indeed, recall that in a \VA and \WA configuration there are no robots in the interior of SED/3, and therefore {\sc Valid and Waiting} is executed in any case. Then, as soon as the walkers are activated, they start moving radially toward SEC/3 following the {\sc Cautious Move} protocol, and while this happens the footprint of the configuration remains the same, and so does the set of walkers (indeed, the walkers form a movable analogy class, due to Observation~\ref{o:walkers}, hence SED is preserved). In this phase the configuration remains \VA and \WA, and it may become \PR, in which case the robots freeze, due to Theorems~\ref{t:p1} and~\ref{t:ap1} (note that, as the robots move radially, the period of the configuration does not change).

Also, as soon as some walkers start moving after time $t_0$, and as long as some walkers are not on SEC/3, the configuration cannot be \VA and \RD. This is because, due to the {\sc Cautious Move} protocol, only the walkers that are farthest from SEC/3 are allowed to move at any given time. Moreover, the walkers that are initially on SEC have to stop at least at one (auxiliary) critical point before reaching SEC/3. It follows that, unless the walkers are all on SEC/3, and except perhaps when the configuration is $\mathcal R(t_0)$, there are always walkers located in the annulus strictly between SEC and SEC/3. While this is true, the configuration is never recognized as \VA and \RD.

Therefore, if the configuration does not become \PR, procedure {\sc Valid and Waiting} keeps being executed, and the walkers keep moving toward SEC/3. Eventually all the walkers freeze on SEC/3, due to Lemma~\ref{l:cautious:0}. When this happens, the configuration finally becomes \VA and \RD, due to Lemma~\ref{z:transition2}. Note that, in the above, all robots either stay still or move radially between SEC and SEC/3, and the configuration is not \CO. Hence no two robots collide.
\end{proof}

\begin{lemma}\label{z:valid1b}
Let $\mathcal R$ be frozen at time $t_0$, and suppose that the following conditions hold:
\begin{itemize}
\item $\mathcal R(t_0)$ is \VA and \WA;
\item If $\mathcal R(t_0)$ is \VA and \RD, all the internal robots lie on their respective finish lines;
\item At time $t_0$, some internal robots are not walkers.
\end{itemize}
Then, if the robots execute the \UCF algorithm, they will freeze either in a \PR configuration, or in the \VA and \WA configuration $\mathcal F(\mathcal R(t_0))$. During the process, no two robots collide.
\end{lemma}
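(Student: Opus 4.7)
The plan is to mirror the structure of Lemma~\ref{z:invalid} (which handles the \IN case), adapting it to the fact that we now start from a \VA and \WA configuration whose internal robots form an ``orphaned'' analogy class that needs to be moved out to SEC.

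First I would determine which procedure is actually executed at $t_0$. If $\mathcal R(t_0)$ is purely \WA (not \RD), then procedure {\sc Valid and Waiting} is called directly; if it is also \RD, the hypothesis that all internal robots already lie on their finish lines routes {\sc Valid and Ready} into {\sc Valid and Waiting}. In both cases, because some internal robot is not a walker, {\sc Valid and Waiting} dispatches to procedure {\sc Move All to SEC}. Next I would characterize the moving set $C$: since $\mathcal R(t_0)$ is \WA, all internal robots share a single analogy class $I$, and by hypothesis $I\neq \mathcal W(\mathcal R(t_0))$; hence in the ``otherwise'' branch of {\sc Move All to SEC} the unique class with robots off SEC is $I$, so $C=I$; in the \BI/\DBI branches $C$ is the (strong) analogy class whose external members cover all of SEC and whose internal members are a subset of $I$, and repeated invocations will exhaust all internal robots.

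The core of the argument is to analyze the cautious move. The crucial invariants are that radial motion preserves angular positions, hence $\mathcal F(\mathcal R(t))=\mathcal F(\mathcal R(t_0))$ throughout, the analogy/strong-analogy structure is unchanged, SED is preserved (because $C$ is movable, or a suitable class in the \BI/\DBI case), and the configuration remains non-\CO and non-\HD. Combined with the absence of internal points of SED/3 (robots of $C$ travel in the closed annulus between SEC/3 and SEC), this shows that every intermediate configuration is either \VA and \WA or (momentarily) \VA and \RD, and in either situation the walker set, being determined by the invariant footprint, still fails to cover $\mathcal I(\cdot)$, so the algorithm re-routes back to {\sc Move All to SEC}; thus the cautious move is carried forward consistently by any activated robot. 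For the \PR-avoidance side, Theorems~\ref{t:bsec4}, \ref{t:dbi}, \ref{t:p1}, and~\ref{t:ap1} (selected according to the period of the current configuration) supply critical points guaranteeing that the swarm freezes the instant any \PR configuration is accidentally formed.

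To conclude, I would argue termination and identification of the limit. By Lemma~\ref{l:cautious:0}, each cautious move either terminates with $C$ entirely on SEC, or freezes at a \PR configuration. After finitely many iterations all internal robots have been moved to SEC, and because angular positions were preserved throughout, the frozen configuration is exactly $\mathcal F(\mathcal R(t_0))$. This set is \VA and \WA: it has $n>4$ points, inherits not-\CO and not-\HD from $\mathcal R(t_0)$ (both properties depend only on angular data), and is trivially \WA since it has no internal points. Collision-freeness is immediate, since all motions are radial from a non-\CO state and the robots in $C$ sweep disjoint radial segments.

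The step I expect to be the main obstacle is verifying that, at moments when the evolving configuration \emph{simultaneously} satisfies \RD (e.g.\ when all moving robots are paused at critical points lying on SEC/3 or SEC), procedure {\sc Valid and Ready} does not divert execution into {\sc Move to Finish Line}: this requires showing that the current finish lines inherited from the preserved footprint are consistent with the ongoing cautious move, and that the ``all-internal-on-finish-lines'' condition either holds (routing back into {\sc Valid and Waiting}$\to${\sc Move All to SEC}) or is inapplicable because there are still robots in the open annulus, which is precisely where the cautious-move invariants of Section~\ref{sec:cautious} and the structural properties of walker classes must be combined carefully.
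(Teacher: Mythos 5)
Your overall route is the same as the paper's: route the execution through {\sc Valid and Waiting} into {\sc Move All to SEC}, observe that the moving class is the (unique) analogy class containing the internal robots, use the radial-move invariants (footprint, SED, non-\CO, non-\HD preserved), invoke the theorems of Section~\ref{sec:analysis} to freeze at any accidental \PR configuration, and invoke Lemma~\ref{l:cautious:0} for termination at $\mathcal F(\mathcal R(t_0))$. All of that matches the paper's proof.

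However, the step you yourself flag as the ``main obstacle'' is a genuine gap, and your interim handling of it is wrong. You claim that every intermediate configuration is ``either \VA and \WA or (momentarily) \VA and \RD, and in either situation \dots the algorithm re-routes back to {\sc Move All to SEC}.'' That second branch does not re-route: if a mid-move configuration were recognized as \VA and \RD with internal robots not on their corresponding finish lines, procedure {\sc Valid and Ready} would dispatch to {\sc Move to Finish Line}, a \emph{lateral} move inside SED/3, which would destroy the radial-move invariants your whole argument rests on. The paper closes this hole not by showing \RD is harmless but by showing \RD never occurs after the robots leave $\mathcal R(t_0)$: under the {\sc Cautious Move} protocol only the robots farthest from SEC may move, and each robot starting on SEC/3 must stop at an auxiliary critical point strictly before reaching SEC, so at every instant after the first robot moves (and before the last one finishes) some robot lies in the open annulus strictly between SEC/3 and SEC; such a robot is neither on SEC nor in SED/3, so the configuration fails the definition of \RD and {\sc Valid and Waiting} keeps being selected. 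You gesture at exactly this fact in your last paragraph but leave it as a to-do; it is the one piece of the argument that actually needs to be proved rather than asserted, so your proof is incomplete without it.
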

\begin{proof}
According to procedures {\sc Valid and Ready} and {\sc Valid and Waiting}, in all cases procedure {\sc Move All to SEC} is executed. Indeed, recall that in a \VA and \WA configuration there are no robots in the interior of SED/3, and therefore procedure {\sc Valid and Waiting} is executed in any case. By definition of \WA, the internal robots are all analogous. Hence, whenever an internal robot is activated, performs a cautious move toward SEC that, due to Theorems~\ref{t:equi}--\ref{t:ap1}, makes all robots freeze as soon as a \PR configuration is reached. As the robots move radially, the configuration remains \VA and \WA. Moreover, as soon as a robot starts moving, the configuration ceases to be \RD, and cannot become \RD throughout the cautious move. Indeed, due to the {\sc Cautious Move} protocol, only the walkers that are farthest from SEC are allowed to move at any given time. Moreover, the robots that are initially on SEC/3 have to stop at least at one (auxiliary) critical point before reaching SEC. It follows that, except perhaps when the configuration is $\mathcal R(t_0)$, there are always robots located in the annulus strictly between SEC and SEC/3. While this is true, the configuration is never recognized as \VA and \RD. Moreover, the set of walkers remains the same throughout the motion, and so the same procedure keeps being executed as the robots move. Due to Lemma~\ref{l:cautious:0}, the robots either reach SEC and freeze on it (forming a \VA and \WA configuration) or they freeze in a \PR configuration. Note that, in the above, all robots either stay still or move radially toward SEC, and the configuration is not \CO. Hence no two robots collide.
\end{proof}

\begin{lemma}\label{z:valid2}
Let $\mathcal R$ be frozen at time $t_0$, let $\mathcal R(t_0)$ be a \VA and \RD configuration, and let the robots execute the \UCF algorithm. Then, the robots will freeze in a \VA and \WA and \RD configuration in which all internal robots lie on their respective finish lines. During the process, the configuration remains \VA and \RD, the finish set does not change, and no two robots collide.
\end{lemma}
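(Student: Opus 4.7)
The plan is to split the execution into two phases corresponding to the two branches of procedure \textsc{Valid and Ready}, and show that in each phase the invariants (\VA, \RD, unchanged finish set, no collisions) are preserved. Since $\mathcal R(t_0)$ is \VA and \RD and hence not \RE, not \PR, not \CE, not \HD, not \CO, and not \WA with every internal robot on its finish line, the first robots to execute a Look--Compute phase will indeed invoke \textsc{Valid and Ready}. They can compute the finish set unambiguously because of Proposition~\ref{p:finishsamesector}, and I will argue that every subsequent Look--Compute phase before the final freezing yields the \emph{same} finish set and the \emph{same} point-to-finish-line correspondence.

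In the first phase, the robots located in the interior of SED/3 are the only ones allowed to move, and they do so radially toward SEC/3. During this motion no robot on SEC moves, so SED is preserved; radial motions do not alter angle sequences or occupied sectors, so the configuration stays not \CO, not \HD, not \PER$\to$\AP transitions, etc. In particular each internal robot remains in the interior of the same occupied sector of $\mathcal E(\mathcal R(t))$, so the set of relocations of $\mathcal I(\mathcal R(t))$ is unchanged; the principal relocation, the tentative finish set, the locked/unlocking analysis, and hence the finish set all remain constant. The configuration therefore stays \VA and \RD with the same finish set. No \PR configuration can be formed in this phase, because Theorem~\ref{t:sec3} forbids points inside SED/3 in a \PR set, and until the phase terminates there are points strictly inside SED/3. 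Collisions are impossible because the motion is radial and $\mathcal R(t)$ is not \CO at any time. By the minimality constant $\delta$, each moving robot reaches SEC/3 in finitely many cycles, after which no robot lies in the interior of SED/3 and the swarm is frozen (while no longer moving robot is activated, the others just wait because the first branch of \textsc{Valid and Ready} still applies).

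In the second phase, every internal robot is on SEC/3, and so procedure \textsc{Move to Finish Line} is invoked. Each such robot attempts a lateral move inside its own occupied sector to the intersection of its corresponding finish line with SEC/3, provided the straight segment to that point creates no co-radiality. Again no robot on SEC moves, so SED and $\mathcal E(\mathcal R(t))$ are invariant; as each internal robot stays in the interior of the same occupied sector of $\mathcal E(\mathcal R(t))$, all the combinatorial data used to define the finish set (occupied sectors, relocations, principal relocation, target set, unlocking structure) are fixed. Hence the configuration remains \VA and \RD and the finish set is preserved throughout. Collisions among internal robots are ruled out by the reachability test built into the procedure, which forbids a lateral move whose trajectory is co-radial with another robot; and no collisions involve external robots because the moving robots remain strictly inside SEC/3 while the external ones stay on SEC. Progress is guaranteed by Proposition~\ref{p:finishreachable}: at any time, at least one not-yet-satisfied internal robot can reach its corresponding finish line, and as soon as that robot has reached and stopped there, the reachability condition for another internal robot becomes true, and so on inductively through the two-points-per-sector case handled by Proposition~\ref{p:readyinternal}.

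The main obstacle in the plan is the bookkeeping for the second phase, namely showing that the finish set really does not change as some of the internal robots move laterally inside their sectors while others are still on SEC/3 at their original anti-footprints. The point is that the definition of finish set is anchored on three ingredients --- the footprint set $\mathcal E(\mathcal R(t))$, the occupied sector structure of $\mathcal R(t)$, and the principal relocation / tentative finish set built from them --- all of which depend only on which sectors of SEC contain internal robots, not on the exact positions of those internal robots within their sectors. Once this invariance is verified (using Proposition~\ref{p:internalclass} to see that the relocation defined by the finish lines is itself an analogy class), applying Lemma~\ref{l:cautious:0}-style termination to the second phase yields a frozen configuration in which every internal robot lies on its corresponding finish line on SEC/3; the resulting configuration is \VA and \RD by construction, and Proposition~\ref{p:internalclass} further ensures that the internal points are all analogous, so it is also \WA, completing the proof.
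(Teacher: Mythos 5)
Your two-phase decomposition (radial ascent to SEC/3, then lateral motion to the finish lines) is the same as the paper's, and your treatment of phase~1 and of the invariance of the finish set is essentially correct. However, there is a genuine gap in your termination argument for phase~2. You write that ``as soon as that robot has reached and stopped there, the reachability condition for another internal robot becomes true, and so on inductively,'' but this presupposes that a robot that starts a lateral move actually reaches its finish line. Under the non-rigid scheduler it may be stopped after moving only $\delta$; at that moment it lies strictly in the interior of SED/3, so at its next cycle the \emph{first} branch of procedure {\sc Valid and Ready} applies and it must move radially back out to SEC/3 before it can attempt another lateral move. The execution therefore alternates between lateral chords and radial corrections, and nothing in your argument rules out a robot making infinitely many such attempts without ever reaching its finish line. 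The paper closes this by observing that no robot's angular distance to its finish line equals $\pi$, so there is a uniform bound $\bar\theta(\theta_0,\delta)>0$ by which each interrupted lateral move still decreases the angular distance to the finish line; hence finitely many lateral moves suffice. Without some such quantitative progress measure your induction does not go through.

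A secondary, smaller issue: your claim that collisions and new co-radialities are excluded ``by the reachability test built into the procedure'' only covers the instant at which a robot takes its snapshot. Since several internal robots may be moving laterally at once (all having passed the test on snapshots taken while everyone was still on SEC/3), you also need the paper's observation that the point-to-finish-line correspondence preserves the cyclic order around the center, so no robot can slide between another robot and its finish line and create a co-radiality mid-flight. Likewise, the fact that the configuration never becomes \CE should be justified by noting that no trajectory passes through the center (again because no angular distance to a finish line is $\pi$).
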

\begin{proof}
According to procedure {\sc Valid and Ready}, if initially there are internal robots that lie strictly inside SED/3, they move radially toward SEC/3. During this phase, the configuration remains \VA and \RD, and therefore the same procedure keeps being executed by all robots that perform a {Look-Compute} after time $t_0$. Hence, at some time $t_1\geqslant t_0$, all internal robots are frozen on SEC/3, and the finish lines and correspondences at time $t_0$ and at time $t_1$ are the same.

Now, as soon as an internal robot performs a {Look-Compute} phase, it executes procedure {\sc Move to Finish Line}, which makes it move toward its corresponding finish line, provided that no other robot is co-radial with some point on the trajectory. By Proposition~\ref{p:finishreachable}, at least one robot can reach its corresponding finish line, and so eventually at least one robot moves laterally. As soon as a robot starts moving, it stops being on SEC/3, and therefore any robot that performs a {Look-Compute} afterwards and lies on SEC/3 does not move. Whenever a moving robot stops because it is interrupted by the adversary, it moves radially to SEC/3 during its next cycles.

Therefore the internal robots alternate between moving all to SEC/3 and toward their finish lines. Observe that no robot's angular distance to a point on its corresponding finish line is $\pi$, and recall that no robot can be stopped by the scheduler before moving by $\delta$ at each turn. Hence, for each robot $r$, there is an angle $\bar\theta(\theta_0,\delta)>0$, depending only on $\delta$ and on $r$'s angular distance to (a point on) its corresponding finish line at time $t_0$, such that,  whenever $r$ moves toward its finish line, it either reaches it or its angular distance to it decreases by at least $\bar\theta(\theta_0,\delta)$. By Proposition~\ref{p:finishreachable}, at any time there is always a robot whose corresponding finish line is reachable, and therefore eventually all robots get on their finish lines. At this point, the internal robots move radially to SEC/3, and they freeze. When this happens, the configuration is still \VA and \RD, but it is also \WA, due to Proposition~\ref{p:internalclass}.

Note that, in the above paragraphs, we assumed that the internal robots keep executing procedure {\sc Valid and Ready}. To prove that this is indeed the case, we show that the configuration remains \VA and \RD, and it never becomes \PR, \CE, \CO, or \HD. Indeed, note that the internal robots can never get out of SED/3 or of the occupied sectors in which they lie initially. Hence the configuration cannot become \PR, due to Theorem~\ref{t:sec3}, because there are robots in SED/3 at all times. It is easy to see that the configuration cannot become \CE either, because no robot's angular distance to (a point on) its corresponding finish line is $\pi$, and therefore the robot never has to cross the center of SED to reach it. Also, recall that a robot moves toward its corresponding finish line only if it can reach it; all other moves are radial, and therefore they do not affect angular distances between robots. Moreover, the correspondence between robots and finish lines preserves their cyclic order around the center of SED. It follows that, if a robot can reach its corresponding finish line at some point and starts moving toward it, no other robot can get between them and cause the formation of a co-radiality. Hence the configuration never becomes \CO, and in particular no collisions occur. Finally, the formation of a \HD configuration is prevented by the fact that the configuration is not \HD at time $t_0$ and the internal robots remain in the interior of their initial occupied sectors at all times.
\end{proof}

\begin{lemma}\label{z:next1}
Let $S$ be a \VA set of $n>5$ points, all of which lie on SEC. Suppose that $W=\mathcal W(S)$ is not empty, and let $S'=(S\setminus W)\cup\mathcal F'(W)$. Let $L$ be the relocation of $\mathcal I(S')$ (with respect to $S'$) having one point on each finish line of $S'$, and let $S''=\mathcal E(S')\cup L$. Then, the following statements hold.
\begin{itemize}
\item $S$ does not have fewer analogy classes than $S''$.
\item If $S$ has an axis of symmetry, then $S''$ has the same axis of symmetry and the same target set. Also, each point of $\mathcal E(S')$ has the same target in both $S$ and $S''$.
\end{itemize}
Moreover, at least one of the following statements holds.
\begin{itemize}
\item $S$ has strictly more analogy classes than $S''$, or
\item $S$ has no axes of symmetry and $S''$ has some axes of symmetry, or
\item $S$ is locked and it does not have more satisfied points than $S''$, or
\item $S$ is not locked and it has strictly fewer satisfied points than $S''$.
\end{itemize}
\end{lemma}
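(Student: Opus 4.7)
The plan is to exploit that every construction the algorithm performs in passing from $S$ to $S'$ to the finish set $L$ (and hence to $S''$) is equivariant under the isometry group $\mbox{Iso}(S)$ of the configuration. First I would prove the inclusion $\mbox{Iso}(S)\subseteq\mbox{Iso}(S'')$: by Proposition~\ref{p:symm} every $\sigma\in\mbox{Iso}(S)$ preserves $W=\mathcal W(S)$, hence also $\mathcal E(S')=S\setminus W$; since footprints, anti-footprints, and $\mbox{SED}$ all commute with isometries, $\sigma(S')=S'$; the principal relocation of $\mathcal I(S')$, the tentative finish set, and the antipodal override in the locked-unlocking subcase are canonical constructions on $S'$, so $\sigma(L)=L$ and therefore $\sigma(S'')=S''$.

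From this inclusion the two unconditional claims follow. Since points lying on $\mbox{SEC}$ are analogous precisely when they lie in a common $\mbox{Iso}$-orbit, every analogy class of $S$ sits inside a single analogy class of $S''$, giving the first bullet. For the second, every axis of $S$ is an axis of $S''$, and when $S$ has any axis the target-set construction is determined by the axes together with the canonical angle-sequence tie-breaking, so the target sets of $S$ and $S''$ coincide and each non-walker point in $\mathcal E(S')$ keeps its corresponding target.

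For the four-way disjunction I would case-split on whether $S$ is locked and on how the finish set is defined. When $S$ is not locked and the finish set is $T$ (the walkers' targets), each walker coincides with its target in $S''$; Observation~\ref{o:concordance} and the maximality rule in the target-set definition ensure that no previously satisfied non-walker becomes unsatisfied, giving the fourth bullet. When instead the finish set is the principal relocation $P$, the definition forces $P$ to be a proper subset of an analogy class of $\mathcal E(S')\cup P$, so the walkers merge with that class in $S''$ and the number of analogy classes strictly drops, giving the first bullet. When $S$ is locked, by Observation~\ref{o:walkers} and Proposition~\ref{p:locked2} the walkers form a non-satisfied unlocking class; either the finish set behaves as in the non-locked case (yielding a merger, a new axis, or a gain in satisfied points), or the special antipodal placement applies, and Proposition~\ref{p:locked2} applied to $S''$ shows that a previously non-movable class becomes movable or the locking is resolved outright, while the third bullet explicitly permits the concomitant possible loss of satisfied points.

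The main obstacle is the asymmetric subcase of $R=T$: one must show that non-walker points already on their targets in $S$ remain on targets in $S''$ even though the winning concordance class anchoring the target set of $S''$ may differ from that of $S$. The key lever is Observation~\ref{o:concordance}, which places all satisfied points of $S$ in a single concordance class; once the walkers join it, this class can only grow in $S''$, so it remains a valid and indeed larger candidate under the canonical tie-breaking rule that selects the anchoring class of $S''$, forcing the targets of previously satisfied points to be preserved and yielding a strict increase in satisfied points. A secondary technicality is checking equivariance under orientation-reversing isometries, which holds because $\mu^{(p)}$ is taken as the minimum of the clockwise and counterclockwise sequences and is therefore chirality-invariant by construction.
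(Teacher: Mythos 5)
Your skeleton for the two unconditional claims is sound and is essentially the paper's argument in different clothing: the equivariance of every construction under $\mathrm{Iso}(S)$ is exactly what Proposition~\ref{p:symm} and Proposition~\ref{p:internalclass} deliver, and the direction you need (same orbit implies analogous) holds even though $S''$ has internal points. The four-way disjunction, however, has genuine gaps. First, your dichotomy ``finish set $=T$'' versus ``finish set $=P$, which is then a proper subset of an analogy class, so classes merge'' does not match the definition of the tentative finish set: $R$ defaults to $P$ not only when $P$ is a proper subset of an analogy class of $\mathcal E(S')\cup P$, but also in the fallback case where $T$ fails to be a relocation of $\mathcal I(S')$. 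In that fallback no merger occurs and none of your four outcomes is secured, so you must \emph{rule the fallback out} when $S$ is not locked. The paper does this by showing that improvability of $W$ in $S$ transfers to the relocated configuration, and --- crucially in the asymmetric case --- by first proving that $S^*=\mathcal E(S')\cup P$ cannot acquire an axis of symmetry (under the standing assumptions that $S''$ is asymmetric and no classes merge). Without that step, the target set of the \RD configuration $S'$ could be computed via the symmetric rule rather than anchored on the maximum concordance class containing the satisfied points $C$, and your ``the class containing $C$ can only grow, so the tie-breaking selects it'' argument never gets off the ground. You also need the uniqueness of the concordance class of $S''$ with more than $|C|$ points (which follows because all of $L$ lies in a single concordance class), not merely its existence.

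Second, you misread the third bullet: ``$S$ is locked and does not have more satisfied points than $S''$'' asserts $|\mathrm{sat}(S)|\leqslant|\mathrm{sat}(S'')|$, i.e.\ the satisfied count must \emph{not decrease}; it is not a clause that ``permits the concomitant loss of satisfied points.'' This is a real proof obligation in the locked case --- in the symmetric subcase it follows from target preservation for the non-walkers, and in the asymmetric subcase from the preservation of concordance among the points of $S\setminus W$ (each point of $L$ stays in the same main sector of $S\setminus W$ as the walker it replaces, so the counts of intervening points, and hence concordance classes restricted to $S\setminus W$, are unchanged). Relatedly, your locked-case discussion drifts into proving that the configuration becomes unlocked or a non-movable class becomes movable; that is the content of Lemma~\ref{p:unlocked}, not of this statement, and conflating the two leaves the actual claim of bullet three unargued.
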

\begin{proof}
Note that $\mathcal W(S)$ is a movable analogy class of $S$, by Observation~\ref{o:walkers}. Hence $\mbox{SED}(S)=\mbox{SED}(S')=\mbox{SED}(S'')$. Also, $S'$ is \VA and \WA and \RD, and so $W=\mathcal W(S')$, and $L$ is well defined. Moreover, the points of $L$ are all analogous in $S''$, due to Proposition~\ref{p:internalclass}. Therefore, if two points of $S\setminus W$ are analogous in $S$, then they are analogous also in $S''$. As a consequence, the number of analogy classes in $S''$ does not exceed the number of analogy classes in $S$. Specifically, $S''$ has strictly fewer analogy classes than $S$ if and only if $L$ is a proper subset of an analogy class of $S''$. In the following we denote by $P$ the principal relocation of $\mathcal I(S')$ with respect to $S'$, and we let $S^*=\mathcal E(S')\cup P$.

Suppose that $S$ has an axis of symmetry $\ell$. Since $W$ is an analogy class of $S$, it has $\ell$ as an axis of symmetry as well, due to Proposition~\ref{p:symm}. But then $\ell$ is also an axis of symmetry of $\mathcal F'(W)$, and therefore of $S'$. Moreover, $\ell$ is an axis of symmetry of $P$ (cf.~Proposition~\ref{p:internalclass}), and hence of $S^*$, and of the target set of $S^*$. Similarly, $\ell$ is an axis of symmetry of $L$, and therefore of $S''$.

If a point $p\in S$ lies on $\ell$, then $p$ is satisfied in $S$ (by definition of target set), and hence $p\notin W$, by Observation~\ref{o:walkers}. Then $p$ belongs also to $S'$, $S^*$, and $S''$. Moreover, $p$ is satisfied in $S'$, $S^*$, and $S''$, and therefore the target sets of $S$, $S'$, $S^*$, and $S''$ are the same. Similarly, if no point of $S$ lies on $\ell$, then no point of $S'$, $S^*$, and $S''$ does. Indeed, even if $S$ and $S^*$ are locked, $L$ consists of two antipodal points of SEC that are symmetric with respect to $\ell$, and none of them lies on $\ell$ (cf.\ the definition of finish line and Proposition~\ref{p:locked2}). It follows that, in all cases, the target sets of $S$, $S'$, $S^*$, and $S''$ are the same, and the points of $S\setminus W$ that are satisfied in $S$ are also satisfied in $S'$, $S^*$, and $S''$ (cf.~Proposition~\ref{p:targetcompatible}).

If $S$ does not have an axis of symmetry and $S''$ does, there is nothing to prove. So, in the following we will assume that $S$ and $S''$ are either both asymmetric or both symmetric. We will also assume that $S$ and $S''$ have the same number of analogy classes, and hence that $L$ is an analogy class of $S''$.

Let $S$ be symmetric and locked. By Observation~\ref{o:walkers}, $W$ is a movable and non-satisfied analogy class of $S$. Moreover, since $S$ is symmetric, we already proved that $S''$ has the same target set of $S$, and that all the points of $S\setminus W$ that are satisfied in $S$ are also satisfied in $S''$. Therefore, $S''$ has at least as many satisfied points as $S$.

Let $S$ be symmetric and not locked. Then the points of $W$ are not satisfied and can reach their targets in $S$. Recall that targets and correspondences are preserved from $S$ to $S'$ to $S^*$ to $S''$, because $S$ is symmetric. Therefore $S^*$ is not locked, because $P$ is improvable in $S^*$, as $W$ is improvable in $S$. Let $R$ be the tentative finish set of $S'$. By definition, $R$ is the set of targets $T$ of the points of $\mathcal I(S')$, unless $P$ is a proper subset of an analogy class of $\mathcal E(S')\cup P$. However, in this case $R=P$ and, by definition of finish set, $R=L$. This implies that $L$ is a proper subset of an analogy class of $S''$, which contradicts our previous assumptions. Hence $R=T$ and, since $S^*$ is not locked, $R=L$. It follows that the points of $L$ are satisfied in $S''$. Recalling that the points of $S\setminus W$ that are satisfied in $S$ are also satisfied in $S''$, we conclude that $S''$ has strictly more satisfied points than $S$.

Suppose that neither $S$ nor $S''$ have an axis of symmetry. Let $C$ be the set of satisfied points of $S$. By definition of target, $C$ is a concordance class of maximum size. By Observation~\ref{o:walkers}, $W$ is a non-satisfied analogy class of $S$, and therefore no point of $W$ is in $C$. Since $L$ is a relocation of $\mathcal I(S)$, there are some concordance classes in $S''$ with $|C|$ points on SEC: indeed, $C$ must be a subset of one of such classes. Considering that $S''$ is not symmetric by assumption, this implies that it has at least $|C|$ satisfied points, as well. Hence, if $S$ is locked, there is nothing to prove, because it does not have more satisfied points than $S''$.

Let therefore $S$ be not locked. We claim that $S^*$ cannot be symmetric. Assume for a contradiction that $\ell$ is an axis of symmetry of $S^*$. Suppose that $P$ is an analogy class of $S^*$. Then, by Proposition~\ref{p:symm}, $\ell$ is an axis of symmetry of $P$, as well. Hence, as argued above, $\ell$ is an axis of symmetry of $S''$, contradicting our assumptions. If $P$ is not an analogy class of $S^*$, then it must be a proper subset of an analogy class, because all the points of $P$ are analogous (by Proposition~\ref{p:internalclass}). Then, by definition of tentative finish set, $R=P$. Hence $R$ is not an analogy class of $S^*$, and in particular it cannot possibly be an unlocking analogy class of $S^*$, implying that $R=L$, by definition of finish set. As a consequence, $S''=S^*$, meaning that $S''$ is symmetric, which contradicts our assumptions. Hence $S^*$ is not symmetric.

As a consequence, by definition of target, $S'$ has at least $|C|$ satisfied points. Moreover, since the points of $W$ can reach their targets in $S$, it follows that there is a concordance class in $C'$ in $S^*$ with $|C|$ points in $\mathcal E(S')$ such that some relocation $R'$ of $\mathcal F'(W)$ with respect to $S'$ belongs to the same concordance class as $C'$ in $\mathcal E(S')\cup R'$. In particular, one of such concordance classes $C'$ defines the set of targets in $S'$, and therefore the tentative finish set $R$ coincides with the set of targets $T$ of the points of $\mathcal I(S')$ with respect to $S'$. Indeed, if this was not true, then $P$ would be a proper subset of some analogy class of $S^*$, and $R=P$. Hence $R$ is not an analogy class in $S^*$, and $R=L$. Moreover, by Proposition~\ref{p:internalclass}, the points of $L$ are all analogous in $S''$. Hence $L$ is a proper subset of an analogy class of $\mathcal S''$, contradicting our previous assumption. We conclude that $R$ must coincide with $T$. Hence $\mathcal E(S')\cup R$ has a unique concordance class with strictly more than $|C|$ points, which therefore define the targets, and are all satisfied. Such a concordance class contains $R$, and hence $R$ cannot possibly be a non-satisfied unlocking analogy class. Then, by definition of finish line, $R=L$. It follows that $S''$ has a unique concordance class with more than $|C|$ points, which are satisfied. This means that $S''$ has strictly more satisfied points than $S$.
\end{proof}

\begin{lemma}\label{p:unlocked}
Let $S$ be a locked \VA set of $n>5$ points, all of which lie on SEC. Let $W=\mathcal W(S)$, and let $S'=(S\setminus W)\cup\mathcal F'(W)$. Let $L$ be the relocation of $\mathcal I(S')$ (with respect to $S'$) having one point on each finish line of $S'$, and let $S''=\mathcal E(S')\cup L$. Then, at least one of the following statements holds.
\begin{itemize}
\item $S''$ is not locked.
\item $S''$ has fewer analogy classes than $S$.
\item $S''$ has fewer non-movable analogy classes than $S$.
\end{itemize}
\end{lemma}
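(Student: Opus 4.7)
The plan is to build on Lemma~\ref{z:next1}, which guarantees that the number of analogy classes of $S''$ does not exceed that of $S$; if the inequality is strict, the second conclusion of the lemma holds immediately. In the remaining case, Lemma~\ref{z:next1} additionally forces $L$ to be a single full analogy class of $S''$. I then invoke Proposition~\ref{p:locked2} to conclude that $S$ is \AP, that $W=\mathcal W(S)$ is a non-satisfied unlocking analogy class of size one (in the \UA case) or two (in the \BA case), and that the non-movable analogy classes of $S$ consist of one or two mutually consecutive points on SEC that are adjacent to $W$. This reduces the lemma to exhibiting either outcome~(1) or outcome~(3) under these structural assumptions.

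Next I would split the argument according to which clause of the definition of the finish set of $\mathcal I(S')$ is invoked. Let $P$ be the principal relocation of $\mathcal I(S')=\mathcal F'(W)$, let $T$ be the set of targets of $\mathcal I(S')$ with respect to $S'$, and set $S^{*}=\mathcal E(S')\cup P$. When the antipodal branch is not triggered, $L$ coincides with the tentative finish set $R\in\{P,T\}$. In the subcase $L=T$, each walker lands on a target of $S'$ and joins the concordance class of the satisfied external points via Observation~\ref{o:concordance}; combined with the fact that $L$ is a full analogy class of $S''$, this makes $L$ improvable in $S''$, so $S''$ is not locked and conclusion~(1) holds. In the subcase $L=P$ we have $S''=S^{*}$; either $S^{*}$ is not locked, giving conclusion~(1) directly, or $S^{*}$ is locked but $P$ is not an unlocking analogy class of $S^{*}$, in which case the adjacency between $W$ and a non-movable class of $S$ must be broken in the transition $S\to S^{*}$, forcing a previously non-movable class to become movable in $S''$ and giving conclusion~(3).

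The antipodal branch is triggered when $S^{*}$ is locked and $P$ is an unlocking analogy class of $S^{*}$. Here $L$ is defined explicitly as antipodal points on SEC: a single point antipodal to a prescribed non-walker in the \UA case, and two antipodal points symmetric with respect to the axis of symmetry of $S$ in the \BA case (cf.\ the proof of Proposition~\ref{p:finishsamesector}). I would then use Observation~\ref{o:unmovable} to verify directly that placing the walkers at these antipodal positions destroys every closed half-plane through the centre of SED that previously isolated a non-movable analogy class of $S$, so every such class becomes movable in $S''$ and conclusion~(3) holds. The main obstacle is precisely this geometric verification, especially the \BA case, where one must use the axis of symmetry to argue that the antipodal pair $L$ balances the non-movable pair $\{p,q\}$ across every candidate isolating line; the \UA calculation and the bookkeeping between $L=P$ and $L=T$ are more mechanical and flow from the definitions.
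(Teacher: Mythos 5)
Your skeleton matches the paper's: first dispose of the case where $L$ is a proper subset of an analogy class of $S''$ (fewer analogy classes), then case on the definition of the finish set with $S^{*}=\mathcal E(S')\cup P$, treating the antipodal branch separately. However, two of your steps do not go through as written.

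First, the subcase $L=T$. You argue that the walkers land on their targets and that this makes $L$ \emph{improvable} in $S''$; but improvable means movable, \emph{not satisfied}, and able to reach the corresponding target, so a class sitting on its targets is satisfied and therefore precisely \emph{not} improvable --- and a locked configuration may well contain satisfied classes, so this would not yield conclusion one even if repaired. More importantly, this subcase is vacuous, and you need to prove that it is: since $S$ is locked and $W$ is a non-satisfied unlocking class, the walkers cannot reach their targets in $S$, hence $T$ is not a relocation of $\mathcal I(S')$ and the tentative finish set is always $P$. The paper verifies this separately for symmetric and asymmetric $S$, and also handles the degenerate case where $S$ is \UA but $S^{*}$ becomes \BA (there $T=P$ anyway). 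Without establishing $R=P$ you do not actually know where $L$ lies, so the subsequent case analysis is not grounded.

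Second, in both the ``$S^{*}$ locked but $P$ not unlocking'' subcase and the antipodal branch, you infer ``fewer non-movable classes'' from ``a previously non-movable class becomes movable.'' That implication is invalid on its own: relocating the walkers changes the angular gaps at their two neighbours $p$ and $p'$, so $\{p'\}$ could become non-movable in $S''$ and keep the count constant. The paper's proof spends most of its technical effort exactly here: using Observation~\ref{o:unmovable} it shows that only classes consecutive to the relocated point can change movability status, enumerates the candidates ($\{p\}$, $\{p'\}$, $\{r\}$ in the \UA case), and excludes the new ones either because $P$ is not unlocking in $S^{*}$ or because $L$ is antipodal to $r$. Relatedly, your uniform claim that the antipodal branch yields conclusion three because ``every isolating half-plane is destroyed'' is both too strong and insufficient: in the \UA case $\{r\}$ may remain non-movable in $S''$ (the count still drops, but only after the candidate enumeration you have deferred), one must also allow for $S''$ ceasing to be \UA (which gives conclusion two instead), and in the \BA case the paper instead concludes that $S''$ is not locked at all, using the axis of symmetry together with $L\subset\ell'$.
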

\begin{proof}
Since $S$ is locked, by Proposition~\ref{p:locked2} it is \AP, and hence its period is $n$. Therefore, by Observation~\ref{o:analogy1}, $S$ has more than one analogy class, and, by definition of walker, $\mathcal W(S)$ is not empty, and it is a non-satisfied unlocking analogy class of $S$. By definition of unlocking analogy class, $W$ is movable, and hence $\mbox{SED}(S)=\mbox{SED}(S')=\mbox{SED}(S'')$. Note also that $S'$ is \VA and \RD, hence $L$ is well defined. Now, let $P$ be the principal relocation of $\mathcal I(S')$ with respect to $S'$, let $S^*=\mathcal E(S')\cup P$, and let $R$ be the tentative finish set of $S'$.

Suppose first that $P$ is a proper subset of an analogy class of $S^*$. Then, by definition of tentative finish set, $R=P$. Also, since $R$ is not an analogy class of $S^*$, then, by definition of finish set, $R=L$, and therefore $S''=S^*$. Since $\mathcal W(S)$ is an analogy class of $S$, and $L$ is a proper subset of an analogy class of $S''$, it follows that $S''$ has strictly fewer analogy classes than $S$.

Now suppose that $P$ is an analogy class of $S^*$. Suppose also that $S$ is \UA and $S^*$ is not \UA. Then, by Observation~\ref{o:analogy2}, $W$ consists of a single point, and therefore so does $P$. But $P$ is an analogy class of $S^*$, and so $S^*$ must be \BA, again by Observation~\ref{o:analogy2}. Moreover, the unique point $p\in P$ lies on the unique axis of symmetry of $S^*$ and, by definition of target, it is satisfied in $S^*$. Therefore, $p$ is the target corresponding to the unique point of $\mathcal I(S')$ and, by definition of tentative finish line, $R=P$. Note that, if $\mathcal E(S')\cup R=S^*$ is locked, then $R$ cannot be an unlocking analogy class of it. Indeed, due to Proposition~\ref{p:locked2}, since $S^*$ is \BA, its unique analogy class consists of two points. Hence, by definition of finish set, $R=L$, and therefore $S''=S^*$. This implies that $S''$ is \BA, and as such it has fewer analogy classes than $S$, as required.

So, in the following, we assume that $P$ is an analogy class of $S^*$ and that, if $S$ is \UA, then also $S^*$ is \UA. Let $T$ be the set of targets of the internal points of $S^*$. We claim that $T$ is not a relocation of $\mathcal I(S^*)$, so let us assume the opposite. We distinguish two cases.
\begin{itemize}
\item Let $S$ have an axis of symmetry. Then $S^*$ has the same axis of symmetry, and the same target set as $S$, with the same correspondences for points in $\mathcal E(S')$ (cf.~the proof of Lemma~\ref{z:next1}). So, if $T$ is a relocation of $\mathcal I(S^*)$, it means that the points of $W$ can reach their targets in $S$, contradicting the fact that $W$ is a non-satisfied analogy class of $S$, and $S$ is locked.
\item Let $S$ have no axes of symmetry. By Proposition~\ref{p:locked2}, $S$ is \UA, and $W$ consists of a single point $p$. By our assumption, $S^*$ is also \UA. Then, by definition of target in a \VA and \RD set, relocating $\mathcal F'(p)$ makes it join a concordance class of maximum size (more specifically, a concordance class whose number of points on $\mbox{SEC}(S')=S\setminus \{p\}$ is maximum). Therefore, by definition of target in a \VA and not \RD set, $p$ can reach its own target in $S$, which again contradicts the fact that $W$ is non-satisfied and $S$ is locked.
\end{itemize}
It follows that $T$ is not a relocation of $\mathcal I(S^*)$ and, by definition of tentative finish set, $R=P$.

Suppose that $S^*$ is not locked. Then, by definition of finish set, $R=L$, and hence $S''=S^*$. This implies that $S''$ is not locked, as required. Suppose now that $S^*$ is locked. Since $S$ is also locked, then, by Proposition~\ref{p:locked2}, there are two cases to consider.
\begin{itemize}
\item Let $S$ be \BA. Then $S$ has exactly one non-movable analogy class $U=\{p,q\}$, where $p$ and $q$ are consecutive. Let $\{p',q'\}=W$ be the unique unlocking analogy class of $S$, such that $p$ and $p'$ are consecutive. Let $\ell$ be the unique axis of symmetry of $S$, and let $\ell'$ be the line orthogonal to $\ell$ and passing through the center of SED$(S)$. Then, by Observation~\ref{o:unmovable} and by the symmetry of $S$, there is a half-plane bounded by $\ell'$ containing $p$ and $q$, and no other point of $S$. Recall that $S^*$ has $\ell$ as an axis of symmetry as well, and hence it is \BA. Since $S^*$ is locked too, and it is obtained from $S$ by relocating $\mathcal F'(W)$, it is easy to see that either $U$ is the non-movable analogy class of $S^*$, or $W$ and its relocation $L$ lie on opposite sides of $\ell'$, and $n=6$. However, in the latter case, the targets of $p'$ and $q'$ in $S$ lie on $\ell'$, implying that $p'$ and $q'$ can reach their targets, and therefore that either $W$ is not a movable non-satisfied analogy class, or $S$ is not locked. This is a contradiction, and hence $U$ is the non-movable analogy class of $S^*$. It follows that $P$ is the unlocking analogy class of $S^*$. Because $R=P$ (as argued above) and by definition of finish set, $L$ consists of two antipodal points lying on $\ell'$. But then $S''$ cannot be locked, because it has $\ell$ as an axis of symmetry (by Lemma~\ref{z:next1}), and no analogy class of $S''$ could be alone one side of $\ell'$, because $L\subset \ell'$.
\item Let $S$ be \UA. Then $S$ has at least one non-movable analogy class $\{p\}$ and, without loss of generality, $p$ is consecutive to $q$, where $W=\{q\}$. Let $r\in S$ be the other consecutive point of $p$ (note that $\{r\}$ is either a non-movable analogy class or an unlocking analogy class), and let $p'\in S$ be the other consecutive point of $q$. Recall that, since $S$ is \UA, then $S^*$ is \UA, as well. By Observation~\ref{o:unmovable}, an analogy class $\{c\}$ of $S$ (respectively, $S^*$) is non-movable if and only if the sum of the angular distances between $c$ and its two consecutive points in $S$ (respectively, $S^*$) is greater than $\pi$. Note that this sum, computed on $q$ with respect to $S$, is the same as the sum computed on the unique point of $P$ with respect to $S^*$. Also, since $|S\cap S^*|=n-1$, the only points of $S\cap S^*$ for which such a sum of angular distances may not be preserved in $S^*$ are $p$ and $p'$, because they are consecutive to $q$. It follows that the only possible non-movable analogy classes of $S^*$ are $\{p\}$, $\{p'\}$, and $\{r\}$ (the latter is non-movable in $S^*$ if and only if it is non-movable in $S$). Suppose that $P$ is not an unlocking analogy class of $S^*$. Therefore, by definition of unlocking analogy class, neither $\{p\}$ nor $\{p'\}$ is a non-movable analogy class of $S^*$. Also, since $R=P$, then, by definition of finish set, $R=L$, implying that $S''=S^*$. So, in this case, $S''$ is locked and it has fewer non-movable analogy classes than $S$. Suppose now that $P$ (and therefore $R$) is an unlocking analogy class of $S^*$. By definition of finish set, $L=\{r'\}$, where $r'$ is the antipodal of $r$ with respect to SEC$(S)$. So, $S''$ contains two antipodal points, $r$ and $r'$. If $S''$ is not \UA, then it has fewer analogy classes than $S$, and we are done. So, let $S''$ be \UA. Note that, by Proposition~\ref{p:finishreachable}, $r'$ is indeed reachable by $q$, and therefore the two consecutive points of $r'$ in $S''$ are $p$ and $p'$. So, by the previous argument on angular distance sums, it follows that, once again, the only analogy classes of $S''$ that could possibly be non-movable are $\{p\}$, $\{p'\}$, and $\{r\}$ (the latter if and only if it is non-movable also in $S$). But, since $r$ and $r'$ are antipodal, no analogy class of $S''$ other than $\{r\}$ can be non-movable (again, by the angular distance sum argument). Hence, $S''$ has fewer non-movable analogy classes than $S$.
\end{itemize}
\end{proof}

\begin{lemma}\label{z:valid3}
Let $\mathcal R$ be frozen at time $t_0$, let $\mathcal R(t_0)$ be a \VA configuration with $n>5$, and let the robots execute the \UCF algorithm. Then, the robots will eventually freeze in a \PR configuration without ever colliding.
\end{lemma}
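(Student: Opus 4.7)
The plan is to decompose the execution into a finite sequence of ``macro-iterations,'' each of which takes the swarm from one frozen \VA and \WA configuration with all robots on SEC to another such configuration (or else catches a \PR configuration along the way, which finishes the proof via Lemma~\ref{z:prereg}). First I would reduce to this setup: if $\mathcal R(t_0)$ is \VA and \RD, Lemma~\ref{z:valid2} produces a frozen \VA, \WA, and \RD state with all internal robots on their corresponding finish lines; then, since in that state some internal robots may no longer be walkers, Lemma~\ref{z:valid1b} applies and moves all robots to SEC, ending in a frozen \VA and \WA configuration on SEC (or in \PR, concluding the proof). If $\mathcal R(t_0)$ is already \VA and \WA, the same argument triggers immediately through Lemma~\ref{z:valid1a} followed by Lemmas~\ref{z:valid2} and~\ref{z:valid1b}. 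All cited lemmas already certify no collisions and no accidental \PR escapes, so a macro-iteration is well-defined.

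Next I would define a potential function on frozen \VA and \WA configurations with all robots on SEC. Specifically, I would take the lexicographic tuple
\[
\Phi(S) \;=\; \bigl(A(S),\; \sigma(S),\; \lambda(S),\; \nu(S),\; -\tau(S)\bigr),
\]
where $A(S)$ is the number of analogy classes of $S$, $\sigma(S)\in\{0,1\}$ with $\sigma(S)=0$ iff $S$ has an axis of symmetry, $\lambda(S)\in\{0,1\}$ with $\lambda(S)=1$ iff $S$ is locked, $\nu(S)$ counts the non-movable analogy classes, and $\tau(S)$ is the number of satisfied points. Writing $S'''$ for the next state in a macro-iteration starting at $S$ (so that $\mathcal F(S''')$ coincides angularly with the set $S''$ of Lemma~\ref{z:next1}), the main claim is that $\Phi(S''')<\Phi(S)$ strictly in lex order whenever $A(S)>1$. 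This claim is verified by combining Lemma~\ref{z:next1} and Lemma~\ref{p:unlocked} case by case: analogy count is monotone non-increasing by~\ref{z:next1}; once symmetric, the configuration stays symmetric by~\ref{z:next1}; a locked configuration either unlocks, loses a non-movable class, or loses an analogy class by~\ref{p:unlocked}; and a non-locked configuration with unchanged $A$ and $\sigma$ must strictly gain satisfied points by~\ref{z:next1}(d). Each such alternative produces a decrease in one coordinate of $\Phi$ without increasing any higher-priority coordinate, establishing the strict lex decrease.

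Since $\Phi$ takes values in a well-ordered finite set, only finitely many macro-iterations can occur with $A(S)>1$. Hence some macro-iteration ends with $A(S_k)=1$, at which point $S_k$ is either \EQ or \BI by Observation~\ref{o:analogy1}. An \EQ set on SEC is \RE, and any \BI set is \PR (detected by the {\sc Is Pre-regular?} test, cf.\ Lemma~\ref{l:uniquepr}); in either situation the robots are found frozen in a \PR configuration, and Lemma~\ref{z:prereg} applies to freeze them in a \RE state. Along the way, if a \PR configuration is accidentally formed during any of the cautious moves inside a macro-iteration, Theorems~\ref{t:equi},~\ref{t:bsec4},~\ref{t:dbi},~\ref{t:p1},~\ref{t:ap1} (applied through Lemmas~\ref{z:valid1a} and~\ref{z:valid1b}) guarantee the swarm freezes at that moment, and Lemma~\ref{z:prereg} concludes.

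The hard part will be verifying the lex decrease of $\Phi$ cleanly across all interactions between the lock status, the symmetry status, and the analogy count. In particular, transitions from a non-locked asymmetric $S$ to a locked symmetric $S'''$, or from locked $S$ to unlocked $S'''$ whose satisfied count did not strictly increase, require delicate bookkeeping to confirm that the earlier coordinates of $\Phi$ already provide strict decrease. The correctness of the finish-line phase and the invariance of the walker analogy class when transitioning between sub-phases (guaranteed by Proposition~\ref{p:internalclass} and Lemma~\ref{z:transition2}) are what tie together $S$, $S''$, and $S'''$ so that Lemmas~\ref{z:next1} and~\ref{p:unlocked} can be legitimately applied across a whole macro-iteration.
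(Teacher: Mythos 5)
Your overall skeleton is the same as the paper's: the same decomposition of the execution into iterations glued together by Lemmas~\ref{z:valid2}, \ref{z:valid1a}, and~\ref{z:valid1b}, the same identification of the iteration endpoints with the abstract sets $S$ and $S''$ of Lemma~\ref{z:next1}, and a termination argument driven by exactly the quantities that Lemmas~\ref{z:next1} and~\ref{p:unlocked} control. However, the specific lexicographic potential $\Phi=(A,\sigma,\lambda,\nu,-\tau)$ does not strictly decrease, and the failure occurs precisely in one of the cases you flag at the end but do not resolve. Consider a non-locked $S$ for which neither the analogy count nor the symmetry status changes across the iteration. Lemma~\ref{z:next1} then guarantees only that $S''$ has strictly more satisfied points; it says nothing preventing $S''$ from being locked. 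In that event $\lambda$ jumps from $0$ to $1$ while the two higher-priority coordinates $A$ and $\sigma$ are unchanged, so $\Phi(S'')>\Phi(S)$ in your ordering: the strict gain in $\tau$ sits below $\lambda$ and cannot rescue the decrease. Nothing in the paper rules out the non-locked-to-locked transition (indeed the proof of Lemma~\ref{z:valid3} explicitly entertains locked configurations arising arbitrarily late), so this is a real hole, not a bookkeeping formality.

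The repair is to demote the lock-related coordinates below the satisfied-point count, e.g.\ $\Phi=(A,\sigma,-\tau,\lambda,\nu)$: when $A$ and $\sigma$ are stable, Lemma~\ref{z:next1} gives $\tau$ non-decreasing in the locked case and strictly increasing in the non-locked case, and when additionally $\tau$ is stable the configuration must be locked, whereupon Lemma~\ref{p:unlocked} forces either unlocking or a drop in $\nu$, and Proposition~\ref{p:locked1} bounds how long $\nu$ can keep dropping. This reordering is exactly what the paper's proof encodes by stabilizing, in sequence, the number of analogy classes, then the symmetry status, then the number of satisfied points, and only afterwards invoking Lemma~\ref{p:unlocked} to manufacture a non-locked index and contradict the stabilization of $\tau$. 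With that correction (and a small clarification that after Lemma~\ref{z:valid2} either Lemma~\ref{z:valid1a} or Lemma~\ref{z:valid1b} may apply, depending on whether the internal robots are walkers), your argument becomes the paper's.
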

\begin{proof}
Suppose for a contradiction that the robots never freeze in a \PR configuration. Then, we claim that there is a time $t_1\geqslant t_0$ at which the swarm is frozen in a \VA and \WA and \RD configuration in which all the walkers are on SEC/3 and the other robots are on SEC. Indeed, if $\mathcal R(t_0)$ is \VA and \RD, by Lemma~\ref{z:valid2} there is a time $t_0'\geqslant t_0$ at which the robots are frozen in a \VA and \WA and \RD configuration in which all the internal robots are on their finish lines. This configuration satisfies the hypotheses of either Lemma~\ref{z:valid1a} or Lemma~\ref{z:valid1b}. One of these two lemmas applies also if the configuration at time $t_0$ is not \RD. Hence, without loss of generality, we may assume that either Lemma~\ref{z:valid1a} or Lemma~\ref{z:valid1b} applies at time $t_0'$. If Lemma~\ref{z:valid1b} applies, then there is a time $t_0''\geqslant t_0'$ at which the all the robots are frozen on SEC, and therefore they satisfy the hypotheses of Lemma~\ref{z:valid1a}. Hence, without loss of generality, at time $t_0''$ Lemma~\ref{z:valid1a} applies. As a consequence, there is  a time $t_1\geqslant t_0''$ at which all the walkers are on SEC/3, and all the other robots are on SEC. This configuration is \VA and \WA and \RD, due to Observation~\ref{o:walkers}.

Subsequently, by Lemma~\ref{z:valid2}, all the internal robots of $\mathcal R(t_1)$ move to their corresponding finish lines (which remain unchanged during the movements) and freeze on SEC/3 at time $t_1'\geqslant t_1$. At this point, the configuration is \VA and \WA and \RD, due to Proposition~\ref{p:internalclass}, and either Lemma~\ref{z:valid1a} or Lemma~\ref{z:valid1b} applies, depending if the internal robots are all walkers or not. If the internal robots are walkers, then Lemma~\ref{z:valid1a} applies, and all the walkers freeze on SEC/3 at time $t_2\geqslant t_1'$. Otherwise, first the internal robots freeze on SEC at time $t_1''\geqslant t_1'$, due to Lemma~\ref{z:valid1b}. Afterwards, Lemma~\ref{z:valid1a} applies, and all the walkers of $\mathcal R(t_1'')$ move onto SEC/3, and freeze at time $t_2\geqslant t_1''$. Hence, in all cases, at time $t_2\geqslant t_1$ the new walkers are on SEC/3, and all the other robots are on SEC.

Note that at time $t_2$ the set of internal robots is not empty, because otherwise $\mathcal R(t_1'')$ would be an \EQ or \BI configuration (by definition of walker) with all robots of SEC. Hence it would be \PR, contradicting our assumptions. Also, $\mathcal R(t_1)$ and $\mathcal R(t_2)$ cannot be \EQ or \BI, otherwise they would not be \VA and \RD, due to Observation~\ref{o:analogy1}.

By repeating the previous argument, we infer that there exists a monotone sequence of time instants $(t_i)_{i>0}$ with the following properties, for all $i>0$.
\begin{itemize}
\item At time $t_i$, the configuration is \VA and \WA and \RD (hence not \EQ and not \BI), all walkers are frozen on SEC/3, and all other robots are frozen on SEC.
\item $\mathcal R(t_{i+1})$ is obtained from $\mathcal R(t_i)$ by first moving all the internal robots to their corresponding finish lines, and then sending all the non-walkers to SEC and all the walkers to SEC/3.
\end{itemize}

Let $S_i=\mathcal F(\mathcal R(t_i))$, for all $i>0$. Observe that $S_i$ and $S_{i+1}$ satisfy the hypotheses of Lemma~\ref{z:next1}, if we set $S:=S_i$ and $S'':=S_{i+1}$. Indeed, by definition of walker, $\mathcal W(\mathcal F(\mathcal R(t_i)))=\mathcal F(\mathcal W(\mathcal R(t_i)))$. Also, since $\mathcal R(t_i)$ cannot be \EQ or \BI, the set of walkers of $S_i$ is not empty. We are going to repeatedly apply Lemma~\ref{z:next1} to derive a contradiction, by arguing that either the number of analogy classes of the $S_i$'s decreases indefinitely as $i$ grows, or the number of their satisfied points grows indefinitely.

According to Lemma~\ref{z:next1}, the number of analogy classes of $S_i$ never increases as $i$ grows. Since this number cannot be smaller than $1$, there must be an index $a>0$ such that $S_i$ and $S_{i+1}$ have the same number of analogy classes, whenever $i\geqslant a$.

Let us choose an index $s$ as follows. If $S_i$ has an axis of symmetry for some $i\geqslant a$, then we let $s$ be any such $i$. Otherwise, we let $s=a$. Then, because axes of symmetry are preserved from $S_i$ to $S_{i+1}$ (by Lemma~\ref{z:next1}), it follows that, for all $i\geqslant s$, either both $S_i $ and $S_{i+1}$ are symmetric, or neither of them is.

Therefore, starting at index $s$, the $S_i$'s never go from asymmetric to symmetric, and the number of their analogy classes stays constant. As a consequence, Lemma~\ref{z:next1} implies that, for all $i\geqslant s$, $S_{i+1}$ has at least as many satisfied points as $S_i$. But the number of satisfied points of $S_i$ is bounded by the number of robots in the swarm, $n$, and so there must be an index $m\geqslant s$ such that $S_i$ and $S_{i+1}$ have the same number of satisfied points, whenever $i\geqslant m$.

We claim that there is an index $u\geqslant m$ such that $S_u$ is not locked. Assume the opposite. Then we can apply Lemma~\ref{p:unlocked}, with $S:=S_{m+i}$ and $S'':=S_{m+i+1}$, for all $i\geqslant 0$. So, either $S_{m+i+1}$ is not locked (which contradicts our assumption), or it has strictly fewer analogy classes than $S_{m+i}$ (which contradicts the fact that $m+i\geqslant a$), or it has fewer non-movable analogy classes than $S_{m+i}$. Hence there must be some $i\geqslant 0$ such that $S_{m+i}$ has no non-movable analogy classes. But, by Proposition~\ref{p:locked1}, such an $S_{m+i}$ is not locked, contradicting our assumption again. Therefore $S_u$ is not locked for some $u\geqslant m$, and Lemma~\ref{z:next1} states that $S_{u+1}$ has strictly more satisfied points than $S_u$, contradicting the definition of $m$.
\end{proof}

\begin{theorem}\label{main}
The \CF problem is solvable by $n > 5 $ robots in \ASYNC.
\end{theorem}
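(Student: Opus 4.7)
The plan is to reduce Theorem~\ref{main} to a chain of applications of the lemmas proven in Section~\ref{sec:corr}, organized around the configuration classes tested by the \UCF algorithm. First I would observe that the sequence of tests in \UCF partitions every configuration of $n>5$ robots into exactly one of the classes \RE, \PR, \CE, \HD, \CO (in the sense of being \CO but neither \CE nor \HD), \VA, or \IN. The initial configuration is frozen by assumption, so each case can be attacked as a frozen starting point, which is the simplest version of the hypotheses required by the relevant lemmas.

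If the initial configuration is \RE, Lemma~\ref{z:regular} says the robots never move and the property defining \CF holds trivially. If it is \PR, Lemma~\ref{z:prereg} yields freezing in a \RE configuration, after which Lemma~\ref{z:regular} takes over. For the remaining cases, I would chain the lemmas as follows. A frozen \CE or \HD configuration evolves to a frozen \VA or \IN configuration by Lemma~\ref{z:hdce}. A frozen \CO configuration that is neither \CE nor \HD satisfies the hypotheses of Lemma~\ref{z:coradial} (the conditions on the non-frozen robots are vacuous, since the swarm is frozen), and so it also evolves to a frozen \VA or \IN configuration. A frozen \IN configuration evolves to a frozen \PR or \VA configuration by Lemma~\ref{z:invalid}. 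A frozen \VA configuration with $n>5$ evolves to a frozen \PR configuration by Lemma~\ref{z:valid3}. Stringing these implications together, every execution of \UCF reaches a \PR configuration in finitely many frozen stages, and from there reaches and remains in a \RE configuration via Lemma~\ref{z:prereg} and Lemma~\ref{z:regular}.

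Collision-freedom propagates along this chain, because each cited lemma guarantees it, and the initial robot positions are distinct by definition of the model. The main obstacle in this wrap-up is purely bookkeeping: verifying that whenever one lemma delivers a frozen configuration of a given class, the next lemma's hypotheses on that class are met. This is immediate because each intermediate stage is fully frozen, and no lemma requires more than that of a fresh starting configuration. The genuine difficulty of the result lies not in this final assembly but in the prior lemmas themselves, above all Lemma~\ref{z:valid3}, whose proof in turn rests on the cautious-move framework of Section~\ref{sec:cautious}, the intercept analysis of accidental \PR configurations in Section~\ref{sec:analysis}, and the progress monovariant on number of analogy classes, axes of symmetry, and satisfied points encapsulated in Lemmas~\ref{z:next1} and~\ref{p:unlocked}.
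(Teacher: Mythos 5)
Your proposal is correct and follows essentially the same route as the paper's own proof of Theorem~\ref{main}: the paper likewise chains Lemmas~\ref{z:coradial} and~\ref{z:hdce} (to reach \VA or \IN), then Lemmas~\ref{z:invalid} and~\ref{z:valid3} (to reach \PR), then Lemma~\ref{z:prereg} and Lemma~\ref{z:regular}, relying on the fact that each stage ends frozen. Your additional remarks about which case of the \UCF test applies and about collision-freedom are consistent with the paper's (more terse) wrap-up.
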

\begin{proof}
We apply the \UCF algorithm of Section~\ref{sec:code}. Recall that the initial configuration is frozen. If the robots are frozen in a \CO or \CE or \HD configuration, they freeze in a \VA or \IN configuration, due to Lemmas~\ref{z:coradial} and~\ref{z:hdce}. If the robots are frozen in a \VA or \IN configuration, they freeze in a \PR configuration, due to Lemmas~\ref{z:invalid} and~\ref{z:valid3}. If the robots are frozen in a \PR configuration, they freeze in a \RE configuration, due to Lemma~\ref{z:prereg}. Finally, if the robots are frozen in a \RE configuration, they remain still forever, due to Lemma~\ref{z:regular}. Therefore the \CF is solvable for $n>5$.
\end{proof}

%%%%%%%%%%%XXXXX
\subsection{Small Swarms}
%%%%%%%%%%%
We have just shown how the \CF can be solved by $n>5$ robots. We now consider the cases of small swarms.

\begin{theorem}\label{z:n3}
The \CF problem is solvable by $n=3$ robots in \ASYNC.
\end{theorem}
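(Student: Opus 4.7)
The plan is to use an ad-hoc algorithm that branches on the shape of the triangle formed by the three robots: equilateral, isosceles but not equilateral, scalene, or collinear (three distinct points on a line). I would prove termination by exhibiting a monotone cascade, roughly
\[
\text{collinear}\;\longrightarrow\;\text{scalene}\;\longrightarrow\;\text{isosceles}\;\longrightarrow\;\text{equilateral},
\]
each step taking only finitely many cycles by the $\delta$-fairness of the scheduler. In the equilateral case the configuration is frozen and every Look recomputes ``do nothing'', so it is preserved forever.

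For the isosceles but non-equilateral case I would identify the unique apex $r$ as the robot lying on the unique axis of symmetry $\ell$, and let $r$ slide along $\ell$ toward the unique point $p\in\ell$ on the same side of the base as $r$ satisfying $|pa|=|pb|=|ab|$; the two base robots recognize themselves as non-apex and stay still. The key invariant I would prove is that every partial displacement of $r$ along $\ell$ leaves the configuration isosceles with the same apex, base, and target, so concurrent Look phases from any robot produce identical computations; together with $\delta$-fairness this forces $r$ to reach $p$ within finitely many cycles.

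For the scalene case I would pick the mover $r$ to be the unique vertex opposite the strictly longest side $ab$, and let $r$ perform a lateral move, parallel to $ab$, to the orthogonal projection of $r$ onto the perpendicular bisector of $ab$; the endpoint configuration is isosceles with apex $r$ and the previous case takes over. The technical crux is a short planar-geometry lemma: placing $a=(-d,0)$, $b=(d,0)$, $r=(x,h)$, the hypothesis that $|ab|$ is the strictly longest side forces $|x|<d$ and $h^2<4d^2-(|x|+d)^2$, from which one checks that during the lateral move the distances $|ra|$ and $|rb|$ both stay strictly below $2d$ and coincide only at the endpoint. Hence the triangle remains strictly scalene with $|ab|$ strictly longest, and the identification of the mover and its destination is stable across all Looks performed mid-motion. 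For the collinear case I would let the unique middle robot $m$ make a small, fixed-length perpendicular hop off the common line while the two outer robots stay still; since local handedness can change between cycles the chosen perpendicular direction may change, but any single $\delta$-step breaks collinearity, after which the configuration falls under the scalene or isosceles case with $m$ consistently identified as the mover (a short enough hop keeps $|ab|$ strictly longest in the resulting scalene subcase).

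The hardest part of the proof will be exactly this interplay between asynchrony and the ``identity of the mover'': in every intermediate configuration the adversary can force, the three robots must agree on who moves and on where it goes. I expect this to reduce to the three monotonicity invariants sketched above (stability of the axis of symmetry in the isosceles phase, stability of the longest side in the scalene phase, and stability of the middle robot in the exit from collinear), each established by a short planar-geometry computation. Chaining the four cases, and noting that in every phase the moving robot travels off the lines and points occupied by the stationary robots, then yields a \RE configuration in finite time without collisions.
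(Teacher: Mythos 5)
Your proposal is correct and matches the paper's own ad-hoc algorithm for $n=3$ essentially step for step: the scalene phase (vertex opposite the strictly longest edge moves parallel to it onto its perpendicular bisector), the isosceles phase (apex moves along the axis to the nearest equilateral point), and the two stability invariants you isolate are exactly the ones the paper asserts, though the paper states them without your explicit coordinate computation. The only divergence is the collinear case, which the paper implicitly folds into the degenerate scalene/isosceles cases (the middle robot first slides along the line to the midpoint and then climbs the bisector) whereas you use an explicit perpendicular hop; both resolutions work, provided your ``fixed-length'' hop is scaled to the configuration rather than to an absolute unit the robots do not share.
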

\begin{proof}
We use the following algorithm:
\begin{itemize}
\item if the three distances between pairs of robots are all distinct and robots $r_1$ and $r_2$ are farthest apart, then robot $r_3$ moves parallel to $r_1r_2$ toward the axis of $r_1r_2$;
\item otherwise, if $r_1r_3=r_2r_3$, then $r_3$ moves to the closest point that forms an equilateral triangle with $r_1$ and $r_2$ (in case there are two such points, one is chosen arbitrarily).
\end{itemize}
In the first case, robot $r_3$ moves orthogonally to the axis of $r_1r_2$. While this happens, $r_1$ and $r_2$ remain the farthest-apart robots, and $r_3$ keeps being the robot that has to move. Eventually $r_3$ reaches the axis of $r_1r_2$, it freezes, and the configuration transitions to the second case, with $r_1r_3=r_2r_3$.

If the robots are frozen and $r_1r_3=r_2r_3$, then robot $r_3$ moves orthogonally to $r_1r_2$. While this happens, $r_3$ remains equidistant from $r_1$ and $r_2$ and keeps being the robot that has to move. When $r_3$ reaches the point that forms a \RE set with the other two robots, it freezes.
\end{proof}

\begin{lemma}\label{z:antip}
Let $S$ be a \UA set of $n=5$ points, all of which lie on SEC, and no two of which are antipodal. Then there exists a movable point of $S$ that can reach the antipodal of another point of $S$.
\end{lemma}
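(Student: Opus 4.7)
The plan is to argue by contradiction, reducing the claim to a counting argument over the five antipodals. Let $\theta_0 < \theta_1 < \dots < \theta_4$ be the angular positions of the points of $S$ around the center of SED$(S)$, let $g_i = \theta_{i+1} - \theta_i$ (indices mod $5$) be the cyclic gaps, let $A_i$ be the open arc from $p_i$ to $p_{i+1}$, and write $a_j$ for the antipodal of $p_j$. Two preliminary facts are needed. First, since every point of $S$ lies on SEC$(S)$, Observation~\ref{l:sechull} forces the center of SED to lie in the convex hull of $S$, hence no gap $g_i$ can be strictly greater than $\pi$; and the ``no two antipodal'' hypothesis excludes $g_i = \pi$ as well, so every $g_i < \pi$. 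Second, by Observation~\ref{o:analogy2} every analogy class of a \UA set is a singleton, and Observation~\ref{o:unmovable} then specialises to: $\{p_i\}$ is non-movable iff $g_{i-1}+g_i>\pi$. Summing such inequalities shows that two non-movable points must be consecutive (otherwise four of the five gaps would sum to more than $2\pi$), and three non-movable points are ruled out likewise; hence at least three of the $p_i$ are movable.

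I will then translate reachability into arc language: a point $a_j$ is reachable from $p_i$ iff it lies in the main sector of $S \setminus \{p_i\}$ that contains $p_i$, i.e., iff $a_j \in A_{i-1} \cup A_i$ (the case $a_j = p_i$ is excluded by the antipodal hypothesis). A direct angular calculation, using the bound $g_i<\pi$, also yields that $a_i \notin A_{i-1} \cup A_i$ for every $i$, so no ``self-incidence'' can arise.

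Now suppose for contradiction that no movable $p_i$ reaches any $a_j$ with $j \neq i$. Let $N_l$ be the number of antipodals contained in $A_l$; since no $a_j$ coincides with any $p_k$ (again by the antipodal hypothesis), each $a_j$ sits in a unique arc and $\sum_l N_l = 5$. The standing assumption together with the previous paragraph gives $N_{i-1}=N_i=0$ for every movable $p_i$. Consequently any arc $A_l$ with $N_l>0$ must be flanked by two non-movable points $p_l, p_{l+1}$. Since there is at most one such consecutive pair, either $N_l = 0$ for every $l$---contradicting $\sum_l N_l = 5$---or all five antipodals lie in a single arc $A_l$.

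In the latter case, $a_j \in A_l$ for all $j$ is equivalent to saying that every $\theta_j$ lies in the arc antipodal to $A_l$, which has length exactly $g_l<\pi$. Thus $S$ would fit in an open arc of length less than $\pi$, placing the whole set in an open half-plane and contradicting Observation~\ref{l:sechull} once more. Both branches being impossible, the assumption fails and a suitable movable $p_i$ and antipodal $a_j$ with $j \neq i$ must exist. The main pitfall to avoid is an indexing slip when switching between an arc $A_l$ and its antipodal arc; apart from that, the argument is purely angular and elementary.
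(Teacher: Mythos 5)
Your proof is correct. The setup (at most two non-movable points, necessarily consecutive; reachability of $a_j$ from a movable $p_i$ meaning $a_j\in A_{i-1}\cup A_i$; the exclusion of degenerate cases via the no-antipodal hypothesis) all checks out, and each step of the counting argument is sound. The paper's proof follows the same general strategy — assume no movable point reaches any antipodal, confine the antipodals to a small region, and contradict Observation~\ref{l:sechull} — but executes it differently. The paper fixes two specific non-consecutive movable points ($p_2$ and $p_4$ after relabelling, using Proposition~\ref{p:consunmovable} directly rather than re-deriving it from the gap inequality $g_{i-1}+g_i>\pi$ as you do), observes that their combined reachable region is the closed arc through $p_2,p_3,p_4$ from $p_1$ to $p_5$, and notes that since the complementary arc is strictly shorter than a half-circle, this arc is strictly longer than a half-circle and therefore must contain both $p_1'$ and $p_5'$ — an immediate contradiction. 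You instead use all movable points at once and pigeonhole the five antipodals over the five gaps, concluding that they must all fall into the single gap flanked by the (at most one pair of) non-movable points, which squeezes $S$ into an open half-plane. Your route costs a little more bookkeeping (the $N_l$ counting and the antipodal-arc translation at the end) but is more symmetric and makes explicit exactly where the non-movable pair sits; the paper's is shorter because it only needs two witnesses and lands the contradiction on the endpoints $p_1'$, $p_5'$ directly. Both are valid; neither yields anything the other misses.
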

\begin{proof}
By Observation~\ref{o:analogy2}, every analogy class of $S$ consists of a single point, and therefore, with a slight abuse of terminology, we may refer to movable and non-movable points (as opposed to analogy classes). By Proposition~\ref{p:consunmovable}, if there are two non-movable points in $S$, they are consecutive, and hence there are at most two non-movable points. Let $p_1$, $p_2$, $p_3$, $p_4$, $p_5$ be the points of $S$, appearing in this order around the center of SED. Without loss of generality, we may assume that $p_2$ and $p_4$ are movable. Suppose for a contradiction that neither of these two points can reach the antipodal of another point of $S$. Let $p_i'$ be the antipodal of $p_i$ with respect to SEC, for $1\leqslant i\leqslant 5$, and let $S'=\{p_i'\mid 1\leqslant i\leqslant 5\}$. Since $p_2$ cannot reach any $p_i'$, the arc $\overset\frown{p_1p_3}$ is devoid of points of $S'$. Similarly, since $p_4$ cannot reach any $p_i'$, the arc $\overset\frown{p_3p_5}$ is devoid of points of $S'$. Because no two points of $S$ are antipodal, the endpoints of these arcs cannot be in $S'$, either. It follows that the whole closed arc $\overset\frown{p_1p_5}$ is devoid of points of $S'$. Note that the arc $\overset\frown{p_5p_1}$ is strictly shorter than a half-circle, due to Observation~\ref{l:sechull} (it cannot be a half-circle, otherwise $p_1$ and $p_5$ would be antipodal). Therefore the arc $\overset\frown{p_1p_5}$ is strictly longer than a half-circle, and hence it contains both $p_1'$ and $p_5'$, which is a contradiction.
\end{proof}

\begin{theorem}\label{z:n5}
The \CF problem is solvable by $n=5$ robots in \ASYNC.
\end{theorem}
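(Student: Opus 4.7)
The plan is to run the \UCF algorithm on the $5$-robot swarm, with the walker-selection rule in locked \VA configurations modified as already prescribed at the end of Section~\ref{sec:code} (the walkers are the unlocking analogy class inducing the lexicographically smallest angle sequence, with the ``non-satisfied'' requirement dropped, since Proposition~\ref{p:locked2} no longer guarantees a non-satisfied unlocking class when $n=5$). First I would verify that Lemmas~\ref{z:regular}, \ref{z:prereg}, \ref{z:invalid}, \ref{z:valinv}, \ref{z:coradial}, \ref{z:hdce}, \ref{z:transition2}, \ref{z:valid1a}, \ref{z:valid1b}, and \ref{z:valid2} are all proved under the sole assumption $n>4$, so they transfer verbatim. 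Consequently it suffices to establish the analogue of Lemma~\ref{z:valid3}: starting from a frozen \VA configuration with $n=5$, the robots eventually freeze in a \PR configuration, from which Lemma~\ref{z:prereg} delivers the final \RE configuration.

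I would mimic the proof of Lemma~\ref{z:valid3} to construct the same sequence $S_i=\mathcal F(\mathcal R(t_i))$ of ``loop-boundary'' footprints and reduce, via Lemma~\ref{z:next1}, to a tail on which the $S_i$'s have the same symmetric/asymmetric type, the same number of analogy classes, and the same number of satisfied points. A line-by-line inspection of the proofs of Lemmas~\ref{z:next1} and~\ref{p:unlocked} shows that the only place where $n>5$ is used is in the statement of Proposition~\ref{p:locked2} that every locked \VA set has a \emph{non-satisfied} unlocking class; all the remaining conclusions of Proposition~\ref{p:locked2} (the \AP classification, the shape of the non-movable class, and the existence of two singleton or one two-point unlocking class) hold already under $n>4$. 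Hence, when $S_i$ is non-locked, Lemma~\ref{z:next1} still produces strict progress, and the only real obstruction is a tail in which every $S_i$ is locked.

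The main obstacle, and the heart of the $n=5$ argument, is therefore showing termination of a purely locked phase. In this case each $S_i$ is \UA with one or two consecutive non-movable singletons and two singleton unlocking classes. The modified rule picks a walker $\{r\}$ from one unlocking class; by the definition of finish set for \UA locked configurations given in Section~\ref{defi:geo}, the finish line of $r$ is the ray through the antipode of the point $r''\in S_i$ opposite to the non-movable neighbor of $r$. I would invoke Lemma~\ref{z:antip} (and, if $S_i$ already happens to contain an antipodal pair, treat that degenerate case directly, noting that then $S_i$ is no longer \UA in the strong sense required for the obstruction) together with Proposition~\ref{p:finishsamesector} to certify that this finish line is well-defined and reachable, so that Lemmas~\ref{z:valid1a}--\ref{z:valid2} carry the walker to its finish line and $S_{i+1}$ genuinely contains the antipodal pair $\{r,r''\}$.

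It remains to show that the presence of this antipodal pair forces strict progress from $S_i$ to $S_{i+1}$. Using Observation~\ref{o:unmovable}, non-movability of a singleton $\{p\}\subseteq S_i$ is equivalent to the two angular distances from $p$ to its neighbors summing to more than $\pi$; inserting the point $r$ antipodal to $r''$ adjacent to the former non-movable neighbor of $r$ destroys this inequality on the side facing $r''$. Thus either $S_{i+1}$ has strictly fewer non-movable analogy classes than $S_i$, or the rearrangement collapses two analogy classes into one (for instance, turning the configuration \BA or creating an axis of symmetry through the antipodal chord), in which case $S_{i+1}$ has strictly fewer analogy classes. This is exactly the conclusion of Lemma~\ref{p:unlocked}, and since both the analogy-class count and the non-movable-class count are bounded below, the locked phase ends after finitely many iterations. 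From that point Lemma~\ref{z:next1} drives the satisfied-point count up to $n=5$, yielding a \PR configuration and, via Lemma~\ref{z:prereg}, the desired \RE configuration.
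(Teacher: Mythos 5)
Your proposal underestimates what has to change for $n=5$, and the termination argument has a genuine gap. The paper's actual fix modifies \emph{both} the walker selection \emph{and} the finish-set definition, with new rules keyed on the number of antipodal pairs in the configuration (separate cases for \UA sets with zero, one, or two antipodal pairs, and for \BA sets with exactly one satisfied point and two antipodal analogous points); the correctness argument then abandons the satisfied-point potential for asymmetric configurations and instead shows that antipodal pairs are deliberately created, preserved, and converted into symmetry, after which a \RE configuration is reached in a bounded number of iterations. You keep the unmodified finish set and the unmodified walker rule outside the locked case, and try to salvage the $n>5$ potential-function argument. That fails at two places. First, Lemma~\ref{p:unlocked}'s proof uses the non-satisfiedness of $W$ in an essential way: it is how one shows that $T$ is not a relocation of $\mathcal I(S^*)$ and hence that the tentative finish set is the principal relocation $P$; with satisfied walkers this step collapses, so you cannot cite the lemma's conclusion as ``exactly'' holding.

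Second, and more seriously, your claim that ``the locked phase ends after finitely many iterations'' and that afterwards Lemma~\ref{z:next1} drives the satisfied count up does not close the argument. Consider a locked \BA configuration in which the axis point and the two unlocking points are satisfied (possible precisely because $n=5$). Selecting the satisfied unlocking pair as walkers and sending it to the antipodal finish set drops the satisfied count from $3$ to $1$, so the index $m$ after which the satisfied count is eventually constant --- the anchor of the whole termination proof of Lemma~\ref{z:valid3} --- need not exist. Worse, in the resulting unlocked configuration the standard (unmodified) walker rule may next select the antipodal pair as an improvable class and send it straight back to its targets, which recreates the non-movable pair and hence the original locked configuration: an infinite loop. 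This is exactly why the paper adds the rule that, in a \BA set with one satisfied point and two antipodal analogous points, the \emph{non-antipodal} pair must move first. Your observation that an antipodal pair kills all non-movable classes is correct, but it is not a monotone potential: the antipodal pair can be destroyed by a later move, so neither the non-movable-class count nor the locked/unlocked status gives a well-founded measure without the additional scheduling rules the paper introduces.
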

\begin{proof}
We use a modified version of the general algorithm of Section~\ref{sec:code}. Note that the proof of correctness holds for the case $n=5$ as well, except for Lemmas~\ref{z:next1},~\ref{p:unlocked}, and~\ref{z:valid3}, which all assume that $n>5$. This is due essentially to the last sentence of Proposition~\ref{p:locked2}, which express a property of locked configurations of $n>5$ points. The core problem is that, if $n=5$, there are locked configurations in which all the robots that belong to unlocking analogy classes happen to be satisfied. Recall that, for $n=5$, the definition of walker allows the selection of a satisfied unlocking analogy class as the set of walkers. On one hand, this prevents us from arguing that the number of satisfied robots cannot decrease after a certain point, as we did in Lemmas~\ref{z:next1} and~\ref{z:valid3}. On the other hand, the current definition of finish set will allow such walkers to go back into their targets right away. This causes the same locked configuration to be formed infinitely many times, rendering the statement of Lemma~\ref{p:unlocked} false, and giving rise to an infinite loop in the execution.

We can fix the algorithm as follows: if $n=5$, we retain all the definitions as they are, except for the definition of walker and the definition of finish set. Assuming that the configuration $S$ is a \VA set with all $n=5$ points on SEC, the walkers are selected as usual, except in the following cases.
\begin{itemize}
\item Let $S$ be \UA with no pairs of antipodal points. Then, among the movable points of $S$ that can reach the antipodal point of another point of $S$, the walker is the one that induces the lexicographically smallest angle sequence (such a point exists due to Lemma~\ref{z:antip}).
\item Let $S$ be \UA with exactly one pair of antipodal points. Then, the walker is the unique point that is consecutive to the two antipodal points (such a point exists because a \VA set is not \HD).
\item Let $S$ be \UA with two pairs of antipodal points. Then, the walker is the unique point of $S$ that is not antipodal to any other point of $S$.
\item Let $S$ be \BA with exactly one satisfied point, and having two antipodal analogous points. Then, the walkers are the two analogous points that are not antipodal.
\end{itemize}
Note that in every case the walkers constitute a movable analogy class, in accordance with Observation~\ref{o:walkers}.

Now to the definition of finish set. Suppose that the set $S$ is \VA and \RD and has $n=5$ points. Let $S'=\mathcal E(S)\cup P$, where $P$ is the principal relocation of $\mathcal I(S)$. Then, the finish set is defined as usual, except in the following cases.
\begin{itemize}
\item Let $S'$ be \UA, and suppose that there exists at least one point of $S'$ whose antipodal point can be reached by the unique point of $P$. Then, let $p\in S'$ be the one among such points that induces the lexicographically smallest angle sequence with respect to $S'$. By definition, the finish line corresponding to the internal point of $S$ contains the antipodal point of $p$.
\item Let $S'$ be \BA, let $P$ consist of two non-consecutive points, and let the two consecutive analogous points of $S'$ be non-satisfied. Then, let $R$ be the relocation of $\mathcal I(S)$ consisting of two antipodal points on SEC$(S)$ such that $R$ is an analogy class of $\mathcal E(S)\cup R$. By definition, $R$ is a subset of the finish set of $\mathcal I(S)$.
\end{itemize}
Note that in both cases each finish line is reachable by exactly one internal point (cf.~Proposition~\ref{p:finishreachable}).

Let us prove that the above modifications to the general algorithm are sufficient to solve the \CF problem for $n=5$ robots. Note that, if the robots ever freeze in a \PR configuration, they also freeze in a \RE configuration, due to Lemma~\ref{z:prereg}, and then they remain still forever, due to Lemma~\ref{z:regular}. So, suppose for a contradiction that they never freeze in a \PR configuration. If the robots are frozen in a \CO or \CE or \HD configuration, they freeze in a \VA or \IN configuration, due to Lemmas~\ref{z:coradial} and~\ref{z:hdce}. If they are frozen in an \IN configuration, they freeze in a \VA configuration, by Lemma~\ref{z:invalid}. Hence, assume that the robots are frozen in a \VA configuration at time $t_0$, and assume for a contradiction that they never form a \RE configuration. So, as in the proof of Lemma~\ref{z:valid3}, we can construct a monotone sequence of time instants $(t_i)_{i>0}$ with the same properties (note that only Lemmas~\ref{z:transition2}--\ref{z:valid2} are used to prove this part, and they hold also for $n=5$). Again, let $S_i=\mathcal F(\mathcal R(t_i))$.

Suppose that there exists an index $s$ such that $S_s$ has an axis of symmetry. Following the proof of Lemma~\ref{z:next1}, we argue that $S_{s+i}$, for all $i\geqslant 0$, has the same axis of symmetry and the same target set. Let $m\geqslant s$ be such that the number of satisfied points in $S_m$ is maximum. Suppose first that $S_m$ is not locked. In all non-locked \BA cases, including the newly added one, the walkers are non-satisfied points that can reach their corresponding targets. Since $S_m$ is symmetric and the unique point on the axis of symmetry is satisfied, it follows that there are exactly two walkers in $S_m$. If the two walkers are non-consecutive, so are the elements of their principal relocation (of their anti-footprints). In this case, if the two other analogous points of $S_m$ are non-satisfied, the new definition of finish set applies. Therefore, in $S_{m+1}$ there is exactly one satisfied point and two antipodal analogous points. Now, according to the new algorithm, the walkers are the two analogous points that are not antipodal. In $S_{m+2}$ these two points are moved to their targets. Then the two antipodal points are selected as walkers, and are moved to their targets in $S_{m+3}$, thus forming a \RE configuration, which contradicts our assumptions. In all other non-locked \BA cases, the walkers cannot give rise to a locked configuration by moving to their targets, nor can their principal relocation be a proper subset of an analogy class, because $n=5$ and analogy classes can have at most two points each. Therefore, in all these cases, the walkers of $S_m$ choose finish lines that contain their targets. Hence the number of satisfied points in $S_{m+1}$ increases, which contradicts the definition of $m$. If, on the other hand, $S_m$ is locked, the two points of the unlocking analogy class are selected as walkers (indeed, the new \BA rule does not apply to this case, because if two points of $S_m$ are antipodal, then $S_m$ cannot be locked). These two points are non-consecutive, and perhaps are satisfied. Note that the other two analogous points of $S_m$ are not satisfied, otherwise the configuration would not be locked. Here the new definition of finish lines applies; arguing as above, we conclude that $S_{m+3}$ is \RE, which is a contradiction.

Suppose now that $S_i$ has no axis of symmetry for any $i>0$. Assume that, for some index $a$, there are two pairs of antipodal points in $S_a$. According to the new algorithm, the walker is the point that is not antipodal to any other. The principal relocation $\{p\}$ of the anti-footprint of the walker gives rise to a symmetric configuration, and the chosen finish line contains $p$. Therefore, $S_{a+1}$ has an axis of symmetry, which contradicts our assumptions. Suppose now that in $S_a$ there is exactly one pair of antipodal points. According to the new algorithm, the walker is the point that is consecutive to both antipodal points. The principal relocation $\{p\}$ of the anti-footprint of the walker gives rise to a configuration $S'$. If $S'$ is symmetric and $p$ lies on the axis of symmetry, then that is the walker's target, which is also chosen as a finish line (note that $S'$ cannot be locked, due to the two antipodal points). Hence $S_{a+1}$ is symmetric, which is a contradiction. Now let $S'$ be symmetric, and suppose that $p$ does not lie on the axis of symmetry. Then, $\{p\}$ must be a proper subset of an analogy class of $S'$, and therefore the tentative finish set of $S'$ is $\{p\}$. Also note that, if $S'$ is locked, $\{p\}$ cannot be an unlocking analogy class of $S'$, because it contains only one point (cf.~Proposition~\ref{p:locked2}). Therefore $p$ lies on the finish line, by definition. Hence $S_{a+1}$ is symmetric, which is again a contradiction. Suppose now that $S'$ is not symmetric, and therefore it is \UA. Note that $p$ can reach the antipodal of another point of $S'$, and hence it is moved to such a point, according to the new definition of finish set. Then in $S_{a+1}$ there are two pairs of antipodal points, and we already proved that this leads to a contradiction. Finally, assume that in $S_1$ there are no pairs of antipodal points. By the new algorithm, the walker is a single movable point that can reach the antipodal of another point of $S_1$. The principal relocation $\{p\}$ of the anti-footprint of the walker gives rise to configuration a $S'$. If $S'$ is \UA, the new algorithm chooses a finish line containing the antipodal of some point. Hence in $S_2$ there are exactly two antipodal points, and the previous argument applies. Suppose then that $S'$ is \BA. If $p$ lies on the axis of symmetry of $S'$, then it is satisfied, and the tentative finish set is $\{p\}$. Note that, if $S'$ is locked, then $\{p\}$ cannot be the unlocking analogy class, because it only has one point (cf.~Proposition~\ref{p:locked2}). Therefore, the finish line contains $p$, by the usual definition. Hence $S_2$ is symmetric, which is a contradiction. Suppose now that $p$ does not lie on the axis of symmetry of $S'$. So, $\{p\}$ is a proper subset of an analogy class, and hence the tentative finish set is $\{p\}$. Once again, if $S'$ is locked, $\{p\}$ cannot be an unlocking analogy class, and hence the finish line contains $p$. Then $S_2$ is symmetric, which is a contradiction.
\end{proof}

\section{Conclusions}

By Theorems~\ref{main},~\ref{z:n3},~\ref{z:n5}, and by the result in~\cite{MamV16}, which deals with the special case of $n=4$ robots, it follows that

\newtheorem{theorem2}{Theorem}[section]

\begin{theorem2}\label{final} 
The \CF problem is solvable in \ASYNC.\qed
\end{theorem2}

Recall that no pattern other than {\tt Point} and {\tt Uniform Circle} can be formed from every initial configuration, even
if the system is fully synchronous, the robots are provided with chirality, and the adversarial scheduler does not have the power of interrupting the robots' movements (rigidity). In light of the  result of~\cite{CieFPS12} for {\tt Point}, Theorem \ref{final} 
  implies that  asynchrony is not a computational handicap, and that additional powers such as chirality and rigidity are computationally irrelevant.

\ \\
\noindent {\bf Acknowledgments.} {\small 
The authors would like to thank Marc-Andr\'e Paris-Cloutier for many helpful discussions and insights, and Peter Widmayer and Vincenzo Gervasi for sharing some of the fun and  frustrations emerging from investigating this problem.
This work has been supported in part by the Natural Sciences and Engineering Research Council of Canada under the Discovery Grants program, and by Professor Flocchini's University Research Chair.
}

\end{document}